\documentclass[authorcolumns,numberwithinsect]{no-lipics-v2022}
\usepackage{graphicx} 
\usepackage{amssymb}
\usepackage{tikz}
 \nolinenumbers

\usepackage{calrsfs}

\newtheorem{observation}[theorem]{Observation}

\newcommand{\Hom}{\mathsf{Hom}}

\newcommand{\homs}{\#\mathsf{Hom}}

\newcommand{\upp}{\mathsf{upp}}

\newcommand{\fulltup}{\mathsf{FullTuple}}

\newcommand{\exthom}[2]{#1^\varepsilon_{#2}}

\newcommand{\rcrFrac}{\textup{$k$-frac-RCR}}

\newcommand{\rcr}{\textup{$k$-RCR}}

\newcommand{\howlequiv}[2]{\equiv^{#1}_{#2\text{-}\mathsf{HOWL}}}

\newcommand{\ar}{\operatorname{ar}}
\newcommand{\cA}{\mathcal{A}}
\newcommand{\cB}{\mathcal{B}}
\newcommand{\cC}{\mathcal{C}}

\newcommand{\CT}{\mathsf{CT}}
\newcommand{\profile}{\mathsf{Profile}}
\newcommand{\bagtup}{\mathsf{BagTuple}}

\newcommand{\dom}{\mathsf{dom}}

\newcommand{\enc}{\mathbf{B}^\mathsf{split}}
\newcommand{\fenc}{\mathbf{F}^\mathsf{split}}

\title{Homomorphism Indistinguishability Beyond Graphs: Relational Weisfeiler--Leman and Hypertree Width}
\titlerunning{Homomorphism Indistinguishability Beyond Graphs}

\authorrunning{P. Aivasiliotis, A. G\"obel, M. Lanzinger, M. Roth}

\author{Panagiotis Aivasiliotis}{Hasso Plattner Institute, University of Potsdam}{panos.aivasiliotis@hpi.de}{}{}

\author{Andreas Göbel}{Hasso Plattner Institute, University of Potsdam}{andreas.goebel@hpi.de}{}{Postdoc Network Brandenburg}

\author{Matthias Lanzinger}{Institute of Logic and Computation, TU Wien}{matthias.lanzinger@tuwien.ac.at}{}{}

\author{Marc Roth}{School of Electronic Engineering and Computer Science, Queen Mary University of London}{m.roth@qmul.ac.uk}{}{}

\begin{document}

\maketitle

\begin{abstract}
    The Weisfeiler-Leman (WL) algorithm is one of the most influential heuristics for the graph isomorphism problem and constitutes a cornerstone of Babai's celebrated quasi-polynomial-time isomorphism test [STOC'16]. Starting with the seminal work of Cai, Fürer, and Immerman [Combinatorica'92] the expressive power of WL has been extensively studied over the past 35 years in the contexts of descriptive complexity, logics, graph neural networks, and the theory of homomorphism indistinguishabily, the latter of which dates back to early works of Lov{\'a}sz. In a landmark result, Grohe, Dell, and Rattan [ICALP'18] proved that two graphs are indistinguishable by the $k$-dimensional WL algorithm if and only if they are indistinguishable by homomorphism counts from graphs of treewidth at most $k$.

    An intrinsic question is whether there is a natural version of the WL algorithm which operates on hypergraphs and relational structures of higher arity. Scheidt [ICALP'24] argues that a ``proper'' version of a $k$-dimensional relational WL should admit an equivalent characterisation via homomorphism indistinguishability along bounded generalised hypertree width (GHW) and poses the search for such a version for all $k\ge1$ as an open problem. In follow-up work, Scheidt and Schweikardt [MFCS'25] confirmed this for $k=1$ by defining relational colour refinement (RCR), a 1-dimensional WL variant, and by showing the respective homomorphism-indistinguishability result from ($\alpha$-)acyclic structures (i.e. structures with GHW=1). Nevertheless, the question remains open for all $k>1$.

    In this work, we provide a definitive and affirmative resolution: we develop a $k$-dimensional version of RCR ($k$-RCR for short) and show that two structures $\cA$ and $\cB$ are insdistinguishable by $k$-RCR if and only if they have the same number of homomorphisms from all structures $\cC$ of generalised hypertreewidth at most $k$. Moreover, we introduce a more intricate ``fractional'' version of $k$-RCR and show that $\cA$ and $\cB$ are insdistinguishable by fractional $k$-RCR if and only if they have the same number of homomorphisms from all structures $\cC$ of a variant of fractional hypertreewidth at most $k$.
    
    Last but not least, we develop ``$k$-HyperOWL'', the first relational $k$-WL algorithm that, in contrast to existing attempts for relational WL (including $k$-RCR), operates directly on the given relational structure rather than relying on a transformation to a graph-like structure of rank at most $2$. We show that $k$-HyperOWL is at least as expressive as $k$-RCR and that, given a structure $\cA$, $k$-HyperOWL can compute $t$ iterative refinements in time $O(t\cdot |\cA|^{k+1})$ where the hidden constant only depends on $k$, and the signature and rank of $\cA$. Moreover, we are able to use $k$-HyperOWL as a constructive preprocessing routine to design an algorithm for counting homomorphisms from structures of generalised hypertreewidth at most $k$ operating on the colouring produced by $k$-HyperOWL. This algorithm constitutes a direct generalisation of the recent result of Lanzinger and Barcel{\'{o}} [ICLR 2024] from structures of arity $2$ to \emph{arbitrary} structures.
\end{abstract}

\section{Extended Abstract}

The \emph{colour refinement} algorithm is a famous heuristic for the graph isomorphism problem and its inception dates back to 1965 when Morgan introduced an iterative neighbourhood-refinement method for encoding chemical structures~\cite{Morgan1965}. In its modern form, the algorithm tests, in near linear time~\cite{BerkholzBG17}, whether two graphs are isomorphic, or simply put, whether two graphs are ``identical'' up to relabelling of the vertices---a particularly intriguing problem in NP as, up to this day, it is not known to be polynomial-time solvable and at the same time not expected to be NP-hard either, in part due to Babai's seminal quasi-polynomial-time isomorphism test~\cite{10.1145/2897518.2897542}. Colour refinement is sound, that is, any two graphs distinguished by colour refinement are non-isomorphic. However, it is not complete, that is, there are pairs of non-isomorphic graphs that cannot be distinguished by colour refinement. Nevertheless, colour refinement is a considerably powerful, and fast, heuristic: for example, the probability that it correctly distinguishes a given pair of non-isomorphic random $n$-vertex graphs converges to $1$ as $n$ approaches infinity~\cite{BabaiES80}.

In a nutshell, the  algorithm is a finite iterative process that colours the vertices of a graph according to the following rule: all vertices get the same colour at round $0$, and two vertices $u, v$ get the same colour at round $i+1$ if and only if for every colour $c$ that was assigned to vertices at round $i$, vertices $u, v$ have the same number of $c$-coloured neighbours. The process terminates when the partitioning induced by the colouring becomes stable. Lifting this iterative refinement process from individual vertices to $k$-tuples of vertices gives rise to a strictly more powerful graph isomorphism heuristic: the $k$-dimensional \emph{Weisfeiler-Leman} algorithm~\cite{weisfeiler1968reduction} ($k$-WL, for short).\footnote{Note that the original work by Weisfeiler and Leman considered the case $k=2$, while the general case was studied in later works, see e.g.\ \cite{Cai1992}.} $k$-WL can be run in time $O(n^k \log n)$~\cite{ImmermanL90} and has been subject of a flurry of results on understanding its expressibility and descriptive complexity theory~\cite{Cai1992,dell_et_al:LIPIcs.ICALP.2018.40,KieferPS19,GroheK19,BartoBD26}, as well as its connections to the theory of machine learning~\cite{Morris19,Grohe21,Barceloetal22,Morrisetal23}. Despite the 1992 result of Cai, Fürer, and Immerman~\cite{Cai1992} showing that k-WL is not a complete isomorphism test for any fixed $k$, the $k$-WL algorithm remains a key component of Babai's quasi-polynomial-time graph isomorphism test~\cite{10.1145/2897518.2897542}.
For what follows, we say that $k$-WL \emph{distinguishes} two graphs $G, G'$ if the partitioning of the set of $k$-vertex tuples of $G$ into \textit{colour classes} induced by the \textit{stable} colouring, which  $k$-WL eventually produces, is different for $G$ and $G'$. 

\paragraph*{Lov{\'a}sz meets Weisfeiler and Leman: $k$-WL and Homomorphism Indistinguishability}

In a landmark result~\cite{dell_et_al:LIPIcs.ICALP.2018.40}, Dell, Grohe, and Rattan discovered a surprising connection between $k$-WL and another similarity measure for graphs:  \emph{homomorphism indistinguishability}.\footnote{We also wish to highlight the work of Dvor{\'{a}}k~\cite{Dvorak10}, proving a logical pendant of the result of Dell, Grohe, and Rattan by relating homomorphism indistinguishability to indistinguishability over the $k+1$-variable fragment of first-order logic with counting quantifiers.} 
Established in early works of Lov{\'a}sz \cite{lovasz1967operations} (see also Lov{\'a}sz's textbook~\cite{lovasz2012large}), it is known that for any pair of graphs $G, G'$ we have
\begin{center}
\textit{$G$ and $G'$ are isomorphic if and only if, for each graph $H$ the number of graph homomorphisms from $H$ to $G$ is equal to the number of graph homomorphisms from $H$ to $G'$.}
\end{center}
Here, a homomorphism $h$ from $H$ to $G$ is a mapping from the vertex set of $H$ to the vertex set of $G$ that preserves the edges of $H$, i.e., for each edge $\{u, v\}$ of $H$, we have that $\{h(u), h(v)\}$ is an edge of $G$. We write $\Hom(H, G)$ for the set of homomorphisms from $H$ to $G$. 
Lov{\'a}sz's Theorem induces a similarity measure of graphs coarser than isomorphism by restricting the class of graphs from which homomorphisms are counted: we say that two graphs $G$ and $G'$ are \emph{homomorphism indistinguishable} over a class of graphs $\mathfrak C$ if, for all $H \in\mathfrak C$, we have $\homs(H,G)=\homs(H,G')$. In that way, every class $\mathfrak C$ yields a heuristic for graph isomorphism that is sound, but not necessarily complete. Homomorphism indistinguishability has received a significant amount of attention in recent years (see, for instance, ~\cite{BokerCGR19,Seppelt23,Scheidt24,Neuen24,RobersonS24,KarRS025,CernyS26}), most notably in the context of Mancinska's and Roberson's celebrated result on the equivalence of quantum isomorphism and homomorphism indistinguishability over planar graphs~\cite{MancinskaR20}.

The key insight provided by Dell, Grohe, and Rattan relates $k$-WL to homomorphism indistinguishability over graphs of bounded treewidth:\footnote{Informally, treewidth measures how similar a graph is to a tree. We refer the reader to~\cite[Chapter 7]{Cyganetal15} for a detailed exposition and note that, for the present work, we will only rely on hypertree decompositions which are formally introduced in Section~\ref{sec:prelims}.}
\begin{theorem}[\cite{dell_et_al:LIPIcs.ICALP.2018.40}]\label{thm:DGR}
    For all $k\geq 1$, two graphs~$G$ and~$G'$ are indistinguishable by $k$-WL if, and only if,~$G$ and~$G'$ are homomorphism indistinguishable over the class of all graphs of treewidth at most $k$.\qed
\end{theorem}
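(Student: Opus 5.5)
We will prove the statement via the standard route through counting logic and tree-decomposition-based "recursive" homomorphism counts, using the convention that $k$-WL colours $k$-tuples and is matched with treewidth at most $k$. This is the folklore/Cai–Fürer–Immerman indexing, in which $k$-WL corresponds to $C^{k+1}$. The argument has two directions, and both pass through an intermediate algebraic object: \emph{labelled} graphs. A $k$-labelled graph is a pair $(F,\bar u)$, where $\bar u$ is a $k$-tuple of vertices of $F$ and $F$ has a tree decomposition of width at most $k$ with a bag containing $\bar u$. We will show that, for $k$-tuples $\bar v$ of $G$ and $\bar v'$ of $G'$, the stable $k$-WL colours of $\bar v$ and $\bar v'$ agree if and only if $\homs((F,\bar u),(G,\bar v))=\homs((F,\bar u),(G',\bar v'))$ for all such labelled $F$. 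The theorem then follows: $G$ and $G'$ are indistinguishable exactly when the colour histograms coincide, and summing over $\bar v$ gives $\homs(F,G)\cdot |V(G)|^{\text{free}}$-type identities. More carefully, one sums $\homs((F,\bar u),(G,\bar v))$ over all $\bar v$ to obtain $\homs(F,G)$ when $\bar u$ is injective onto a bag; conversely, histograms of colours are recovered from unlabelled counts.

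\textbf{Colours determine counts.} We proceed by induction along a nice tree decomposition. Every labelled graph of width at most $k$ is built from atomic labelled graphs, using only the following operations: edges among labels together with identifications of labels; gluing, i.e.\ the pointwise product of hom counts, which is a product of numbers determined by colours; and ``introduce a new vertex then forget a label.'' The last operation corresponds exactly to one refinement step. The count for the result equals the sum over $w$ of the count of the substituted tuple $\bar v[i/w]$, and the $k$-WL refinement records the multiset of colours of the tuples $(\bar v[1/w],\dots,\bar v[k/w])$. So an induction on the construction shows the counts are functions of the round-$r$ colour.

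\textbf{Counts determine colours.} This is the harder direction. We show by induction on $r$ that for each round-$r$ colour class $c$ there is a finite linear combination of labelled graphs of width at most $k$ whose hom-count function on tuples separates $c$ from all other classes present in $G\sqcup G'$. The atomic type, namely equalities and edges among labels, is expressible via products of edge-labelled graphs and interpolation. For the step, the new colour is determined by the multiset of $k$-tuples of old colours over $w$. By induction, each old colour class is the image of an indicator obtained from finitely many hom vectors. Taking products of the $k$ labelled graphs (each with label $i$ replaced by the new vertex) and summing over $w$ yields the moments of the multiset. Finitely many moments determine the multiset, by Vandermonde or polynomial interpolation, since only finitely many values occur.

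\textbf{Main obstacle.} The main obstacle is the bookkeeping that the combined graphs stay within width $k$: gluing $k$ graphs at a common $(k+1)$-set (the $k$ labels plus $w$) requires a bag of size $k+1$, which is exactly width $k$. We also need the reduction from labelled equivalence to the unlabelled statement. For the latter, we recover colour histograms from unlabelled counts by applying the separating linear combinations and then summing out the labels, which keeps width at most $k$.
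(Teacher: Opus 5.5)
The paper only cites this theorem from Dell--Grohe--Rattan and gives no proof, so there is nothing in the paper to compare with; I am judging your sketch on its own. Your plan is the standard Dvořák/DGR route: labelled graphs with labels in a bag, ``colours determine counts'' by induction over a tree decomposition, ``counts determine colours'' by interpolation, then summing out the labels. For $k\geq 2$ the converse direction and the reduction to unlabelled counts are fine.

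\textbf{The gap is in ``colours determine counts''.} Your operations there are:
\begin{enumerate}
\item atomic graphs on the labels;
\item gluing two $k$-labelled graphs at their $k$ labels;
\item the unary step ``add a fresh labelled vertex and forget label $i$'', with count $\sum_{w}\#\mathsf{Hom}((F,\bar u),(G,\bar v[i/w]))$.
\end{enumerate}
Each of these preserves the invariant that $F$ has a tree decomposition with bags of at most $k$ vertices whose root bag is the label set. So they generate only graphs of treewidth at most $k-1$. For $k=2$ you can never obtain a triangle: an edge is only ever created between two currently labelled vertices, and an unlabelled vertex never gets a label back. Hence your claim that every labelled graph of width at most $k$ is built this way is false. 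What your induction actually proves is the oblivious statement: counts from treewidth $\le k-1$ graphs are determined by the marginal multisets $\{\{\chi_r(\bar v[w/i]) : w\}\}$. It never uses the joint tuples that you correctly say the refinement records, and those are exactly what separates treewidth $k$ from $k-1$.

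\textbf{The fix} is to use, in this direction too, the $(k+1)$-bag operation from your converse. From $k$-labelled $(F_1,\bar u^1),\dots,(F_k,\bar u^k)$, glue them so that label $i$ of $F_i$ becomes a common unlabelled vertex $w$ and their other labels are identified with the global labels. Then
\[
\#\mathsf{Hom}((F,\bar u),(G,\bar v))=\sum_{w\in V(G)}\prod_{i=1}^{k}\#\mathsf{Hom}((F_i,\bar u^{i}),(G,\bar v[w/i])),
\]
which is a function of the multiset of joint tuples $(\chi_r(\bar v[w/1]),\dots,\chi_r(\bar v[w/k]))$, and hence of $\chi_{r+1}(\bar v)$.

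You then have to prove the decomposition lemma for this operation. In a nice tree decomposition, take a forget node whose child bag $\mathsf{set}(\bar u)\cup\{w\}$ has $k+1$ vertices. Below that bag there are only join, introduce and leaf nodes until the bags drop to size $k$. So the subgraph below is a gluing, over the whole $(k+1)$-bag, of pieces attached to $k$-subsets of it, together with the edges inside the bag. Each piece and each edge $wu_j$ must be assigned to some $F_i$ according to which bag vertex $u_i$ it avoids; pieces avoiding $w$, and edges among the labels, are handled by products.

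Placing an edge $wu_j$ needs some $i\neq j$, i.e.\ $k\geq 2$. For $k=1$ your refinement $\{\{\chi_r(w): w\in V\}\}$ sees no adjacency at all, and the theorem is false for it: all vertices keep the same colour, yet $\#\mathsf{Hom}(K_2,\cdot)$ can differ. There you must add the atomic type of $(\bar v,w)$ to the refinement, which yields colour refinement, the version the paper uses for $k=1$.
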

This result created an interface between $k$-WL (including its connections to logics~\cite{Dvorak10} and Graph Neural Networks~\cite{Morris19,Grohe21}) and algorithmic applications of homomorphism counting which subsequently lead to advances in graph motif counting~\cite{ArvindFKV19,Seppelt24,CurticapeanN25}, database query evaluation~\cite{gobel2024weisfeiler,FockeGRZ25}, and Machine Learning~\cite{lanzinger2024on,BouritsasFZB23, XiaLL23a}. 

\paragraph*{Towards a Relational WL Algorithm}
Given the importance of $k$-WL, both as an isomorphism heuristic and through its connections to homomorphism indistinguishability, it is natural to ask whether Theorem~\ref{thm:DGR} and its applications extend beyond graphs, for instance to hypergraphs and relational structures.
Specifically, an instance of this new endeavour is to design a \textit{relational} $k$-WL algorithm that is executed on relational structures and which is also capable of hosting similar characterisations in terms of homomorphism-indistinguishability. 

Butti and Dalmau \cite{butti_et_al:LIPIcs.MFCS.2021.27} proposed such relational $k$-WL algorithm\footnote{We also highlight the contribution of the authors of~\cite{Barceloetal22}, published around the same time, which investigates $k$-WL for relational structures, but only applies to arity 2, as well as the work of B\"oker~\cite{Boker19} who considers colour refinement on the incidence graphs of hypergraphs.} for which they established a similar equivalence with respect to homomorphism indistinguishability via the treewidth of the Gaifman graphs of relational structures. While this constitutes a significant first step, the treewidth of the Gaifman graphs exhibits a variety of limitations: for example, a relational structure consisting of only one tuple with $r$ distinct elements already yields a Gaifman graph of treewidth $r-1$; thus applying Butti's and Dalmau's result would need $r-1$ dimensions for their relational WL-algorithm, despite the structure being very simple.

Apart from the treewidth of the Gaifman graph, there is a plethora of other structural parameters that are algorithmically more useful for relational structures, including, specifically, \emph{generalised and fractional hypertreewidth}. 
These measures generalise hypergraph ($\alpha$-)acyclicity and often function as the algorithmic analogue to treewidth in general relational structures.
Further motivation for characterisations of (relational) homomorphism indistinguishability that involve other structural parameters than the treewidth of the Gaifman graphs will be made more explicit later, where we will discuss the algorithmic implications of such characterisations for the problem of counting homomorphisms between relational structures that depend on the specific width measure that is being considered. In this context it is already worth noting that the aforementioned trivial structure consisting of only one tuple of size $r$, which we recall to have a Gaifman graph of treewidth $r-1$, has generalised hypertreewidth $1$ for any $r$.

Pursuing this direction, Scheidt and Schweikardt \cite{scheidt_et_al:LIPIcs.MFCS.2025.88}, very recently introduced a relational version of the  algorithm ---which they called \textit{relational colour refinement (RCR)}--- for relational structures and showed that 
\begin{center}
\textit{RCR-indistinguishability is equivalent to homomorphism-indistinguishability over the class of all \textit{$\alpha$-acyclic} relational structures,}
\end{center}
that is, those structures with generalised (and fractional) hypertreewidth equal to 1. Note that, as desired, their result precisely matches the equivalence statement of Theorem~\ref{thm:DGR} for the case of 1-WL, with 1-WL being replaced by RCR and with treewidth being replaced by generalised hypertreewidth.

While their result answers the question in our discussion above for $k = 1$, the more general case $k > 1$ was left open. Moreover, in view of the current state of the art for graphs, Scheidt explicitly points out that ``the distinguishing power
of [a $k$-dimensional relational WL algorithm] should match homomorphism indistinguishability over the class [...]
of hypergraphs of generalised hypertree width at most $k$''~\cite{Scheidt24}.

\subsection{Our Results}
In this work, we resolve the case $k > 1$ and provide a complete  classification of a novel relational $k$-WL algorithm (that works for any $k \geq 1)$---which we call $\rcr$--- in terms of homomorphism-indistinguishability over the class of all relational structures with generalised hypertreewidth at most $k$. Our first main theorem is formally stated as follows.

\begin{theorem}[Main theorem for $\rcr$-equivalence in terms of homomorphism indistinguishability over structures of bounded generalised hypertreewidth] \label{maintheorem:boundedGHDforStructures}
Fix a signature $\sigma$ and a positive integer $k$. For any two $\sigma$-structures $\mathcal{A}, \mathcal{B}$, the following are equivalent:
\begin{enumerate}
\item $\rcr$ distinguishes $\mathcal{A}$ and $\mathcal{B}$;
\item There is a connected $\sigma$-structure $\mathcal{C}$ that admits a generalised hypertree decomposition of width $k$ such that $\#\Hom(\mathcal{C}, \mathcal{A}) \neq \#\Hom(\mathcal{C}, \mathcal{B})$.\qed
\end{enumerate}    
\end{theorem}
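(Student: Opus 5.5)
We have not yet seen the definition of $\rcr$, so we fix the natural one that Scheidt and Schweikardt's RCR suggests and that the notation $\bagtup$, $\profile$, $\enc$ in the preamble hints at. From a $\sigma$-structure $\cA$ we build a ``split'' structure $\enc(\cA)$ of arity at most $2$. Its vertices are the \emph{bag tuples}: $k$-tuples of tuples of $\cA$, each drawn from some relation $R^\cA$. A bag tuple carries a unary label recording
\begin{itemize}
\item the relation symbols,
\item the equality pattern among all coordinates of its $k$ tuples.
\end{itemize}
Two bag tuples are joined by a binary relation labelled by their \emph{overlap profile}, namely which coordinates coincide. $\rcr$ is then colour refinement (1-WL) run on $\enc(\cA)$, with colours refined along each profile relation.

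The proof goes through the Dell--Grohe--Rattan style correspondence, in the 1-dimensional form. For arity-$\le 2$ labelled structures, 1-WL indistinguishability is equivalent to homomorphism indistinguishability over labelled forests; this is Theorem~\ref{thm:DGR} with $k=1$, extended to vertex- and edge-labelled graphs by the standard argument. So we reduce the statement to the following equivalence: $\homs(\mathcal T,\enc(\cA))=\homs(\mathcal T,\enc(\cB))$ for all labelled trees $\mathcal T$ if and only if $\homs(\cC,\cA)=\homs(\cC,\cB)$ for all connected $\cC$ of GHW at most $k$.

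The direction from trees to structures proceeds in three steps.
\begin{enumerate}
\item Take a GHD $(T,\chi,\lambda)$ of a connected $\cC$ of width $k$. Normalise it so that every bag is exactly $\bigcup\lambda(t)$. We pad $\lambda(t)$ to exactly $k$ edges by repetition, and we may pass to a homomorphically equivalent decomposition because GHW only needs $\chi(t)\subseteq\bigcup\lambda(t)$. This step needs care: covering the bag only partially changes the count. We handle it by adding the atoms of $\lambda(t)$ to $\cC$, which keeps the same GHD. Alternatively we use a ``bag projection'' label, which is the reason for the vertex labels on the bag tuples.
\item Turn $T$ into a labelled tree $\mathcal T_\cC$. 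Node $t$ gets the label describing $\lambda(t)$ and its variable identifications. Each tree edge gets the overlap profile of $\lambda(t)$ and $\lambda(t')$.
\item Show that homomorphisms $\mathcal T_\cC\to\enc(\cA)$ correspond bijectively to homomorphisms $\cC\to\cA$. The connectedness condition of the decomposition ensures that consistent local choices of tuples glue into a single global map.
\end{enumerate}
The correspondence is exact only when each variable is covered by its chosen tuples in a unique way. We make it so by always fixing a canonical choice of covering atom for each variable, which makes the gluing injective.

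For the converse, from structures to trees, we show that every labelled tree $\mathcal T$ is, up to linear combination, the image of such a construction. Given $\mathcal T$, define $\cC_{\mathcal T}$ as follows. Take $k$ fresh atoms per node, with the equality pattern given by the node label. Then identify variables across each edge according to its profile. The result is connected if $\mathcal T$ is, has GHW at most $k$ witnessed by $\mathcal T$ itself, and satisfies $\homs(\mathcal T,\enc(\cA))=\homs(\cC_{\mathcal T},\cA)$.

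The subtle point is non-injectivity. The profile records coincidences, but a tree homomorphism into $\enc(\cA)$ may in addition map non-identified coordinates to equal elements. We prevent this by encoding in the profile labels the \emph{exact} equality type, with both equalities and inequalities. This turns counts into counts of homomorphisms with prescribed inequalities. We then recover them by Möbius inversion over the partition lattice of the identifications. Every quotient of $\cC_{\mathcal T}$ still has a GHD of width $\le k$ along the same tree, so the linear combination stays inside the class.

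We expect this Möbius inversion step to be the main obstacle. It is the relational analogue of the interpolation arguments in Dell--Grohe--Rattan. One must check that quotients preserve connectedness and width. One must also check that the system relating exact-type counts to plain homomorphism counts is triangular, hence invertible, uniformly for $\cA$ and $\cB$. If it is, equal homomorphism counts over the class give equal tree counts, so $\rcr$ does not distinguish $\cA$ and $\cB$. This completes the equivalence.
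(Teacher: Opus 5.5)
\paragraph{The gap: Step~1 of your first direction.}
Your argument breaks at Step~1 of the direction from tree counts to structure counts, i.e.\ the claim that $\rcr$-equivalence gives equal counts from every connected $\cC$ of generalised hypertreewidth at most $k$.

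You cannot normalise a GHD so that every bag equals $\bigcup\lambda(t)$. Enlarging $\chi(t)$ by the remaining elements of its covering tuples generally violates the connectedness condition, since such an element may occur in bags far away in $T$. No other rearrangement helps in general either. A decomposition with $\chi(t)=\bigcup\lambda(t)$ at every node trivially satisfies the special condition of hypertree decompositions. So any structure whose hypertreewidth exceeds its generalised hypertreewidth $k$ has no such decomposition of width $k$.

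Your repair does not close this. The atoms of $\lambda(t)$ are already atoms of $\cC$. If you instead add copies of them with fresh elements at the positions outside $\chi(t)$, you get a different structure $\cC^\ast$. What the tree actually counts is
\[
\homs(\cC^D,\mathbf{B}(\cA,k))=\sum_{h\in\Hom(\cC,\cA)}|\mathsf{Ext}^D_\cA(h)|,
\]
where $|\mathsf{Ext}^D_\cA(h)|$ is the number of ways to fill the out-of-bag ($\varepsilon$) positions of the covering tuples. This number depends on $h$ and on $\cA$, and fixing a canonical covering atom on the source side cannot remove this freedom on the target side. So equal tree counts give equal counts for $\cC^\ast$ (and for structures with pure decompositions), not for $\cC$. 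This failure of the count identity for $k>1$ is exactly what the paper identifies as the main difficulty.

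\paragraph{The missing idea.}
You need to stop looking for an exact count identity in the non-pure case and use $\rcr$-equivalence directly, as in \Cref{lem:connectedHomInd}. Each class $\mathsf{Ext}^D_\cA(h)$ gets a canonical representative, the natural extension $u\mapsto h(\fulltup(u))$. One then builds, by a BFS over $T$, an injection from natural extensions for $\cA$ into natural extensions for $\cB$.

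The root image is matched by its index inside its stable colour class; the multiplicities agree. A child image is matched by its index inside the set of elements with the same similarity type to the already matched parent image and the same colour. These sets have equal sizes because the two parent images share a stable colour. Connectedness of $\cC$ makes the similarity types between adjacent nodes non-empty; note that your proposal never uses connectedness. The symmetric injection then gives $\homs(\cC,\cA)=\homs(\cC,\cB)$.

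\paragraph{Your other direction.}
This is a workable variant of the paper's print argument, but the obstacle you flag there is misplaced. With the encoding by individual relations $E_{i,j}$, extra coincidences in the image are harmless. For a pure decomposition they are simply non-injective homomorphisms of $\cC$, and \Cref{lem:ghd.hom.eq} gives the exact identity. Encoding exact equality types is what creates your need for M\"obius inversion. What genuinely has to be handled is that an arbitrary labelled tree need not be of the form $\cC^D$; the paper does this via prints.
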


What we find particularly notable about this result is that it connects notions that come from rather different areas. Generalised hypertreewidth was developed in the algorithmic theory of constraint satisfaction and database queries as a width measure capturing structural tractability and with limited and only very recently discovered connections to logic~\cite{scheidt_et_al:LIPIcs.MFCS.2025.88}. While generalised hypertree width is a generalisation of treewidth for higher arity structures, the width represents a very different structural aspect of the structure. In treewidth the width of a bag in the tree decomposition corresponds to the number of vertices in the bag, thus closely connected to the variables in a logical expression of the bag's structure. For generalised hypertree width, the width is a covering parameter, measuring how well the bag can be covered by edges in the structure. Algorithmically, this combinatorially bounds the number of bag assignments that need to be considered, but it is not known to have deeper structural meaning.

It is therefore somewhat surprising that bounding this algorithmic width parameter is exactly reflected by a variant of colour refinement, and that the same condition admits a homomorphism indistinguishability characterisation. The case $k=1$ already hints at this connection, since generalised hypertreewidth $1$ coincides with $\alpha$-acyclicity~\cite{GottlobLS02,GottlobMS09}, our result shows that this is not merely a special case, but part of a uniform correspondence for every bound $k$.

Naturally, in view of Theorem~\ref{maintheorem:boundedGHDforStructures} we also ask whether there exists an analogue concerning other width measures, with the most prominent one being \textit{fractional hypertreewidth}. There the algorithmic motivation is even stronger, as the parameter is now derived from the fractional covering number of bags, which in turn is algorithmically useful through intricate information theoretic considerations. It is additionally the current frontier of polynomial-time algorithms for constraint satisfaction problems (with respect to the constraint structure)~\cite{GroheM14}. 

In this work, we answer this in the affirmative as well, that is, we give the first complete characterisation of a \emph{fractional} version of a relational $k$-WL algorithm---which we call $\rcrFrac$---in regard to homomorphism-indistinguishability over the class of relational structures with \textit{pure} fractional hypertreewdith at most $k$. Here \emph{pure} fractional hypertreewidth is a restricted version of fractional hypertreewidth, already known in the context of hypertree decompositions, that helps us to overcome technical obstructions intrinsic to (not necessarily pure) fractional hypertree decompositions. Our second main theorem, formally stated as follows, is obtained by extending our techniques for \Cref{maintheorem:boundedGHDforStructures}. 

\begin{theorem}[Main theorem for $\rcrFrac$-equivalence in terms of homomorphism-indistinguishability over structures of bounded pure fractional hypertreewidth]\label{maintheorem:boundedFHDforStructures}\footnote{In particular, we only need a less restrictive version of fractional hypertreewidth, which we call \textit{semi-pure}.}
Fix a signature $\sigma$ and a positive integer $k$. For any two $\sigma$-structures $\mathcal{A}, \mathcal{B}$, the following are equivalent:
\begin{enumerate}
\item $\rcrFrac$ distinguishes $\mathcal{A}$ and $\mathcal{B}$;
\item There is a connected $\sigma$-structure $\mathcal{C}$ that admits a pure fractional hypertree decomposition of width $k$ such that $\#\Hom(\mathcal{C}, \mathcal{A}) \neq \#\Hom(\mathcal{C}, \mathcal{B})$.\qed
\end{enumerate}    
\end{theorem}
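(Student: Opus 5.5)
This is the fractional analogue of Theorem~\ref{maintheorem:boundedGHDforStructures}, and I would prove it by rerunning the same strategy as Dell, Grohe, and Rattan, with bags now certified by fractional covers. The first step is to make $\rcrFrac$ concrete as colour refinement on an auxiliary "bag structure". For a structure $\cA$, consider all pairs $(\gamma,\bar a)$ where $\gamma$ is a fractional edge-cover pattern of weight at most $k$ over relation symbols. Pureness lets me take the support of the cover to consist of atoms, so $\bar a$ is an assignment to the variables of these atoms that is consistent with $\cA$. Two such bag tuples are adjacent when they agree on a shared set of positions. $\rcrFrac$ then refines colours of these bag tuples: at each round, a tuple receives the multiset of colours of the adjacent bag tuples, grouped by overlap type.

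The main body of the argument is an unfolding lemma. I would show that after $t$ rounds, the colour of a bag tuple $(\gamma,\bar a)$ determines the number of homomorphisms extending $\bar a$ from every rooted structure $\mathcal C$ that admits a (semi-)pure fractional decomposition of depth at most $t$ whose root bag has cover $\gamma$. The converse also holds: the collection of these rooted counts determines the colour. The proof goes by induction on $t$.

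For the forward direction, the key fact is that counts for a decomposition which glues child subdecompositions at the root factor as a product over children. Each child factor is a sum over neighbouring bag tuples of the right overlap type. Root-consistency is guaranteed because the root bag is covered by atoms. For the converse, I would follow the standard route. First, fix finitely many candidate colours. Then use the product structure of gluing, i.e.\ disjoint unions of children at the root, together with a Vandermonde/linear-independence argument on homomorphism vectors to separate multisets of child colours. Finally, summing over all roots yields the unrooted statement. Two extra points need care. Connectedness of $\mathcal C$ can be assumed, since homomorphism counts are multiplicative over components. The colour histograms at the top level are compared via counts from the connected structures obtained by forgetting the root.

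The main obstacle, I expect, is making the gluing operation stay within the class. Gluing two decompositions along a bag is only possible if the overlap between the child bag and the root bag is itself realised by atoms. Otherwise, the interpolation step introduces non-pure covers, and the fractional width of the glued bag is no longer certified by atoms actually present in $\mathcal C$. This is exactly why the statement restricts to pure, or at least semi-pure, decompositions. My first attempt would be to normalise decompositions so that every bag is the union of the scopes of its supporting atoms (semi-purity). I would then check that the "add an atom copy and glue" operations used in the interpolation preserve this normal form. If full purity fails under gluing, I would prove the theorem for semi-pure width and then show that semi-pure and pure width coincide for the structures produced by the construction.
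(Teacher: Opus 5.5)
Your route is viable for the statement as given, but it differs from the paper's.

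\textbf{How the paper argues.} The paper never redoes an unfolding or interpolation argument. It identifies $\rcrFrac$ with $1$-WL on the fractional exploded encoding $\fenc(\cdot,k)$. Its elements are sequences of at most $\upp(k)$ atoms of the \emph{target} whose induced substructure has fractional edge-cover number at most $k$; no decomposition, and hence no pureness, enters the definition of the algorithm. The paper then imports the Dvo\v{r}\'ak/Dell--Grohe--Rattan theorem (\Cref{thm:ClassicalColorRefinement}) as a black box.
\begin{itemize}
\item For (1)$\Rightarrow$(2), it takes a distinguishing tree and passes to a distinguishing print $Q$. It builds $\cA$ with a full pure FHD $D$ such that $\cA^D\cong Q$, and transfers counts via $\homs(\cA^D,\mathbf{F}(\cB,k))=\homs(\cA,\cB)$ for pure $D$ (\Cref{lem:FHD.hom.eq}).
\item For (2)$\Rightarrow$(1), it proves, for full semi-pure FHDs, an explicit BFS colour-class-matching injection between natural $D$-extensions (\Cref{lem:connectedHomIndFrac}).
\end{itemize}

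\textbf{What your route buys and costs.} You stay inside rooted $\sigma$-structures with pure decompositions. Realisability of the distinguishing object is then automatic, and the print machinery disappears. In exchange, you must carry out the interpolation yourself. This includes an inclusion--exclusion over overlap and equality patterns, because homomorphism counts only see ``at least these equalities'' while the refinement groups neighbours by exact $\stp$. Coarser root patterns can be realised by quotients, which keep a pure decomposition of width at most $k$ by \Cref{lem:auxFracCoverNumPreserved}. Also, your direction (2)$\Rightarrow$(1) covers only pure decompositions, whereas the paper's bijection argument targets the semi-pure case.

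\textbf{The obstacle you anticipate does not exist.} Attaching a child, or gluing rooted structures at the root, never requires the overlap to be realised by atoms.
\begin{itemize}
\item Each node's bag is the union of its own atoms, and the overlap with the parent is just shared variables.
\item The width bound is inherited from the pattern of the encoding element you copy (\Cref{claim:stpFrac}).
\item Fresh variables give connectivity.
\end{itemize}
This is exactly how the paper's construction from a print yields a full pure FHD, so no normalisation is needed. Note also that ``every bag is the union of the scopes of its atoms'' is purity, not semi-purity. Semi-purity only requires the union of the atoms in $\lambda(v)$ to have fractional edge-cover number at most $k$; that is what puts the image of a node's full atom tuple into $\mathbf{F}(\cB,k)$.

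\textbf{Where purity really matters.} It matters in your forward unfolding lemma, which is false as stated for non-pure bags. There the bag tuple has positions for atom variables outside the bag, and either reading fails:
\begin{itemize}
\item if ``extending $\bar a$'' fixes those positions, it creates constraints that are not propagated along the tree;
\item if it does not, the sum over encoding neighbours counts each homomorphism once per choice of images for these dangling positions. These are the $\varepsilon$-entries the paper handles through $D$-extensions.
\end{itemize}
Restrict the lemma to pure decompositions, made full as in \Cref{remark:GHDforRelationalStructures}, so that every atom is checked at some node. That suffices for the theorem, and you should drop the semi-pure fallback.
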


We remark that, in our relational $k$-WL algorithms mentioned above, colour tuples that correspond to concatenations of already existing tuples of the corresponding relational structure---which we make explicit momentarily---have length bounded by some function $f(k, \sigma)$ that depends only on $k$ and the underlying signature, i.e,  the set of all relation symbols of the structures.

\paragraph*{$k$-HyperOWL and Algorithmic Implications for Counting Homomorphisms}
So far, we have established $k$-RCR as a relational WL-algorithm whose expressiveness is captured precisely by homomorphism indistinguishability from structures of generalised hypertreewidth at most $k$. However, $k$-RCR (as well as $\rcrFrac$) has two limitations, shared with all existing relational WL algorithms for structures of rank larger than $2$:
\begin{enumerate}
    \item $k$-RCR does not yield a constructive algorithm for counting homomorphisms from structures of generalised hypertreewidth at most $k$, and
    \item $k$-RCR does not operate on the relational structure directly, but instead translates the input structure into a graph-like structure of rank $2$ and then runs a (non-trivial) variation of WL for graphs.
\end{enumerate}
Arguably, (2) might a priori not appear as a limitation for concrete algorithmic purposes, however, we believe that a relational WL algorithm that operates directly on a structure of higher rank will provide more insights on properties on homomorphism indistinguishability of relational structures.

On the other hand, (1) constitutes a clear limitation for algorithmic purposes, especially in the light of the recent work of Lanzinger and Barcel{\'{o}}~\cite{lanzinger2024on}, who showed that, for the case of rank $2$ only (that is, vertex and edge labelled graphs), the number of homomorphisms from a structure of treewidth at most $k$ can be computed directly from the stable colouring produced by running the standard $k$-WL algorithm for labelled graphs. We therefore believe that a relational WL algorithm (operating on structures of arbitrary rank) should also entail an algorithm for computing homomorphisms from structures of bounded generalised hypertreewidth. 

To alleviate both limitations, in the second part of this work, we introduce $k$-``HyperOWL''.\footnote{We call the algorithm HyperOWL, rather than HyperWL, as it resembles more closely a higher arity version of \textbf{O}blivious WL (``OWL''), rather than standard WL; see e.g.\ Grohe's survey for an exposition of WL and OWL on graphs~\cite{Grohe21}.} In a nutshell, given a structure of rank at most $r$, $k$-HyperOWL operates on all $rk$-tuples of vertices which are $k$-coverable, that is, $rk$-tuples that can be covered by at most $k$ tuples of relations of the input structure. Initially, each tuple is associated with a colour only depending on the isomorphism type of the substructure induced by its elements, and in each iteration of the algorithm, the colour of a tuple is updated according to its current own colour and the current colours of all tuples that share at least one element. The formal definition of $k$-HyperOWL is provided in Section~\ref{sec:khyperowl}.

The first key property of $k$-HyperOWL stems from the fact that it only operates $k$-coverable tuples of an input structure $\mathcal{A}$, the number of we will be able to bound by $O((rk)^{rk}\cdot |\mathcal{A}^k|)$, where $r$ is the rank of~$\mathcal{A}$. Specifically, this fact allows us to obtain the following running time bound for $k$-HyperOWL:  

\begin{lemma}[Simplified version]
    There is a deterministic algorithm that, on input a $\sigma$-structure $\mathcal{A}$ of rank $r$, and integers $k\geq 1$ and $d\geq 0$, computes  the colour of the $d$-th iteration of $k$-HyperOWL for each $k$-coverable $rk$-tuple of $\mathcal{A}$ in time
    \[O\left((|\sigma|2^k+drk)(rk)^{rk}\cdot |\mathcal{A}|^{k+1} \right)\,.\]\qed
\end{lemma}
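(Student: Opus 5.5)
The plan is to bound separately the number of colour-carrying objects, the cost of the initial colouring, and the cost of each refinement round, and then sum these costs.

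\emph{Step 1 (enumerating the $k$-coverable tuples).} Every $k$-coverable $rk$-tuple is determined by two choices: a sequence of at most $k$ tuples $\bar a_1,\dots,\bar a_k$ from the relations of $\mathcal{A}$, and a map from the $rk$ positions into the at most $rk$ elements of $\bar a_1\cdots\bar a_k$. Hence there are at most $|\mathcal{A}|^k\cdot (rk)^{rk}$ such tuples, and we can list them in time $O((rk)^{rk}|\mathcal{A}|^k)$ by iterating over both choices. We keep them in a lookup structure, such as a trie indexed by the elements, so that membership tests and index lookups are cheap. Here $|\mathcal{A}|$ denotes the size of $\mathcal{A}$, which is at least the number of its tuples.

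\emph{Step 2 (initial colouring).} The colour of a tuple at round $0$ is the isomorphism type of the substructure induced on its elements. We compute it from the equality pattern of the tuple together with, for each relation symbol $R\in\sigma$ and each way of choosing positions among the covering tuples, a test whether the corresponding tuple lies in $R^{\mathcal{A}}$. The natural accounting is that for each of the $|\sigma|$ relation symbols we inspect at most $2^k$ subsets of the covering tuples. Each such test costs $O(|\mathcal{A}|)$ if done naively, which is where the extra factor $|\mathcal{A}|$ comes from. This gives the term $|\sigma|2^k(rk)^{rk}|\mathcal{A}|^{k+1}$.

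\emph{Step 3 (one refinement round).} The new colour of a tuple $\bar t$ is the pair consisting of its old colour and the multiset of colours of all tuples sharing an element with $\bar t$. The plan is to organise this multiset by shared element. For each element $v$ of $\mathcal{A}$, we bucket the coverable tuples containing $v$, and for each bucket we compute the multiset of colours in it by radix sorting. The multiset for $\bar t$ is then assembled from the buckets of its at most $rk$ elements, together with the sharing-position pattern, so that tuples sharing several elements are handled by inclusion--exclusion.

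The total size of all buckets is $rk$ times the number of tuples. Assembling the multisets costs $O(rk)$ bucket references per tuple, and the combined multiset signatures can be compressed by sorting and renaming in linear time. Relative to the number of tuples, the costly part is comparing bucket contents. Over all buckets this amounts to $O(rk(rk)^{rk}|\mathcal{A}|^{k})$ entries, which we bound by $rk(rk)^{rk}|\mathcal{A}|^{k+1}$ to give a uniform bound. Running $d$ rounds yields the term $drk(rk)^{rk}|\mathcal{A}|^{k+1}$.

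\emph{Main obstacle.} The delicate point is Step 3: the neighbour multiset must be computed without iterating over all pairs of coverable tuples, which would cost $|\mathcal{A}|^{2k}$. The per-element bucketing is the key device. Its correctness relies on the exact form of the update rule in Section~\ref{sec:khyperowl}. If that rule distinguishes neighbours by which positions are shared, the buckets must be indexed by pairs (element, position), at the cost of only a constant factor depending on $rk$. Summing the three steps gives the claimed bound.
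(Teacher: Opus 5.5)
\textbf{Genuine gap: Step 3 analyses the wrong refinement rule, so it does not bound the cost of the algorithm in the statement.}

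The phrase in the introduction (``tuples that share at least one element'') is only informal. The definition in Section~\ref{sec:khyperowl} is oblivious-WL style. The round-$i$ colour of $\bar v$ consists of three parts. First, its previous colour. Second, the ordered list of previous colours of the contractions $\bar v[j/\hat v]$ for $j\in[rk]$ and $\hat v\in\mathsf{set}(\bar v)$; you never account for this component. Third, separately for each position $j\in[rk]$, the multiset of previous colours of $\bar v[w/j]$ over those $w\in A$ with $\bar v[w/j]\in\rho(\mathcal{A},k)$.

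So each tuple has only $O(rk\cdot|A|)$ relevant neighbours, plus $O((rk)^2)$ contractions. The paper's proof simply enumerates them, tests membership in $\rho(\mathcal{A},k)$ and reads off their colours. This costs $|\rho(\mathcal{A},k)|\cdot O(rk\cdot|\mathcal{A}|)=O((rk)^{rk+1}|\mathcal{A}|^{k+1})$ per round. That is where the exponent $k+1$ genuinely comes from; in your write-up it appears only by padding an $|\mathcal{A}|^{k}$ bound. Your fallback of buckets indexed by (element, position) pairs does not fit this rule either. The direction-$j$ neighbours are exactly the tuples agreeing with $\bar v$ on every position other than $j$. A bucketing implementation would therefore have to key on $\bar v$ with position $j$ masked out.

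Even for the rule you describe, the bucketing step is not justified. Inclusion--exclusion over shared elements needs the intersections $\bigcap_{v\in S}B_v$ for all non-empty $S\subseteq\mathsf{set}(\bar t)$, not just the at most $rk$ per-element buckets. Also, a signed combination of bucket references is not a canonical name for a multiset, since different combinations can denote the same multiset. To compare colours deterministically you must materialise the multisets. A single one can have size of order $|\rho(\mathcal{A},k)|$, for example when one element lies in every tuple of $\mathcal{A}$. This brings back the $|\mathcal{A}|^{2k}$ behaviour you wanted to avoid.

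Step 1 is fine and matches Lemma~\ref{lem:enumerate_cover_set}. In Step 2, note that $\mathsf{atp}(\bar v)$ is defined via the trimmed structure $\mathcal{A}(\mathsf{set}(\bar v))$ of Definition~\ref{def:AS}, not the induced substructure. It also records projections of tuples that lie only partly in $\mathsf{set}(\bar v)$. It is computed by scanning all tuples of $\mathcal{A}$ for each $\bar v$, which is the honest source of the extra factor $|\mathcal{A}|$. The exponential factor comes from the subsets $J\subseteq[\mathsf{ar}(R)]$ of relation positions; the paper's own computation yields $2^r$ at this point. It does not come from ``$2^k$ subsets of the covering tuples''.
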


As mentioned before, Lanzinger and Barcel{\'{o}}~\cite{lanzinger2024on} have recently shown that the number of homomorphisms from a rank-$2$ structure of treewidth at most $k$ can be computed directly from the stable colouring of the standard $k$-WL algorithm on labelled graphs. Specifically, writing $\mathcal{C}^k$ for the set of all possible colour classes produced by $k$-WL on labelled graphs, they showed that for any $k>0$ and rank-$2$ structure $F$ of treewidth at most $k$, there is a function $\eta_F: \mathcal{C}^k \to \mathbb{N}$ such that, for all rank-$2$ structures $G$,
\begin{equation}\label{eq:intro_lanzingerbarcelo}
    \homs(F,G) =  \sum_{\bar{v}\in V(G)^k}\eta_F(c^k(\bar{v}))\,,
\end{equation}
where $c^k$ denotes the stable colouring produced by running $k$-WL on $G$. Specifically, we can partition $V(G)^k$ by the stable colouring $c^k$; yielding colour classes $[\bar{v}_1],\dots,[\bar{v}_\ell]$, where the $\bar{v}_i$ are representatives of the respective colour classes. This enables us to collect terms in~\eqref{eq:intro_lanzingerbarcelo} and obtain
\[ \homs(F,G) =  \sum_{i=1}^\ell |[\bar{v}_i]|\cdot \eta_F(c^k(\bar{v}_i))\,.\]
In other words, this result of Lanzinger and Barcel{\'{o}}~\cite{lanzinger2024on} yields an algorithm for homomorphism counting that becomes more efficient as the partition of the stable colouring becomes coarser.

Our main algorithmic contribution establishes that $k$-HyperOWL yields a similar result that applies to structures of \emph{arbitrary rank}.

\begin{theorem}[Algorithm for counting homomorphisms via HyperOWL (simplified statement)]\label{thm:intro_algo_howl}
Let $k$ be a positive integer and let $\sigma$ be a signature. Furthermore, let $\mathcal{A}$ be a $\sigma$-structure of rank at most $r$ with generalised hypertreewidth at most $k$. There exists a positive integer $d_\mathcal{A}$ and a function $\eta_\mathcal{A}$, both depending only on $\mathcal{A}$, such that the following is true for each $\sigma$-structure $\mathcal{G}$. There is a subset of colour classes $S_{d_\mathcal{A}}$ induced by the $d_{\mathcal{A}}$-th iteration of $k$-HyperOWL on $\mathcal{G}$ such that
\[\homs(\mathcal{A},\mathcal{G}) =  \sum_{[\bar{v}]\in S_{d_\mathcal{A}}} |[\bar{v}]|\cdot \eta_{\mathcal{A}}(\mathsf{HOWL}^k_{d_\mathcal{A}}(\bar{v}))\,,\]
where $\mathsf{HOWL}^k_{d_{\mathcal{A}}}(\bar{v})$ denotes the colour assigned to $\bar{v}$ by $k$-HyperOWL after $d_{\mathcal{A}}$ iterations.\qed
\end{theorem}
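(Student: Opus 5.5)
We plan to prove Theorem~\ref{thm:intro_algo_howl} by the standard dynamic programming over a decomposition, made colour-invariant. Fix a generalised hypertree decomposition $(T,\chi,\lambda)$ of $\mathcal{A}$ of width $k$ and root it at some node $t_0$. We may normalise the decomposition so that, for every node $t$, the bag $\chi(t)$ is covered by the $\le k$ atoms in $\lambda(t)$; we represent the covered vertices by an $rk$-tuple $\bar{a}_t$ formed by concatenating the tuples of $\lambda(t)$ (padding by repetition). A homomorphism $h\colon\mathcal{A}\to\mathcal{G}$ maps $\bar{a}_t$ to a tuple $h(\bar{a}_t)$ that is covered by the images of the $\le k$ atoms, hence is $k$-coverable in $\mathcal{G}$. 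For $t$ and a $k$-coverable $rk$-tuple $\bar v$ of $\mathcal{G}$, let $N_t(\bar v)$ be the number of homomorphisms from the substructure $\mathcal{A}_t$ induced on the vertices in the subtree at $t$ that map $\bar a_t$ to $\bar v$. Then
\[\homs(\mathcal{A},\mathcal{G})=\sum_{\bar v} N_{t_0}(\bar v),\]
where we sum over $k$-coverable $rk$-tuples $\bar v$.

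The core lemma, proved by induction on the height $d$ of the subtree at $t$, states that $N_t(\bar v)$ is determined by $\mathsf{HOWL}^k_{d}(\bar v)$. That is, there is a function $\eta_t$ with $N_t(\bar v)=\eta_t(\mathsf{HOWL}^k_d(\bar v))$.

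\textbf{Base case.} For a leaf, $N_t(\bar v)$ is $1$ or $0$, according to whether $\bar a_t\mapsto\bar v$ is a well-defined map (the equality pattern of $\bar a_t$ is respected) and a partial homomorphism on the substructure covered by $\lambda(t)$. Both conditions are read off the initial colour, which encodes the isomorphism type of the induced substructure on $\bar v$.

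\textbf{Inductive step.} Let $t$ have children $t_1,\dots,t_m$. By the connectedness condition, $\mathcal{A}_{t_i}$ and the part of $\mathcal{A}$ at $t$ interact only through $\chi(t)\cap\chi(t_i)$. Hence
\[N_t(\bar v)=[\bar a_t\mapsto\bar v\text{ valid}]\cdot\prod_{i}\;\sum_{\bar w} N_{t_i}(\bar w),\]
where the sum ranges over $k$-coverable $\bar w$ that agree with $\bar v$ on the positions encoding $\chi(t)\cap\chi(t_i)$. This sum is the point that needs care, because the HyperOWL update aggregates over tuples sharing \emph{at least one} element, not over tuples with a prescribed overlap pattern. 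We therefore intend the HyperOWL refinement step (as formalised in Section~\ref{sec:khyperowl}) to record, for each neighbour $\bar w$, the pair consisting of its colour and the position-overlap pattern between $\bar v$ and $\bar w$, as a multiset. Given that, the inner sum equals
\[\sum_{c,\pi}\#\{\bar w:\mathsf{HOWL}^k_{d-1}(\bar w)=c,\ \text{overlap pattern }\pi\}\cdot\eta_{t_i}(c),\]
where the sum runs over the patterns $\pi$ consistent with $\chi(t)\cap\chi(t_i)$. This count is a function of $\mathsf{HOWL}^k_d(\bar v)$. We use colours at round $d-1\ge$ the child height, relying on the fact that HyperOWL colours refine over rounds.

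\textbf{The empty-intersection case.} If $\chi(t)\cap\chi(t_i)=\emptyset$, the neighbourhood aggregation does not see $\bar w$. We handle this by first assuming $\mathcal{A}$ connected and choosing a decomposition in which adjacent bags intersect; this can be arranged by merging or discarding such edges. For disconnected $\mathcal{A}$ we use multiplicativity of $\homs$ over components. This gives a product of sums, which we must still express in the required form. To do so we use that each component's count is itself a colour-class sum, and that the multiset of all colours is an invariant derivable from any class; alternatively we absorb the product by letting $\eta_{\mathcal{A}}$ depend on the global colour histogram, which we expect the full statement to permit.

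\textbf{Conclusion.} Setting $d_{\mathcal{A}}$ to the height of $T$ and $\eta_{\mathcal{A}}=\eta_{t_0}$, we group the sum for $\homs(\mathcal{A},\mathcal{G})$ by colour classes. We take $S_{d_{\mathcal{A}}}$ to be the classes of $k$-coverable tuples on which $\eta_{t_0}$ is nonzero, or simply all classes, which yields the formula of Theorem~\ref{thm:intro_algo_howl}.

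\textbf{Main obstacle.} We expect the main difficulty to be the inductive step, specifically verifying that the HyperOWL update retains enough overlap information to count the $\bar w$ with a prescribed overlap pattern. Relatedly, we must check that the normalisation of the decomposition, namely covering each bag exactly by a $k$-coverable $rk$-tuple so that $\chi(t)\subseteq$ the vertices of $\bar a_t$ without changing the homomorphism count, is compatible with generalised (rather than ordinary) hypertree decompositions. For the latter we would restrict $\mathcal{A}_t$ to $\chi$-vertices and add the extra covered vertices as existential tuple components, counting extensions only through the atoms.
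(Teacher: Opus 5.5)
Your proof has two genuine gaps: the inductive step uses an update rule that HyperOWL does not have, and your choice of node tuple overcounts homomorphisms for non-pure decompositions.

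\textbf{The update rule.} You followed the informal description in the introduction and let the update record, as a multiset, the pair (colour, position-overlap pattern) for every overlapping $k$-coverable tuple. The update formalised in Section~\ref{sec:khyperowl} is oblivious. In one round, $\bar v$ only sees two things: the colours of the collapsed tuples $\bar v[j/\hat v]$, and, for each single position $j$, the multiset of colours of the tuples $\bar v[w/j]\in\rho(\mathcal{G},k)$ obtained by overwriting that one position. Your sum $\sum_{\bar w}N_{t_i}(\bar w)$ runs over child tuples that may differ from $\bar v$ in many positions at once. So it is not available from one round, and taking $d_{\mathcal{A}}$ to be the height of the GHD is unjustified. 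Proving the claim for an update you define yourself does not prove it for HyperOWL.

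The missing idea is to refine the decomposition into a nice one: leaf, introduce, forget and join nodes with non-empty bags, each bag a subset of an original bag and hence still coverable by at most $k$ tuples. Then every transition changes exactly one element, and each node type matches one component of the update:
\begin{itemize}
\item At a forget node, write the forgotten vertex $z$ into a duplicated position $s$ (one exists because $|B_t|\le rk-1$) and sum over its image using the multiset for position $s$.
\item At an introduce node, delete the new element via $\bar u[j/u_\ell]$, i.e.\ the collapse component.
\item At a join node, multiply and use that colours refine.
\end{itemize}
Then $d_{\mathcal{A}}$ is the height of the nice decomposition. Your empty-intersection case never arises, because adjacent non-empty bags of a nice decomposition always intersect. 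Separately, letting $\eta_{\mathcal{A}}$ depend on a global colour histogram, as you suggest for disconnected $\mathcal{A}$, is not permitted by the statement, where $\eta_{\mathcal{A}}$ is applied to the colour of $\bar v$ alone. The paper's argument is for connected $\mathcal{A}$.

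\textbf{The node tuple.} Concatenating the atoms of $\lambda(t)$ places in $\bar a_t$ vertices outside $\chi(t)$, possibly outside $\mathcal{A}_t$ altogether. A GHD cannot in general be made pure, so $N_{t_i}(\bar w)$ does not constrain these positions. Hence $\sum_{\bar w}N_{t_i}(\bar w)$ counts each homomorphism of $\mathcal{A}_{t_i}$ once for every atom-valid filling of them. This is the same phenomenon that makes $|\mathsf{Ext}^D_{\mathcal{B}}(h)|>1$ in Lemma~\ref{lem:HomSumOfExt}, and a count of coloured tuples cannot implement your ``existential'' components.

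The paper instead takes $\bar x$ with $\mathsf{set}(\bar x)=B_t$ exactly, padding by repetition. It obtains $k$-coverability of images from the trimmed structures $\mathcal{A}(S)$ built into the atomic type. It only counts extendable maps $\bar x\mapsto\bar u$, i.e.\ vertex-surjective homomorphisms $\mathcal{A}(\mathsf{set}(\bar x))\to\mathcal{G}(\mathsf{set}(\bar u))$. Lemma~\ref{lem:extendable_covers} shows their images lie in $\rho(\mathcal{G},k)$. Claim~\ref{clm:HOWL_forget} shows that restricting the forget sum to extendable, $k$-coverable successors discards only empty terms.
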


Finally, we will show that Theorem~\ref{thm:intro_algo_howl} immediately implies that two structures $\mathcal{G}$ and $\mathcal{F}$ indistinguishable by $k$-HyperOWL must also be indistinguishable by homomorphism counts from structures of generalised hypertreewidth at most $k$. In particular, this shows that $k$-HyperOWL is at least as expressive as $k$-RCR in terms of distinguishing non-isomorphic structures. Formally, we obtain.
\begin{corollary}
    Let $\mathcal{G}$ and $\mathcal{H}$ be two structures over the same signature that are indistinguishable by $k$-HyperOWL. Then, for all structures $\cA$ of generalised hypertreewidth at most $k$, 

\[\homs(\cA, \mathcal{G})= \homs(\cA, \mathcal{H})\,.\]\qed
\end{corollary}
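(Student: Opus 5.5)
The plan is to derive the corollary directly from Theorem~\ref{thm:intro_algo_howl}. Fix a structure $\cA$ of generalised hypertreewidth at most $k$ and rank $r$. Theorem~\ref{thm:intro_algo_howl} provides an iteration depth $d_\cA$ and a function $\eta_\cA$, both depending only on $\cA$ and not on the target structure, such that for every target $\mathcal{G}$
\[\homs(\cA,\mathcal{G}) = \sum_{[\bar v]\in S_{d_\cA}(\mathcal{G})} |[\bar v]|\cdot \eta_\cA(\mathsf{HOWL}^k_{d_\cA}(\bar v)).\]
The same identity holds for $\mathcal{H}$, with the same $d_\cA$ and the same $\eta_\cA$. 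It therefore suffices to show that the two right-hand sides agree whenever $\mathcal{G}$ and $\mathcal{H}$ are indistinguishable by $k$-HyperOWL.

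First I fix the notion of indistinguishability. Two structures are indistinguishable by $k$-HyperOWL if, for every iteration $d$ and every colour $c$, they have the same number of $k$-coverable $rk$-tuples receiving colour $c$ after $d$ rounds. If the official definition refers only to the stable colouring, I would reduce to this per-round version. To do so, I would use that the colour at round $d+1$ determines the colour at round $d$, since each update includes the tuple's own previous colour. Because refinement stabilises, equal multisets of stable colours, computed on the disjoint union or with colours read as refinement histories, yield equal multisets at every earlier round $d$.

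The main technical step is to check that the index set $S_{d_\cA}$ is defined in terms of colours alone. That is, $S_{d_\cA}(\mathcal{G})$ should consist exactly of those colour classes whose colour lies in a fixed set $T_\cA$ depending only on $\cA$; for instance, the colours whose initial isomorphism type is compatible with the root bag of the decomposition. I expect this to hold by construction in the full version of Theorem~\ref{thm:intro_algo_howl}, and I would verify it there; if not, I would absorb the selection into $\eta_\cA$ by setting $\eta_\cA(c)=0$ for $c\notin T_\cA$.

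Once this is established, rewrite the sum as $\sum_{c\in T_\cA} N_{\mathcal{G}}(c)\,\eta_\cA(c)$, where $N_{\mathcal{G}}(c)$ is the number of tuples of $\mathcal{G}$ with colour $c$ at round $d_\cA$. Indistinguishability gives $N_{\mathcal{G}}(c)=N_{\mathcal{H}}(c)$ for all $c$, so the two sums coincide term by term. This yields $\homs(\cA,\mathcal{G})=\homs(\cA,\mathcal{H})$.

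Finally, the remark that $k$-HyperOWL is at least as expressive as $k$-RCR follows by contraposition. By Theorem~\ref{maintheorem:boundedGHDforStructures}, if $k$-RCR distinguishes $\mathcal{G}$ and $\mathcal{H}$, then some connected $\cC$ of generalised hypertreewidth at most $k$ separates them by homomorphism counts. The corollary then shows that $k$-HyperOWL also distinguishes them.
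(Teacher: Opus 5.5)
Your proposal is correct and is essentially the paper's argument. The paper applies Theorem~\ref{thm:main_howl_counting}, the detailed form of Theorem~\ref{thm:intro_algo_howl}. There the index set $\mathsf{ER}(d,\cA,\bar x,\mathcal G)$ is determined by atomic types, so your zero-weight fallback is unnecessary. The weight $\homs(\cA_{\mathsf{root}},\mathcal G)[\bar x\to\bar u]$ depends only on the depth-$d$ colour by Lemma~\ref{lem:main_howl}, and the paper then simply notes that stable indistinguishability implies depth-$d$ indistinguishability.

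For that last step, rely on your refinement-history justification rather than the disjoint-union one. The disjoint-union trick does not transfer directly to HyperOWL, because $\rho$ of a disjoint union contains mixed $k$-coverable tuples, and these change the colours.
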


\subsection{Presentation of our Algorithms and Technical Overview}

In this section, we formally state the $\rcr$ algorithm involved in \Cref{maintheorem:boundedGHDforStructures}, for which we also give a brief proof overview (the details of which can be found in \Cref{sec:Generalised}). We note that the proof of \Cref{maintheorem:boundedFHDforStructures} follows the ideas of the proof of \Cref{maintheorem:boundedGHDforStructures}, however an additional overhead consisting of important technical lemmas that ensure that such an adaptation is feasible, is necessary.
We refer the reader to \Cref{sec:Fractional} for the definition of the $\rcrFrac$ algorithm as well as for the proof of \Cref{maintheorem:boundedFHDforStructures}.

\paragraph*{Technical Setup}

We refer the reader to \Cref{sec:prelims} for a quick background on relational structures which we assume familiarity with for the rest of this section. We fix a finite relational signature $\sigma$, and let $\ar(\sigma)$ be the maximum arity in $\sigma$. We also \emph{globally} fix a total order $\preceq_\sigma$ on the relation symbols in $\sigma$.
A \emph{coloured tuple} of $\cA$ is a pair $(\bar a, R)$ where $\bar a \in R^{\cA}$ which we denote by $R(\bar{a})$. The need of introducing the notion of coloured tuples is for appropriate bookkeeping of tuples that occur in multiple relations. We write $\CT(\cA)$ for the set of all coloured tuples in $\cA$. Moreover, we fix a total order $<$ on the domain and define the total order $\prec_\cA$ over its coloured tuples as follows:
\[
R(\bar a)\prec_\cA S(\bar b)
\iff
\bigl(R\prec_\sigma S\bigr)\ \lor\ \bigl(R=S\ \land\ \bar a<\bar b\bigr).
\]
Let $\bot$ be a fresh dummy symbol that we will use to represent ``no relation symbol'', and let $\varepsilon$ be a fresh domain element that we will use to represent ``no domain element''.

\newcommand{\MCT}{\mathsf{MCT}}
\newcommand{\rep}{\mathsf{rep}}
\newcommand{\atp}{\mathsf{atp}}
\newcommand{\stp}{\mathsf{stp}}
\newcommand{\flatt}{\mathsf{flat}}

\newcommand{\col}{\chi}
\newcommand{\mult}{\mathsf{mult}}

\paragraph*{The $\rcr$ Algorithm}

Fix a signature $\sigma$. For a $\sigma$-structure $\cA$, define
\[
\MCT_k(\cA)=
\{(\alpha;\bar t_1,\ldots,\bar t_m)\mid
\alpha=(R_1,\ldots,R_m)\in\sigma^{(\leq k)},\
\bar t_i\in R_i^\cA\}.
\]
For $\Omega=(\alpha;\bar t_1,\ldots,\bar t_m)$ set
$\profile(\Omega):=\alpha$, $\flatt(\Omega):=\bar t_1+\cdots+\bar t_m$,
and $L(\Omega):=|\flatt(\Omega)|$.

There are two concepts that are central to relational $k$-WL algorithms (see \cite{scheidt_et_al:LIPIcs.MFCS.2025.88}), which we extend appropriately so that they apply to our setting.

\begin{restatable}[Atomic \& similarity types for $\rcr$]{definition}{rcrTypesDefinition}
For $\Omega\in\MCT_k(\cA)$, we define $\atp(\Omega):=\profile(\Omega)$. Furthermore, for $\Omega,\Psi\in\MCT_k(\cA)$, we define
$\stp(\Omega,\Psi):=\{(p,q)\in [L(\Omega)]\times[L(\Psi)]\mid \flatt(\Omega)[p]=\flatt(\Psi)[q]\}$,
and $\stp(\Omega):=\stp(\Omega,\Omega)$.
\end{restatable}

We are now ready to state $\rcr$ which is essentially a colouring function from elements of $\MCT_k(\mathcal{A})$ that is iteratively updated until it reaches a point where the \textit{colour classes} that it induces do not change in which case we say that the algorithm stabilises.

\begin{restatable}[$k$-relational colour refinement ($k$-RCR)]{definition}{kRCRdefinition}\label{def:kRCR}
We fix $k\ge 1$ and define the $k$-relational  algorithm ($\rcr$) that iteratively colours the elements of $\MCT_k(\cA)$ by the colouring $\chi_i^\cA$ computed as follows (where $i$ is the iteration counter)
\begin{itemize}
    \item[] \[\chi_0^\cA(\Omega):=(\atp(\Omega),\stp(\Omega));\]
    \item[] \[
\chi_{i+1}^\cA(\Omega):=
\Bigl(
\chi_i^\cA(\Omega),\
\{\!\{\,(\stp(\Omega,\Psi),\chi_i^\cA(\Psi))\mid
\Psi\in\MCT_k(\cA),\ \stp(\Omega,\Psi)\neq\emptyset\,\}\!\}
\Bigr).
\]
\end{itemize}
We say that $\rcr$ \emph{stabilises} after iteration $j$ if the colour classes formed before iteration $j$ do not change after iteration $j$.\footnote{Note that similarly to the definition of $k$-WL algorithm (see \Cref{sec:prelims}), $\rcr$ refines at each iteration the colour classes, which in turn implies that the algorithm stabilises after at most $|\MCT_k(\mathcal{A})|^k$ many iterations.} Let $\chi_\infty^\cA$ be the stable colouring, and for a stable colour $c$ let
$\mathsf{mult}_k^\cA(c):=|\{\Omega\in\MCT_k(\cA)\mid \chi_\infty^\cA(\Omega)=c\}|$.

For $\sigma$-structures $\cA,\cB$, write $\cA\equiv_{\rcr}\cB$ iff
$\mathsf{mult}_k^\cA(c)=\mathsf{mult}_k^\cB(c)$ for all stable colors $c$, in which case we say that $\rcr$ \emph{cannot distinguish} structures $\mathcal{A}$ and $\mathcal{B}$.
\end{restatable}

\subsubsection{\textit{The Key To The Proof} : Binary Structures Induced by Generalised Hypertree-Decompositions}

In this part, we define the key technical tool that will be crucial for the proof of \Cref{maintheorem:boundedGHDforStructures}. The authors of \cite{scheidt_et_al:LIPIcs.MFCS.2025.88} devised such a construction in which the structure (which is binary in their case) is induced by a tree-decomposition. We extend their definition in a more intricate way so as to induce binary structures by generalised hypertree-decompositions of relational structures of \textit{any} arity.

\paragraph*{Generalised Hypertree Decompositions for Relational Structures}

Typically, tree decompositions of a relational structure decompose either the Gaifman graph of the structure or the hypergraph that is associated with it. In this work, we consider decompositions of the latter case which we slightly refine: such tree decompositions usually involve a mapping $\lambda$ from the nodes of the underlying tree to a subset of hyperedges of the hypergraph of the corresponding relational structure; in our refined definition, $\lambda$ maps tree-nodes to subsets of \emph{coloured tuples} of the corresponding relational structure instead. Such a refinement in the definition does not affect the \emph{width} of the decomposition (compared to the width of the standard decomposition).

\begin{definition}[Edge cover based on coloured tuples]

Fix a signature $\sigma$ and let $\mathcal{A}$ be a $\sigma$-structure. Recall that we write $\mathsf{CT}(\mathcal{A})$ for the set of coloured tuples of $\mathcal{A}$. Let $X \subseteq A$. We say that a set $C\subseteq \mathsf{CT}(\mathcal{A})$ of coloured tuples is an \emph{edge cover} of $X$, if $X \subseteq \bigcup_{R(\bar{t})}\mathsf{set}(\bar{t})$. The \emph{edge cover number} of a relational structure $\mathcal{A}$ is the minimum size among all edge covers $C \subseteq \mathsf{CT}(\mathcal{A})$ of $A$.
\end{definition}

\begin{definition}[Generalised hypertree-decompositions of relational structures]
Fix a signature $\sigma$. A \emph{generalised hypertree decomposition} (GHD, for short) of a $\sigma$-structure $\mathcal{A}$ is an ordered triplet $(T, B, \lambda)$ where
\begin{itemize}
\item $T$ is a tree;
\item $B : V(T) \to 2^A$ is a mapping from the nodes of $T$ to subsets of the domain elements;
\item $\lambda : V(T) \to 2^{\mathsf{CT}(\mathcal{A})}$ is a mapping from the nodes of $T$ to subsets of the coloured tuples,
\end{itemize}
such that the following conditions are met.
\begin{itemize}
\item[(1)] For every coloured tuple $\bar{t} \in \mathsf{CT}(\cA)$, there is a node $v \in V(T)$ such that $\mathsf{set}(\bar{t}) \coloneq  \{x \mid x \in \bar{t}\} \subseteq B(v)$.
\item[(2)] For every domain element $x \in A$, the subtree induced by the nodes $\{v \in V(T) \mid x \in B(v)\}$ is connected.
\item[(3)] For every node $v \in V(T)$, $\lambda(v)$ is an edge-cover of $B(v)$.
\end{itemize} 

The \emph{width} of a GHD is given by $\max_{v \in V(T)}|\lambda(v)|$. The \emph{generalised hypertreewidth} of $\mathcal{A}$, denoted by $\mathsf{ghw}(\mathcal{A})$, is the \emph{minimum} width over all possible GHDs of $\mathcal{A}$.

We say that a GHD $D = (T, B, \lambda)$ is \emph{full} if for
every coloured tuple $R(\bar a)$ of $\cA$, there is a node $v \in V(T)$ such that ($a$) $\mathsf{set}(\bar a) \subseteq B(v)$ (which matches Condition (1) above) and ($b$) $R(\bar a) \in \lambda(v)$.
\end{definition}

Note that, as argued in detail in \Cref{remark:GHDforRelationalStructures}, every GHD can be made into a full GHD while preserving its width. 
We proceed by defining, for each node $u\in V(T)$ two particular representations of the bag $B(u)$ corresponding to $u$. 

\begin{restatable}{definition}{bagTuplesDefinition}\label{def:BagTuples}
Let $\mathsf{Ord\lambda}(u)=R_1(\bar a_1), \dots, R_{k'}(\bar a_{k'})$ be the coloured tuples of $\lambda(u)$ in increasing $\prec_\cA$-order. 
\begin{itemize}
\item[(1)] We define 
\[
\profile(u) := (R_1,\dots,R_{k'})\,.
\]

\item[(2)] Let $\bar a_i^\star$ be the tuple obtained from $\bar{a}_i$ by replacing every element in $\bar a_i$ that is not in $B(u)$ with $\varepsilon$. We then define $\fulltup(u)$ as the tuple constructed by concatenating $\bar{a}_1, \bar{a}_2 \dots, \bar{a}_{k'}$ and similarly define $\bagtup(u)$ as the  tuple constructed by concatenating $\bar a_1^\star, \bar a_2^\star,\dots,\bar a_{k'}^\star$.
\end{itemize}
\end{restatable}

As a second-to-last step, we define the underlying signature of the intended binary structure.

\begin{restatable}[$k$-exploded binary signature]{definition}{explodedSignatureDefinition}\label{def:explodedSignature}
Fix a signature $\sigma$ and an integer $k$. Let \[
\sigma^{(\le k)}
:=
\left\{
(R_1,\dots,R_m)\ \middle|\
1\le m\le k,\ R_i\in \sigma\ \text{for all }i,\ 
R_1 \preceq_\sigma \cdots \preceq_\sigma R_m
\right\}.
\]
We define the \emph{$k$-exploded} binary signature $\widehat{\sigma}_k$ (wrt. $\sigma$) as
\[
\widehat{\sigma}_{k}
:=
\{U_\alpha\mid \alpha\in\sigma^{(\le k)}\}
\ \cup\
\{E_{i,j}\mid i,j\in\{1,\dots,k\cdot \ar(\sigma)\} \}.
\]
\end{restatable}

Now we have all the technical background to define the binary structure induced by GHD, key to proving \cref{maintheorem:boundedGHDforStructures}.

\begin{restatable}[Binary structure induced by a GHD]{definition}{ghdBinaryStructureDefinition}
\label{def:ghd.bin.struct}
Let $\cA$ be  $\sigma$-structure and let $D=(T,B,\lambda)$ be a width $k$ GHD of $\cA$.
Define $\cA^D$ as the $\widehat{\sigma_k}$-structure with universe
 $V(T)$ and the minimal interpretation s.t. (where we omit the superscript $\cA^D$):
\begin{enumerate}
\item $u \in U_\alpha$ iff $\alpha = \profile(u)$.
\item $E_{i,j}(u,w)$ if $\{u,w\}\in E(T)\,\cup\,\{\{x,x\} \mid x \in V(T)\}$, and $\bar a[i] = \bar b[j]$ and $\bar a [i] \neq \varepsilon$, where $\bar a = \bagtup(u)$ and $\bar b = \bagtup(w)$.
\end{enumerate}
\end{restatable}

\subsubsection{Proof Sketch of \Cref{maintheorem:boundedGHDforStructures}}

As mentioned earlier in the introduction, our $\rcr$ algorithm is \textit{implicitly} run on an intermediate binary structure associated with the input structure. In fact, this is not a mere artefact of our algorithm. As we make it explicit momentarily, it is crucial for the proof of \Cref{maintheorem:boundedGHDforStructures} to interpret equivalence of structures under $\rcr$ (in symbols, $\equiv_\rcr$) in terms of equivalence of suitably defined \textit{binary} structures ---here we refer to the intermediate binary structures mentioned above---under 1-WL (in symbols, $\equiv_{1\textup{-WL}}$) and our $\rcr$ is precisely capable of accommodating such interpretation. 

We define these intermediate binary structures, which we call \textit{canonical $k$-exploded encodings}--- he underlying signature of which is the $k$-exploded binary signature from \Cref{def:explodedSignature}---as follows.

\begin{restatable}[$k$-exploded encoding]{definition}{explodedEncodingDefinition}\label{def:ExplodedEncoding}
We fix a signature $\sigma$. Let $\cA$ be a $\sigma$-structure.
Let $\mathsf{Prod}^{(\le k)}(\cA)$ be the set that contains all relations of form $P=R_1^\cA \times \cdots \times R_n^\cA$
\footnote{Here we mean relational algebra interpretation of the product. That is, the product of two relations with arities $a_1,a_2$ contains tuples with arity $a_1+a_2$.}
, where $(R_1,\ldots,R_n)\in\sigma^{(\leq k)}$.
For every such relation $P$ we write $U_P$ as shorthand for the unary relation symbol in $\widehat{\sigma_k}$ for $(R_1,\dots,R_n) \in \sigma^{(\le k)}$.

Define
$\mathbf{B}(\cA,k)$ as the $\widehat{\sigma_k}$-structure with universe
\[
\dom(\mathbf{B}(\cA,k)) = \{\bar a \mid  \exists P \in \mathsf{Prod}^{(\le k)}(\cA) \text{ s.t. }\ \bar a \in P\}.
\]
For the interpretation set we define (where we omit the superscript $\mathbf{B}(\cA,k)$ for readability).
\begin{itemize}
\item For each $P \in \mathsf{Prod}^{(\le k)}(\cA)$ we have $\bar a \in U_P$ iff $\bar a \in P$.
\item $E_{i,j}(\bar a, \bar b)$ iff $\bar a[i] = \bar b[j]$.
\end{itemize}
\end{restatable}

We are now ready to state the aforementioned connection between $\rcr$- and 1-WL-indistinguishability (the proof of which can be found in \Cref{sec:Generalised}).

\begin{proposition}
\label{prop:kRCR.1WL}
Let $\cA,\cB$ be $\sigma$-structures and let $\mathbf{B}(\cA,k),\mathbf{B}(\cB,k)$
be the canonical $k$-exploded encodings of $\mathcal{A}$ and $\mathcal{B}$ respectively.
Then
\(
\cA \equiv_{\rcr} \cB
\quad\Longleftrightarrow\quad
\mathbf{B}(\cA,k) \equiv_{\textup{1-WL}} \mathbf{B}(\cB,k).
\)
\end{proposition}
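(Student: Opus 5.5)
We identify the domain of $\mathbf{B}(\cA,k)$ with $\MCT_k(\cA)$. There is a small technical point here: a flat tuple $\bar a$ may lie in several products $P$. To handle this we either read the element as the pair (profile, flat tuple) or observe that the unary predicates $U_P$ record the set of profiles. We fix the convention that an element $\Omega=(\alpha;\bar t_1,\dots,\bar t_m)$ corresponds to the vertex $\flatt(\Omega)$ carrying the label $U_\alpha$. The plan is to show that, round by round, the $\rcr$ colouring $\chi_i^\cA$ and the $1$-WL colouring $c_i$ on the encoding induce the same partitions up to a fixed bijection of colour names. This bijection must not depend on the structure, so that the equality of multiplicities transfers between $\cA$ and $\cB$. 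Concretely, we prove by induction on $i$ that there are injective maps $f_i$, $g_i$, independent of $\cA$ and $\cB$, with $c_{i+c}(\Omega)=f_i(\chi_i(\Omega))$ and $\chi_{i}(\Omega)=g_i(c_{i}(\Omega))$, where $c$ is a small constant offset. Such a lag is harmless for stable colourings.

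For the base case, $\chi_0(\Omega)=(\atp,\stp)$ consists of the profile together with the equality pattern among the positions of $\flatt(\Omega)$. On the $1$-WL side, the initial colour is the set of unary labels, which gives the profile. One refinement round then reveals, for each pair $(i,j)$, whether $E_{i,j}(\Omega,\Omega)$ holds, because the multiset of neighbours includes $\Omega$ itself with its edge-type vector. This recovers $\stp(\Omega)$. For the converse direction, the profile determines the labels. This explains the offset of one round.

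For the inductive step, take two elements $\Omega,\Psi$. The vector of $E_{i,j}$-relations that hold between them is precisely $\stp(\Omega,\Psi)$. Moreover, $\Psi$ is an $E$-neighbour of $\Omega$ if and only if $\stp(\Omega,\Psi)\neq\emptyset$. The $1$-WL update on a structure with several binary relations aggregates, for each $\Psi$ adjacent to $\Omega$, the pair consisting of its atomic type and its previous colour. Here the atomic type is exactly $\stp(\Omega,\Psi)$, together with the reverse relations, which are determined by $\stp$ through transposition. This is exactly the multiset $\{\!\{(\stp(\Omega,\Psi),\chi_i(\Psi))\}\!\}$ used in Definition~\ref{def:kRCR}. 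Non-adjacent $\Psi$ either contribute nothing or contribute a trivial type. If the chosen $1$-WL variant counts all vertices, these trivial contributions are determined by the total count, which is read off from the first component. Applying the induction hypothesis through $f_i$ and $g_i$ then yields the claim. Once the partitions coincide at every round, they stabilise together, and the stable colour multiplicities correspond via the bijection. Hence $\mathsf{mult}_k^\cA=\mathsf{mult}_k^\cB$ holds if and only if the $1$-WL colour histograms of $\mathbf{B}(\cA,k)$ and $\mathbf{B}(\cB,k)$ agree. For the stable colourings, we compare the two structures on the disjoint union, or equivalently run the colourings in parallel, so that the stable rounds match.

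The main obstacle is the exact bookkeeping of the encoding. First, we must make $\dom(\mathbf{B}(\cA,k))$ and $\MCT_k(\cA)$ correspond bijectively. A flat tuple appearing under several profiles, and tuples of different lengths whose $E_{i,j}$ indices exceed the length, both need care; in the latter case we treat such positions as simply absent. Second, we must check that the version of $1$-WL used in the paper aggregates over all vertices or over neighbours in a way that matches the restriction $\stp\neq\emptyset$. I would first fix the representation of elements as profile-tagged flat tuples, and then verify that the stable partitions are independent of these choices.
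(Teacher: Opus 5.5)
Your core mechanism is the paper's argument: $\stp(\Omega,\Psi)$ is exactly the set of $E_{i,j}$-labels between the corresponding vertices, adjacency means $\stp\neq\emptyset$, and you induct over rounds. That argument, however, is only valid for the profile-split encoding $\enc(\cA,k)$, whose vertices are pairs $(\alpha,\bar a)$ carrying the single label $U_\alpha$. The statement concerns $\mathbf{B}(\cA,k)$, and there your identification fails. In $\mathbf{B}(\cA,k)$ a flat tuple $\bar a$ is \emph{one} vertex carrying all labels $U_\beta$ with $\beta\in P(\bar a)$, the set of profiles whose product contains $\bar a$. By contrast, $\MCT_k(\cA)$ contains one element for each $\beta\in P(\bar a)$. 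So $\Omega\mapsto\flatt(\Omega)$ is not injective, and three steps of your plan fail:
\begin{itemize}
\item No map $g_i$ with $\chi_i(\Omega)=g_i(c_i(\flatt(\Omega)))$ can exist. The right-hand side is the same for every $\alpha\in P(\bar a)$, while $\chi_i(\Omega)$ records $\alpha$. The correct relation is that $\chi_i(\Omega)$ is determined by the pair $(\alpha,c_i(\flatt(\Omega)))$.
\item The $\rcr$ multiset counts each neighbouring flat tuple $\bar b$ once per $\beta\in P(\bar b)$, and it also contains the other profile copies of $\bar a$ itself. The $1$-WL multiset on $\mathbf{B}(\cA,k)$ counts $\bar b$ once and never counts $\bar a$, since the Gaifman graph is loopless.
\item A stable class $c$ of $\mathbf{B}(\cA,k)$ corresponds to $|P_c|$ stable $\rcr$ classes, so multiplicities do not transfer via a bijection of colour names.
\end{itemize}
Your closing step, ``verify that the stable partitions are independent of these choices'', is where all the remaining work lies. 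As stated it is false: the two partitions genuinely differ.

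What is missing is the bridge the paper proves separately as Proposition~\ref{prop:B-enc-1WL}. It says that the stable split colour of $(\alpha,\bar a)$ and the pair $(\alpha,\chi_\infty(\bar a))$ determine each other via structure-independent maps. The nontrivial direction needs one observation: a single refinement round at $(\alpha,\bar a)$ already reveals $P(\bar a)$. Write $\tau(\bar a,\bar b)=\{(p,q)\mid \bar a[p]=\bar b[q]\}$. An entry with edge pattern $\tau(\bar a,\bar a)$ whose colour has a profile of length $|\bar a|$ can only come from a copy $(\beta,\bar a)$, because $\tau(\bar a,\bar b)=\tau(\bar a,\bar a)$ together with $|\bar b|=|\bar a|$ forces $\bar b=\bar a$. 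Since each neighbour's colour then also determines $P(\bar b)$, you can de-duplicate the multiset and recover the $1$-WL multiset one round later. Concretely, count only the copy with a canonical profile $\beta_c\in P_c$, and discard the copies of $\bar a$ itself. Finally, each stable class $c$ of $\mathbf{B}(\cA,k)$ splits into the classes indexed by $(\alpha,c)$ with $\alpha\in P_c$, each of cardinality exactly $\mathsf{mult}(c)$. This is why the two multiplicity functions determine each other. With these ingredients added, either as a separate lemma or folded into your induction, your argument goes through.

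A minor point: in the paper's $1$-WL the initial colour already contains the loop labels $E_{i,j}(v,v)$, and neighbourhoods are loopless. So no round offset is needed, and your base-case appeal to $\Omega$ appearing in its own neighbourhood does not apply. The self-entry $(\stp(\Omega),\chi_i(\Omega))$ in the $\rcr$ multiset is harmless because it is determined by $\chi_i(\Omega)$.
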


We now proceed with the proof sketch of \Cref{maintheorem:boundedGHDforStructures}. Our analysis is inspired by the work of Scheidt and Schweikardt \cite{scheidt_et_al:LIPIcs.MFCS.2025.88} that established the case $k = 1$. However addressing the case $k > 1$ requires a significantly more elaborate analysis. The backbone of the proofs of both cases $k = 1$ (as shown in \cite{scheidt_et_al:LIPIcs.MFCS.2025.88}) and $k > 1$ (as shown in this work) essentially consists of 
\begin{enumerate}
\item[(I)] interpreting RCR-equivalence (resp. $\rcr$-equivalence) of structures in terms of 1-WL-equivalence of intermediate binary structures.

\item[(II)] interpreting homomorphism counts between structures of \textit{any} arity as homomorphism counts between appropriately defined binary structures.
\end{enumerate}

The first item of the list above will be satisfied by \Cref{prop:kRCR.1WL} stated above where the associated binary structures are precisely the $k$-exploded encodings of the original structures. For the second item, it is the binary structures induced by GHDs that will allow for such interpretations additionally to the $k$-exploded encodings. To make this more precise, given two structures $\mathcal{A}, \mathcal{B}$ for which we wish to interpret $\homs(\mathcal{A}, \mathcal{B})$ in terms of counting homomorphisms between binary structures, we achieve this by considering the binary structure associated with the left-hand side of $\homs(\mathcal{A}, \mathcal{B})$ to be given by the binary structure induced by a GHD $D$ of $\mathcal{A}$ and respectively by considering the binary structure associated with the right-hand side to be given by the canonical $k$-exploded encoding $\mathbf{B}(\mathcal{A}, k)$ of $\mathcal{B}$. Hence, we relate $\homs(\mathcal{A}, \mathcal{B})$ to $\homs(\mathcal{A}^D, \mathbf{B}(\mathcal{B}, k))$.

The reason as to why \textit{binary} structures appear to play a vital role in the proof is because, we then may ---loosely speaking--- translate our theorem in terms of the classification of 1-WL of Dell, Grohe and Rattan \cite{dell_et_al:LIPIcs.ICALP.2018.40} (which we state formally below) in regard to homomorphism-indistinguishability over trees.

\begin{theorem}[\cite{Dvorak10,dell_et_al:LIPIcs.ICALP.2018.40}]\label{thm:IntroDellGroheRattan}
Let $\sigma'$ be a finite binary signature and $\cA, \cB$ be $\sigma'$-structures. The following are equivalent.
\begin{enumerate}
\item Color Refinement distinguishes $\cA$ and $\cB$;
\item There exists a $\sigma'$-structure $\mathcal{T}$, the Gaifman graph of which is a tree, such that $\#\Hom(\mathcal{T}, \cA) \neq \#\Hom(\mathcal{T}, \cB)$.\qed
\end{enumerate}
\end{theorem}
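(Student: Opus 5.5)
The plan is to follow the classical argument of Dvořák and of Dell, Grohe and Rattan, adapted to labelled binary structures. Colour refinement here should be the version where the colour of $a$ starts as its \emph{atomic type}: the unary relations and loops $E(a,a)$ that hold at $a$. The colour is then refined by the multiset of pairs $(\tau(a,b),\text{colour}(b))$ over all $b\neq a$ with nonempty binary type. Here $\tau(a,b)$ is the set of binary symbols, in each direction, that hold between $a$ and $b$. Throughout, "tree" means a $\sigma'$-structure whose Gaifman graph is a tree. I will work with \emph{rooted} trees $(\mathcal T,r)$ and the counts $\hom(\mathcal T,\cA;r\mapsto a)$, i.e.\ homomorphisms sending $r$ to $a$.

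\textbf{Direction (2)$\Rightarrow$(1).} I prove by induction on the height $h$ of $(\mathcal T,r)$ a claim $(\ast)$: the count $\hom(\mathcal T,\cA;r\mapsto a)$ depends only on the round-$h$ colour of $a$. The same function of the colour must work uniformly for $\cA$ and $\cB$. At the root, the unary and loop constraints are determined by the round-$0$ colour. Each child $c$ of $r$ is joined to $r$ by a set $L_c$ of binary atoms. The count factorises as a product over the children $c$ of
\[\sum_{b\,:\,\tau(a,b)\supseteq L_c}\hom(\mathcal T_c,\cA;c\mapsto b).\]
I must also allow $b=a$, which is handled via the loop atoms in the colour of $a$. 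By induction each summand is a function of the colour of $b$. So the sum is determined by the multiset of pairs (type, colour) over the neighbours of $a$, and this multiset is part of the next colour of $a$. Summing $(\ast)$ over all $a$ shows that equal stable colour multiplicities give equal counts $\#\Hom(\mathcal T,\cdot)$. Hence if some tree separates $\cA$ and $\cB$, colour refinement distinguishes them.

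\textbf{Direction (1)$\Rightarrow$(2).} Run colour refinement on the disjoint union of $\cA$ and $\cB$ and let $C_i$ be the finite set of round-$i$ colours. I show by induction on $i$ that there are finitely many rooted trees whose count vectors take pairwise distinct values on distinct colours of $C_i$. Since this is a finite set of vectors, I can then fix one $\mathbb Q$-linear combination $f_i$ of rooted tree counts with pairwise distinct values $x_c$, $c\in C_i$.

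For the inductive step I fix an atomic type $\tau$ and a colour $c\in C_i$, and I need to recover the number $N_{\tau,c}(a)$ of neighbours $b$ of $a$ with $\tau(a,b)=\tau$ and colour $c$.
\begin{enumerate}
\item Attach $m$ copies of the $f_i$-trees to a new root via edges labelled by $L$. The resulting count equals $\sum_b [\tau(a,b)\supseteq L]\,x_{\mathrm{col}(b)}^m$. Products of tree counts are again tree counts, by gluing at the root.
\item Möbius inversion over the finite lattice of label sets $L$ converts ``$\supseteq L$'' into ``$=\tau$''.
\item A Vandermonde argument in the distinct values $x_c$ then isolates each $N_{\tau,c}(a)$.
\end{enumerate}
The case $b=a$ contributes a term determined by the colour of $a$, which the inductive hypothesis already captures, so it can be subtracted.

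Once this holds at the stable round, suppose the colour multiplicities differ. Counting unrooted trees gives the sums $\sum_a x_{\mathrm{col}(a)}^m$ for all $m$, and these determine the multiset of colours, again by Vandermonde. So some power $m$, or some single tree in the expansion, has different totals on $\cA$ and $\cB$. This tree is connected and tree-shaped, as required.

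\textbf{Main obstacle.} The delicate step is the reconciliation in Direction (1)$\Rightarrow$(2). Homomorphisms only see labels positively (``at least these atoms''), while colour refinement records exact atomic types, including the direction of edges, loops and unary predicates. The Möbius inversion over label sets has to be done carefully so that the trees used stay genuine $\sigma'$-trees; in particular, parallel atoms between two vertices are allowed in a Gaifman-tree. The rest is standard interpolation bookkeeping.
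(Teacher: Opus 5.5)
The paper gives no proof of this statement; it is imported from Dvo\v{r}\'ak and Dell--Grohe--Rattan. Your proposal is a correct, self-contained reconstruction of the standard argument behind those results. The height induction, which makes the rooted counts $\hom(\mathcal{T},\cA;r\mapsto a)$ functions of the round-$h$ colour, gives (2)$\Rightarrow$(1). For (1)$\Rightarrow$(2), three ingredients recover the refinement multiset exactly:
\begin{itemize}
\item closure of rooted tree counts under gluing at the root;
\item M\"obius inversion over label sets $L\supseteq\tau$, where only nonempty $L$ are ever needed, so every pattern stays connected;
\item Vandermonde interpolation in the separating values $x_c$.
\end{itemize}
When you write it out, make three things explicit:
\begin{itemize}
\item the base case: single-vertex trees carrying one unary atom or one loop separate atomic types;
\item the round-$(i+1)$ family must retain the round-$i$ trees, because the new colour contains $c_i(a)$ as its first component and you also use it to cancel the $b=a$ term;
\item the $m$ copies are first glued at one vertex, which is then joined to the new root by a \emph{single} $L$-edge. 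Attaching each copy by its own $L$-edge would instead produce $\bigl(\sum_b[\tau(a,b)\supseteq L]\,x_{\mathrm{col}(b)}\bigr)^m$.
\end{itemize}

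Your choice of refinement, in which $\tau(a,b)$ records the binary atoms between $a$ and $b$ in \emph{both} directions, is essential, not cosmetic. The 1-WL variant in the paper's preliminaries records only $\mu(v,w)$, the labels of the directed edge $(v,w)$. For that variant the stated equivalence fails on general binary structures. Here is an example with a single binary symbol $E$ and domain $\{v_1,v_2,w_1,w_2\}$:
\begin{itemize}
\item $E^{\cA}$ contains both orientations of $v_1w_1$ and of $v_2w_2$, together with $(v_1,w_2)$ and $(v_2,w_1)$;
\item $E^{\cB}$ contains both orientations of $v_1w_1$ and of $v_1w_2$, together with $(v_2,w_1)$ and $(v_2,w_2)$.
\end{itemize}
The one-directional refinement reaches a stable partition after one round, with classes $\{v_1,v_2\}$ and $\{w_1,w_2\}$ of equal sizes in both structures. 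Now take the tree with centre $r$ and leaves $c_1,c_2$, each joined to $r$ by both $E(r,c_i)$ and $E(c_i,r)$. It has $4$ homomorphisms into $\cA$ and $6$ into $\cB$. The paper only applies the theorem to exploded encodings, where $E_{i,j}(u,v)$ holds iff $E_{j,i}(v,u)$, so there the two variants agree, as its remark observes. Your proof is for the variant under which the cited theorem actually holds.
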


We first give a high-level proof sketch of \Cref{maintheorem:boundedGHDforStructures} in the case $k = 1$ 
(in the spirit of \cite{scheidt_et_al:LIPIcs.MFCS.2025.88} but employing our notation for the sake of consistency) 
such that we may later highlight the key conceptual and technical differences compared to our analysis.

\begin{proof}[Proof sketch of the case $k = 1$ of \Cref{maintheorem:boundedGHDforStructures}.]
At the heart of the proof lies the technical observation that for any $\sigma$-structure $\hat{\mathcal{A}}$ and any $\sigma$-structure $\mathcal{F}$ that admits a GHD $D$ of width 1, it holds 
\begin{equation}\label{eq:HomsPreserved}
\homs({\mathcal{F}, \hat{\mathcal{A}}}) = \homs(\mathcal{F}^D, \mathbf{B}(\hat{\mathcal{A}}, 1))\,.
\end{equation}

First, for the direction $(1) \implies (2)$, assume that $\mathcal{A} \not\equiv_{\rcr} \mathcal{B}$. Then, \Cref{prop:kRCR.1WL} implies that $\mathbf{B}(\mathcal{A}, 1) \not\equiv_{1\textup{-WL}} \mathbf{B}(\mathcal{B}, 1)$. In turn, it follows from \Cref{thm:IntroDellGroheRattan} that there is a binary structure $\mathcal{T}$ (with the same signature as that of $\mathbf{B}(\mathcal{A}, k)$ and $\mathbf{B}(\mathcal{B}, k)$) such that $\homs(\mathcal{T}, \mathbf{B}(\mathcal{A}, 1)) \neq \homs(\mathcal{T}, \mathbf{B}(\mathcal{B}, 1))$. The proof of this direction is then completed by showing that $\mathcal{T}$ can be used as a ``template'' for constructing a $\sigma$-structure $\mathcal{C}$ and a decomposition $D$ of $\mathcal{C}$ such that $\mathcal{T}$ is isomorphic to $\mathcal{C}^D$ (the details of which are omitted) and then using \eqref{eq:HomsPreserved} to derive $\homs(\mathcal{C}^D, \mathbf{B}(\mathcal{A}, 1)) \neq \homs(\mathcal{C}^D, \mathbf{B}(\mathcal{B}, 1)) \implies \homs(\mathcal{C}, \mathcal{A}) \neq \homs(\mathcal{C}, \mathcal{B})$.

Finally, the direction $(2) \implies (1)$ is more straightforward. Let $\sigma$-structure $\mathcal{C}$ be a $\sigma$-structure of with generalised hypertreewidth 1 such that $\homs(\mathcal{C}, \mathcal{A}) \neq \homs(\mathcal{C}, \mathcal{B})$. By definition, there exists a GHD $D$ of width 1. In then follows from \eqref{eq:HomsPreserved} that $\homs(\mathcal{C}^D, \mathbf{B}(\mathcal{A}, 1)) \neq \homs(\mathcal{C}^D, \mathbf{B}(\mathcal{B}, 1))$. Finally, \Cref{thm:IntroDellGroheRattan} yields the claim.
\end{proof}

For the more general case $k > 1$, we crucially observe that \eqref{eq:HomsPreserved} does not necessarily hold any more. Interestingly, the direction $(1) \implies (2)$ in \Cref{maintheorem:boundedGHDforStructures} may still be shown in a similar fashion as in the special case $k = 1$. This is mainly due to the fact that, as we show, the structure yielded by the adaptation of the proof for the same direction in the case $k = 1$, satisfies a ``pureness'' property ---as we call it--- which ensures that \eqref{eq:HomsPreserved} holds. More precisely, the generalised hypertree decomposition $D = (T, B, \lambda)$ of the respective structure satisfies that for every node $v \in V(T)$, the elements covered by the coloured tuples in $\lambda(v)$ are precisely those elements contained in bag $B(v)$.

Hence, it is the direction $(2) \implies (1)$ that requires substantially more elaborate effort to establish. It will also be more convenient to show the claim via contraposition, that is, we assume that for the $\sigma$-structures $\mathcal{A}, \mathcal{B}$ of the theorem's statement, $\mathcal{A} \equiv_{\rcr} \mathcal{B}$ holds and then show that this assumption implies homomorphism-indistinguishability over all $\sigma$-structures with generalised hypertreewidth $k$.

The pureness property that was met in the case $k = 1$, ensured that, it is possible for any $\sigma$-structure $\mathcal{C}$ and a GHD $D = (T, B, \lambda)$ of $\mathcal{C}$ to induce a homomorphism $h \in \Hom(\mathcal{C}, \mathcal{A})$ by a homomorphism $\hat{h} \in \Hom(\mathcal{C}^D, \mathbf{B}(\mathcal{A}, k))$ in a well-defined way (the inverse is also true and it is more straightforward). More precisely, recalling \Cref{def:BagTuples}, given a node $v \in V(T)$, with $\bagtup(v) \coloneq a_1a_2 \dots a_\ell$, where $\ell \leq k \cdot r$ (where $r$ is the maximum arity among relation symbols in $\sigma$) that is mapped to the tuple $\hat{h}(v) \coloneq  b_1b_2\dots b_\ell$, the mapping $a_i \mapsto b_i$ agrees with the similarly defined mapping of any other node on the entries that they share, implying a well-defined mapping $\mathcal{C} \mapsto \mathcal{A}$ that is also a homomorphism. 

However, if the pureness property is not met, $\bagtup(v)$ may feature $\varepsilon$-entries, i.e. elements covered by $\lambda(v)$ that are not present in $B(v)$ that can be mapped arbitrarily. This breaks the consistency that was guaranteed in the way we described earlier for inducing $h \in \Hom(\mathcal{C}, \mathcal{A})$ from $\Hom(\mathcal{C}^D, \mathbf{B}(\mathcal{A}, k))$.

In this more intricate case, we are able to show that for any $\sigma$-structure $\mathcal{C}$ and a GHD $D$ of $\mathcal{C}$ of width $k$, the set $\Hom(\mathcal{C}^D, \mathbf{B}(\mathcal{A}, k))$ can be instead partitioned into $\homs(\mathcal{C}, \mathcal{A}) = |\Hom(\mathcal{C}, \mathcal{A})|$ many equivalent classes. For each such equivalence class we then appropriately define a \textit{unique representative} of the class. Finally, we show that $\rcr$-equivalence for $\sigma$-structures $\mathcal{A}, \mathcal{B}$ implies that it is possible to construct a bijective mapping between the unique representatives of the respective structures with respect to any $\mathcal{C}$ and $D$ as defined earlier, which would complete the proof of the direction $(2) \implies (1)$.

\newpage

\tableofcontents

\newpage

\section{Preliminaries}\label{sec:prelims}

\paragraph*{Notation}
For the cardinality of a finite set $S$, we write $|S|$ or $\#S$. For an ordered tuple $\bar{t}$ and an element $x$, we write $x \in \bar{t}$ to denote that $x$ appears in $\bar{t}$ and we also write $\mathsf{set}(\bar{t})$ for the set of elements that appear in $\bar{t}$ (note that tuples allow for multiple occurrences of the same element). A \emph{multiset} is a set that allows repetitions of elements and can be otherwise seen as a tuple without any ordering. We use double brackets $\{\{\}\}$ to denote a multiset as in $\{\{a, b, b, a\}\}$. We may sometimes refer to the number of occurences of an element $x$ in a multiset or a tuple as the \emph{multiplicity} of $x$.

We assume familiarity with standard concepts of computational complexity. We emphasize that we sometimes use $\tilde{O}$ notation instead of $O$ notation, whenever we wish to omit polylogarithmic terms, for the sake of simplicity. E.g., we write $\tilde{O}(n)$ instead of $O(n\log(n)^c)$, where $c$ is any constant.

\paragraph*{Hypergraphs}

A \emph{hypergraph} is an ordered tuple $(V, E)$ that consists of a set $V$ of \emph{vertices} and a set $E$ of \emph{hyperedges} (or simply edges) where each edge $e \in 2^V$. Given this definition, a simple graph is a hypergraph in which every edge has cardinality exactly two.

Given two hypergraphs $H$ and $G$ and a mapping $h : V(H) \to V(G)$ from the vertices of $H$ to the vertices of $G$, we say that $h$ is a \emph{homomorphism} from $H$ to $G$ if for every edge $e = \{e_1, \dots, e_r\} \in E(H)$, it holds $h(e) \coloneq (h(e_1), \dots, h(e_r)) \in E(G)$, that is, $h$ preserves the edges of $H$.

We write $\mathsf{Hom}(H, G)$ for the set of all homomorphisms from $H$ and $G$ and also write $\#\mathsf{Hom}(H, G) = |\mathsf{Hom}(H, G)|$.

\paragraph*{Signatures, Relational Structures and the Gaifman Graph}
A \emph{signature} is a finite set of symbols $R_1, \dots, R_\ell$ which are called \textit{relation} symbols. Each relation symbol $R_i, i \in [\ell]$ is associated with a positive integer $\mathsf{ar}(R_i)$ which is known as the \emph{arity} of the relation symbol $R_i$. The \emph{arity} of a signature $\mathsf{\sigma}$, denoted by $\mathsf{ar}(\sigma)$, is the maximum arity among the relation symbols in $\sigma$.

A \emph{relational structure $\mathcal{A}$ over a signature $\sigma$} (or simply a $\sigma$-structure $\mathcal{A}$) is a pair consisting of a set $A$ of elements --- known as the \emph{domain} of $\mathcal{A}$ --- and a collection of \emph{relations} $\{R_i^\mathcal{A}\}_{i \in [\ell]}$, where $R_i \in \sigma$ and $R_i^\mathcal{A} \subseteq A^{\mathsf{ar}(R_i)}$, for each $i \in [\ell]$. In other words, each relation $R_i^\mathcal{A}$ consists of \emph{ordered} tuples of $\mathsf{ar}(R_i)$ (not necessarily distinct) elements of $A$. We also refer to the domain-elements of $\mathcal{A}$ as the \emph{vertices} of $\mathcal{A}$.

We will also refer to \emph{coloured tuples} (instead of mere tuples) and write $R(\bar{t})$ instead of $\bar{t}$ (given that $\bar{t} \in R^{\mathcal{A}})$ such that we can distinguish between relation elements corresponding to the same (uncoloured) tuple. We write $\mathsf{CT}(\mathcal{A})$ for the set of all coloured tuples in $\mathcal{A}$.

Typically, signatures are denoted by Greek lowercase characters. Also, relational structures are denoted by calligraphic characters $\mathcal{A}, \mathcal{B}, \mathcal{C}, ...$ with their domains denoted respectively by $A, B, C, ...$

A substructure $\mathcal{A}'$ of a $\sigma$-structure $\mathcal{A}$ is a $\sigma$-structure such that $A' \subseteq A$ and $R^{\mathcal{A}'} \subseteq R^{\mathcal{A}}$ holds for every relation symbol $R \in \sigma$. A substructure may be \emph{induced} by (1) some set $X \subseteq A$, in which case we take $A' = X$ or (2) a set $Y \subseteq \mathsf{CT}(\mathcal{A})$ in which case we take $\mathsf{CT}(\mathcal{A}') = \mathsf{CT}(\mathcal{A})$ and $A'$ consisting of all elements that appear in a tuple in $Y$. We denote the substructure induced by $X \subseteq A$ (resp. $Y \subseteq \mathsf{CT}(\mathcal{A}))$ by $\mathcal{A}[X]$ (resp. $\mathcal{A}[Y]$) and it will be clear from context whether a structure is induced by domain elements or coloured tuples. 

The \emph{Gaifman graph} of a $\sigma$-structure $\mathcal{A}$ is a simple, undirected graph with vertex set given by the domain $A$ of the relational structure and edges formed according to the following rule: two vertices $x, y$ are adjacent if and only if there is a relation $R^\mathcal{A}$ and a tuple $\bar{t} \in R^{\mathcal{A}}$ that contains both $x$ and $y$ (where $R \in \sigma$).

Any $\sigma$-structure $\mathcal{A}$ is also associated to a hypergraph $H(\mathcal{A})$ with vertices given by $A$ and hyperedges formed according to the following rule: $e = (x_1, \dots, x_\ell)$ is a hyperedge of $H$ if and only if there is an ordering $x_{i_1}, \dots, x_{i_\ell}$ of $x_1, \dots, x_\ell$ and a relation symbol $R \in \sigma$ such that $(x_{i_1}, \dots, x_{i_\ell}) \in R^\mathcal{A}$.

\begin{remark}\label{remark:no_isolated_elements}
In this work, we consider only relational structures $\mathcal{A}$ that feature \emph{no} isolated elements, that is, we assume that for each domain element $x \in A$, there is a relation symbol $R \in \sigma$ such that $x \in R^\mathcal{A}$ (where $\sigma$ is the underlying signature). 
\end{remark}

\paragraph*{Homomorphisms Between Relational Structures}
Given two relational structures $\mathcal{A}, \mathcal{B}$ over the same signature $\sigma$, we say that a mapping $h : A \to B$ ---where $A$ (resp. $B$) is the domain of $\mathcal{A}$ (resp. $\mathcal{B}$)--- defines a \emph{homomorphism} (from $\mathcal{A}$ to $\mathcal{B}$) if for every relation symbol $R \in \sigma$ and every tuple $\bar{t} = (t_1, \dots, t_{\mathsf{ar}(R)}) \in R^\mathcal{A}$, it holds $h(\bar{t}) \coloneq (h(t_1), \dots, h(t_{\mathsf{ar}(R)})) \in R^\mathcal{B}$.
We write $\mathsf{Hom}(\mathcal{A}, \mathcal{B})$ for the set of all homomorphisms from $\mathcal{A}$ to $\mathcal{B}$.

A homomorphism $h : \mathcal{A} \to \mathcal{B}$ between two relational structures $\mathcal{A}, \mathcal{B}$ is \emph{injective} if for any two elements $x, y \in A$, $h(x) = h(y) \implies x = y$ holds. A homomorphism $h : \mathcal{A} \to \mathcal{B}$ is \emph{edge-surjective} if for every coloured tuple $R(\bar{t})$ of $\mathcal{B}$, there is a coloured tuple $R(\bar{t}')$ of $\mathcal{A}$ such that $h(\bar{t}') = \bar{t}$. 

Finally, a homomorphism $h : \mathcal{A} \to \mathcal{B}$ between two relational structures \textit{that feature no isolated vertices}, that is both injective and edge-surjective is called an \emph{isomorphism} between structures $\mathcal{A}$ and $\mathcal{B}$. Hence, two structures $\mathcal{A}$ and $\mathcal{B}$ (with no isolated vertices) are isomorphic if only if there is an injective and edge-surjective homomorphism between them.

\paragraph*{The $k$-dimensional Weisfeiler-Leman algorithm for binary structures}

Let $\sigma$ be a \emph{binary} signature, that is, every relation symbol of $\sigma$ has arity at most 2. In turn, a $\sigma$-structure $\mathcal{A}$ can be then seen as a directed graph the vertices of which are coloured by the unary relation symbols of $\sigma$ and its (directed) edges are coloured by the binary relation symbols of $\sigma$. We call such a structure $\mathcal{A}$ a \emph{binary structure} or a \emph{coloured multigraph} (borrowing the latter term from \cite{scheidt_et_al:LIPIcs.MFCS.2025.88})

We present below the $k$-dimensional Weisfeiler-Leman algorithm ($k$-WL for short) for coloured multigraphs. We distinguish between the cases $k = 1$ and $k > 1$ as the algorithm is defined slighlty different in the former case compared to the latter case.
 
The 1-dimensional Weisfeiler-Leman algorithm (1-WL for short) is executed on coloured multigraphs $\mathcal{A}$, iteratively assigning colours to the domain elements (henceforth referred to as vertices) of $\mathcal{A}$ according to the following rule: two vertices $x, y$ get different colours at round $i$ if and only if there is a colour $c$, produced at round $i-1$, such that $x$ and $y$ have different number of neighbours coloured with $c$. 

At each round, 1-WL partitions the set of vertices into colour classes which are refined at each round, that is, two vertices $x$ and $y$ may be in the same colour class after iteration $i$ only if they were already in the same colour class before iteration $i$. The algorithm terminates once the colour classes have been stabilised, that is, they remain the same after some iteration, hence 1-WL always terminates after at most $|A|$ iterations. 

For each vertex $v$, we write $\mathsf{atp}(v)$ ---that stands for \emph{atomic type of} $v$--- for the union of all unary relation symbols $R \in \sigma$ such that $v \in R^\mathcal{A}$ \emph{and} all binary relation symbols $R \in \sigma$ such that $(v,v) \in R^\mathcal{A}$. In other words, by seeing $\mathcal{A}$ as a coloured directed graph, then $\mathsf{atp}(v)$ contains the vertex-colours of $v$ along with the edge-colours of the loop $(v,v)$. 

Let $\mathsf{Gaif}(\mathcal{A})$ denote the Gaifman graph of $\mathcal{A}$. Given two vertices $v, w$ such that $\{v, w\} \in E(\mathsf{Gaif}(\mathcal{A}))$, we write $\mu(v, w)$ for the edge labels of the directed edge $(v, w)$.

Formally, the colour that each vertex $v$ gets at round $i$ ---which we denote here by $c^1_i(v)$--- is given as follows:

\[c^1_i(v) = \left\{
\begin{array}{cc}
      \mathsf{atp}(v) & \mbox{if } i = 0 \\
    (c^1_{i-1}(v), \{\{(c^1_{i-1}(w), \mu(v,w)) \mid \{v,w\} \in \mathsf{Gaif}(\mathcal{A})\}\}) & \mbox{if } i > 0
\end{array}
\right.
\]

For $k > 1$, the $k$-dimensional Weisfeiler-Leman algorithm  colours size-$k$ tuples (or, $k$-tuples for short) of vertices. $k$-WL also follows the same idea as 1-WL did in iteratively refining the colour classes (now of $k$-tuples) until it stabilises after at most $|A|^k$ iterations. The main difference lies in that the colours of vertices in 1-WL are decided by the neighbourhood of each vertex, which is not the case for $k > 1$.

For each $k$-tuple $\bar{v} = (v_1, \dots, v_k) \in |A|^k$, we write $\mathsf{atp}(\bar{v})$ for the \emph{isomorphism type} of the substructure of $\mathcal{A}$ induced by the vertices in $\bar{v}$.
We could simply define $\mathsf{atp}(\bar{v})$ as the substructure 
$\mathcal{A}[\mathsf{set}(\bar{v})]$ but 
since $k$-WL is used for comparing the colour classes produced by the algorithm when executed on two relational structures, we need to consider the isomorphism type of structures that is invariant under relabelling of vertices.

Given a $k$-tuple, an index $j \in [k]$ and a vertex $w$, we write $\bar{v}[w \to j]$ for the $k$-tuple obtained from $\bar{v}$ by replacing its $j$-th entry with $w$.

Formally, the colour that each $k$-tuple $\bar{v}$ gets at round $i$---which we denote by $c_i^k(\bar{v})$--- is given as follows:

\[
c_i^k(\bar{v}) = \left\{
\begin{array}{cc}
    \mathsf{atp}(\bar{v}) & \mbox{if } i = 0 \\
    (c_{i-1}^k(\bar{v}), \{\{\mathsf{ct}(w, i-1, \bar{v}) \mid w \in A\}\}) & \mbox{if } i > 0
\end{array}
\right.
\]

where $\mathsf{ct}(w, \ell, \bar{v}) = (c_{\ell}^k(\bar{v}[w \to 1]), \dots, c_{\ell}^k(\bar{v}[w \to k]))$ is the \emph{colour tuple of $\bar{v}$ and $w$ at round $\ell$}.
\paragraph*{Generalised and Fractional Hypertree Decompositions for Relational Structures}

Typically, tree decompositions of a relational structure decompose either the Gaifman graph of the structure or the hypergraph that is associated with it. In this work, we consider decompositions of the latter case which we slightly refine as we explain in detail momentarily. In essence, such tree decompositions involve a mapping $\lambda$ that maps the nodes of the underlying tree to a subset of hyperedges of the hypergraph of the corresponding relational structure. In our refined definition, $\lambda$ maps tree-nodes to subsets of \emph{coloured tuples} of the corresponding relational structure instead. As we explain below, such a refinement in the definition does not affect the \emph{width} of the decomposition (compared to the width of the standard decomposition).

\begin{definition}[Fractional edge cover based on coloured tuples]\label{def:fracEdgeCoverNumberPrelims}
Fix a signature $\sigma$ and let $\mathcal{A}$ be a $\sigma$-structure. Recall that we write $\mathsf{CT}(\mathcal{A})$ for the set of coloured tuples of $\mathcal{A}$. Let $X \subseteq A$ (where $A$ is the domain of $\mathcal{A}$) and $C \subseteq \mathsf{CT}(\mathcal{A})$. We say that a mapping $\rho : C \to \mathbb{Q}^+$ is a \emph{fractional edge cover}\footnote{Although $\rho$ involves coloured tuples and not edges (as it is the case with fractional edge covers in hypergraphs), we refrain from introducing a new name for the sake of simplicity.} of $X$ if the following inequality holds for every $x \in X$:
\[
\sum_{R(\bar{t}) \in C : x \in \bar{t}}\rho(R(\bar{t})) \geq 1\,.
\]

In the special case in which $\rho$ ranges over $\{0,1\}$, we simply call $\rho$ an \emph{edge cover} since it can be equivalently seen as the set $\{R(\bar{t}) \in C \mid \rho(R(\bar{t})) = 1\}$ that ``covers'' all the elements of $B(v)$.

The \emph{size} of a fractional edge cover $\rho$ is $\rho(C) \coloneq \sum_{R(\bar{t}) \in C}\rho(R(\bar{t}))$. We say that fractional edge cover $\rho$ is \emph{minimum} if it has the smallest size among all fractional edge covers $\rho' : C \to \mathbb{Q}^+$ of $X$.

The \emph{fractional edge cover number} (resp. \emph{edge cover number}) of a relational structure $\mathcal{A}$ is the size of the minimum fractional edge cover $\rho : \mathsf{CT}(\mathcal{A}) \to \mathbb{Q}^+$ (resp. edge cover $\rho : \mathsf{CT}(\mathcal{A}) \to \{0,1\}$) of $A$.
\end{definition}

\begin{definition}[Fractional and generalised hypertree-decompositions of relational structures]\label{def:FHD}
Fix a signature $\sigma$. A \emph{fractional hypertree decomposition} (FHD, for short) of a $\sigma$-structure $\mathcal{A}$ is an ordered triplet $(T, B, \lambda)$ where
\begin{itemize}
\item $T$ is a tree;
\item $B : V(T) \to 2^A$ is a mapping from the nodes of $T$ to subsets of the domain elements;
\item $\lambda : V(T) \to 2^{\mathsf{CT}(\mathcal{A})}$ is a mapping from the nodes of $T$ to subsets of the coloured tuples,
\end{itemize}
such that the following conditions are met.
\begin{itemize}
\item[(1)] For every coloured tuple $\bar{t} \in \mathsf{CT}(\cA)$, there is a node $v \in V(T)$ such that $\mathsf{set}(\bar{t}) \coloneq  \{x \mid x \in \bar{t}\} \subseteq B(v)$.
\item[(2)] For every domain element $x \in A$, the subtree induced by the nodes $\{v \in V(T) \mid x \in B(v)\}$ is connected.
\item[(3)] For every node $v \in V(T)$, there is a fractional edge cover $\rho_v : \lambda(v) \to \mathbb{Q}^+$ of $B(v)$.
\end{itemize} 

The \emph{width} of a FHD is given by $\max_{v \in V(T)}\rho_v(\lambda(v))$. The \emph{fractional hypertreewidth} of $\mathcal{A}$, denoted by $\mathsf{fhw}(\mathcal{A})$, is the \emph{minimum} width over all possible FHDs of $\mathcal{A}$.

In the special case in which condition (3) is restricted to considering \emph{only edge covers} (which is not equivalent to assuming that every minimum fractional edge cover is also an edge cover), then we call $(T, B, \lambda)$ a \emph{generalised hypertree decomposition} (GHD, for short). The width of a GHD is defined similarly to that of a FHD. The \emph{generalised hypertreewidth} of $\mathcal{A}$, denoted by $\mathsf{ghw}(\mathcal{A})$, is the minimum width over all possible GHDs of $\mathcal{A}$.

We also define the following special fractional/generalised hypertree-decompositions.

\begin{itemize}
\item We say that a fractional/generalised decomposition $D = (T, B, \lambda)$ is \emph{full} if for
every coloured tuple $R(\bar a)$ of $\cA$, there is a node $v \in V(T)$ such that ($a$) $\mathsf{set}(\bar a) \subseteq B(v)$ (which matches Condition (1) above) and ($b$) $R(\bar a) \in \lambda(v)$.

\item We say that a fractional/generalised hypertree decomposition $D = (T, B, \lambda)$ is \emph{pure} if for every $v \in V(T)$, it holds $B(v) = \bigcup_{R(\bar{t}) \in \lambda(v)}\mathsf{set}(\bar{t})$.\footnote{Note that, by definition it follows that in every fractional/generalised hypertree decomposition $(T, B ,\lambda)$ and for every $v \in V(T)$ it holds $B(v) \subseteq \bigcup_{R(\bar{t}) \in \lambda(v)}\mathsf{set}(\bar{t})$.}

\item We say that a fractional hypertree decomposition $D = (T, B, \lambda)$ of width $k$ is \emph{semi-pure} if for every $v \in V(T)$, there is a fractional edge cover of $\bigcup_{R(\bar{t}) \in \lambda(v)}\mathsf{set}(\bar{t})$ that is of size at most $k$.\footnote{Note that, this is not necessarily true for every FHD, but it is true for every \textit{pure} FHD.}
\end{itemize}

Accordingly, we define the full/pure/semi-pure fractional/generalised hypertreewidth as the minimum width over all full/pure/semi-pure fractional/generalised hypertree-decompositions.
\end{definition}

\begin{remark}\label{remark:GHDforRelationalStructures}
It can be easily verified that every fractional/generalised hypertree decomposition can be made into a full decomposition of the same width. In particular, for any coloured tuple $R(\bar{t})$ that does not satisfy condition (b) of the definition of a full decomposition, let $v$ be a node such that $\mathsf{set}(\bar{t}) \subseteq B(v)$ (which is guaranteed to exist by condition (1)). Then, we attach a new child-node $v'$ to $v$ where $B(v') = \mathsf{set}(\bar t)$ and $\lambda(v') = (R(\bar{t}))$. Further note that the resulting decomposition has size that is linear in the size of the original one since every tuple belongs to at most $|\sigma|$ relations, and signatures will always be assumed to be finite and fixed. Also, note that if the original decomposition is (semi-)pure, then the modified decomposition is also (semi-)pure.
\end{remark}

\section{The $\rcr$ algorithm and the proof of \Cref{maintheorem:boundedGHDforStructures}}\label{sec:Generalised}

In this section, we first define our $\rcr$ algorithm that appears in the statement of our first main result (\Cref{maintheorem:boundedGHDforStructures}), and then proceed with the proof of \Cref{maintheorem:boundedGHDforStructures}.

\subsection{Technical set-up}\label{sec:orderingSetUp}

We fix a finite relational signature $\sigma$, and let $\ar(\sigma)$ be the maximum arity in $\sigma$. We also \emph{globally} fix a total order $\preceq_\sigma$ on the relation symbols in $\sigma$, and an auxiliary total order $<$ on domain elements (extended to tuples in the usual way, that is, in lexicographic order).
Recall that a \emph{coloured tuple} of $\cA$ is a pair $(\bar a, R)$ where $\bar a \in R^{\cA}$ which we denote by $R(\bar{a})$. The need of introducing the notion of coloured tuples is for appropriate bookkeeping of tuples that occur in multiple relations. Also, recall that we write $\CT(\cA)$ for the set of all coloured tuples in $\cA$.

For a $\sigma$-structure $\cA$, we define the total order $\prec_\cA$ over its coloured tuples as follows:
\[
R(\bar a)\prec_\cA S(\bar b)
\iff
\bigl(R\prec_\sigma S\bigr)\ \lor\ \bigl(R=S\ \land\ \bar a<\bar b\bigr).
\]

\begin{remark}
We remark that we consider only relational structures that feature \emph{no} isolated elements, that is, elements that do not appear in any of the tuples of the relational structure.    
\end{remark}


\subsection{The $\rcr$ Algorithm}

Before we define the $\rcr$ algorithm, we need the following definition.

\explodedSignatureDefinition

Towards defining $\rcr$, fix a signature $\sigma$ and, for a $\sigma$-structure $\cA$, define

\[
\MCT_k(\cA):=
\left\{(\alpha;\bar t_1,\ldots,\bar t_m)\ \middle|\
\alpha=(R_1,\ldots,R_m)\in\sigma^{(\leq k)},\
\bar t_i\in R_i^\cA\right\}.
\]
For $\Omega=(\alpha;\bar t_1,\ldots,\bar t_m)$, set
$\profile(\Omega):=\alpha$, $\flatt(\Omega):=\bar t_1+\cdots+\bar t_m$ (where the $+$ denotes concatenation here), and
$L(\Omega):=|\flatt(\Omega)|$.

There are two concepts, namely \textit{atomic types} and \textit{similarity types}, that are central to the definition of our $\rcr$ algorithm. These notions were introduced in \cite{scheidt_et_al:LIPIcs.MFCS.2025.88} for the definition of their relational Colour Refinement (RCR) algorithm which we extend appropriately here.
\rcrTypesDefinition

We are now ready to state $\rcr$ which is essentially a colouring function of the elements of $\MCT_k(\mathcal{A})$ that is iteratively updated---by incorporating in its encoding of new colours, the colours of the previous round---until it reaches a point where the \textit{colour classes} that it induces do not change, in which case we say that the algorithm \textit{stabilise}s.

\kRCRdefinition

\subsection{Towards the Proof of \Cref{maintheorem:boundedGHDforStructures}}
\subsubsection{Binary Structures Induced by Generalised Hypertree-Decompositions}
In this part, we introduce the key technical tool used in the proof of \Cref{maintheorem:boundedGHDforStructures} which is binary structures that are induced by GHDs. Our construction builds on the approach of \cite{scheidt_et_al:LIPIcs.MFCS.2025.88}, where a binary structure is induced by a tree decomposition. We generalise this approach significantly so as to induce appropriate binary structures by generalized hypertree decompositions instead.

To this end, we introduce, for each node $u\in V(T)$ two particular representations of the bag $B(u)$ corresponding to $u$. 

\bagTuplesDefinition

As a second-to-last step, we define the underlying signature of the intended binary structure.

\ghdBinaryStructureDefinition

\subsubsection{$k$-Exploded Binary Encodings}

As we make it explicit momentarily, it is crucial for the proof of \Cref{maintheorem:boundedGHDforStructures} to interpret equivalence of structures under $\rcr$ (in symbols, $\equiv_\rcr$) in terms of equivalence of suitably defined intermediate \textit{binary} structures under 1-WL (in symbols, $\equiv_{1\textup{-WL}}$). In this section, we show that our $\rcr$ algorithm is precisely capable of accommodating such interpretation. 

We define the aforementioned intermediate binary structures, which we call \textit{canonical $k$-exploded encodings}---the underlying signature of which is the $k$-exploded binary signature from \Cref{def:explodedSignature}---as follows.

\explodedEncodingDefinition

For technical reasons, it will be convenient to sometimes use a more structured version of the $k$-exploded encodings. To that end we also define the following profile-split $k$-exploded encoding.
\begin{definition}[Profile-split $k$-exploded encoding]\label{def:OrderExplodedEncoding}
Define
$\enc(\cA,k)$ as the $\widehat{\sigma}_k$-structure with universe
$\dom(\enc(\cA,k))$ given by
\[
\left\{
(\alpha,\bar a)
\ \middle|\
\alpha=(R_1,\ldots,R_n)\in\sigma^{(\leq k)},
\bar a\in R_1^\cA\times\cdots\times R_n^\cA
\right\}.
\]
For $x=(\alpha,\bar a)$, write $\mathsf{flat}(x):=\bar a$ and $x[i]:=\bar a[i]$; for split elements $x,y$, write
$\stp(x,y):=\{(i,j)\mid\mathsf{flat}(x)[i]=\mathsf{flat}(y)[j]\}$.
For the interpretation set (we omit the superscript $\enc(\cA,k)$ for readability).
\begin{itemize}
\item
$(\alpha,\bar a)\in U_\beta$ iff $\alpha=\beta$.
\item $E_{i,j}((\alpha,\bar a),(\beta,\bar b))$ iff $\bar a[i]=\bar b[j]$.
\end{itemize}
\end{definition}

For the different directions of our characterisation proof it will be more natural to use either the normal or the profile-split variants of the exploded encoding. However, the two notions are equivalent with respect to WL distinguishability and we will use this fact tacitly from here on.

\begin{proposition}
\label{prop:B-enc-1WL}
For all $\sigma$-structures $\cA,\cB$,
\[
\enc(\cA,k)\equiv_{1\textup{-WL}}\enc(\cB,k)
\quad\Longleftrightarrow\quad
\mathbf B(\cA,k)\equiv_{1\textup{-WL}}\mathbf B(\cB,k).
\]
\end{proposition}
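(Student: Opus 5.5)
We compare the two encodings directly. Write $\pi:\dom(\enc(\cA,k))\to\dom(\mathbf B(\cA,k))$ for the projection $(\alpha,\bar a)\mapsto\bar a$. For $\bar a$ in the universe of $\mathbf B(\cA,k)$, let $S(\bar a):=\{\alpha\in\sigma^{(\le k)}\mid \bar a\in R_1^\cA\times\cdots\times R_n^\cA,\ \alpha=(R_1,\dots,R_n)\}$, i.e.\ the set of unary labels $U_P$ carried by $\bar a$.

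Then $\pi$ is surjective, and the fibre $\pi^{-1}(\bar a)$ is exactly $\{(\alpha,\bar a)\mid\alpha\in S(\bar a)\}$. Each split element in this fibre carries the single unary label $U_\alpha$. Moreover, all $E_{i,j}$ relations in both structures depend only on the flat tuples: $E_{i,j}(x,y)$ holds in $\enc(\cA,k)$ if and only if $E_{i,j}(\pi x,\pi y)$ holds in $\mathbf B(\cA,k)$. Thus $\enc(\cA,k)$ is obtained from $\mathbf B(\cA,k)$ by blowing each vertex $\bar a$ up into $|S(\bar a)|$ copies, one per label, each copy adjacent to all copies of each neighbour. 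To compare colours across structures, we run 1-WL on the disjoint unions $\enc(\cA,k)\uplus\enc(\cB,k)$ and $\mathbf B(\cA,k)\uplus\mathbf B(\cB,k)$. We then use the standard fact that two structures are 1-WL-equivalent iff the stable colour histograms of the two components agree.

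The core step is an induction establishing a colour correspondence with a one-round lag. Let $c_i$ denote 1-WL colours on the $\mathbf B$-union and $d_i$ those on the $\enc$-union. We aim to show, for all $i\ge0$ and all split elements $x,y$ (possibly from different structures):
\[ d_{i+1}(x)=d_{i+1}(y)\ \Longrightarrow\ c_i(\pi x)=c_i(\pi y)\ \text{and}\ \atp\text{-label of }x = \atp\text{-label of }y,\]
\[ c_{i+1}(\pi x)=c_{i+1}(\pi y)\ \text{and}\ x,y \text{ carry the same label}\ \Longrightarrow\ d_i(x)=d_i(y).\]
The base case is the key observation. In the split structure, $y$ satisfies $E_{p,p}(x,y)$ for all $p\le L$ with equal lengths precisely when $\pi y=\pi x$, i.e.\ when $y$ lies in the fibre of $x$. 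Hence after one round $x$ sees the multiset of labels in its fibre, which is exactly $S(\pi x)$, and its loop pattern is that of $\pi x$. Conversely, $\atp(\bar a)$ in $\mathbf B$ (the labels $S(\bar a)$ and the loops) determines the round-$0$ colour of each fibre element once the element's own label is fixed.

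For the inductive step, we translate neighbourhood multisets. The split multiset of $(\text{edge label},d_i)$ pairs of $x$ is obtained from the $\mathbf B$-multiset of $(\text{edge label},c_{i})$ pairs of $\pi x$ by replacing each neighbour $\bar b$ with its $|S(\bar b)|$ copies. The number of copies and their labels are determined by $c_i(\bar b)$, since $c_i$ refines $\atp$. Conversely, the split multiset recovers the $\mathbf B$-multiset by grouping each fibre, which is recognisable via the diagonal $E_{p,p}$ edges, or simply by dividing by the multiplicity determined by the colour.

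From the correspondence, the stable partitions match. Each stable split colour corresponds to a pair $(\alpha,c)$ with $c$ a stable $\mathbf B$-colour and $\alpha\in S(c)$. The histograms then transfer in both directions. If $\mathbf B(\cA,k)\equiv_{1\textup{-WL}}\mathbf B(\cB,k)$, the number of split elements of colour $(\alpha,c)$ equals the number of $\mathbf B$-elements of colour $c$, which is therefore equal on both sides. Conversely, for any $\alpha\in S(c)$, the count of $\mathbf B$-elements of colour $c$ equals the count of split elements of colour $(\alpha,c)$.

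The main obstacle is bookkeeping the one-round lag together with the recovery of $S(\bar a)$ inside the split structure. We must check carefully that the diagonal-edge pattern singles out exactly the fibre, using that equal flat tuples have equal length, that $E_{p,p}$ exists for $p$ up to $k\cdot\ar(\sigma)$, and that every element lies in at least one product. We must also confirm that lagged equalities suffice to identify the stable partitions.
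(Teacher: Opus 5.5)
Your proposal is correct and takes essentially the same route as the paper. The paper also treats $\enc(\mathcal D,k)$ as the blow-up of $\mathbf B(\mathcal D,k)$ along the profile sets $P(\bar a)$, and recovers $P(\bar a)$ after one split round by singling out exactly the fibre of $\bar a$ (using the equality type $\tau(\bar a,\bar a)$ together with the neighbour's initial colour, which amounts to your diagonal-$E_{p,p}$-plus-length test). It then runs the same one-round-lagged induction and splits each stable class by profile to transfer multiplicities; your disjoint-union bookkeeping and the ``divide by $|S(c)|$'' recovery (the paper instead counts only the copies with one fixed profile $\beta_c$) are slightly more explicit versions of the same steps.
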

\begin{proof}
Fix a $\sigma$-structure $\mathcal D$. For
$\bar a\in\dom(\mathbf B(\mathcal D,k))$, define
\[
P(\bar a):=\{\alpha\in\sigma^{(\leq k)}
\mid(\alpha,\bar a)\in\dom(\enc(\mathcal D,k))\}.
\]
For tuples $\bar a,\bar b$, define
\[
\tau(\bar a,\bar b):=
\{(p,q)\mid p\leq|\bar a|,\ q\leq|\bar b|,\ \bar a[p]=\bar b[q]\}.
\]
Let $\chi_\infty$ and $\widehat\chi_\infty$ be the stable $1$-WL colours
on $\mathbf B(\mathcal D,k)$ and $\enc(\mathcal D,k)$.

We claim that, $\widehat\chi_\infty(\alpha,\bar a)$ and $(\alpha,\chi_\infty(\bar a))$
determine each other.

The direction from right to left is immediate by induction on the refinement
rounds: if $\chi_i(\bar a)$ is known, then $P(\bar a)$ is known, and the
refinement multiset of $(\alpha,\bar a)$ is obtained from the refinement
multiset of $\bar a$ by replacing each neighbour $\bar b$ by all split
copies $(\beta,\bar b)$ with $\beta\in P(\bar b)$.

For the converse, one round of refinement in the split encoding determines
the missing set $P(\bar a)$. Indeed, let
$\eta:=\tau(\bar a,\bar a)$. For every profile $\beta$ with
$|\beta|=|\bar a|$, we have
$\beta\in P(\bar a)$ if and only if the refinement multiset of $(\alpha,\bar a)$ contains an
entry of the form $(\eta,d_{\beta,\eta})$,
where $d_{\beta,\eta}$ denotes the initial split colour with profile
$\beta$ and equality type $\eta$. Such an entry can only be contributed by
$(\beta,\bar a)$: if it is contributed by $(\beta,\bar b)$, then
$\tau(\bar a,\bar b)=\eta$, $|\bar b|=|\bar a|$, and
$\tau(\bar b,\bar b)=\eta$, hence $\bar b=\bar a$.

Thus, after one split-refinement round, the colour of $(\alpha,\bar a)$
determines $P(\bar a)$ and $\tau(\bar a,\bar a)$, i.e. the initial colour
of $\bar a$ in $\mathbf B(\mathcal D,k)$. More generally, suppose that $\widehat\chi_{i+1}(\alpha,\bar a)$ determines
$\chi_i(\bar a)$. Then the refinement multiset of
$(\alpha,\bar a)$ at round $i+1$ determines, for every round-$i$ colour $c$
of $\mathbf B(\mathcal D,k)$ and every edge pattern $\tau$, the number of
tuples $\bar b$ with $\chi_i(\bar b)=c$ and
$\tau(\bar a,\bar b)=\tau$: choose any profile
$\beta_c\in P(\bar b)$ for tuples of colour $c$, and count precisely those
split neighbours $(\beta_c,\bar b)$ whose round-$(i+1)$ split colour
determines $c$. Hence
$\widehat\chi_{i+2}(\alpha,\bar a)$ determines
$\chi_{i+1}(\bar a)$. In the stable colouring, the shift by one round is irrelevant. Hence
$\widehat\chi_\infty(\alpha,\bar a)$ determines $(\alpha,\chi_\infty(\bar a))$,
and the claim follows.

Consequently the stable colour classes of $\enc(\mathcal D,k)$ are exactly
the stable colour classes obtained by splitting each stable colour class
$c$ of $\mathbf B(\mathcal D,k)$ according to the profiles
$\alpha\in P_c$, where $P_c=P(\bar a)$ for any $\bar a$ of colour $c$.
Therefore the stable colour multiplicities of
$\mathbf B(\mathcal D,k)$ and $\enc(\mathcal D,k)$ determine each other.
Applying this to $\mathcal D=\mathcal A$ and $\mathcal D=\mathcal B$ gives the statement.
\end{proof}

We are now ready to state the aforementioned connection between $\rcr$- and 1-WL-indistinguishability.

\begin{proposition}\label{prop:kRCR.1WL}
For all $\sigma$-structures $\cA,\cB$,
\[
\cA\equiv_{\rcr}\cB
\quad\Longleftrightarrow\quad
\enc(\cA,k)\equiv_{1\textup{-WL}}\enc(\cB,k).
\]
\end{proposition}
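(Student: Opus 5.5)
The plan is to identify the two refinement processes directly. The element set $\MCT_k(\cA)$ is in canonical bijection with $\dom(\enc(\cA,k))$ via $(\alpha;\bar t_1,\ldots,\bar t_m)\mapsto(\alpha,\bar t_1+\cdots+\bar t_m)$. Under this bijection $\profile$ corresponds to $\alpha$, $\flatt$ to $\mathsf{flat}$, and $\stp(\Omega,\Psi)$ to $\stp(x,y)$. I will then show by induction on the round $i$ that the $\rcr$ colour $\chi_i^\cA(\Omega)$ and the 1-WL colour $c^1_i(x)$ of the corresponding element $x$ determine each other. The key observation is that in $\enc(\cA,k)$ the binary relations are exactly the $E_{p,q}$. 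Hence the edge label $\mu(x,y)$ is precisely $\stp(x,y)$, viewed as the set of $(p,q)$ with $E_{p,q}(x,y)$. Moreover, $x$ and $y$ are adjacent in the Gaifman graph (for $x\neq y$) iff $\stp(x,y)\neq\emptyset$.

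\textbf{Base case.} $\atp(x)$ in $\enc(\cA,k)$ consists of the unary symbol $U_\alpha$ together with the loop labels $\{E_{p,q}\mid (p,q)\in\stp(x,x)\}$. This is exactly $(\atp(\Omega),\stp(\Omega))=\chi_0^\cA(\Omega)$ up to a fixed renaming. Both directions of the translation are uniform, i.e.\ independent of the structure.

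\textbf{Inductive step.} Suppose the round-$i$ colours correspond via a fixed bijection of colour names that does not depend on the structure. The $\rcr$ multiset ranges over all $\Psi$ with $\stp(\Omega,\Psi)\neq\emptyset$, and this includes $\Psi=\Omega$ itself. The 1-WL multiset ranges over Gaifman neighbours, which exclude $x$ itself. Apart from this discrepancy, each entry $(\stp(\Omega,\Psi),\chi_i(\Psi))$ translates to $(c^1_i(y),\mu(x,y))$. The discrepancy is harmless in both directions. From the 1-WL side, $c^1_i(x)$ is already part of $c^1_{i+1}(x)$ and $\stp(x,x)$ is contained in the atomic type, so the self-entry can be added. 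From the $\rcr$ side, the self-entry $(\stp(\Omega),\chi_i(\Omega))$ is determined by the first component $\chi_i(\Omega)$, so it can be removed once. This yields mutual determination at round $i+1$.

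One subtlety is a self-pair $\Psi\neq\Omega$ with identical $\stp$ and identical flat tuple but a different profile. Such a pair is a genuine distinct neighbour in the split encoding as well, so it appears on both sides and causes no issue.

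\textbf{Conclusion.} Since the translation is uniform, two elements of $\cA$ and $\cB$ receive the same $\rcr$ colour at round $i$ iff the corresponding elements receive the same 1-WL colour. Let $N$ be large enough for both structures to stabilise under both procedures. Then $\chi_N$-multiplicities agree iff $c^1_N$-multiplicities agree, and $\cA\equiv_{\rcr}\cB$ is defined exactly by agreement of stable-colour multiplicities.

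I expect the main obstacle to be one point of care. The definitions of $\equiv$ via stable colourings must be compared at a common round index, not at the structures' individual stabilisation points. I will handle this by using the standard fact that equality of colour-class multiplicities at the stable round is equivalent to equality at every round. This holds because refinements are monotone and colours encode their history. With that in hand, the correspondence at each round $i$ suffices.
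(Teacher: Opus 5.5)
Your proposal is correct and follows essentially the same route as the paper: identify $\mathsf{MCT}_k(\cA)$ with $\dom(\enc(\cA,k))$, note that $\mathsf{stp}(\Omega,\Psi)$ is exactly the set of pairs $(i,j)$ with $E_{i,j}(\Omega,\Psi)$, and induct on the refinement round. Two details are handled explicitly in your argument that the paper's short proof passes over: the self-entry $\Psi=\Omega$, which the $k$-RCR multiset contains but the Gaifman-neighbour multiset of 1-WL does not, and the need to compare both colourings at a common round beyond stabilisation.
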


\begin{proof}
Identify
$\Omega=(\alpha;\bar t_1,\ldots,\bar t_m)\in\MCT_k(\cA)$ with
$(\alpha,\bar t_1+\cdots+\bar t_m)\in\enc(\cA,k)$.
At round zero, both colours record exactly the profile and the equality
pattern of the flattened tuple.  Moreover, for corresponding
$\Omega,\Psi$, the set $\stp(\Omega,\Psi)$ is precisely $\{(i,j)\mid E_{i,j}(\Omega,\Psi)\}$.

Consequently the multiset used in the $\rcr$ update is exactly the
multiset of neighbour colours grouped by the full binary edge-label
pattern in the $1$-WL update.  Induction on the refinement round gives
the same colour classes, with the same multiplicities, and proves the
claim.
\end{proof}

\begin{remark}
Recall, that in the definition of the (version) of the 1-WL algorithm for binary structures that we use in this work (see, \Cref{sec:prelims}), the colouring (of each iteration) considers the edge labels along a directed edge. We are aware that the authors of a related work \cite{scheidt_et_al:LIPIcs.MFCS.2025.88} use another version of 1-WL. In particular, in their version each respective colouring considers for each directed edge $(u, v)$ the disjoint union of the edge labels of \textit{both} directed edges $(u, v)$ and $(v, u)$. We observe that when running the aforementioned algorithms on our canonical $k$-exploded encodings, it is easy to see that the algorithms are in fact equivalent (though we stress that this is not necessarily true for every pair of binary structures), because of the following symmetry: for any edge label $E_{i, j}$ of the directed edge $(u, v)$, the edge $(v, u)$ is labelled by $E_{j, i}$ and vice versa.
\end{remark}

\subsection{Proof of \Cref{maintheorem:boundedGHDforStructures}}

In this section, we formally prove \Cref{maintheorem:boundedGHDforStructures} with which we establish a characterisation of our $\rcr$ algorithm in terms of homomorphism indistinguishability over relational structures of generalised hypertreewidth $k$. 

For what follows we fix a signature $\sigma$. Let $\mathcal{A}, \mathcal{B}$ be $\sigma$-structures and let $D = (T, B, \lambda)$ be a GHD of $\mathcal{A}$ of width $k$. The first step of the proof is to show that $\Hom(\cA^D, \mathbf{B}(\mathcal{B}, k))$ can be partitioned into $\homs(\cA, \cB)$-many equivalence classes. To this end, we first describe how to appropriately \textit{extend} any homomorphism $h \in \Hom(\cA, \cB)$ into a homomorphism $\widehat{h} \in \Hom(\cA^D, \mathbf{B}(\mathcal{B}, k))$ which would subsequently allow us to define the desired equivalence classes.

\begin{definition}\label{def:Extensions}
Let $\cA, \cB$ be $\sigma$-structures and let $D = (T, B, \lambda)$ be a full GHD of $\cA$ with width $k$. For $u \in V(T)$, write $\bagtup(u) = \bar{a}^*_1 + \dots + \bar{a}^*_\ell$, \footnote{Recall that $\bagtup(u)$ may contain $\varepsilon$-entries.} and $\mathsf{Ord}\lambda(u) = R_1(\bar{a}_1), \dots, R_\ell(\bar{a}_\ell)$, where $1 \leq \ell \leq k$. Given a mapping $h : \dom(\cA) \to \dom(\cB)$ and $u \in V(T)$, let $\exthom{h}{u}$ be a mapping that maps the tuple $\bar{a}^*_1 + \dots + \bar{a}^*_\ell$ entry-wise to some $\bar{t} \in \dom(\mathbf{B}(\cB, k))$ of the same size such that $(1)$ $\exthom{h}{u}$ agrees with $h$ on all entries $x \neq \varepsilon$ of $\bagtup(u)$ and $(2)$ $\exthom{h}{u}$ maps all $\varepsilon$'s to arbitrary elements in $\dom(\cB)$ that need not be equal. 

For $h \in \Hom(\cA, \cB)$, we say that the multiset $\{\{\exthom{h}{u} : u \in V(T)\}\}$ $D$-\emph{extends} $h$ w.r.t $\cB$ if and only if for each $u \in V(T)$, it holds $\exthom{h}{u}(\bagtup(u)) \in U^{\mathbf{B}(\cB, k)}_{\profile(u)}$, where $\profile(u) = (R_1, \dots, R_\ell)$. We write $\mathsf{Ext}^D_\cB(h)$ for all $D$-extensions of $h$ (w.r.t $\cB$).    
\end{definition}

\begin{lemma}\label{lem:GHD-non-empty-extension-set}
Let $\cA$ be a $\sigma$-structure and let $D = (T, B, \lambda)$ be a \emph{full} GHD of $\cA$ with width $k \in \mathbb{N}$. For every $\sigma$-structure $\cB$ and $h \in \Hom(\cA, \cB)$, it holds $\mathsf{Ext}^D_\cB(h) \neq \emptyset$.    
\end{lemma}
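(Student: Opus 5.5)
We construct an element of $\mathsf{Ext}^D_\cB(h)$ explicitly, one node at a time. For each node $u$ the natural choice is to use $h$ on the full tuples, not only on the bag tuples. Write $\mathsf{Ord}\lambda(u) = R_1(\bar a_1),\dots,R_\ell(\bar a_\ell)$, so that $\fulltup(u)=\bar a_1+\dots+\bar a_\ell$ and $\bagtup(u)=\bar a_1^\star+\dots+\bar a_\ell^\star$. Define $\exthom{h}{u}$ entrywise on positions $p\in[L]$, where $L=|\bagtup(u)|$, by
\[
\exthom{h}{u}(\bagtup(u))[p] := h(\fulltup(u)[p]).
\]
Thus a non-$\varepsilon$ entry $x$ at position $p$ is sent to $h(x)$, and an $\varepsilon$ entry at position $p$ is sent to $h$ of the element of $\bar a_i$ it replaced. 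We take $\exthom{h}{u}$ as a position-wise assignment, since different $\varepsilon$'s may receive different values, exactly as Definition~\ref{def:Extensions} allows.

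Next we check the two conditions of Definition~\ref{def:Extensions}. Condition (1) holds because a non-$\varepsilon$ entry of $\bagtup(u)$ coincides with the corresponding entry of $\fulltup(u)$, and there the map agrees with $h$. Condition (2) holds trivially, since each $\varepsilon$ is sent to some element of $\dom(\cB)$.

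It remains to show the defining property, namely $\exthom{h}{u}(\bagtup(u))\in U^{\mathbf{B}(\cB,k)}_{\profile(u)}$. By construction, $\exthom{h}{u}(\bagtup(u)) = h(\bar a_1)+\dots+h(\bar a_\ell)$. Because $h$ is a homomorphism and $\bar a_i\in R_i^\cA$, we get $h(\bar a_i)\in R_i^\cB$. Hence the concatenation lies in $R_1^\cB\times\cdots\times R_\ell^\cB$. The profile $(R_1,\dots,R_\ell)$ is listed in $\prec_\cA$-order and therefore in $\preceq_\sigma$-order, so it belongs to $\sigma^{(\le k)}$, using $\ell\le k$ from the width bound. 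Consequently the tuple is an element of $\dom(\mathbf B(\cB,k))$ and lies in $U_P$ for $P=R_1^\cB\times\cdots\times R_\ell^\cB$, which is the required membership. The multiset $\{\{\exthom{h}{u}\mid u\in V(T)\}\}$ therefore $D$-extends $h$, and $\mathsf{Ext}^D_\cB(h)\neq\emptyset$.

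The only real obstacle is in the bookkeeping rather than the mathematics. First, the $\varepsilon$ entries must be read position-wise, not as one symbol, so that they can take the distinct values $h(\bar a_i[p])$. Second, $\lambda(u)$ must be nonempty, so that $\ell\ge1$. If some bag were empty, we would note that by Remark~\ref{remark:no_isolated_elements} and fullness we may assume every node carries at least one coloured tuple. Alternatively, an empty $\lambda(u)$ can be handled as a degenerate case, or excluded by the definition of width, since $\ell\ge1$ is assumed in Definition~\ref{def:Extensions}.
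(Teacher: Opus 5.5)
Your proposal is correct and follows the paper's proof: both define $\exthom{h}{u}$ position-wise as $h(\fulltup(u))$, so that each $\varepsilon$-entry goes to the image of the element it replaced, and then use that $h$ is a homomorphism to place $h(\bar a_1)+\dots+h(\bar a_\ell)$ in $R_1^\cB\times\cdots\times R_\ell^\cB$, i.e.\ in $U_{\profile(u)}$. Your extra checks are details the paper leaves implicit: that the profile is $\preceq_\sigma$-sorted with $\ell\le k$, and that $\ell\ge 1$ is a tacit assumption.
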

\begin{proof}
For $u \in V(T)$, write $\fulltup(u) = \bar{a}_1 + \dots + \bar{a}_\ell$, $\bagtup(u) = \bar{a}^*_1 + \dots + \bar{a}^*_\ell$ and $\mathsf{Ord}\lambda(u) = R_1(\bar{a}_1), \dots, R_\ell(\bar{a}_\ell)$, where $1 \leq \ell \leq k$. Let $\exthom{h}{u}$ denote the mapping that maps $\bagtup(u)$ entry-wise to $h(\fulltup(u))$, that is, in particular if $\bar{a}_i^*[j] = \varepsilon$, then $\exthom{h}{u}$ maps $\bar{a}_i^*[j]$ to $h(\bar{a}_i[j])$.

Clearly, $\exthom{h}{u}(\bagtup(u)) \in \mathbf{B}(\cB, k)$. Furthermore, since $h$ is a homomorphism we have $h(\bar{a}_i) \in U_{R_i}^\cB$, for each $1 \leq i \leq \ell$ which in turn implies that $h(\fulltup(u)) \in U_{\profile(u)}^{\mathbf{B}(\cB, k)}$. Finally, since $\exthom{h}{u}(\bagtup(u)) = h(\fulltup(u))$, we also have $\exthom{h}{u}(\bagtup(u)) \in U_{\profile(u)}^{\mathbf{B}(\cB, k)}$ and so $\mathsf{Ext}^D_\cB(h) \neq \emptyset$.
\end{proof}

\begin{lemma}\label{lem:HomSumOfExt}
Let $\cA$ be a $\sigma$-structure and let $D = (T, B, \lambda)$ be a \emph{full} GHD of $\cA$ with width $k$.
Then for every $\sigma$-structure $\cB$,
\[
\homs(\cA^{D},\mathbf{B}(\cB,k)) = \sum_{h \in \Hom(\cA, \cB)}|\mathsf{Ext}^D_\cB(h)|\,. 
\]
In particular, $\{\mathsf{Ext}^D_\cB(h)\}_{h \in \Hom(\cA, \cB)}$ partition $\Hom(\cA^D, \mathbf{B}(\cB, k))$ into $\homs(\cA, \cB)$ equivalence classes.
\end{lemma}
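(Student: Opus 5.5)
Identify each $D$-extension $\{\{\exthom{h}{u}\}\}_{u}$ with the map $\widehat h : V(T) \to \dom(\mathbf{B}(\cB,k))$ given by $\widehat h(u) := \exthom{h}{u}(\bagtup(u))$. Under this identification, the plan is to prove that $\Psi : \bigsqcup_{h \in \Hom(\cA,\cB)} \mathsf{Ext}^D_\cB(h) \to \Hom(\cA^D, \mathbf{B}(\cB,k))$, $(h, \{\{\exthom{h}{u}\}\}) \mapsto \widehat h$, is a bijection. The first two steps show that $\Psi$ is well defined and injective; the third shows it is surjective, with each homomorphism lying over exactly one $h$. The displayed sum formula follows from the bijection. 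The partition statement follows once we add that the fibres are nonempty (\Cref{lem:GHD-non-empty-extension-set}) and pairwise disjoint.

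\emph{Well-definedness.} Take $\widehat h$ arising from a $D$-extension of $h$. The unary constraint holds by the definition of $D$-extension: $\widehat h(u) \in U_{\profile(u)}$. For the binary constraints, suppose $E_{i,j}(u,w)$ holds in $\cA^D$. Then $\bagtup(u)[i] = \bagtup(w)[j] = x \neq \varepsilon$. Since $\exthom{h}{u}$ and $\exthom{h}{w}$ both agree with $h$ on non-$\varepsilon$ entries, $\widehat h(u)[i] = h(x) = \widehat h(w)[j]$, which is exactly $E_{i,j}$ in $\mathbf{B}(\cB,k)$. This covers both tree edges and loops.

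\emph{Injectivity.} The pair $(h, \text{extension})$ is determined by $\widehat h$. The extension is determined because each $\exthom{h}{u}$ is an entrywise map from $\bagtup(u)$, hence fixed by its image tuple. The map $h$ is determined because every $x \in \dom(\cA)$ lies in some bag $B(u)$; this needs no isolated elements together with condition (1). As every bag element is covered by $\lambda(u)$, $x$ occurs as a non-$\varepsilon$ entry of $\bagtup(u)$, say at position $i$, and then $h(x) = \widehat h(u)[i]$. In particular the fibres over distinct $h$ are disjoint.

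\emph{Surjectivity (main step).} Given $g \in \Hom(\cA^D, \mathbf{B}(\cB,k))$, define $h(x) := g(u)[i]$ for any $u$ and $i$ with $\bagtup(u)[i] = x$. Consistency within a single node follows from the loop edges $E_{i,i'}(u,u)$ whenever $\bagtup(u)[i] = \bagtup(u)[i'] = x$. Consistency across nodes follows from tree edges, because the nodes containing $x$ form a connected subtree by condition (2) and $E_{i,j}$ holds along each edge of it. So $h$ is well defined.

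The hardest point is showing that $h$ is a homomorphism, since tuples of $\cA$ may be only partially visible through $\bagtup$. This is where fullness enters. For any $R(\bar a) \in \CT(\cA)$, fullness gives a node $v$ with $R(\bar a) \in \lambda(v)$ and $\mathsf{set}(\bar a) \subseteq B(v)$. Then $\bar a$ appears unstarred as the block of $\bagtup(v)$ corresponding to $R$. The constraint $g(v) \in U_{\profile(v)}$ forces the image of that block, which is $h(\bar a)$, to lie in $R^\cB$.

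Finally, set $\exthom{h}{u}$ to be the entrywise map $\bagtup(u) \mapsto g(u)$. It agrees with $h$ on non-$\varepsilon$ entries by the definition of $h$, and $\varepsilon$ entries are unconstrained. Hence it is a $D$-extension of $h$ whose image under $\Psi$ is $g$.
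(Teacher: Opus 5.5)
Your proposal is correct and follows essentially the same route as the paper: you show that the map from $D$-extensions to $\Hom(\cA^D,\mathbf{B}(\cB,k))$ is well defined and injective, and that fibres over distinct $h$ are disjoint because every element lies in some bag. You then obtain surjectivity by reading off $h$ via loop edges and tree edges along the connected subtree of bags containing each element, and use fullness to show that $h$ is a homomorphism; your explicit remark that each $x\in B(u)$ occurs as a non-$\varepsilon$ entry of $\bagtup(u)$, since $\lambda(u)$ covers $B(u)$, fills in a step the paper leaves implicit.
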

\begin{proof}

Let $h \in \Hom(\cA, \cB)$ and assume that a given multiset $\{\{\exthom{h}{u} : u \in V(T)\}\}$ of mappings $D$-extends $h$ w.r.t. $\cB$. Note that $\{\{\exthom{h}{u} : u \in V(T)\}\}$ induces a mapping $\hat{h} : \cA^D \to \mathbf{B}(\cB, k)$, where $\hat{h}(u) = \exthom{h}{u}(\bagtup(u))$. We have
\begin{enumerate}
\item for each $u \in V(T)$, $\hat{h}(u) = \exthom{h}{u}(\bagtup(u)) \in U^{\mathbf{B}(\cB, k)}_{\profile(u)}$, following from the definition of a $D$-extension of $h$;
\item for each $(u, v)$ such that $u = v$ or $\{u, v\} \in E(T)$, if $(u, v) \in E_{i,j}^{\cA^D}$, then $(\hat{h}(u), \hat{h}(v)) \in E_{i, j}^{\mathbf{B}(\cB, k)}$. To see this, recall that $(u, v) \in E_{i,j}^{\cA^D}$ if and only if $\bagtup(u)[i] = \bagtup(v)[j] \neq \varepsilon$. Equivalently, $\bagtup(u)[i] = \bagtup(v)[j] \in B(u) \, \cap \, B(v)$. Since $\exthom{h}{u}, \exthom{h}{v}$ agree on $B(u)\,\cap\,B(v)$, we deduce that $(\hat{h}(u), \hat{h}(v)) = (\exthom{h}{u}(\bagtup(u)), \exthom{h}{v}(\bagtup(v))) \in E_{i, j}^{\mathbf{B}(\cB, k)}$. 
\end{enumerate}
Hence, we deduce that $\hat{h} \in \Hom(\cA^D, \mathbf{B}(\cB, k))$ which in turn implies that there is a natural mapping $\pi : \bigcup_{h \in \Hom(\cA, \cB)}\mathsf{Ext}^D_\cB(h) \to \Hom(\cA^D, \mathbf{B}(\cB, k))$ that maps $\{\{\exthom{h}{u} : u \in V(T)\}\}$ to $\hat{h}$ which is injective by definition. 

Also, it is clear that if $\cA$ has no isolated domain elements (see our assumption in \Cref{remark:no_isolated_elements}), then for any two $h, h' \in \Hom(\cA, \cB)$ such that $\mathsf{Ext}^D_\cB(h) \,\cap\, \mathsf{Ext}^D_\cB(h') \neq \emptyset$, we have $h = h'$. To see this, note that for each $u \in V(T)$, the mapping of an intersecting $D$-extension, that corresponds to $u$, must agree on $B(u)$ with both $h$ and $h'$. Since, $\bigcup_{u \in V(T)}B(u) = \dom(\cA)$, we deduce that $h = h'$. Hence, the sets $\{\mathsf{Ext}^D_\cB(h)\}_{h \in \Hom(\cA, \cB)}$ are pair-wise disjoint.

Next let $\hat{g} \in \Hom(\cA^D, \mathbf{B}(\cB, k))$. Recall that by assumption, each element $z \in \dom(\cA)$ is contained in some tuple of $\cA$ and thus it is also contained in some bag $B(u)$ of $T$. For each $z \in \dom(\cA)$, we fix a node $u_z \in V(T)$ such that $z \in B(u_z)$. Recall that if $\bagtup(u_z)[i] = \bagtup(u_z)[j] \neq \varepsilon$ then we have $(u_z, u_z) \in E^{\cA^D}_{i,j}$. Since $\hat{g}$ is a homomorphism, it also follows that $(\hat{g}(u_z), \hat{g}(u_z)) \in E^{\mathbf{B}(\cB, k)}_{i,j}$ implying that there is $x_z \in \dom(\cB)$ such that for any index $i$ with $\bagtup(u_z)[i] = z$, we have $\hat{g}(u_z)[i] = x_z$. We consider the well-defined mapping $\rho : \dom(\cA) \to \dom(\cB)$ that maps each $z \in \dom(\cA)$ to its corresponding element $x_z \in \dom(\cB)$ obtained from the procedure described above.

\begin{claim}\label{claim:SurjectivityNonPure}
For $u \in V(T)$ with $\bar{a} = \bagtup(u)$ such that $\hat{g}(u) = \bar{d} \in \dom(\mathbf{B}(\cB, k))$, we have $\rho(\bar{a}[i]) = \bar{d}[i]$, whenever $\bar{a}[i] \neq \varepsilon$.    
\end{claim}

\begin{proof}
Let $z \in \bar{a}$ (where $z \neq \varepsilon$). Recall that we have fixed $u_z \in V(T)$ such that $z \in B(u_z)$. By the definition of $D$, it follows that for the path $(u_0, u_1, \dots, u_{q-1}, u_q)$ in $T$ connecting $u_0 = u$ and $u_q = u_z$, it holds that $z \in B(u_j)$, for each $0 \leq j \leq q$. For each $0 \leq j \leq q$, write $\bar{t}_j  = \bagtup(u_j)$ and let $i_j$ denote any index such that $\bar{t}_j[i_j] = z$. Then, it easy to see that for each $0 \leq j < q$, we have by definition that $(u_j, u_{j+1}) \in E^{\cA^D}_{i_j,i_{j+1}}$ which implies that we also have $(\hat{g}(u_j), \hat{g}(u_{j+1})) \in E^{\mathbf{B}(\cB, k)}_{i_j, i_{j+1}}$, since $\hat{g}$ is a homomorphism. Concretely we have,
\begin{enumerate}
\item $\bar{t}_q[i_q] = \bar{t}_0[i_0] = \bar{a}[i_0] = z$;
\item $\hat{g}(u_0)[i_0] = \hat{g}(u_z)[i_q]$;
\item $\hat{g}(u_z)[i_q] = x_z$ due to the fact that whenever $\bagtup(u_z)[i] = \bar{t}_q[i] = z$, we have $\hat{g}(u_z)[i] = x_z$.
\end{enumerate}
Hence, $\bar{d}[i_0] \overset{\textrm{def}}= \hat{g}(u_0)[i_0] \overset{(2)}= \hat{g}(u_z)[i_q] \overset{(3)}= x_z \overset{\textrm{def}}= \rho(\bar{a}[i_0])$ which also shows the claim.
\end{proof}

\begin{claim}\label{claim:isHom}
$\rho \in \Hom(\cA, \cB)$.    
\end{claim}
\begin{proof}
Let $R \in \sigma$ and $\bar{t} \in R^\cA$. Since $D$ is full, there is $u \in V(T)$ such that $R(\bar{t}) \in \lambda(u)$ and $\mathsf{set}(\bar{t}) \subseteq B(u)$. Assume that $R(\bar{t})$ is the $i$-th coloured tuple in $\mathsf{Ord}\lambda(u) = R_1(\bar{a}_1), \dots, R_\ell(\bar{a}_\ell)$, where $1 \leq \ell \leq k$. Note that $\bar{a}_i^* = \bar{a}_i = \bar{t}$ since $\mathsf{set}(\bar{t}) \subseteq B(u)$. Let $\profile(u) = (R_1, \dots, R_i, \dots, R_\ell)$ where $R_i = R$. Since, $\hat{g} \in \Hom(\cA^D, \mathbf{B}(\cB, k))$, we have $\hat{g}(u) \in R_1^{\cB} \times \dots \times R^\cB \times \dots \times R_\ell^\cB$ which in turn by \Cref{claim:SurjectivityNonPure} implies that $\rho(\bar{t}) \in R^\cB$ which completes the proof.
\end{proof}

By the combination of \Cref{claim:SurjectivityNonPure,claim:isHom} we deduce that $\hat{g}$ naturally induces $\{\{\exthom{\rho}{u} : u \in V(T)\}\} \in \mathsf{Ext}^D_\cB(\rho)$ such that $\pi(\{\{\exthom{\rho}{u} : u \in V(T)\}\}) = \hat{g}$, which shows that $\pi$ is also surjective. 

Hence, $\#\Hom(\cA^{D},\mathbf{B}(\cB,k)) = |\bigcup_{h \in \Hom(\cA, \cB)}\mathsf{Ext}^D_\cB(h)| = \sum_{h \in \Hom(\cA, \cB)}|\mathsf{Ext}^D_\cB(h)|$, where the last equality follows from the fact that the sets $\{\mathsf{Ext}^D_\cB(h)\}_{h \in \Hom(\cA, \cB)}$ are pair-wise disjoint ---as argued earlier--- and non-empty as shown in \Cref{lem:GHD-non-empty-extension-set}.
\end{proof}

\subsubsection{$\rcr$-equivalence implies homomorphism indistinguishability over structures of generalised hypertreewidth $k$}

In this section, we prove the direction $(2) \implies (1)$ of \Cref{maintheorem:boundedGHDforStructures} that also comprises our main novel technical contribution, formally stated in the following lemma.

\begin{lemma}\label{lem:connectedHomInd}
Let $\cA$ be a connected $\sigma$-structure and let $D = (T, B, \lambda)$ be a \emph{full} GHD of $\cA$ with width $k$. Let $\cB, \cB'$ be $\sigma$-structures such that $\cB \equiv_{\rcr} \cB'$. Then, $\homs(\cA, \cB) = \homs(\cA, \cB')$.   
\end{lemma}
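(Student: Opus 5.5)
The plan is to turn the partition of Lemma~\ref{lem:HomSumOfExt} into a \emph{weighted} count of homomorphisms from the tree-shaped structure $\cA^D$. The weights will depend only on stable $1$-WL colours of $\mathbf{B}(\cB,k)$. I would then show that such weighted tree counts are preserved under $1$-WL equivalence, which holds for $\mathbf B(\cB,k)$ and $\mathbf B(\cB',k)$ by Proposition~\ref{prop:kRCR.1WL} and Proposition~\ref{prop:B-enc-1WL}.

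\textbf{Step 1 (structure of $\mathsf{Ext}^D_\cB(h)$).} For a node $u$, let $I_u$ be the set of positions $i$ with $\bagtup(u)[i]\neq\varepsilon$. In $\cA^D$, every $E_{i,j}$-atom at $u$ (loops and tree edges) only involves positions in $I_u$. Hence the choices of $\exthom{h}{u}$ on the $\varepsilon$-positions are independent across nodes, and only membership in $U_{\profile(u)}$ constrains them. It follows that $|\mathsf{Ext}^D_\cB(h)|=\prod_u N_u(h)$. Here $N_u(h)$ is the number of $\bar y\in R_1^\cB\times\cdots\times R_\ell^\cB$ that agree with $h(\fulltup(u))$ on $I_u$. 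The key observation is that, for any $\bar x\in U_{\profile(u)}$, the quantity
\[
n_u(\bar x):=\#\{\bar y\in U_{\profile(u)} \mid \bar y[i]=\bar x[i]\ \forall i\in I_u\}
\]
satisfies $n_u(\bar x)=N_u(h)$ whenever $\bar x=\hat h(u)$ for some $\hat h\in\mathsf{Ext}^D_\cB(h)$. Moreover, $n_u(\bar x)$ is determined by the stable $1$-WL colour of $\bar x$ in $\mathbf B(\cB,k)$. Indeed, it is the sum, over all neighbour colours $c$ carrying profile $\profile(u)$ and all edge-label sets $\tau\supseteq\{(i,i)\mid i\in I_u\}$, of the multiplicities recorded in the refinement multiset; I would also include $\bar y=\bar x$, which is read off $\stp(\bar x)$. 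Combining this with Lemma~\ref{lem:HomSumOfExt} gives
\[
\homs(\cA,\cB)=\sum_{\hat g\in\Hom(\cA^D,\mathbf B(\cB,k))}\ \prod_{u\in V(T)} \frac{1}{f_u(\chi_\infty(\hat g(u)))},
\]
where $f_u$ is a fixed function of stable colours that is independent of $\cB$. This works because each $\hat g$ lies in exactly one class $\mathsf{Ext}^D_\cB(h)$ and contributes $1/|\mathsf{Ext}^D_\cB(h)|$.

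\textbf{Step 2 (weighted tree counts are $1$-WL invariant).} Root $T$ at some node and, for a subtree $T_u$ and an element $\bar x$ with $\bar x\in U_{\profile(u)}$, let $W_u(\bar x)$ be the weighted number of homomorphisms of $\cA^D[T_u]$ sending $u\mapsto\bar x$. I would show by induction on the height that $W_u(\bar x)$ depends only on $\chi_\infty(\bar x)$. The inductive step reads
\[
W_u(\bar x)=\frac{1}{f_u(\chi_\infty(\bar x))}\prod_{w\ \text{child of}\ u}\ \sum_{\bar y}W_w(\bar y).
\]
Here $\bar y$ ranges over elements with $\profile(w)$ whose edge-label set towards $\bar x$ contains the required $E_{i,j}$-labels of the tree edge $uw$, and whose own loop labels contain those of $w$. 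Grouping these $\bar y$ by stable colour and by the exact label set $\stp(\bar x,\bar y)$, each inner sum becomes a function of the refinement multiset of $\bar x$, and hence of $\chi_\infty(\bar x)$. Summing $W_{\mathrm{root}}$ over all $\bar x$ then gives $\sum_c \mult(c)\,W(c)$. Since $\cB\equiv_\rcr\cB'$ implies equal stable multiplicities in the exploded encodings, the two totals coincide and $\homs(\cA,\cB)=\homs(\cA,\cB')$.

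\textbf{Main obstacle.} The delicate point is Step 1: justifying rigorously that $N_u(h)$ is a function of the stable colour of $\hat g(u)$ alone, uniformly over all extensions. In particular, this requires that the $\varepsilon$-completions never create extra equalities that change which $\bar y$ are counted. The definition $n_u(\bar x)$ only constrains positions in $I_u$, so it is independent of the $\varepsilon$-entries' values; I would verify this carefully. I would also check the bookkeeping between the split and unsplit encodings, because in $\mathbf B(\cB,k)$ a single tuple may carry several profiles.
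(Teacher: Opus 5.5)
Your proof is correct, and it takes a genuinely different route from the paper's.

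\textbf{The paper's route.} The paper builds an explicit injection $\Hom(\cA,\cB)\to\Hom(\cA,\cB')$. It represents each $h$ by its natural $D$-extension $u\mapsto h(\fulltup(u))$. It then transports this extension in BFS order into $\mathbf{B}(\cB',k)$ by matching indices inside colour classes and inside the parent-compatible sets $M,\widetilde M$. It argues that the image $\mathsf{out}_h$ is again the natural extension of a unique $h'$, and finally repeats the construction in the other direction.

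\textbf{Your route.} You keep all of $\Hom(\cA^D,\mathbf{B}(\cB,k))$ and reweight it instead. The classes of Lemma~\ref{lem:HomSumOfExt} have size $\prod_u n_u(\hat g(u))$, where $n_u$ is determined by colour. Hence $\homs(\cA,\cB)$ becomes a weighted forest-homomorphism count, and its $1$-WL invariance is the standard bottom-up dynamic programming.

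\textbf{What each buys.} The paper's route aims at a purely bijective argument with no division. Yours never has to control the $\varepsilon$-entries, because the weights $1/n_u$ integrate them out. In the paper this control is exactly the delicate step. Naturality of $\mathsf{out}_h$ needs agreement with $h'$ at $\varepsilon$-positions as well. Conditions (II)--(III) there only synchronise entries within a node and across single tree edges. An element sitting at an $\varepsilon$-position of $u$ need not occur in $\fulltup$ of the intermediate nodes on the path to a bag containing it. Moreover, the index matching uses the arbitrary orders $\prec_\cB,\prec_{\cB'}$. Once label-free tree edges are handled as described below, your argument does not even need connectivity of $\cA$. The obstacle you flag in Step~1 is not one: $n_u$ only inspects positions in $I_u$, and $\hat g(u)$ agrees with $h(\fulltup(u))$ there for every $\hat g\in\mathsf{Ext}^D_\cB(h)$.

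\textbf{Three points to tighten.}
\begin{enumerate}
\item In Step~2 the refinement multiset lists only Gaifman neighbours. The term $\bar y=\bar x$ must therefore be added separately, read off from the colour of $\bar x$, just as you did in Step~1.
\item A tree edge $uw$ may carry no $E$-labels (possible when $B(u)\cap B(w)=\emptyset$), or a node may have $I_u=\emptyset$. Then the relevant sum ranges over all elements of the given profile. It equals $\sum_c \mathsf{mult}(c)\,\widetilde W_w(c)$, resp.\ $|U_{\profile(u)}|$. These quantities agree for $\cB$ and $\cB'$ by hypothesis, but they are global quantities, not functions of the refinement multiset of $\bar x$.
\item Take colours from the stable colouring of the disjoint union $\mathbf{B}(\cB,k)\sqcup\mathbf{B}(\cB',k)$. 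This makes ``function of the stable colour'' meaningful across both structures and makes the partition equitable, which your induction uses.
\end{enumerate}
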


\begin{proof}
By \Cref{prop:kRCR.1WL,prop:B-enc-1WL}, $\mathbf B(\cB,k)$ and
$\mathbf B(\cB',k)$ are $1$-WL-equivalent; below, $\chi_i$ and
$\mathsf{mult}_k$ refer to their $1$-WL colours and colour multiplicities.
We fix $h \in \Hom(\cA, \cB)$. Let $u \in V(T)$ with $\mathsf{Ord}\lambda(u) = R_1(\bar{a}_1), \dots, R_\ell(\bar{a}_\ell)$, for  $1 \leq \ell \leq k$. We write $f_u$ for the mapping that is the restriction of $h$ on $\bigcup_{i=1}^\ell\mathsf{set}(\bar{a}_i)$, that is, $f_u$ maps $\bar{a}_i$ to $h(\bar{a}_i)$, for each $1 \leq i \leq \ell$. As already shown in the proof of \Cref{lem:GHD-non-empty-extension-set}, we have that $\{\{f_u : u \in V(T)\}\} \in \mathsf{Ext}^D_\cB(h)$ which we also see as a homomorphism $\mathfrak{F}_h \in \Hom(\cA^D, \mathbf{B}(\cB, k))$ that maps $u \in V(T)$ to $\mathfrak{F}_h(u) \coloneq f_u(\bar{a}_1) + \dots + f_u(\bar{a}_\ell)$ (as argued in the proof of \Cref{lem:HomSumOfExt}). We call $\mathfrak{F}_h$ \emph{the natural $D$-extension} of $h$ w.r.t $\cB$. Note that by definition and since $\cA$ has no isolated elements it follows that for each $h \in \Hom(\cA, \cB)$ there is a unique natural $D$-extension of $h$ (and every natural $D$-extension is associated to a unique homomorphism following by \Cref{lem:HomSumOfExt}).

Let $q \in \mathbb{N}$ be any number of iterations after which $\rcr$ stabilises on both $\cB, \cB'$. We consider $T$ to be rooted at node $r$. We traverse $T$ in a \emph{breadth-first search} (BFS) fashion and perform the following:

\begin{enumerate}
\item Let $\bar{t}_r = \mathfrak{F}_h(r)$. Let $i_r$ denote the index of $\bar{t}_r$ within the colour-class $\chi(\bar{t}_r) \coloneq \chi^\cB_{q+1}(\bar{t}_r)$\footnote{Here, we use a representative of a colour-class to also denote the whole colour-class.} with respect to the ordering $\prec_\cB$. Recalling that $\mathsf{mult}^\cB_k(c) = \mathsf{mult}^{\cB'}_k(c)$ for every stable colour produced by $\rcr$, we define $\zeta(\bar{t}_r)$ as the $i_r$-th domain-element of $\mathbf{B}(\cB', k)$ within the same colour-class $\chi(\bar{t}_r)$ as before, with respect to the ordering $\prec_{\cB'}$. 
\item Let $u$ be a child of $r$ and let $\bar{t}_u = \mathfrak{F}_h(u)$. Recall that \[\chi^\cB_{q+1}(\bar{t}_r) = (\chi^\cB_q(\bar{t}_r) , \{\{(\mathsf{stp}(\bar{t}_r, \bar{t}_w), \chi^\cB_q(\bar{t}_w)) \mid \bar{t}_w \in \mathbf{B}(\cB, k) \land \mathsf{stp}(\bar{t}_r, \bar{t}_w) \neq \emptyset\}\})\,.\]

Since $\cA$ is connected, we have $B(r) \,\cap\, B(u) \neq \emptyset$,\footnote{Note that $B(r) \,\cap\,B(u)$ is a separator of $\cA$. Hence, if $B(r) \,\cap\, B(u) = \emptyset$, then $\cA$ has an empty separator and hence it is not connected, which is a contradiction.} which in turn implies that $\mathsf{stp}(\bar{t}_r, \bar{t}_u) \neq \emptyset$. Let $i_u$ be the index of $\bar{t}_u$ with respect to $\prec_\cB$ within \[M(\bar{t}_u, \cB) \coloneq
\{\bar{t}_w \in \mathbf{B}(\cB, k) : (\mathsf{stp}(\bar{t}_r, \bar{t}_w), \chi^\cB_q(\bar{t}_w)) = (\mathsf{stp}(\bar{t}_r, \bar{t}_u), \chi^\cB_q(\bar{t}_u))\} \neq \emptyset.\]

Let $\zeta(\bar{t}_u) \in\mathbf{B}(\cB', k)$ be the element corresponding to the $i_u$-th index of \[\widetilde{M}(\bar{t}_u, \cB') \coloneq
\{\bar{t}_w \in \mathbf{B}(\cB', k) : (\mathsf{stp}(\zeta(\bar{t}_r), \bar{t}_w), \chi^{\cB'}_q(\bar{t}_w)) = (\mathsf{stp}(\bar{t}_r, \bar{t}_u), \chi^\cB_q(\bar{t}_u))\}.\]
Note that $|\widetilde{M}(\bar{t}_u), \cB')| = |M(\bar{t}_u, \cB)|$ which follows from $\mathsf{mult}^\cB_k(\chi^\cB_{q+1}(\bar{t}_r)) = \mathsf{mult}^{\cB'}_k(\chi^{\cB'}_{q+1}(\bar{t}_r))$.
We compute $\zeta(\bar{t}_u)$ for each child $u$ of $r$ as described above.

\item We repeat step (2) for every node that we visit in the BFS traversal.

\item After we have visited every node $u \in V(T)$ and have computed $\zeta(\bar{t}_u) = \zeta(\mathfrak{F}_h(u))$, we output the mapping $\mathsf{out}_h : u \mapsto \zeta(\mathfrak{F}_h(u))$.
\end{enumerate}

\begin{claim}\label{claim:OutHom}
$\mathsf{out}_h \in \Hom(\cA^D, \mathbf{B}(\cB', k))$.    
\end{claim}
\begin{proof}
Let $u \in V(T)$ and let $\alpha$ such that $u \in U_\alpha^{\cA^D}$. We have $\bar{t}_u = \mathfrak{F}_h(u) \in U_\alpha^{\mathbf{B}(\cB, k)}$. Since $\bar{t}_u, \mathsf{out}_h(u)$ have been assigned the same stable colour (by construction), it follows from \Cref{def:kRCR} that $\mathsf{out}_h(u) \in U_\alpha^{\mathbf{B}(\cB', k)}$. The same reasoning yields that for any $i, j$ such that $(u, u) \in E_{i,j}^{\cA^D}$, then we also have $(\mathsf{out}_h(u), \mathsf{out}_h(u)) \in E_{i, j}^{\mathbf{B}(\cB', k)}$.

Next, consider $\{u, v\} \in E(T)$ and assume that $v$ is the parent of $u$ in (the directed version of) $T$. Since $\mathfrak{F}_h$ is a homomorphism it follows that $\{(i, j) : (v, u) \in E_{i,j}^{\mathbf{B}(\cB, k)}\} \subseteq \mathsf{stp}(\bar{t}_v, \bar{t}_u)$. By construction, we have $\mathsf{stp}(\bar{t}_v, \bar{t}_u) = \mathsf{stp}(\mathsf{out}_h(v), \mathsf{out}_h(u))$ and so $\{(i, j) : (v, u) \in E_{i,j}^{\mathbf{B}(\cB, k)}\} \subseteq \mathsf{stp}(\mathsf{out}_h(v), \mathsf{out}_h(u))$ as well. Finally, note that for any tuples $\bar{a}, \bar{b}, \bar{c}, \bar{d}$, it holds that $\mathsf{stp}(\bar{a}, \bar{b}) = \mathsf{stp}(\bar{c}, \bar{d})$ if and only if $\mathsf{stp}(\bar{b}, \bar{a}) = \mathsf{stp}(\bar{d}, \bar{c})$. Hence, we also deduce that $\{(i, j) : (u, v) \in E_{i,j}^{\mathbf{B}(\cB, k)}\} \subseteq \mathsf{stp}(\bar{t}_u, \bar{t}_v) = \mathsf{stp}(\mathsf{out}_h(u), \mathsf{out}_h(v))$, yielding that $\mathsf{out}_h \in \Hom(\cA^D, \mathbf{B}(\cB', k))$.
\end{proof}

\begin{claim}\label{claim:natural_extensions}
Let $\mathfrak{F}$ be the natural $D$-extension of $h$ w.r.t $\cB$ and write $\mathfrak{F}'$ for $\mathsf{out}_h$. It holds that $\mathfrak{F}'$ is a natural $D$-extension w.r.t. $\cB'$.   
\end{claim}

\begin{proof}
For $\mathfrak{F}'$ to be a natural $D$-extension w.r.t $\cB'$ the following should hold:
\begin{enumerate}
\item[(I)] There is $g \in \Hom(\cA, \cB')$ such that $\mathfrak{F}'$ is a $D$-extension of $g$ w.r.t $\cB'$;
\item[(II)] For each $u \in V(T)$ and $i, j$ such that $\fulltup(u)[i] = \fulltup(u)[j]$, we have $\mathfrak{F}'(u)[i] = \mathfrak{F}'(u)[j]$;
\item[(III)] For each $\{u, v\} \in E(T)$ and $i, j$ such that $\fulltup(u)[i] = \fulltup(v)[j]$, we have $\mathfrak{F}'(u)[i] = \mathfrak{F}'(v)[j]$.
\end{enumerate}

First, note that condition (I) is met by \Cref{claim:OutHom} since $\mathfrak{F}' \in \Hom(\cA^D, \mathbf{B}(\cB', k))$. Next, we know that $\mathfrak{F}$ is a natural $D$-extension w.r.t. to $\cB$, and so for any $u \in V(T)$ and $i, j$ such that $\fulltup(u)[i] = \fulltup(u)[j]$, we have $(i, j) \in \mathsf{stp}(\mathfrak{F}(u))$. By construction of $\mathfrak{F}'$, it also follows that $(i, j) \in \mathsf{stp}(\mathfrak{F}'(u))$ since $\mathfrak{F}(u)$ and $\mathfrak{F}'(u)$ have the same stable colour produced by $\rcr$. So, condition (II) above is also met. Using similar arguments, it follows that for any $\{u, v\} \in E(T)$ and $i, j$ such that $\fulltup(u)[i] = \fulltup(v)[j]$, we have $(i, j) \in \mathsf{stp}(\mathfrak{F}(u), \mathfrak{F}(v))$. By construction of $\mathfrak{F}'$, we also have that $(i, j) \in \mathsf{stp}(\mathfrak{F}'(u), \mathfrak{F}'(v))$ which in particular follows from step (2) in the construction above. So condition (III) above is also met.
\end{proof}

Hence, there is a unique $h' \in \Hom(\cA, \cB')$ such that $\mathsf{out}_h \equiv \mathfrak{F}_{h'}$, where $\mathfrak{F}_{h'}$ is the natural $D$-extension of $h'$ (w.r.t $\cB'$). Let $\xi : \mathfrak{F}_h \mapsto \mathfrak{F}_{h'}$. Our goal now is to use the mapping $\xi$ in order to construct an injective mapping $\pi$ between $\Hom(\cA, \cB)$ and $\Hom(\cA, \cB')$. Note that the existence of such a mapping will imply that there is also an injective mapping between $\Hom(\cA, \cB')$ and $\Hom(\cA, \cB')$ from which we deduce that $\#\Hom(\cA, \cB) = \#\Hom(\cA, \cB')$ must hold.

Let $\pi : \Hom(\cA, \cB) \to \Hom(\cA, \cB')$ denote the mapping that maps $h$ to $h'$ which is well-defined. We show that $\pi$ is injective. To this end, let $h, h^* \in \Hom(\cA, \cB)$ such that $h \neq h^*$ and also let $\mathfrak{F}_h, \mathfrak{F}_{h^*}$ be the natural $D$-extensions of $h, h^*$ respectively. Recall that we have considered $T$ to be rooted at node $r$. Let $u \in V(T)$ be the first node visited by a BFS traversal on $T$ that satisfies $\mathfrak{F}_h(u) \neq \mathfrak{F}_{h^*}(u)$. We distinguish between the following two cases:
\begin{enumerate}
\item $u = r$ : Let $\bar{t}_r = \mathfrak{F}_h(r)$ and $\bar{t}^*_r = \mathfrak{F}_{h^*}(r)$. Assuming that $\xi(\mathfrak{F}_h)(r) = \xi(\mathfrak{F}_{h^*})(r)$, then by definition we deduce that $\bar{t}_r$ and $\bar{t}^*_r$ are both the $i_r$-th elements of the colour-class they belong to, which is a contradiction.
\item $u \neq r$ : Let $\bar{t}_u = \mathfrak{F}_h(u)$ and $\bar{t}^*_u = \mathfrak{F}_{h^*}(u)$. Let $v$ denote the parent of $u$ in $T$. 
Since $u$ is the first node in the BFS order at which the two natural extensions differ, they agree on every node visited before $u$. In particular, $\mathfrak{F}_h(v)=\mathfrak{F}_{h^*}(v)$, and, if $v\neq r$, they also agree on the parent of $v$. Hence the recursive construction of $\xi$ makes the same choices up to $v$, and so $\xi(\mathfrak{F}_h)(v)=\xi(\mathfrak{F}_{h^*})(v)$.
If we further assume that $\xi(\mathfrak{F}_h)(u) = \xi(\mathfrak{F}_{h^*})(u)$ then we get 
$\widetilde{M}(\bar{t}_u, \cB') = \widetilde{M}(\bar{t}_u^*, \cB')$ and in particular that $\bar{t}_u, \bar{t}^*_u$ are the $i_u$-th elements of $M(\bar{t}_u, \cB)$ and $M(\bar{t}^*_u, \cB)$ respectively. However, it is also easy to verify that $M(\bar{t}_u, \cB) = M(\bar{t}^*_u, \cB)$ which leads to a contradiction.
\end{enumerate}
By \Cref{claim:natural_extensions}, we have that $\xi(\mathfrak{F}_h), \xi(\mathfrak{F}_{h^*})$ are natural $D$-extensions w.r.t. $\cB, \cB'$ respectively. By definition, we also have that for every homomorphism from $\cA$ to $\cB$ (resp. $\cB$') there is a unique natural $D$-extension w.r.t $\cB$ (resp. $\cB'$). The previous two arguments combined imply that $\xi$ is injective and hence so is $\pi$.

Similarly we define an injection from $\Hom(\cA, \cB')$ to $\Hom(\cA, \cB)$, which implies that $\homs(\cA, \cB) = \homs(\cA, \cB')$ completing the proof.
\end{proof}

\subsubsection{Homomorphism indistinguishability over structures of generalised hypertreewidth $k$ implies $\rcr$-equivalence}

In this part, we prove the direction $(1) \implies (2)$ of \Cref{maintheorem:boundedGHDforStructures}. Our proof is a suitable adaptation of the proof of \cite{scheidt2026colorrefinementrelationalstructures} for the same direction in the case $k = 1$, which essentially corresponds to the relational colour refinement algorithm. However, extending the definition of inducing binary structures by tree-decompositions (as it was the case in \cite{scheidt2026colorrefinementrelationalstructures}) to inducing binary structures by hypertree-decompositions (as it is in our case) is non-trivial. Given that these binary structures are substantial for the argumentation, we do provide a self-contained proof.

For the proof, we will need an auxiliary lemma on pure GHDs which we state below (and the proof of which is deferred to \Cref{lem:panosproof.ghd.hom.eq}).

\begin{lemma}[Analogue of Lemma 4.5 of \cite{scheidt2026colorrefinementrelationalstructures}]
\label{lem:ghd.hom.eq}
Let $\cA$ be a $\sigma$-structure and let $D$ be a \emph{pure}  GHD of $\cA$ with width $k$.
Then for every $\sigma$-structure $\cB$,
\[
\homs(\cA^{D},\enc(\cB,k))=
\homs(\cA^{D},\mathbf{B}(\cB,k)) = \homs(\cA,\cB)\,.
\]
\end{lemma}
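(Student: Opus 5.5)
The plan is to exploit purity. It forces $\bagtup(u)=\fulltup(u)$ for every node $u$, so no $\varepsilon$-entries occur. The extension sets of Lemma~\ref{lem:HomSumOfExt} then collapse to singletons, and the second equality becomes a special case of that lemma. The first equality, comparing the split and non-split encodings, is then a direct bijection of homomorphism sets.

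\emph{Step 1 (fullness).} Lemma~\ref{lem:HomSumOfExt} requires $D$ to be full. The full-ness hypothesis is genuinely needed, not just convenient. Suppose a coloured tuple $R(\bar t)$ has $\mathsf{set}(\bar t)\subseteq B(v)$ for some $v$ but lies in no $\lambda(v)$. Then $\cA^D$ imposes no constraint forcing its image into $R^\cB$, and the count $\homs(\cA^D,\mathbf{B}(\cB,k))$ can exceed $\homs(\cA,\cB)$. So I would prove the statement for pure GHDs that are also full. This is what the later application needs: the structures built from a tree template come with a decomposition in which every coloured tuple sits in some $\lambda(v)$. By Remark~\ref{remark:GHDforRelationalStructures}, any pure GHD can also be made full while remaining pure. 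I expect fixing this hypothesis correctly to be the main subtlety; the rest is bookkeeping.

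\emph{Step 2 (second equality).} Let $D$ be pure and full, and let $u\in V(T)$. Purity gives $\bigcup_{R(\bar t)\in\lambda(u)}\mathsf{set}(\bar t)=B(u)$, so every entry of every $\bar a_i$ in $\mathsf{Ord}\lambda(u)$ lies in $B(u)$, hence $\bar a_i^\star=\bar a_i$. In Definition~\ref{def:Extensions}, condition (1) therefore fixes $\exthom{h}{u}$ on all entries, and condition (2) is vacuous. So each $h\in\Hom(\cA,\cB)$ has at most one $D$-extension. Lemma~\ref{lem:GHD-non-empty-extension-set} says it has at least one, so $|\mathsf{Ext}^D_\cB(h)|=1$. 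Lemma~\ref{lem:HomSumOfExt} then yields $\homs(\cA^D,\mathbf{B}(\cB,k))=\sum_{h}1=\homs(\cA,\cB)$.

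\emph{Step 3 (first equality).} Given $\hat g\in\Hom(\cA^D,\mathbf{B}(\cB,k))$, define $\tilde g(u):=(\profile(u),\hat g(u))$.
\begin{itemize}
\item This is an element of $\enc(\cB,k)$, because $u\in U_{\profile(u)}$ forces $\hat g(u)\in R_1^\cB\times\cdots\times R_\ell^\cB$ for $\profile(u)=(R_1,\ldots,R_\ell)$.
\item It preserves $U_{\profile(u)}$ by construction.
\item It preserves every $E_{i,j}$, because edges in $\enc$ depend only on $\mathsf{flat}$, which is exactly $\hat g(u)$.
\end{itemize}
Conversely, let $\tilde g\in\Hom(\cA^D,\enc(\cB,k))$. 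Since $u\in U_{\profile(u)}$ in $\cA^D$, we must have $\tilde g(u)=(\profile(u),\bar b)$ with $\bar b$ in the corresponding product. Then $u\mapsto\mathsf{flat}(\tilde g(u))$ is a homomorphism into $\mathbf{B}(\cB,k)$: the $U$-predicates are satisfied by membership in the product, and the $E_{i,j}$ edges coincide because both are defined via equality of flattened entries. These two maps are mutually inverse, since the profile component of $\tilde g(u)$ is forced. This gives $\homs(\cA^D,\enc(\cB,k))=\homs(\cA^D,\mathbf{B}(\cB,k))$ and completes the chain of equalities.
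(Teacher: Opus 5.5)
Your argument is correct, and for the first equality it coincides with the paper's bijection $g\mapsto\bigl(u\mapsto(\profile(u),g(u))\bigr)$. For the second equality you take a different route. The paper (Appendix~A) proves it from scratch. It shows directly that $h\mapsto\bigl(u\mapsto h(\bagtup(u))\bigr)$ is a bijection $\Hom(\cA,\cB)\to\Hom(\cA^D,\mathbf{B}(\cB,k))$, redoing the path-consistency argument to recover $h$ from a homomorphism of $\cA^D$. You instead note that purity gives $\bagtup(u)=\fulltup(u)$. Hence every class $\mathsf{Ext}^D_\cB(h)$ of \Cref{lem:HomSumOfExt} is a singleton (non-empty by \Cref{lem:GHD-non-empty-extension-set}), and the equality is read off from that lemma. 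Your route is shorter and shows the pure case to be the degenerate case of the general partition lemma. The paper's route is self-contained and does not depend on \Cref{lem:HomSumOfExt}, although the surjectivity argument underneath is the same.

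Your point about fullness is right and worth keeping. The statement assumes only purity, but the paper's own proof uses fullness (``Since $D$ is full, there is $u$ such that $R(\bar t)\in\lambda(u)$'') when showing that the reconstructed map is a homomorphism. The lemma in fact fails without it. Take $\cA$ with a single element $a$ and $R^\cA=S^\cA=\{(a)\}$, and the one-node decomposition with $B=\{a\}$, $\lambda=\{S(a)\}$; this is pure, of width $1$, and not full. Take $\cB$ with $S^\cB=\{(b)\}$ and $R^\cB=\emptyset$. Then $\homs(\cA,\cB)=0$ while $\homs(\cA^D,\mathbf{B}(\cB,1))=1$.

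Since \Cref{lem:reverseDirectionGHD} constructs a full and pure $D$, adding the hypothesis costs nothing downstream. Note that \Cref{remark:GHDforRelationalStructures} does not rescue the statement for a fixed non-full $D$, because completing $D$ changes $\cA^D$. Restricting the lemma to full pure decompositions, as you do, is the correct fix.
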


The map
\[
g\longmapsto\bigl(u\mapsto(\profile(u),g(u))\bigr)
\]
is a bijection from $\Hom(\cA^D,\mathbf B(\cB,k))$ to
$\Hom(\cA^D,\enc(\cB,k))$; its inverse maps $(\alpha,\bar b)$ to $\bar b$.
This proves the first equality.

Before we proceed with the proof, we recall a result of \cite{dvovrak2010recognizing} and \cite{dell_et_al:LIPIcs.ICALP.2018.40}  that establish 1-WL-equivalence (equivalently, Colour Refinement-equivalence) for binary structures in terms of homomorphism indistinguishability over \textit{acyclic} binary structures. 

\begin{theorem}[\cite{dvovrak2010recognizing,dell_et_al:LIPIcs.ICALP.2018.40}]\label{thm:ClassicalColorRefinement}
Let $\sigma$ be a finite signature and $\cA, \cB$ be binary $\sigma$-structures. The following are equivalent.
\begin{enumerate}
\item Color Refinement distinguishes $\cA$ and $\cB$;
\item There exists binary $\sigma$-structure $\mathcal{T}$, the Gaifman graph of which is a tree, such that $\#\Hom(\mathcal{T}, \cA) \neq \#\Hom(\mathcal{T}, \cB)$.
\end{enumerate}
\end{theorem}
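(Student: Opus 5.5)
The plan is to prove both directions through rooted trees and the functions they induce. A \emph{rooted} binary $\sigma$-structure $(\mathcal{T},r)$ is one whose Gaifman graph is a tree, with a designated root $r$. For a binary structure $\cA$ and a vertex $v$, write $\homs(\mathcal{T},\cA;r\mapsto v)$ for the number of homomorphisms sending $r$ to $v$. To avoid comparing colours across two separate runs, I will run Colour Refinement on the disjoint union $\cA\sqcup\cB$. Colour Refinement distinguishes $\cA$ and $\cB$ exactly when some stable colour $c$ of this joint run has different multiplicities on the two sides. Throughout, "edge label of $(v,w)$" means the set $\mu(v,w)$ of binary symbols $E$ with $(v,w)\in E$ (and symmetrically for $(w,v)$). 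A homomorphism only needs the labels of a tree edge to be \emph{contained} in the labels of its image pair.

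For $(2)\Rightarrow(1)$, contrapositively assume all stable multiplicities agree. I show by induction on the height $d$ of $(\mathcal{T},r)$ that $\homs(\mathcal{T},\cA;r\mapsto v)$ depends only on the round-$d$ colour $c^1_d(v)$. At the base, the root's unary and loop constraints are checked against $\mathsf{atp}(v)$. For the step, $\homs(\mathcal{T},\cA;r\mapsto v)$ factors as the root check times, for each child $s$ of $r$, the sum over neighbours $w$ of $v$ with $\mu(v,w)\supseteq\mu_{\mathcal{T}}(r,s)$ and $\mu(w,v)\supseteq\mu_{\mathcal{T}}(s,r)$ of $\homs(\mathcal{T}_s,\cA;s\mapsto w)$. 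By induction each summand depends only on $c^1_{d-1}(w)$, and the whole sum is determined by the multiset $\{\{(c^1_{d-1}(w),\mu(v,w))\}\}$. That multiset is part of $c^1_d(v)$. With the symmetric remark from the previous remark, $\mu(v,w)$ together with the colour of $w$ also records the reverse labels, or I simply include both directions in the label. Summing over $v$ gives $\homs(\mathcal{T},\cA)=\sum_c \mathsf{mult}^{\cA}(c)\cdot \homs(\mathcal{T},\cdot;r\mapsto c)$, which is equal on both sides. Every unrooted tree-shaped $\mathcal{T}$ is rooted at an arbitrary vertex, so this covers all of them.

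For $(1)\Rightarrow(2)$, I show that for each round $i$ and each round-$i$ colour $c$ on $V:=V(\cA\sqcup\cB)$, the indicator $\mathbf{1}[c^1_i(v)=c]$ is a finite linear combination, valid on all of $V$, of functions $v\mapsto\homs(\mathcal{T},\cA\sqcup\cB;r\mapsto v)$ over rooted trees. Granting this, summing the combination over $v\in A$ and over $v\in B$ gives the multiplicity of $c$ on each side. If these differ for the stable $c$, then some tree $\mathcal{T}$ in the combination has $\homs(\mathcal{T},\cA)\neq\homs(\mathcal{T},\cB)$.

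The closure properties I will use are the following. The span $\mathcal{S}$ of rooted hom-functions is closed under pointwise products, since gluing two rooted trees at their roots is again a tree. Because $V$ is finite, any function of finitely many elements of $\mathcal{S}$ that separate the relevant colours is in $\mathcal{S}$, by Lagrange interpolation on the finitely many attained value vectors.

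The induction then runs as follows. At the base, $\mathbf{1}[\mathsf{atp}(v)=\tau]$ is obtained from the one-vertex trees carrying unary atoms and loops, by inclusion–exclusion over supersets of $\tau$. For the step, given round-$i$ colour indicators $\mathbf{1}_{c}\in\mathcal{S}$ and a label pattern $L$, I form the "child counts" $N_{c,L}(v)=\#\{w:\mu(v,w)=L,\ c^1_i(w)=c\}$. To do this, attach to the root a new child along an edge labelled $L'$, with the tree combination for $\mathbf{1}_c$ hanging below it. This yields $\sum_{L'\subseteq \mu(v,w)}$-type counts, and Möbius inversion over the finite lattice of label sets turns them into exact-label counts $N_{c,L}$, all inside $\mathcal{S}$ by linearity. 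The round-$(i+1)$ colour is a function of $\mathbf{1}$-values at round $i$ and the finitely many integers $N_{c,L}(v)$, which take finitely many values on $V$. Interpolation then puts $\mathbf{1}[c^1_{i+1}(v)=c']$ in $\mathcal{S}$.

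I expect the main obstacle to be this inductive step: making the Möbius inversion over edge labels precise (including reverse-direction labels and loops, which must not be counted among neighbours), and verifying that products and interpolation keep everything inside the span of genuine tree-shaped rooted structures.
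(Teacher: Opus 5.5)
The paper does not prove this statement; it cites it from Dvo\v{r}\'ak and Dell--Grohe--Rattan. Your argument is essentially the standard proof from that literature, and it goes through with one caveat about which refinement you run. For one direction, colours determine rooted tree-hom counts. For the other, colour-class indicators lie in the span of rooted tree-hom functions, because that span is closed under root-gluing products, combined with M\"obius inversion and interpolation.

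The caveat is that you must take the second of your two options and record the pair $(\mu(v,w),\mu(w,v))$. Your first justification fails for general binary structures. You claim that $\mu(v,w)$ together with the colour of $w$ already records the reverse labels, but the paper's remark only applies to exploded encodings, where $E_{i,j}(u,v)$ holds iff $E_{j,i}(v,u)$.

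Here is a counterexample. Take domain $\{v_1,v_2,w_1,w_2\}$ with $E=\{(v_i,w_j)\mid i,j\in\{1,2\}\}$ in both structures. Set $F=\{(w_1,v_1),(w_2,v_2)\}$ in $\mathcal{A}$ and $F=\{(w_1,v_1),(w_2,v_1)\}$ in $\mathcal{B}$.
\begin{itemize}
\item With forward labels only, every $v_i$ sees two neighbours with label $\{E\}$, and every $w_j$ sees one neighbour with label $\{F\}$ and one with label $\emptyset$, in both structures.
\item So the partition $\{v_1,v_2\},\{w_1,w_2\}$ is stable with equal multiplicities.
\item Yet the tree $\mathcal{T}$ on $\{r,s_1,s_2\}$ with $E(r,s_i)$ and $F(s_i,r)$ gives $\homs(\mathcal{T},\mathcal{A})=2\neq 4=\homs(\mathcal{T},\mathcal{B})$.
\end{itemize}
Hence, with the forward-only $1$-WL of the paper's preliminaries, direction (2)$\Rightarrow$(1) actually fails. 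The theorem must be read with the standard refinement over pair types, as in the cited sources. This is harmless for the paper, since it only applies the theorem to encodings.

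With pair labels, your induction works as written. In (1)$\Rightarrow$(2), the M\"obius inversion runs over pairs $L'=(L'_1,L'_2)\neq(\emptyset,\emptyset)$, so every attached child still produces a tree-shaped structure.

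One small omission concerns the recursion in (2)$\Rightarrow$(1). The sum ranges over all $w$ satisfying the label containments, which includes $w=v$ when the loop at $v$ carries both required label sets. That term is not part of the neighbour multiset, so you must add it separately. It is determined by $\mathsf{atp}(v)$ and $c^1_{d-1}(v)$, so it causes no harm.
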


\begin{lemma}\label{lem:reverseDirectionGHD}
Let $\cB, \cB'$ be two $\sigma$-structures such that $\cB \not\equiv_{k\textup{-RCR}}\cB'$. There exists a $\sigma$-structure $\cA$ and a pure full GHD $D$ for $\cA$ with width at most $k$ such that $\#\Hom(\cA^D,\enc(\cB,k))\neq\#\Hom(\cA^D,\enc(\cB',k))$. Consequently, $\#\Hom(\cA,\cB)\neq\#\Hom(\cA,\cB')$.
\end{lemma}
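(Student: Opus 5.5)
Our plan is to reduce the statement to Theorem~\ref{thm:ClassicalColorRefinement} on the profile-split encodings and then turn the distinguishing tree into a relational structure with a pure full GHD. By Proposition~\ref{prop:kRCR.1WL}, $\cB\not\equiv_{\rcr}\cB'$ gives $\enc(\cB,k)\not\equiv_{1\textup{-WL}}\enc(\cB',k)$. Theorem~\ref{thm:ClassicalColorRefinement} then provides a $\widehat\sigma_k$-structure $\mathcal T$ whose Gaifman graph is a tree and with $\homs(\mathcal T,\enc(\cB,k))\neq\homs(\mathcal T,\enc(\cB',k))$. The main task is to show that we may assume $\mathcal T\cong\cA^D$ for a $\sigma$-structure $\cA$ and a pure full GHD $D$ of width at most $k$. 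The final sentence of the lemma then follows from Lemma~\ref{lem:ghd.hom.eq}.

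First we normalise $\mathcal T$. Every element $x$ of $\enc(\cB,k)$ lies in exactly one $U_\beta$. Hence a node of $\mathcal T$ lying in no $U_\beta$ can be split into a disjoint sum over profiles $\beta$ of copies carrying $U_\beta$, and one carrying two distinct $U_\beta$ contributes zero on both sides. Expanding the homomorphism count as a sum over these assignments, at least one summand differs. We may therefore assume that every node $u$ carries exactly one profile $\alpha_u=(R_1,\dots,R_m)$, so a length $L_u=\sum_i\ar(R_i)$ is attached to it. Any $E_{i,j}$-atom with $i>L_u$ or $j>L_v$ forces the count to be zero on both sides, so we drop such terms as well. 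Likewise, self-loop and edge atoms $E_{i,j}$ that are implied by transitivity can be added freely, because in $\enc$ the relation $E_{i,j}$ is literal equality of coordinates. So we close the edge labels under the equivalence generated on positions.

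Next we construct $\cA$. Take the set of pairs $(u,p)$ with $p\le L_u$ and quotient it by the equivalence generated by $(u,i)\sim(v,j)$ whenever $E_{i,j}(u,v)\in\mathcal T$ (including $u=v$); call the quotient $A$. For each node $u$ with $\alpha_u=(R_1,\dots,R_m)$, add the $m$ coloured tuples $R_i(\bar a_i)$, where $\bar a_i$ consists of the classes of the positions in the $i$-th block. Let $B(u)$ be the set of all classes of positions of $u$ and let $\lambda(u)$ be those $m$ coloured tuples. Then the GHD is pure by construction, has width $m\le k$, and is full. Connectivity (condition~(2)) holds because a class is generated along tree edges of $\mathcal T$, so the nodes containing it form a connected subtree. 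Condition~(1) holds because every tuple was created at some node.

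The main obstacle is verifying $\cA^D\cong\mathcal T$ with the identity on nodes. Two points need care. First, $\profile(u)$ must equal $\alpha_u$, and $\bagtup(u)$ must list the blocks in the same order. The $\prec_\cA$-ordering of $\lambda(u)$ sorts tuples with the same relation symbol by $<$, and this may permute blocks. We handle this by applying a block permutation to the indices at $u$ and relabelling the $E_{i,j}$ accordingly, which is an automorphism-invariant relabelling of $\enc$. Alternatively, we choose the total order $<$ on $A$ so that it respects the order of blocks. Coinciding coloured tuples within one node also break the $\lambda$-count, but in that case two blocks are identified, and we would keep them distinct by arguing that the merging leaves both homomorphism counts unchanged. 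Second, $E_{i,j}(u,v)$ holds in $\cA^D$ iff the positions are equivalent under $\sim$, and we must check that the closure added no new atoms beyond those of the closed $\mathcal T$. This is because equivalences between adjacent or equal nodes along the tree are exactly the pairwise atoms after closure: any path of identifications through other nodes passes through the separator, so by the tree property it already yields an atom between $u$ and $v$. With these two checks, $\homs(\cA^D,\enc(\cB,k))\neq\homs(\cA^D,\enc(\cB',k))$, and Lemma~\ref{lem:ghd.hom.eq} gives $\homs(\cA,\cB)\neq\homs(\cA,\cB')$.
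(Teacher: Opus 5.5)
Your argument is correct, but the normalisation step follows a different route from the paper.

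The paper does not normalise $\mathcal T$ directly. It invokes the print construction from \cite[Lemma 4.6]{scheidt2026colorrefinementrelationalstructures} to obtain a distinguishing print $Q=\mathsf{Print}(h_Q)$. The unary labels and the $E_{i,j}$-atoms of $Q$ on edges and loops are exactly those realised by an actual homomorphism $h_Q$. The paper then builds $\cA$ by a BFS over $T$: it introduces fresh tuples at each node, identifies entries with the parent according to $\mathsf{stp}(h_Q(p_v),h_Q(v))$, and proves the invariant of Claim~\ref{claim:InvariantSTP}.

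You instead use that $E_{i,j}$ is literal coordinate equality in $\enc$. Splitting over unary profile assignments writes the count as a sum in which some summand still distinguishes. Closing the atoms under the generated equivalence leaves both homomorphism sets unchanged. The global quotient of positions then realises the closed pattern by construction, so $\cA^D\cong\mathcal T$ is essentially immediate. Your separator remark at the end is superfluous once the closure is taken with respect to the full generated equivalence.

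What your route buys is a self-contained proof that needs neither prints nor the inclusion--exclusion argument behind them. The paper's route gets realisability for free from $h_Q$ and stays inside the Scheidt--Schweikardt framework. You also explicitly handle the $\prec_\cA$-ordering of blocks with equal relation symbols and blocks that coincide. The paper's proof leaves both points implicit.

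Two details on those extra steps.
\begin{itemize}
\item Only your block-permutation relabelling works in general. Choosing $<$ on $A$ to suit every node can be impossible: two adjacent nodes with profile $(R,R)$ whose blocks are cross-identified force opposite orders.
\item Make the merging step precise as follows. Delete the duplicate block from $\alpha_u$ to get a shorter profile $\alpha'$, and redirect its atoms to its twin block. Elements of $U_{\alpha}$ with the two blocks equal correspond bijectively to elements of $U_{\alpha'}$, so both counts are preserved.
\end{itemize}
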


\begin{proof}
The proof works in a similar fashion as the proof of \cite[Lemma 4.6]{scheidt2026colorrefinementrelationalstructures}. However there are some key technical differences, and for this reason we provide a self-contained proof. 

Recall that $\enc(\cB,k), \enc(\cB',k)$ are $\hat{\sigma}_k$-structures (see \Cref{def:explodedSignature}). By our assumption and \Cref{prop:kRCR.1WL} it follows that $\enc(\cB,k) \not\equiv_{1\textup{-WL}} \enc(\cB',k)$, which due to \Cref{thm:ClassicalColorRefinement} in turn implies that there is a $\hat{\sigma}_k$-structure $\mathcal{T}$, the Gaifman graph of which is a tree ---which we denote by $T$--- such that $\#\Hom(\mathcal{T}, \enc(\cB,k)) \neq \#\Hom(\mathcal{T}, \enc(\cB',k))$. Since every element of either target belongs to exactly one unary relation (see \Cref{def:OrderExplodedEncoding}), every element of each print defined below has a unique unary profile.

Similarly to the proof of \cite[Lemma 4.6]{scheidt2026colorrefinementrelationalstructures}, we define for each $h \in \Hom(\mathcal{T}, \enc(\cB,k))$ the \emph{print} $\mathsf{Print}(h)$ of $h$ which is a $\hat{\sigma}_k$-structure with the same domain as $\mathcal{T}$ and relations which are given as follows:

\begin{enumerate}
\item[(i)] For each $v \in \dom(\mathcal{T})$ and $\alpha \in \sigma^{(\leq k)}$, we have $v \in (U_\alpha)^{\mathsf{Print}(h)}$ if and only if $h(v) \in (U_\alpha)^{\enc(\cB,k)}$;
\item[(ii)] For each $(u, v) \in E(T) \,\bigcup \,\{(w,w) \mid w \in V(T)\}, i \in [|\mathsf{flat}(h(u))|], j \in [|\mathsf{flat}(h(v))|]$, we have $(u, v) \in (E_{i,j})^{\mathsf{Print}(h)}$ if and only if $(h(u), h(v)) \in (E_{i,j})^{\enc(\cB,k)}$.
\end{enumerate}

We also define the print of each $h \in \Hom(\mathcal{T}, \enc(\cB',k))$ in the same fashion. Note that the Gaifman graph of every print is $T$.

Since $\hat{\sigma}_k$ is a binary signature, it follows from the proof of \cite[Lemma 4.6]{scheidt2026colorrefinementrelationalstructures} that there exists a print $Q \coloneq \mathsf{Print}(h_Q)$ where $h_Q \in \Hom(\mathcal{T}, \enc(\cB,k)) \bigcup \Hom(\mathcal{T}, \enc(\cB',k))$ such that $\#\Hom(Q, \enc(\cB,k)) \neq \#\Hom(Q, \enc(\cB',k))$.

We now proceed with the construction of the claimed $\sigma$-structure $\cA$ and decomposition $D = (T, B, \lambda)$ (where $T$ is the Gaifman graph of $\mathcal{T}$).
In particular, we show that the $\hat{\sigma}_k$-structures $\cA^D$ and $Q$ are isomorphic, which will in turn imply that $\#\Hom(\cA^D, \enc(\cB,k)) \neq \#\Hom(\cA^D, \enc(\cB',k))$. We also show that $D$ is by construction a pure GHD and hence due to \Cref{lem:ghd.hom.eq} we derive the last claim of our statement, that is, $\#\Hom(\cA, \cB) \neq \#\Hom(\cA, \cB')$.

To this end, let $\mathsf{\Omega}$ be an countably infinite set such that $\mathsf{\Omega} \,\cap\, (\dom(\cB) \bigcup \dom(\cB')) = \emptyset$. We designate a node $r$ as the root of $T$ and see the rest of the nodes as being directed away from the root. We traverse $T$ in a \textit{breadth-first search} (BFS) fashion and perform the following for each node $v$ that we visit:

\begin{enumerate}
    \item [I.] We write $\alpha_v = (R_1, \dots, R_{\ell_v}) \in \sigma^{(\leq k)}$ for the unique profile for which $v \in (U_{\alpha_v})^Q$ holds, where $\ell_v \in [k]$. We introduce $\ell_v$ tuples $\bar{t}(v;1), \dots, \bar{t}(v;\ell_v)$ where $\bar{t}(v;i) \in \mathsf{\Omega}^{\mathsf{ar}(R_i)},$ for each $i \in [\ell_v]$. Letting $\bar{t}(v) \coloneq \bar{t}(v;1) + \dots + \bar{t}(v;\ell_v)$, we enforce $\mathsf{stp}(\bar{t}(v), \bar{t}(v)) = \{(i, j) \mid (v, v) \in (E_{i, j})^Q\}\}$ which is always possible since $\{(i, j) \mid (v, v) \in (E_{i, j})^Q\}\} = \mathsf{stp}(h_Q(v), h_Q(v))$ and we further ensure that no entry of $\bar{t}(v)$ appears as an entry in a tuple corresponding to any node which we have already visited. We write $\mathsf{stp}_{u, v}$ for $\mathsf{stp}(\bar t_u, \bar t_v)$, where $\bar t_u, \bar t_v$ as given in this step.
    \item[II.] Let $p_v$ denote the parent of $v$ in $T$ (assuming that $v \neq r$). Then, for each $(i, j) \in \mathsf{stp}(h_Q(p_v), h_Q(v))$ we replace $\bar{t}(v)[j]$ with $\bar{t}(p_v)[i]$ and also replace with $\bar{t}(p_v)[i]$ the content of every other entry $m$ of $\bar{t}(v)$ such that $(v, v) \in (E_{j, m})^Q$.
\end{enumerate}

\begin{claim}\label{claim:InvariantSTP}
For each $\{u, v\} \in E(T) \,\bigcup \, \{\{w, w\} : w \in V(T)\}$, it holds $\mathsf{stp}(\bar{t}(u), \bar{t}(v)) = \mathsf{stp}_{u, v}$.
\end{claim}
\begin{proof}
Recall that $\mathsf{stp}_{u, v} = \mathsf{stp}(h_Q(u), h_Q(v))$. Hence, our goal is to show that $\mathsf{stp}(\bar t_u, \bar t_v)$ is invariant under step (II). Let $v$ be a node and let $u$ denote its parent.

Recall that before the execution of step (II) for node $v$, we have $\mathsf{set}(\bar{t}_u) \,\cap\, \mathsf{set}(\bar{t}_v) = \emptyset$. Let $(i, j) \in \mathsf{stp}(h_Q(u), h_Q(v))$. Then, by construction we clearly have $(i, j) \in \mathsf{stp}(\bar{t}_u, \bar{t}_v)$. Assume that there is $i'$ such that $(i', j) \in \mathsf{stp}(\bar{t}_u, \bar{t}_v)$, which together with $(i, j) \in \mathsf{stp}(\bar{t}_u, \bar{t}_v)$ implies that $(i, i') \in \mathsf{stp}(\bar{t}_u, \bar{t}_u)$. Since $u \in T(d)$, it follows from our hypothesis that $(i, i') \in \mathsf{stp}(h_Q(u), h_Q(u))$ and so $(i', j) \in \mathsf{stp}(h_Q(u), h_Q(v))$. Furthermore, for any $j'$ such that $(j, j') \in \mathsf{stp}(h_Q(v), h_Q(v))$ and $i'$ such that $(i, i') \in \mathsf{stp}(h_Q(u), h_Q(u))$, it follows that $(i', j') \in \mathsf{stp}(h_Q(u), h_Q(v))$. Hence, we deduce that while executing step (II), we do not add to $\mathsf{stp}(\bar{t}_u, \bar{t}_v)$ any tuples other than the ones in $\mathsf{stp}_{u, v} = \mathsf{stp}(h_Q(u), h_Q(v))$.

Also, by definition, it follows that while executing step (II), we do not add to $\mathsf{stp}(\bar{t}_u, \bar{t}_v)$ any tuples other than the ones in $\mathsf{stp}(h_Q(v), h_Q(v))$, which completes the proof.
\end{proof}

Next, we define our claimed $\sigma$-structure $\cA$ as follows: 
\begin{enumerate}
\item[(a.)] $\dom(\cA) = \bigcup_{v \in V(T)}\mathsf{set}(\bar{t}(v))$;
\item[(b.)] for each $R \in \sigma$, $R^\cA = \{\bar{a} \mid \text{there are $v \in V(T), i \in [\ell_v]$ such that $\alpha_v[i] = R$ and $\bar{t}(v;i) = \bar{a}$}\}$.
\end{enumerate}

We also consider the decomposition $D = (T, B, \lambda)$ where for each $v \in V(T)$ we define $B(v) \coloneq \mathsf{set}(\bar{t}(v))$ and $\lambda(v) \coloneq (R_i(\bar{t}(v;i)))_{i \in [\ell_v]}$, where $R_i = \alpha_v[i]$. By construction $D$ is full and pure. Hence, for $D$ to be a valid GHD of $\cA$, it remains to show that the connectivity condition is also met. To this end, let $x \in \dom(\cA)$ and $u, v \in V(T)$ such that $\bar{t}(u)[i] = \bar{t}(v)[j] = x$, for some $i, j$. By construction, for any node $w$ and its parent $p_w$ we have that the entries of $\bar{t}(w)$ that appear outside of $T_w$, where $T_w$ is the subtree of $T$ rooted at $w$, must also appear in $B(p_w)$. Hence, $u$ and $v$ must have a least common ancestor $w$ such that $x \in B(w)$ and thus $x$ must also be contained in the bag of every node in the path from $w$ to $u$ as well as in the path from $w$ to $v$. Thus, the subtree of $T$ induced by $\{v \in V(T) \mid x \in B(v)\}$ is connected for each $x \in \dom(\cA)$.

Finally, we show that $\cA^D$ is indeed isomorphic to $Q$. For this, we recall our previous observation according to which by the construction of $\bar{t}(v), v \in V(T)$ it follows that for each $\{u, v\} \in E(T) \,\bigcup \, \{\{w, w\} : w \in V(T)\}$, we have $\mathsf{stp}(\bar{t}(u), \bar{t}(v)) = \mathsf{stp}(h_Q(u), h_Q(v)) = \{(i, j) \mid (u, v) \in (E_{i, j})^Q\}$. Furthermore, it follows by definition that for each $v \in V(T)$ we have $v \in (U_{\alpha_v})^Q$ and $\profile(v) = \alpha_v$ which implies that $v \in (U_{\alpha_v})^{\cA^D}$. Hence we deduce that $\cA^D$ is isomorphic to $Q$, which completes the proof.

\end{proof}

\subsubsection{Putting the pieces together}
We may now prove \Cref{maintheorem:boundedGHDforStructures} as follows.

\begin{proof}[Proof of \Cref{maintheorem:boundedGHDforStructures}]
Direction (1) $\implies $ (2) follows from \Cref{lem:reverseDirectionGHD}. In particular, \Cref{lem:reverseDirectionGHD} does not state that $\cA$ must be connected, however it can be readily verified that since $\#\Hom(\cA, \cB) \neq \#\Hom(\cA, \cB')$, there must already exists a (maximal) connected substructure of $\cA$ with different number of homomorphisms to $\cB$ and $\cB'$ respectively.

Finally, we derive direction $(2) \implies (1)$ by contraposition. To this end, assume that $\cB \equiv_\rcr \cB'$. Then, by \Cref{lem:connectedHomInd} it follows that for any connected $\sigma$-structure $\cA$ that admits a GHD of width $k$, it holds $\homs(\cA, \cB) = \homs(\cA, \cB')$, which completes the proof. 
\end{proof}

\section{The $\rcrFrac$ algorithm and the proof of \Cref{maintheorem:boundedFHDforStructures}}\label{sec:Fractional}

In this section, we define $\rcrFrac$ which is our second relational WL algorithm and prove our second main result (\Cref{maintheorem:boundedFHDforStructures}) both of which can be seen as the ``fractional'' analogues of $\rcr$ and our first main homomorphism-indistinguishability characterisation respectively.

\subsection{Additional technical background and notation}

First, we note that we will follow the same ordering convention of relational symbols that also extends to coloured tuples as it was described in \Cref{sec:orderingSetUp}.

The following lemmas will be useful for adapting the definition of the $\rcr$ algorithm (see, \Cref{def:kRCR}) accordingly and derive its ``fractional analogue'' which is the $\rcrFrac$ algorithm as well as for the adaptation of the main technical argumentation used in the proof of \Cref{maintheorem:boundedGHDforStructures}.

\begin{lemma}\label{lem:UpperBoundFractional}
For each $k \in \mathbb{N}$ and signature $\sigma$ there is a number $\upp(k; \sigma) \in \mathbb{N}$ such that for any $\sigma$-structure $\cA$, the following holds: if $\cA$ has fractional edge-cover number at most $k$, then $\cA$ contains at most $\upp(k; \sigma)$ coloured tuples.
\end{lemma}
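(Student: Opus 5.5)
The plan is a double-counting argument: first bound the size of the domain of $\cA$, then bound the number of coloured tuples over a domain of that size. Write $r := \ar(\sigma)$. We may assume $\ar(\sigma) \geq 1$ and that $\sigma$ is finite, as stated in the preliminaries.

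First, fix a fractional edge cover $\rho : \CT(\cA) \to \mathbb{Q}^+$ of $A$ with $\rho(\CT(\cA)) \leq k$, as in \Cref{def:fracEdgeCoverNumberPrelims}. Summing the covering inequality over all $x \in A$ and exchanging the order of summation gives
\[
|A| \;\leq\; \sum_{x \in A}\ \sum_{R(\bar t) \in \CT(\cA) : x \in \bar t} \rho(R(\bar t)) \;=\; \sum_{R(\bar t) \in \CT(\cA)} \rho(R(\bar t)) \cdot |\mathsf{set}(\bar t)|.
\]
Since $|\mathsf{set}(\bar t)| \leq \ar(R) \leq r$, the right-hand side is at most $r \cdot \rho(\CT(\cA)) \leq rk$. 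Hence $|A| \leq rk$.

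Second, every coloured tuple $R(\bar t)$ of $\cA$ satisfies $\bar t \in A^{\ar(R)}$. So the number of coloured tuples is at most
\[
\sum_{R \in \sigma} |A|^{\ar(R)} \;\leq\; |\sigma| \cdot (rk)^{r}.
\]
We therefore set $\upp(k;\sigma) := |\sigma|\cdot(k\cdot\ar(\sigma))^{\ar(\sigma)}$. This depends only on $k$ and $\sigma$, as required.

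There is no real obstacle here. The only point needing care is the exchange of summation in the first step, where each coloured tuple is counted once per distinct element of $\mathsf{set}(\bar t)$ and not once per position. This is why $|\mathsf{set}(\bar t)|$ appears, and why repeated entries only help the bound. The argument also never needs rationality of $\rho$ or any assumption about isolated elements.
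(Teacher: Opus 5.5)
Your proof is correct and follows the same double-counting route as the paper: sum the covering constraints over all $x \in A$ and exchange the summations to get $|A| \le k\cdot\ar(\sigma)$, then count the coloured tuples over a domain of that size. The differences are cosmetic. The paper first merges the weights of coloured tuples that share the same underlying tuple, a step your argument shows is unnecessary, and you state the bound $|\sigma|\cdot(k\cdot\ar(\sigma))^{\ar(\sigma)}$ explicitly where the paper leaves it implicit.
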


\begin{proof}

Recall that we assume that $\cA$ features no isolated domain elements (see, \Cref{remark:no_isolated_elements}). Let $\rho : \mathsf{CT}(\cA) \to \mathbb{Q}^+$ be a fractional edge cover of $\cA$ of size at most $k$. Let $\bar{t}$ be a tuple made up of $\ell$ domain elements of $\cA$ (recall that we write $A$ for the domain of $\cA$). Let also $R_1, \dots, R_q \in \sigma$ be all the relation symbols such that $\bar{t} \in (R_i)^\cA$ (ordered arbitrarily). Consider the following mapping $\hat\rho : \mathsf{CT}(\cA) \to \mathbb{Q}^*$ obtained by $\rho$, by setting for each tuple $\bar{t}$, 

\[
\hat\rho(R_i(\bar{t})) = \left\{
\begin{array}{cc}
    \sum_{j = 1}^q\rho(R_j(\bar{t})) & \mbox{ if $i = 1$} \\
    0 & \mbox{ if $i > 1$} 
\end{array}
\right.
\]  

It follows from the definition of fractional edge covers (see, \Cref{def:fracEdgeCoverNumberPrelims}), that $\hat{\rho}$ is also a fractional edge cover of $\cA$ of size at most $k$. 

Now, as observed in \cite{Gottlob2023,10.1145/3801900}, we may upper-bound the number of domain elements of $\cA$ as follows:

\[|A| \leq \sum_{x \in A}\underbrace{\sum_{R(\bar{t}) : x \in \bar{t}}\hat{\rho}(R(\bar{t}))}_{\geq 1} \leq \sum_{R(\bar{t})}\sum_{x \in \bar{t}}\hat{\rho}(R(\bar{t})) \leq \ar(\sigma)\cdot\sum_{R(\bar{t})}\hat{\rho}(R(\bar{t})) \leq \ar(\sigma) \cdot k\,.\]

Finally, since $|A| \leq k\cdot \ar(\sigma)$, it follows that the maximum number of coloured tuples that $\cA$ may feature, depends only on $k$, the arity of $\sigma$ and the number of relation symbols in $\sigma$.
\end{proof}

For a $\sigma$-structure $\cA$ and a tuple $\bar{a}$ consisting of coloured tuples of $\cA$, we write $\cA[\bar{a}]$ for the substructure of $\cA$ induced by the coloured tuples in $\bar{a}$ (and with no isolated domain elements). By \Cref{lem:UpperBoundFractional}, it follows that if $\cA[\bar{a}]$ has fractional edge-cover number at most $k$, then $\bar a$ contains at most $\upp(k;\sigma)$ \textit{distinct} coloured tuples. Since we consider $\sigma$ to be fixed, we will write $\upp(k)$ instead of $\upp(k;\sigma)$ for convenience. 

\begin{lemma}[{\cite[Lemma 7]{Chen2020}}]\label{lem:auxFracCoverNumPreserved}
Let $\cA, \cB$ be $\sigma$-structures and let $h \in \Hom(\cA, \cB)$. For a tuple $\bar{a}$ consisting of coloured tuples of $\cA$, we write $h(\bar{a})$ for the tuple obtained by applying $h$ entry-wise on each coloured tuple of $\bar{a}$. For any $\bar{a}$, if $\cA[\bar{a}]$ has fractional edge-cover number at most $k$, then $\cB[h(\bar{a})]$ has fractional edge-cover number at most $k$.
\end{lemma}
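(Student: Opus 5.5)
We plan to push a fractional edge cover of $\cA[\bar a]$ forward along $h$, so that the cover of each image coloured tuple collects the weights of all its preimages. Write $\cA' := \cA[\bar a]$ and $\cB' := \cB[h(\bar a)]$. By construction, the coloured tuples of $\cB'$ are exactly the images $S(h(\bar t))$ of the coloured tuples $S(\bar t)$ occurring in $\bar a$. Since $h$ is a homomorphism, each such image is a coloured tuple of $\cB$, so $\cB'$ is a well-defined substructure with no isolated elements. Its domain is $h(\dom(\cA'))$: every element of $\cB'$ occurs in some image tuple $h(\bar t)$, and hence equals $h(x)$ for some $x \in \bar t$.

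Now fix a fractional edge cover $\rho : \CT(\cA') \to \mathbb{Q}^+$ of $\dom(\cA')$ of size at most $k$. Define $\rho' : \CT(\cB') \to \mathbb{Q}^+$ by
\[
\rho'(S(\bar s)) := \sum_{S(\bar t) \in \CT(\cA') \,:\, h(\bar t) = \bar s} \rho(S(\bar t)).
\]
The sum ranges only over preimages carrying the same relation symbol $S$, which is exactly why the paper works with coloured tuples.

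\textbf{Size.} The preimage sets $\{S(\bar t) : h(\bar t) = \bar s\}$ partition $\CT(\cA')$ as $S(\bar s)$ ranges over $\CT(\cB')$. Hence the size of $\rho'$ equals the size of $\rho$, which is at most $k$.

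\textbf{Covering.} Let $y \in \dom(\cB')$ and pick $x \in \dom(\cA')$ with $h(x) = y$. Whenever $x \in \bar t$, we have $y \in h(\bar t)$. Therefore
\[
\sum_{S(\bar s) \,:\, y \in \bar s} \rho'(S(\bar s)) = \sum_{S(\bar t) \,:\, y \in h(\bar t)} \rho(S(\bar t)) \;\ge\; \sum_{S(\bar t) \,:\, x \in \bar t} \rho(S(\bar t)) \;\ge\; 1,
\]
using that all weights are nonnegative. So $\rho'$ is a fractional edge cover of $\cB'$ of size at most $k$.

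The only delicate point is the bookkeeping. If several coloured tuples of $\bar a$ collapse onto the same image coloured tuple, their weights must be summed rather than counted separately, or the size bound fails in the wrong direction. Conversely, distinct image coloured tuples must not share preimages, which holds because $h$ is a function. Repeated entries in $\bar a$ are harmless, since we work with the set of coloured tuples that $\bar a$ induces.
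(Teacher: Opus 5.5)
Your argument is correct and complete. You push a fractional edge cover of $\cA[\bar a]$ forward along $h$, summing the weights of all preimages that carry the same relation symbol. This preserves the total weight, and it still covers every element of $\cB[h(\bar a)]$, because each such element has a preimage in $\dom(\cA[\bar a])$ and all weights are nonnegative.

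The paper gives no proof of this lemma and simply imports it from \cite[Lemma 7]{Chen2020}. Your weight-transfer argument is the standard proof of this kind of monotonicity, and it makes the statement self-contained in the paper's coloured-tuple setting. It also correctly handles several coloured tuples of $\bar a$ collapsing onto one image.
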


\subsection{The $\rcrFrac$ algorithm}
Towards the definition of $\rcrFrac$, fix a signature $\sigma$. For a $\sigma$-structure $\cA$, let
$\mathsf{MFCT}_k(\cA)$ contain the objects
$\Omega=(\alpha;\bar t_1,\ldots,\bar t_m)\in
\MCT_{\upp(k)}(\cA)$ for which the substructure induced by the
coloured tuples $R_1(\bar t_1),\ldots,R_m(\bar t_m)$ has fractional
edge-cover number at most $k$.  Set $\flatt(\Omega):=\bar t_1+\cdots+\bar
t_m$ and $L(\Omega):=|\flatt(\Omega)|$.

\begin{definition}[Atomic \& Similarity Types for $\rcrFrac$]
For $\Omega\in\mathsf{MFCT}_k(\cA)$, we define $\atp(\Omega):=\profile(\Omega)$. Furthermore, for $\Omega,\Psi\in\mathsf{MFCT}_k(\cA)$, we define
$\stp(\Omega,\Psi):=\{(p,q)\in [L(\Omega)]\times[L(\Psi)]\mid \flatt(\Omega)[p]=\flatt(\Psi)[q]\}$,
and $\stp(\Omega):=\stp(\Omega,\Omega)$.
\end{definition}

\begin{definition}[$k$-Relational colour refinement based on fractional edge covers (\rcrFrac)]
\label{def:FrackRCR}
We fix $k\ge 1$ and define the $\rcrFrac$ algorithm that iteratively colours the elements of $\mathsf{MFCT}_k(\cA)$ by the colouring $\chi_i^\cA$ computed as follows (where $i$ is the iteration counter)

\[\chi_0^\cA(\Omega):=(\atp(\Omega),\stp(\Omega));
\]
\[\chi_{t+1}^\cA(\Omega):=
\Bigl(
\chi_t^\cA(\Omega),\
\{\!\{\,(\stp(\Omega,\Psi),\chi_t^\cA(\Psi))\mid
\Psi\in\mathsf{MFCT}_k(\cA),\ \stp(\Omega,\Psi)\neq\emptyset\,\}\!\}
\Bigr).
\]
We say that $\rcrFrac$ \emph{stabilises} after iteration $j$ if the colour classes formed before iteration $j$ do not change after iteration $j$.
Let $\chi_\infty^\cA$ be the stable coloring, and for a stable color $c$ let
$\mathsf{mult}_k^\cA(c):=|\{\Omega\in\mathsf{MFCT}_k(\cA)\mid \chi_\infty^\cA(\Omega)=c\}|$.

For $\sigma$-structures $\cA,\cB$, write $\cA\equiv_{\rcrFrac}\cB$ iff
$\mathsf{mult}_k^\cA(c)=\mathsf{mult}_k^\cB(c)$ for all stable colors $c$, in which case we say that $\rcrFrac$ \emph{cannot distinguish} structures $\mathcal{A}$ and $\mathcal{B}$.
\end{definition}

\subsection{Towards the proof of \Cref{maintheorem:boundedFHDforStructures}}

Similarly to inducing binary structures by a GHD as in \Cref{def:ghd.bin.struct}, we may also induce binary structures by a FHD as follows. Essentially, the only difference is that the induced structure $\cA^D$ is now a $\widehat{\sigma}_{\upp(k)}$-structure (and not a $\widehat{\sigma}_k$-structure) so as to ensure that all possible tuples $\bar{a}$ of size $\upp(k)$ such that $\cA[\bar{a}]$ (more precisely, the substructure induced by the distinct tuples appearing in $\bar a$) has fractional edge cover number at most $k$ are considered, according to \Cref{lem:UpperBoundFractional}.

\begin{definition}[Binary structure induced by a FHD]
\label{def:ghd.bin.struct.fractional}

Let $\cA$ be a $\sigma$-structure and let $D=(T,B,\lambda)$ be a FHD of $\cA$ of width at most $k \in \mathbb{N}$.
Define $\cA^D$ as the $\widehat{\sigma}_{\upp(k)}$-structure with universe
 $V(T)$ and the minimal interpretation s.t. (we omit the superscript $\cA^D$):
\begin{enumerate}
\item $u \in U_\alpha$ iff $\alpha = \profile(u)$.
\item $E_{i,j}(u,w)$ if $\{u,w\}\in E(T)$, and $\bar a[i] = \bar b[j]$ and $\bar a [i] \neq \varepsilon$, where $\bar a = \bagtup(u)$ and $\bar b = \bagtup(w)$.
\end{enumerate}
\end{definition}

Similarly to \Cref{prop:kRCR.1WL}, we also show that $\rcrFrac$-indistinguishability can be interpreted in terms of 1-WL-indistinguishability for the respective ``fractional'' $k$-exploded encodings (which can be seen as the analogues of the canonical $k$-exploded encodings), stated formally below.

To this end, fix again a signature $\sigma$ and a positive integer $k\ge 1$ and first observe that with \Cref{lem:UpperBoundFractional} in hand, we may now define the aforementioned ``fractional analogue'' of a canonical $k$-exploded encoding (from \Cref{def:ExplodedEncoding}, in the case of the $\rcr$ algorithm) as follows.

\begin{definition}[Fractional $k$-Exploded Encoding]\label{def:FracEncoding}
For $\alpha=(R_1,\ldots,R_m)\in\sigma^{(\leq\upp(k))}$ and
$\bar a=\bar t_1+\cdots+\bar t_m\in R_1^\cA\times\cdots\times R_m^\cA$,
call $(\alpha,\bar a)$ $k$-fractional if the substructure induced by the
coloured tuples $R_1(\bar t_1),\ldots,R_m(\bar t_m)$ has fractional
edge-cover number at most $k$. Define $\mathbf F(\cA,k)$ to have all
$\bar a$ for which $(\alpha,\bar a)$ is $k$-fractional for some $\alpha$,
put $\bar a\in U_\alpha$ iff $(\alpha,\bar a)$ is $k$-fractional, and retain
the relations $E_{i,j}(\bar a,\bar b)$ iff $\bar a[i]=\bar b[j]$.  Define
$\fenc(\cA,k)$ as the substructure of $\enc(\cA,\upp(k))$ induced by its
$k$-fractional elements.
\end{definition}

\begin{remark}
Note that all elements $\bar a \in \dom(\mathbf{F}(\cA, k))$ use at most $\upp(k)$ relation tuples.
\end{remark}

\begin{proposition}
\label{prop:FractionalkRCR.1WL}
Let $\cA,\cB$ be $\sigma$-structures and let $\fenc(\cA,k),\fenc(\cB,k)$
be the profile-split fractional $k$-exploded encodings.
Then
\(
\cA \equiv_{\rcrFrac} \cB
\quad\Longleftrightarrow\quad
\fenc(\cA,k) \equiv_{\textup{1-WL}} \fenc(\cB,k).
\)
\end{proposition}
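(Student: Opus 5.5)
The plan mirrors the proof of \Cref{prop:kRCR.1WL}: exhibit a colour-preserving bijection between the objects coloured by $\rcrFrac$ and the domain of $\fenc(\cA,k)$, and show by induction on rounds that the two refinements produce identical colourings, multiplicities included.

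The correspondence is the identification
\[
\Omega=(\alpha;\bar t_1,\ldots,\bar t_m)\in\mathsf{MFCT}_k(\cA)\quad\longleftrightarrow\quad(\alpha,\bar t_1+\cdots+\bar t_m)\in\dom(\fenc(\cA,k)).
\]
It is a bijection, because the two sides are defined by the same condition. An element $(\alpha,\bar a)$ of $\enc(\cA,\upp(k))$ has $\alpha\in\sigma^{(\le\upp(k))}$ and $\bar a\in R_1^\cA\times\cdots\times R_m^\cA$. It survives in $\fenc(\cA,k)$ exactly when the substructure induced by $R_1(\bar t_1),\ldots,R_m(\bar t_m)$ has fractional edge-cover number at most $k$. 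This is verbatim the membership condition for $\mathsf{MFCT}_k(\cA)$ inside $\MCT_{\upp(k)}(\cA)$. The flattened tuple can be split back into $\bar t_1,\ldots,\bar t_m$, since the arities are fixed by $\alpha$.

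Next I check that the structure is transported correctly.
\begin{itemize}
\item Unary relations: $(\alpha,\bar a)\in U_\beta$ iff $\alpha=\beta=\atp(\Omega)$.
\item Binary relations: $E_{i,j}$ holds between the images of $\Omega$ and $\Psi$ iff $\flatt(\Omega)[i]=\flatt(\Psi)[j]$, that is, iff $(i,j)\in\stp(\Omega,\Psi)$.
\end{itemize}
Since $\fenc(\cA,k)$ is an induced substructure, no edges are lost by the restriction. Consequently the initial $1$-WL colour, namely $\atp$ together with the loop labels, equals $\chi_0^\cA(\Omega)=(\atp(\Omega),\stp(\Omega))$. Moreover, the $1$-WL neighbours of a vertex are exactly the $\Psi$ with $\stp(\Omega,\Psi)\neq\emptyset$, and the edge-label set $\mu$ equals $\stp(\Omega,\Psi)$. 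Induction on $i$ then shows that the $i$-th colours agree as functions on corresponding elements. The round-$(i+1)$ multisets coincide term by term, because the neighbourhood in both is taken inside the same restricted universe.

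Equal colourings give equal stable colour classes with equal multiplicities, for both $\cA$ and $\cB$. The same colour names arise for both structures, since the colours are built syntactically. Hence $\rcrFrac$-equivalence and $1$-WL-equivalence of the encodings coincide.

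The one step that needs care is the restriction itself. The neighbourhood in the refinement step must range only over $k$-fractional elements on both sides, and it does, because $\fenc$ is the induced substructure and $\rcrFrac$ quantifies over $\mathsf{MFCT}_k$. A second point to verify is that the definitions attach no subtle asymmetry to the edge-label convention; \Cref{prop:kRCR.1WL}'s remark handles this, since $E_{i,j}$ on $(u,v)$ corresponds to $E_{j,i}$ on $(v,u)$. If one prefers the unsplit $\mathbf F(\cA,k)$, an argument identical to \Cref{prop:B-enc-1WL}, restricted to $k$-fractional profiles, transfers the equivalence.
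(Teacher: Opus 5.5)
Your proposal is correct and essentially matches the paper's argument. The paper gives no separate proof of this proposition and relies implicitly on the proof of \Cref{prop:kRCR.1WL}: identify $\Omega$ with $(\alpha,\flatt(\Omega))$, match $\chi_0$ with the initial $1$-WL colour and $\stp(\Omega,\Psi)$ with the $E_{i,j}$-labels, then induct on rounds. You carry this over after correctly noting that $\mathsf{MFCT}_k(\cA)$ and $\dom(\fenc(\cA,k))$ are defined by the same fractional-cover condition and that $\fenc(\cA,k)$ is an induced substructure.

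One cosmetic point: the $\rcrFrac$ multiset also contains the term for $\Psi=\Omega$, whereas the $1$-WL multiset runs only over Gaifman neighbours $w\neq v$. So the multisets agree up to this single term, which is determined by the first colour component, rather than literally term by term; this does not affect the colour classes.
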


\begin{proposition}[Fractional split/non-split bridge]\label{prop:F-split-bridge}
For all $\sigma$-structures $\cA,\cB$,
\[
\fenc(\cA,k)\equiv_{1\textup{-WL}}\fenc(\cB,k)
\quad\Longleftrightarrow\quad
\mathbf F(\cA,k)\equiv_{1\textup{-WL}}\mathbf F(\cB,k).
\]
\end{proposition}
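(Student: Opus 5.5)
The plan is to mimic the proof of \Cref{prop:B-enc-1WL} almost verbatim, replacing $\mathbf B(\mathcal D,k)$ by $\mathbf F(\mathcal D,k)$ and $\enc(\mathcal D,k)$ by $\fenc(\mathcal D,k)$. Fix a $\sigma$-structure $\mathcal D$. For $\bar a\in\dom(\mathbf F(\mathcal D,k))$ let $P_F(\bar a):=\{\alpha\mid(\alpha,\bar a)\text{ is }k\text{-fractional}\}$; by \Cref{def:FracEncoding} this is exactly the set of $\alpha$ with $\bar a\in U_\alpha^{\mathbf F(\mathcal D,k)}$, and the elements of $\fenc(\mathcal D,k)$ are exactly the pairs $(\alpha,\bar a)$ with $\alpha\in P_F(\bar a)$. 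Write $\chi_i$ and $\widehat\chi_i$ for the round-$i$ $1$-WL colours on $\mathbf F(\mathcal D,k)$ and $\fenc(\mathcal D,k)$. I will show that $\widehat\chi_\infty(\alpha,\bar a)$ and $(\alpha,\chi_\infty(\bar a))$ determine each other.

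\emph{Right to left.} This goes by induction on rounds. The initial colour $\chi_0(\bar a)$ records $P_F(\bar a)$ (the unary relations) and $\tau(\bar a,\bar a)$; together with $\alpha$ this gives $\widehat\chi_0(\alpha,\bar a)$. Let $\bar a'=(\alpha,\bar a)$. The refinement multiset of $\bar a'$ is obtained from that of $\bar a$ by replacing each neighbour $\bar b$ by the copies $(\beta,\bar b)$ with $\beta\in P_F(\bar b)$. The edge labels are unchanged, since $E_{i,j}$ depends only on the flattened tuples. Because $\chi_i(\bar b)$ determines $P_F(\bar b)$, we get that $\chi_{i+1}(\bar a)$ together with $\alpha$ determines $\widehat\chi_{i+1}(\alpha,\bar a)$.

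\emph{Left to right.} First I recover $P_F(\bar a)$ from one split round. Let $\eta=\tau(\bar a,\bar a)$. Then $\beta\in P_F(\bar a)$ iff the multiset of $(\alpha,\bar a)$ contains an entry $(\eta,d_{\beta,\eta})$, where $d_{\beta,\eta}$ is the initial split colour with profile $\beta$ and equality type $\eta$. The argument from \Cref{prop:B-enc-1WL} carries over: such an entry forces the contributing element $(\beta,\bar b)$ to satisfy $\tau(\bar a,\bar b)=\eta=\tau(\bar b,\bar b)$ with $|\bar b|=|\bar a|$, hence $\bar b=\bar a$.

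This step is where I expect the only real subtlety. Since $\fenc$ is an induced substructure, a neighbour exists only when $(\beta,\bar b)$ is $k$-fractional. However, the test only ever concerns $\bar b=\bar a$ itself, so membership of $(\beta,\bar a)$ in $\fenc$ is exactly $\beta\in P_F(\bar a)$, and the argument survives. The one point to check is that $\eta$ is defined using all index pairs, including $(p,p)$, so that $\tau(\bar a,\bar b)=\eta$ really forces equality entrywise. It does, because $(p,p)\in\eta$ gives $\bar a[p]=\bar b[p]$ for every $p$.

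Hence $\widehat\chi_1(\alpha,\bar a)$ determines $\chi_0(\bar a)$. Inductively, if $\widehat\chi_{i+1}$ determines $\chi_i$, then $\widehat\chi_{i+2}(\alpha,\bar a)$ determines $\chi_{i+1}(\bar a)$. To see this, for each round-$i$ colour $c$ pick the canonical profile $\beta_c=\min P_F$ of that colour class, which is well defined since $c$ determines $P_F$. Then count the split neighbours $(\beta_c,\bar b)$ with a given edge pattern whose colour determines $c$. This gives, for each edge pattern, the number of $\bar b$ of colour $c$ adjacent to $\bar a$ with that pattern, which is exactly the data of the $1$-WL update. Passing to stable colours removes the one-round shift.

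\emph{Conclusion.} The stable classes of $\fenc(\mathcal D,k)$ are therefore exactly the stable classes of $\mathbf F(\mathcal D,k)$, each split by the profiles $\alpha\in P_c$. So the stable multiplicities of the two encodings determine each other: $\mathrm{mult}(\alpha,c)=\mathrm{mult}(c)$ if $\alpha\in P_c$ and $0$ otherwise. Applying this to $\mathcal D=\cA$ and $\mathcal D=\cB$ yields the stated equivalence.
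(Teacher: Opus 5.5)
Your proposal is correct and follows the paper's argument. The paper's proof only says to run the proof of \Cref{prop:B-enc-1WL} on the $k$-fractional pairs $(\alpha,\bar a)$. You write that transfer out and correctly observe that the profile-recovery step only inspects copies $(\beta,\bar a)$ of the same flat tuple, so restricting to the induced substructure $\fenc(\mathcal D,k)$ loses nothing.

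One bookkeeping point, which the paper shares: if 1-WL neighbourhoods exclude the vertex itself (the Gaifman graph is simple), then the copies $(\beta,\bar a)$ with $\beta\neq\alpha$ appear in the split multiset of $(\alpha,\bar a)$, while $\bar a$ does not appear in its own multiset in $\mathbf F(\mathcal D,k)$. Likewise, $\beta=\alpha$ is not detected by your membership test. This needs a correction term in both directions, but it is harmless because the term is determined by $\alpha$ and $\chi_i(\bar a)$.
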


\begin{proof}
Apply the proof of \Cref{prop:B-enc-1WL} to the pairs
$(\alpha,\bar a)$ that are $k$-fractional. By definition, these are exactly
the elements of $\fenc(\cA,k)$, while their projections $\bar a$, with the
corresponding unary predicates, form $\mathbf F(\cA,k)$.
\end{proof}

\subsection{Proof of \Cref{maintheorem:boundedFHDforStructures}}

In this section, we prove \Cref{maintheorem:boundedFHDforStructures} with which we establish a characterisation of our $\rcrFrac$ algorithm in terms of homomorphism indistinguishability over relational structures of \textit{semi-pure} fractional hypertreewidth at most $k$. Our extended technical set-up will now allow to write the proof following the same lines as the proof of our fist main theorem \Cref{maintheorem:boundedGHDforStructures}. However, there are several further technical observations that are substantial and non-trivial for the adaptation and hence we provide a self-contained proof.

\begin{remark}
Unless stated otherwise, for what follows a FHD will always be assumed to be a semi-pure FHD.  
\end{remark}

Similarly to \Cref{def:Extensions}, we may define $D$-extensions of homomorphisms with respect to a FHD $D$, as follows.

\begin{definition}\label{def:FractionalExtensions}
Let $\cA, \cB$ be $\sigma$-structures and let $D = (T, B, \lambda)$ be a full FHD of $\cA$ with width at most $k \in \mathbb{N}$. For $u \in V(T)$, write $\bagtup(u) = \bar{a}^*_1 + \dots + \bar{a}^*_\ell$, \footnote{Recall that $\bagtup(u)$ may contain $\varepsilon$-entries.} and $\mathsf{Ord}\lambda(u) = R_1(\bar{a}_1), \dots, R_\ell(\bar{a}_\ell)$, where $1 \leq \ell \leq \upp(k)$. Given a mapping $h : \dom(\cA) \to \dom(\cB)$ and $u \in V(T)$, let $\exthom{h}{u}$ be a mapping that maps the tuple $\bar{a}^*_1 + \dots + \bar{a}^*_\ell$ entry-wise to some $\bar{t} \in \dom(\mathbf{F}(\cB, k))$ of the same size such that $(1)$ $\exthom{h}{u}$ agrees with $h$ on all entries $x \neq \varepsilon$ of $\bagtup(u)$ and $(2)$ $\exthom{h}{u}$ maps all $\varepsilon$'s to arbitrary elements in $\dom(\cB)$ that need not be equal. 

For $h \in \Hom(\cA, \cB)$, we say that the multiset $\{\{\exthom{h}{u} : u \in V(T)\}\}$ $D$-\emph{extends} $h$ w.r.t $\cB$ if and only if for each $u \in V(T)$, it holds  $\exthom{h}{u}(\bagtup(u)) \in U^{\mathbf{F}(\cB, k)}_{\profile(u)}$, where $\profile(u) = (R_1, \dots, R_\ell)$. 

We write $\mathsf{Ext}^D_\cB(h)$ for all $D$-extensions of $h$ (w.r.t $\cB$).
\end{definition}

\begin{remark}[On \Cref{def:FractionalExtensions}]
Note that the mapping $\exthom{h}{u}$ naturally induces a mapping $\exthom{h}{u,i}$ acting on each individual tuple $\bar{a}_i^*, 1 \leq i \leq \ell$ such that, $\exthom{h}{u,i}(\bar{a}_i^*) \in R_i^\cB$ and $\exthom{h}{u}(\bar{a}_1^* + \dots + \bar{a}_\ell^*)$ may equivalently be written as $\exthom{h}{u,1}(\bar{a}_1^*) + \dots + \exthom{h}{u,\ell}(\bar{a}_\ell^*)$.
Compared to \Cref{def:Extensions}, where every possible combination of (individual) mappings $\exthom{h}{u,i}, 1 \leq i \leq \ell$ yielded a valid mapping $\exthom{h}{u}$ for each $u \in V(T)$ and thus a valid $D$-extension of $h$, the situation now in \Cref{def:FractionalExtensions} is different. In particular, it may be the case that the substructure of $\cB$ induced by the coloured tuples $\exthom{h}{u,1}(\bar{a}_1^*), \dots, \exthom{h}{u,\ell}(\bar{a}_\ell^*)$ has fractional edge cover number which which is larger than $k$ and thus $\exthom{h}{u}$ is not well-defined. However, we can still show that $\mathsf{Ext}^D_\cB(h) \neq \emptyset$, for every $h \in \Hom(\cA, \cB)$, which is crucial for stating a 'fractional analogue' of \Cref{lem:HomSumOfExt}.
\end{remark}

\begin{lemma}\label{lem:non-empty-extension-set}
Let $\cA$ be a $\sigma$-structure and let $D = (T, B, \lambda)$ be a \emph{full} and \emph{semi-pure} FHD of $\cA$ with width at most $k \in \mathbb{N}$. For every $\sigma$-structure $\cB$ and $h \in \Hom(\cA, \cB)$, it holds $\mathsf{Ext}^D_\cB(h) \neq \emptyset$.    
\end{lemma}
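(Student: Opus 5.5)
We follow the proof of \Cref{lem:GHD-non-empty-extension-set} and use the \emph{natural} extension. For each $u \in V(T)$ write $\fulltup(u) = \bar a_1 + \dots + \bar a_\ell$ and $\mathsf{Ord}\lambda(u) = R_1(\bar a_1), \dots, R_\ell(\bar a_\ell)$. We let $\exthom{h}{u}$ map $\bagtup(u)$ entry-wise to $h(\fulltup(u))$: entries of $B(u)$ go to their $h$-image, and an $\varepsilon$ in position $j$ of $\bar a_i^*$ goes to $h(\bar a_i[j])$. Condition (1) of \Cref{def:FractionalExtensions} holds by construction, and (2) is unconstrained. What remains is to check that $h(\fulltup(u))$ is an element of $\mathbf F(\cB,k)$ lying in $U_{\profile(u)}$, that is, that $(\profile(u), h(\fulltup(u)))$ is $k$-fractional.

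Membership in the product is immediate. Since $h$ is a homomorphism, $h(\bar a_i) \in R_i^\cB$ for every $i$, so $h(\fulltup(u)) \in R_1^\cB \times \dots \times R_\ell^\cB$. We also need $\profile(u) \in \sigma^{(\le \upp(k))}$, i.e.\ $\ell \le \upp(k)$.

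The key step is the fractional edge-cover condition, and this is where semi-purity is used. Because $D$ is semi-pure of width at most $k$, the set $\bigcup_{i} \mathsf{set}(\bar a_i)$ has a fractional edge cover of size at most $k$. Restricting to the coloured tuples of $\lambda(u)$, I would check that the cover can be taken using only those tuples, so that the substructure $\cA[\mathsf{Ord}\lambda(u)]$ has fractional edge-cover number at most $k$. If the semi-pure cover may use other coloured tuples of $\cA$, I would instead replace $\lambda(u)$ by the support of that cover in the definition of the decomposition; this keeps fullness and all other FHD conditions, since $\lambda$ only enters through condition (3). \Cref{lem:UpperBoundFractional} then gives $\ell \le \upp(k)$, because $\lambda(u)$ consists of distinct coloured tuples of a structure with fractional edge-cover number at most $k$. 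Next, \Cref{lem:auxFracCoverNumPreserved}, applied to $h$ and the tuple of coloured tuples $\mathsf{Ord}\lambda(u)$, yields that $\cB[h(\mathsf{Ord}\lambda(u))]$, the substructure induced by $R_1(h(\bar a_1)), \dots, R_\ell(h(\bar a_\ell))$, has fractional edge-cover number at most $k$. The images may coincide, but the induced substructure is on the set of distinct images, which matches the definition of $k$-fractional in \Cref{def:FracEncoding}. Hence $h(\fulltup(u)) \in U^{\mathbf F(\cB,k)}_{\profile(u)}$.

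Doing this for every $u$ gives the multiset $\{\{\exthom{h}{u} : u \in V(T)\}\} \in \mathsf{Ext}^D_\cB(h)$, so $\mathsf{Ext}^D_\cB(h) \neq \emptyset$. The main obstacle is the semi-purity step: obtaining a fractional cover of \emph{all} vertices covered by $\lambda(u)$, not just $B(u)$, with weight at most $k$ and supported on $\lambda(u)$. Without this, the $\varepsilon$-positions filled by $h$ could produce an image whose cover number exceeds $k$, and the natural extension would fall outside $\mathbf F(\cB,k)$.
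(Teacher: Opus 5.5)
Your proposal matches the paper's proof: you take the natural extension $\exthom{h}{u}(\bagtup(u)) = h(\fulltup(u))$, use \Cref{lem:UpperBoundFractional} for $\ell \le \upp(k)$, and use \Cref{lem:auxFracCoverNumPreserved} to carry the bound ``fractional edge-cover number at most $k$'' from $\cA[\lambda(u)]$ to its image in $\cB$, with semi-purity supplying that bound for $\cA[\lambda(u)]$. Your hedge is unnecessary if semi-purity is read like condition (3) of \Cref{def:FHD}, i.e.\ with the cover a map $\lambda(u)\to\mathbb{Q}^+$, and you should drop the fallback: a cover using outside tuples cannot in general be restricted to $\lambda(u)$ (take disjoint binary tuples all contained in one larger tuple outside $\lambda(u)$), and replacing $\lambda(u)$ by the cover's support changes $D$, hence $\cA^D$ and $\mathsf{Ext}^D_\cB(h)$, and can destroy fullness and semi-purity.
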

\begin{proof}
For $u \in V(T)$, write $\fulltup(u) = \bar{a}_1 + \dots + \bar{a}_\ell$, $\bagtup(u) = \bar{a}^*_1 + \dots + \bar{a}^*_\ell$ and $\mathsf{Ord}\lambda(u) = R_1(\bar{a}_1), \dots, R_\ell(\bar{a}_\ell)$, where $1 \leq \ell \leq \upp(k)$. Let $\exthom{h}{u}$ denote the mapping that maps $\bagtup(u)$ entry-wise to $h(\fulltup(u))$, that is, in particular if $\bar{a}_i^*[j] = \varepsilon$, then $\exthom{h}{u}$ maps $\bar{a}^*_i[j]$ to $h(\bar{a}_i[j])$.

From the combination of \Cref{lem:auxFracCoverNumPreserved,lem:UpperBoundFractional}, we deduce that $\exthom{h}{u}(\bagtup(u)) \in \mathbf{F}(\cB, k)$. Furthermore, since $h$ is a homomorphism we have $h(\bar{a}_i) \in U_{R_i}^\cB$, for each $1 \leq i \leq \ell$ which in turn implies that $h(\fulltup(u)) \in U_{\profile(u)}^{\mathbf{F}(\cB, k)}$. Finally, since $\exthom{h}{u}(\bagtup(u)) = h(\fulltup(u))$, we also have $\exthom{h}{u}(\bagtup(u)) \in U_{\profile(u)}^{\mathbf{F}(\cB, k)}$ and so $\mathsf{Ext}^D_\cB(h) \neq \emptyset$.
\end{proof}

With \Cref{lem:non-empty-extension-set} in hand, we may derive an analogue of \Cref{lem:HomSumOfExt} now based on FHD, the proof of which follows the exact same lines as the proof of \Cref{lem:HomSumOfExt}.

\begin{lemma}\label{lem:HomSumOfExtFrac}
Let $\cA$ be a $\sigma$-structure and let $D = (T, B, \lambda)$ be a \emph{full} and \emph{semi-pure} FHD of $\cA$ with width at most $k \in \mathbb{N}$.
Then for every $\sigma$-structure $\cB$,
\[
\homs(\cA^{D},\mathbf{F}(\cB,k)) = \sum_{h \in \Hom(\cA, \cB)}|\mathsf{Ext}^D_\cB(h)|\,. 
\]
In particular, $\{\mathsf{Ext}^D_\cB(h)\}_{h \in \Hom(\cA, \cB)}$ partition $\Hom(\cA^D, \mathbf{F}(\cB, k)$ into $\homs(\cA, \cB)$ equivalence classes.
\end{lemma}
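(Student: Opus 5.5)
The plan is to mirror the proof of \Cref{lem:HomSumOfExt} step by step, with $\mathbf{B}(\cB,k)$ replaced by $\mathbf{F}(\cB,k)$. Only the places where membership in $\dom(\mathbf{F}(\cB,k))$, which requires fractional edge-cover number at most $k$, needs checking will be treated separately.

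\emph{Injection.} First I define the map $\pi$ from $\bigcup_{h}\mathsf{Ext}^D_\cB(h)$ to $\Hom(\cA^D,\mathbf{F}(\cB,k))$ sending a $D$-extension $\{\{\exthom{h}{u}\}\}$ to $\hat h(u):=\exthom{h}{u}(\bagtup(u))$. I then check that $\hat h$ is a homomorphism.
\begin{itemize}
\item Unary predicates: $\hat h(u)\in U_{\profile(u)}^{\mathbf{F}(\cB,k)}$ holds by the definition of a $D$-extension (\Cref{def:FractionalExtensions}), which already places $\hat h(u)$ in $\dom(\mathbf{F}(\cB,k))$.
\item Binary relations $E_{i,j}$: these are only present between tree-adjacent nodes (and on loops), on entries lying in $B(u)\cap B(v)$. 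There $\exthom{h}{u}$ and $\exthom{h}{v}$ both agree with $h$, so the equalities transfer.
\end{itemize}
Injectivity of $\pi$ is immediate. Pairwise disjointness of the sets $\mathsf{Ext}^D_\cB(h)$ follows, as before, from $\bigcup_u B(u)=\dom(\cA)$ and the absence of isolated elements. Non-emptiness of each set is \Cref{lem:non-empty-extension-set}.

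\emph{Surjectivity.} For a given $\hat g\in\Hom(\cA^D,\mathbf{F}(\cB,k))$, I fix for each $z\in\dom(\cA)$ a node $u_z$ with $z\in B(u_z)$ and set $\rho(z)$ to be the common value of $\hat g(u_z)$ at the positions where $z$ occurs. The loop relations $E_{i,j}$ make this well defined.
\begin{itemize}
\item \Cref{claim:SurjectivityNonPure} then carries over verbatim: propagate along the path from $u$ to $u_z$, which stays inside the bags containing $z$ by connectivity. This uses only the $E_{i,j}$ edges.
\item For \Cref{claim:isHom}, fullness gives, for each $R(\bar t)$, a node $u$ with $R(\bar t)\in\lambda(u)$ and $\mathsf{set}(\bar t)\subseteq B(u)$. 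Since $\hat g(u)\in U_{\profile(u)}$, and in $\mathbf{F}$ this predicate still forces the blocks to lie in $R_1^\cB\times\cdots\times R_\ell^\cB$, we obtain $\rho(\bar t)\in R^\cB$.
\end{itemize}
So $\rho\in\Hom(\cA,\cB)$. It remains to show that the family $\exthom{\rho}{u}$ defined by $\exthom{\rho}{u}(\bagtup(u)):=\hat g(u)$ is a $D$-extension of $\rho$. It agrees with $\rho$ on non-$\varepsilon$ entries by the claim, and $\hat g(u)\in U^{\mathbf{F}(\cB,k)}_{\profile(u)}$ holds precisely because $\hat g$ is a homomorphism into $\mathbf{F}(\cB,k)$. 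This membership is the fractional-cover condition required in \Cref{def:FractionalExtensions}.

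\emph{Main obstacle.} The step I expect to need care is confirming that no extra fractional condition breaks the correspondence. Extensions are defined to land in $U^{\mathbf{F}}_{\profile(u)}$, and homomorphisms into $\mathbf{F}$ land there as well. So the conditions coincide exactly, and the $\varepsilon$-entries remain free subject only to that membership in both descriptions. I would also note that the $\widehat{\sigma}_{\upp(k)}$ indices are consistent because $|\lambda(u)|\le\upp(k)$ for semi-pure decompositions by \Cref{lem:UpperBoundFractional}. Summing $|\mathsf{Ext}^D_\cB(h)|$ over the disjoint non-empty classes then gives the stated identity and the partition.
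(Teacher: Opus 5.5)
Your proposal is correct and follows the paper's proof essentially step for step. The shared ingredients are the map $\hat h(u)=\exthom{h}{u}(\bagtup(u))$ into $\mathbf{F}(\cB,k)$, disjointness via $\bigcup_u B(u)=\dom(\cA)$, non-emptiness from \Cref{lem:non-empty-extension-set}, and surjectivity via the path-propagation and fullness claims, with the fractional condition matching on both sides because it is encoded in $U^{\mathbf{F}(\cB,k)}_{\profile(u)}$.

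Like the paper's proof, you rely on the loop relations $E_{i,j}(u,u)$ being present in $\cA^D$, as in \Cref{def:ghd.bin.struct}. \Cref{def:ghd.bin.struct.fractional} as written omits them, but the argument genuinely needs them: without loops the identity already fails when $\cA$ consists of the single tuple $R(x,x)$.
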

\begin{proof}

Let $h \in \Hom(\cA, \cB)$ and assume that a given multiset $\{\{\exthom{h}{u} : u \in V(T)\}\}$ of mappings $D$-extends $h$ w.r.t. $\cB$. Note that $\{\{\exthom{h}{u} : u \in V(T)\}\}$ induces a mapping $\hat{h} : \cA^D \to \mathbf{F}(\cB, k)$, where $\hat{h}(u) = \exthom{h}{u}(\bagtup(u))$. We have
\begin{enumerate}
\item for each $u \in V(T)$, $\hat{h}(u) = \exthom{h}{u}(\bagtup(u)) \in U^{\mathbf{F}(\cB, k)}_{\profile(u)}$, following from the definition of a $D$-extension of $h$;
\item for each $(u, v)$ such that $u = v$ or $\{u, v\} \in E(T)$, if $(u, v) \in E_{i,j}^{\cA^D}$, then $(\hat{h}(u), \hat{h}(v)) \in E_{i, j}^{\mathbf{F}(\cB, k)}$. To see this, recall that $(u, v) \in E_{i,j}^{\cA^D}$ if and only if $\bagtup(u)[i] = \bagtup(v)[j] \neq \varepsilon$. Equivalently, $\bagtup(u)[i] = \bagtup(v)[j] \in B(u) \, \cap \, B(v)$. Since $\exthom{h}{u}, \exthom{h}{v}$ agree on $B(u)\,\cap\,B(v)$, we deduce that $(\hat{h}(u), \hat{h}(v)) = (\exthom{h}{u}(\bagtup(u)), \exthom{h}{v}(\bagtup(v))) \in E_{i, j}^{\mathbf{F}(\cB, k)}$. 
\end{enumerate}
Hence, we deduce that $\hat{h} \in \Hom(\cA^D, \mathbf{F}(\cB, k))$ which in turn implies that there is a natural mapping $\pi : \bigcup_{h \in \Hom(\cA, \cB)}\mathsf{Ext}^D_\cB(h) \to \Hom(\cA^D, \mathbf{F}(\cB, k))$ that maps $\{\{\exthom{h}{u} : u \in V(T)\}\}$ to $\hat{h}$ which is injective by definition. 

Also, it is clear that if $\cA$ has no isolated domain elements, then for any two $h, h' \in \Hom(\cA, \cB)$ such that $\mathsf{Ext}^D_\cB(h) \,\cap\, \mathsf{Ext}^D_\cB(h') \neq \emptyset$, we have $h = h'$. To see this, note that for each $u \in V(T)$, the mapping of an intersecting $D$-extension, that corresponds to $u$, must agree on $B(u)$ with both $h$ and $h'$. Since, $\bigcup_{u \in V(T)}B(u) = \dom(\cA)$, we deduce that $h = h'$. Hence, the sets $\{\mathsf{Ext}^D_\cB(h)\}_{h \in \Hom(\cA, \cB)}$ are pair-wise disjoint.

Next let $\hat{g} \in \Hom(\cA^D, \mathbf{F}(\cB, k))$. Recall that by assumption, each element $z \in \dom(\cA)$ is contained in some tuple of $\cA$ and thus it is also contained in some bag $B(u)$ of $T$. For each $z \in \dom(\cA)$, we fix a node $u_z \in V(T)$ such that $z \in B(u_z)$. Recall that if $\bagtup(u_z)[i] = \bagtup(u_z)[j] \neq \varepsilon$ then we have $(u_z, u_z) \in E^{\cA^D}_{i,j}$. Since $\hat{g}$ is a homomorphism, it also follows that $(\hat{g}(u_z), \hat{g}(u_z)) \in E^{\mathbf{F}(\cB, k)}_{i,j}$ implying that there is $x_z \in \dom(\cB)$ such that for any index $i$ with $\bagtup(u_z)[i] = z$, we have $\hat{g}(u_z)[i] = x_z$. We consider the well-defined mapping $\rho : \dom(\cA) \to \dom(\cB)$ that maps each $z \in \dom(\cA)$ to its corresponding element $x_z \in \dom(\cB)$ obtained from the procedure described above.

\begin{claim}\label{claim:SurjectivityNonPureFrac}
For $u \in V(T)$ with $\bar{a} = \bagtup(u)$ such that $\hat{g}(u) = \bar{d} \in \dom(\mathbf{F}(\cB, k))$, we have $\rho(\bar{a}[i]) = \bar{d}[i]$, whenever $\bar{a}[i] \neq \varepsilon$.    
\end{claim}

\begin{proof}
Let $z \in \bar{a}$ (where $z \neq \varepsilon$). Recall that we have fixed $u_z \in V(T)$ such that $z \in B(u_z)$. By the definition of $D$, it follows that for the path $(u_0, u_1, \dots, u_{q-1}, u_q)$ in $T$ connecting $u_0 = u$ and $u_q = u_z$, it holds that $z \in B(u_j)$, for each $0 \leq j \leq q$. For each $0 \leq j \leq q$, write $\bar{t}_j  = \bagtup(u_j)$ and let $i_j$ denote any index such that $\bar{t}_j[i_j] = z$. Then, it easy to see that for each $0 \leq j < q$, we have by definition that $(u_j, u_{j+1}) \in E^{\cA^D}_{i_j,i_{j+1}}$ which implies that we also have $(\hat{g}(u_j), \hat{g}(u_{j+1})) \in E^{\mathbf{F}(\cB, k)}_{i_j, i_{j+1}}$, since $\hat{g}$ is a homomorphism. Concretely we have,
\begin{enumerate}
\item $\bar{t}_q[i_q] = \bar{t}_0[i_0] = \bar{a}[i_0] = z$;
\item $\hat{g}(u_0)[i_0] = \hat{g}(u_z)[i_q]$;
\item $\hat{g}(u_z)[i_q] = x_z$ due to the fact that whenever $\bagtup(u_z)[i] = \bar{t}_q[i] = z$, we have $\hat{g}(u_z)[i] = x_z$.
\end{enumerate}
Hence, $\bar{d}[i_0] \overset{def}= \hat{g}(u_0)[i_0] \overset{(2)}= \hat{g}(u_z)[i_q] \overset{(3)}= x_z \overset{def}= \rho(\bar{a}[i_0])$ which also shows the claim.
\end{proof}

\begin{claim}\label{claim:isHomFrac}
$\rho \in \Hom(\cA, \cB)$.    
\end{claim}
\begin{proof}
Let $R \in \sigma$ and $\bar{t} \in R^\cA$. Since $D$ is full, there is $u \in V(T)$ such that $R(\bar{t}) \in \lambda(u)$ and $\mathsf{set}(\bar{t}) \subseteq B(u)$. Assume that $R(\bar{t})$ is the $i$-th coloured tuple in $\mathsf{Ord}\lambda(u) = R_1(\bar{a}_1), \dots, R_\ell(\bar{a}_\ell)$, where $1 \leq \ell \leq k$. Note that $\bar{a}_i^* = \bar{a}_i = \bar{t}$ since $\mathsf{set}(\bar{t}) \subseteq B(u)$. Let $\profile(u) = (R_1, \dots, R_i, \dots, R_\ell)$ where $R_i = R$. Since, $\hat{g} \in \Hom(\cA^D, \mathbf{F}(\cB, k))$, we have $\hat{g}(u) \in R_1^{\cB} \times \dots \times R^\cB \times \dots \times R_\ell^\cB$ which in turn by \Cref{claim:SurjectivityNonPureFrac} implies that $\rho(\bar{t}) \in R^\cB$ which completes the proof.
\end{proof}

By the combination of \Cref{claim:SurjectivityNonPureFrac,claim:isHomFrac} we deduce that $\hat{g}$ naturally induces $\{\{\exthom{\rho}{u} : u \in V(T)\}\} \in \mathsf{Ext}^D_\cB(\rho)$ such that $\pi(\{\{\exthom{\rho}{u} : u \in V(T)\}\}) = \hat{g}$, which shows that $\pi$ is also surjective. Hence, $\#\Hom(\cA^{D},\mathbf{F}(\cB,k)) = |\bigcup_{h \in \Hom(\cA, \cB)}\mathsf{Ext}^D_\cB(h)| = \sum_{h \in \Hom(\cA, \cB)}|\mathsf{Ext}^D_\cB(h)|$, where the last equality follows from the fact that the sets $\{\mathsf{Ext}^D_\cB(h)\}_{h \in \Hom(\cA, \cB)}$ are pair-wise disjoint ---as argued earlier--- and non-empty as shown in \Cref{lem:non-empty-extension-set}.
\end{proof}

\subsubsection{$\rcrFrac$-equivalence implies homomorphism indistinguishability over structures of fractional hypertreewidth $k$}

In this section, we prove the direction $(2) \implies (1)$ of \Cref{maintheorem:boundedFHDforStructures}, formally stated in the following lemma.

\begin{lemma}\label{lem:connectedHomIndFrac}
Let $\cA$ be a connected $\sigma$-structure and let $D = (T, B, \lambda)$ be a \emph{full} and \emph{semi-pure} FHD of $\cA$ with width at most $k \in \mathbb{N}$. Let $\cB, \cB'$ be $\sigma$-structures such that $\cB \equiv_{\rcrFrac} \cB'$. Then, $\homs(\cA, \cB) = \homs(\cA, \cB')$.   
\end{lemma}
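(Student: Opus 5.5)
We follow the proof of \Cref{lem:connectedHomInd} essentially verbatim, replacing $\mathbf{B}(\cdot,k)$ by $\mathbf{F}(\cdot,k)$, $\widehat\sigma_k$ by $\widehat\sigma_{\upp(k)}$, and \Cref{lem:HomSumOfExt} by \Cref{lem:HomSumOfExtFrac}.

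First, by \Cref{prop:FractionalkRCR.1WL} and \Cref{prop:F-split-bridge}, the assumption $\cB\equiv_{\rcrFrac}\cB'$ yields $\mathbf F(\cB,k)\equiv_{1\textup{-WL}}\mathbf F(\cB',k)$. Thus stable colours $\chi_q$ and their multiplicities agree on the two fractional encodings. For each $h\in\Hom(\cA,\cB)$ we form the natural $D$-extension $\mathfrak F_h$, which maps $u\mapsto h(\fulltup(u))$. We must check that this lands in $\dom(\mathbf F(\cB,k))$. Semi-purity says the structure induced by $\lambda(u)$ has fractional edge-cover number at most $k$. By \Cref{lem:auxFracCoverNumPreserved}, its image under $h$ also has fractional edge-cover number at most $k$, and \Cref{lem:UpperBoundFractional} bounds its length by $\upp(k)$. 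This is exactly \Cref{lem:non-empty-extension-set}. As before, by \Cref{lem:HomSumOfExtFrac}, natural extensions correspond bijectively to homomorphisms $\cA\to\cB$.

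Next, we root $T$ at $r$ and run the BFS transfer. The root image $\bar t_r$ is sent to the element of $\mathbf F(\cB',k)$ with the same $\prec$-index inside its stable colour class. For a child $u$ of an already processed parent $v$, the image $\bar t_u$ is sent to the element with the same index inside
\[\widetilde M(\bar t_u,\cB')=\{\bar t_w\in\mathbf F(\cB',k): (\stp(\zeta(\bar t_v),\bar t_w),\chi_q(\bar t_w))=(\stp(\bar t_v,\bar t_u),\chi_q(\bar t_u))\}.\]
This set has the same size as $M(\bar t_u,\cB)$ because $\zeta(\bar t_v)$ and $\bar t_v$ share the stable colour. Connectedness of $\cA$ gives $B(v)\cap B(u)\neq\emptyset$, so the relevant $\stp$ is nonempty and the pair lies in the refinement multiset.

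We then reprove the analogues of \Cref{claim:OutHom} and \Cref{claim:natural_extensions}. Unary profiles and self-loop similarity types are fixed by the stable colour, and edge similarity types along tree edges are preserved by construction. Hence $\mathsf{out}_h\in\Hom(\cA^D,\mathbf F(\cB',k))$, and it agrees with $\fulltup$ equalities within nodes and across tree edges. The main point needing care is that $\mathsf{out}_h$ is a \emph{natural} extension of some $h'\in\Hom(\cA,\cB')$. Here the image of each node must itself lie in $\mathbf F(\cB',k)$; this is automatic because we only choose among elements of $\mathbf F(\cB',k)$. The $\varepsilon$-positions are then forced consistently by condition (II), since $\stp(\mathfrak F_h(u))$ records all equalities of $\fulltup(u)$. \Cref{lem:HomSumOfExtFrac} then recovers a unique $h'$ from $\mathsf{out}_h$.

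Finally, injectivity of $h\mapsto h'$ follows by the same first-differing-node BFS argument. In the root case, two distinct tuples cannot share an index within one colour class. In the non-root case, the parents' images coincide, so the $M$-sets coincide and distinct elements have distinct indices. The symmetric construction gives an injection in the other direction, so $\homs(\cA,\cB)=\homs(\cA,\cB')$. The step expected to be the main obstacle is verifying that natural extensions stay inside the fractional encoding, that is, the semi-purity plus cover-preservation step. Everything else transfers mechanically.
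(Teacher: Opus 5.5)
You follow the paper's proof step for step, and everything up to $\mathsf{out}_h\in\Hom(\cA^D,\mathbf F(\cB',k))$, together with conditions (II) and (III), is fine. The gap is the claim that $\mathsf{out}_h$ is a \emph{natural} $D$-extension, and your injectivity argument depends on it. The first-differing-node argument only shows $\mathsf{out}_h\neq\mathsf{out}_{h^*}$. To conclude $h'\neq h^{*\prime}$ you need both outputs to be natural extensions, because two $D$-extensions of the same homomorphism may differ at $\varepsilon$-positions.

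Your justification via (II) does not work. If $\fulltup(u)[i]=x$ with $x\notin B(u)$, then \emph{every} occurrence of $x$ in $\fulltup(u)$ is an $\varepsilon$-position. So (II) never ties position $i$ to a position from which $h'$ is read off. Condition (III) propagates values only along tree edges whose two endpoints both contain $x$ in $\fulltup$. For a non-pure $D$, the set $\{v : x\in\mathsf{set}(\fulltup(v))\}$ need not be connected to $\{v : x\in B(v)\}$.

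Here is a concrete instance.
\begin{itemize}
\item The structure $\cA$ has tuples $R(x,y),S(y,z),S'(z,p)$.
\item The decomposition is the path $w - m - u$ with $B(w)=\{x,y\}$, $\lambda(w)=\{R(x,y)\}$; $B(m)=\{y,z\}$, $\lambda(m)=\{S(y,z)\}$; $B(u)=\{y,z,p\}$, $\lambda(u)=\{R(x,y),S'(z,p)\}$. This is a full, semi-pure FHD of width $2$ of a connected structure.
\item Let $\cB$ be two disjoint copies of $\{R(a_1,b_1),R(a_2,b_1),S(b_1,c_1),S'(c_1,d_1)\}$, the second copy on $a_3,a_4,b_2,c_2,d_2$, with domain order $a_1<a_2<a_3<a_4$.
\item Let $\cB'$ be an isomorphic copy whose order is $a_3<a_1<a_4<a_2$.
\end{itemize}
Take $h$ with $h(x,y,z,p)=(a_2,b_1,c_1,d_1)$. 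The transfer gives $\mathsf{out}_h(w)=(a_1,b_1)$, the second element of the $R$-class in $\cB'$, and $\mathsf{out}_h(m)=(b_1,c_1)$. It gives $\mathsf{out}_h(u)=(a_2,b_1,c_1,d_1)$, the second element of $\widetilde M$. Hence $h'(x)=a_1$, while $\mathsf{out}_h(u)$ carries $a_2$ at the $x$-position, so $\mathsf{out}_h$ is not natural.

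The paper's \Cref{claim:natural_extensionsFrac} argues this step with exactly conditions (II) and (III), so citing it does not close the gap either. Naturality does follow if, for every $x$, the nodes $v$ with $x\in\mathsf{set}(\fulltup(v))$ form a subtree. In that case, however, $(T,\,v\mapsto\bigcup_{R(\bar t)\in\lambda(v)}\mathsf{set}(\bar t),\,\lambda)$ is a pure FHD of width at most $k$ by semi-purity. The conclusion then follows directly from \Cref{lem:FHD.hom.eq}, \Cref{prop:F-split-bridge} and \Cref{thm:ClassicalColorRefinement}, without the BFS transfer. For genuinely non-purifiable semi-pure decompositions an additional idea is needed, for instance an argument that works with projections of encoding elements to bag positions rather than with whole tuples.
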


\begin{proof}
By \Cref{prop:FractionalkRCR.1WL,prop:F-split-bridge},
$\mathbf F(\cB,k)$ and $\mathbf F(\cB',k)$ are $1$-WL-equivalent; below,
$\chi_i$ and $\mathsf{mult}_k$ refer to their $1$-WL colours and colour
multiplicities.
We fix $h \in \Hom(\cA, \cB)$. Let $u \in V(T)$ with $\mathsf{Ord}\lambda(u) = R_1(\bar{a}_1), \dots, R_\ell(\bar{a}_\ell)$, for  $1 \leq \ell \leq \upp(k)$. We write $f_u$ for the mapping that is the restriction of $h$ on $\bigcup_{i=1}^\ell\mathsf{set}(\bar{a}_i)$, that is, $f_u$ maps $\bar{a}_i$ to $h(\bar{a}_i)$ for each $1 \leq i \leq \ell$. As already shown in the proof of \Cref{lem:non-empty-extension-set}, we have that $\{\{f_u : u \in V(T)\}\} \in \mathsf{Ext}^D_\cB(h)$, which we also see as a homomorphism $\mathfrak{F}_h \in \Hom(\cA^D, \mathbf{F}(\cB, k))$ that maps $u \in V(T)$ to $\mathfrak{F}_h(u) \coloneq f_u(\bar{a}_1) + \dots + f_u(\bar{a}_\ell)$ (as argued in the proof of \Cref{lem:HomSumOfExtFrac}). We call $\mathfrak{F}_h$ \emph{the natural $D$-extension} of $h$ w.r.t $\cB$. Note that by definition and since $\cA$ has no isolated elements it follows that for each $h \in \Hom(\cA, \cB)$ there is a unique natural $D$-extension of $h$ (and every natural $D$-extension is associated to a unique homomorphism following by \Cref{lem:HomSumOfExtFrac}). 

Let $q \in \mathbb{N}$ be any number of iterations after which $\rcrFrac$ stabilises on both $\cB, \cB'$. We consider $T$ to be rooted at node $r$. We traverse $T$ in a \emph{breadth-first search} (BFS) fashion and perform the following:

\begin{enumerate}
\item Let $\bar{t}_r = \mathfrak{F}_h(r)$. Let $i_r$ denote the index of $\bar{t}_r$ within the colour-class $\chi(\bar{t}_r) \coloneq \chi^\cB_{q+1}(\bar{t}_r)$\footnote{Here, we use a representative of a colour-class to also denote the whole colour-class.} with respect to the ordering $\prec_\cB$. Recalling that $\mathsf{mult}^\cB_k(c) = \mathsf{mult}^{\cB'}_k(c)$ for every stable colour produced by $\rcrFrac$, we define $\zeta(\bar{t}_r)$ as the $i_r$-th domain-element of $\mathbf{F}(\cB', k)$ within the same colour-class $\chi(\bar{t}_r)$ as before, with respect to the ordering $\prec_{\cB'}$. 

\item Let $u$ be a child of $r$ and let $\bar{t}_u = \mathfrak{F}_h(u)$. Recall that \[\chi^\cB_{q+1}(\bar{t}_r) = (\chi^\cB_q(\bar{t}_r) , \{\{(\mathsf{stp}(\bar{t}_r, \bar{t}_w), \chi^\cB_q(\bar{t}_w)) \mid \bar{t}_w \in \mathbf{F}(\cB, k) \land \mathsf{stp}(\bar{t}_r, \bar{t}_w) \neq \emptyset\}\})\,.\]

Since $\cA$ is connected, we have $B(r) \,\cap\, B(u) \neq \emptyset$,\footnote{Note that $B(r) \,\cap\,B(u)$ is a separator of $\cA$. Hence, if $B(r) \,\cap\, B(u) = \emptyset$, then $\cA$ has an empty separator and hence it is not connected, which is a contradiction.} which in turn implies that $\mathsf{stp}(\bar{t}_r, \bar{t}_u) \neq \emptyset$. Let $i_u$ be the index of $\bar{t}_u$ with respect to $\prec_\cB$ within \[M(\bar{t}_u, \cB) \coloneq
\{\bar{t}_w \in \mathbf{F}(\cB, k) : (\mathsf{stp}(\bar{t}_r, \bar{t}_w), \chi^\cB_q(\bar{t}_w)) = (\mathsf{stp}(\bar{t}_r, \bar{t}_u), \chi^\cB_q(\bar{t}_u))\} \neq \emptyset.\]

Let $\zeta(\bar{t}_u) \in \mathbf{F}(\cB', k)$ be the element corresponding to the $i_u$-th index of \[\widetilde{M}(\bar{t}_u, \cB') \coloneq
\{\bar{t}_w \in \mathbf{F}(\cB', k) : (\mathsf{stp}(\zeta(\bar{t}_r), \bar{t}_w), \chi^{\cB'}_q(\bar{t}_w)) = (\mathsf{stp}(\bar{t}_r, \bar{t}_u), \chi^\cB_q(\bar{t}_u))\}.\]
Note that $|\widetilde{M}(\bar{t}_u), \cB')| = |M(\bar{t}_u, \cB)|$ which follows from $\mathsf{mult}^\cB_k(\chi^\cB_{q+1}(\bar{t}_r)) = \mathsf{mult}^{\cB'}_k(\chi^{\cB'}_{q+1}(\bar{t}_r))$.
We compute $\zeta(\bar{t}_u)$ for each child $u$ of $r$ as described above.

\item We repeat step (2) for every node that we visit in the BFS traversal.

\item After we have visited every node $u \in V(T)$ and have computed $\zeta(\bar{t}_u) = \zeta(\mathfrak{F}_h(u))$, we output the mapping $\mathsf{out}_h : u \mapsto \zeta(\mathfrak{F}_h(u))$.
\end{enumerate}

\begin{claim}\label{claim:OutHomFrac}
$\mathsf{out}_h \in \Hom(\cA^D, \mathbf{F}(\cB', k))$.    
\end{claim}
\begin{proof}
Let $u \in V(T)$ and let $\alpha$ such that $u \in U_\alpha^{\cA^D}$. We have $\bar{t}_u = \mathfrak{F}_h(u) \in U_\alpha^{\mathbf{F}(\cB, k)}$. Since $\bar{t}_u, \mathsf{out}_h(u)$ have been assigned the same stable colour (by construction), it follows from \Cref{def:FrackRCR} that $\mathsf{out}_h(u) \in U_\alpha^{\mathbf{F}(\cB', k)}$. The same reasoning yields that for any $i, j$ such that $(u, u) \in E_{i,j}^{\cA^D}$, then we also have $(\mathsf{out}_h(u), \mathsf{out}_h(u)) \in E_{i, j}^{\mathbf{F}(\cB', k)}$.

Next, consider $\{u, v\} \in E(T)$ and assume that $v$ is the parent of $u$ in (the directed version of) $T$. Since $\mathfrak{F}_h$ is a homomorphism it follows that $\{(i, j) : (v, u) \in E_{i,j}^{\mathbf{F}(\cB, k)}\} \subseteq \mathsf{stp}(\bar{t}_v, \bar{t}_u)$. By construction, we have $\mathsf{stp}(\bar{t}_v, \bar{t}_u) = \mathsf{stp}(\mathsf{out}_h(v), \mathsf{out}_h(u))$ and so $\{(i, j) : (v, u) \in E_{i,j}^{\mathbf{F}(\cB, k)}\} \subseteq \mathsf{stp}(\mathsf{out}_h(v), \mathsf{out}_h(u))$ as well. Finally, note that for any tuples $\bar{a}, \bar{b}, \bar{c}, \bar{d}$, it holds that $\mathsf{stp}(\bar{a}, \bar{b}) = \mathsf{stp}(\bar{c}, \bar{d})$ if and only if $\mathsf{stp}(\bar{b}, \bar{a}) = \mathsf{stp}(\bar{d}, \bar{c})$. Hence, we also deduce that $\{(i, j) : (u, v) \in E_{i,j}^{\mathbf{F}(\cB, k)}\} \subseteq \mathsf{stp}(\bar{t}_u, \bar{t}_v) = \mathsf{stp}(\mathsf{out}_h(u), \mathsf{out}_h(v))$, yielding that $\mathsf{out}_h \in \Hom(\cA^D, \mathbf{F}(\cB', k))$.
\end{proof}

\begin{claim}\label{claim:natural_extensionsFrac}
Let $\mathfrak{F}$ be the natural $D$-extension of $h$ w.r.t $\cB$ and write $\mathfrak{F}'$ for $\mathsf{out}_h$. It holds that $\mathfrak{F}'$ is a natural $D$-extension w.r.t. $\cB'$.   
\end{claim}

\begin{proof}
For $\mathfrak{F}'$ to be a natural $D$-extension w.r.t $\cB'$ the following should hold:
\begin{enumerate}
\item[(I)] There is $g \in \Hom(\cA, \cB')$ such that $\mathfrak{F}'$ is a $D$-extension of $g$ w.r.t $\cB'$;
\item[(II)] For each $u \in V(T)$ and $i, j$ such that $\fulltup(u)[i] = \fulltup(u)[j]$, we have $\mathfrak{F}'(u)[i] = \mathfrak{F}'(u)[j]$;
\item[(III)] For each $\{u, v\} \in E(T)$ and $i, j$ such that $\fulltup(u)[i] = \fulltup(v)[j]$, we have $\mathfrak{F}'(u)[i] = \mathfrak{F}'(v)[j]$.
\end{enumerate}

First, note that condition (I) is met by \Cref{claim:OutHomFrac} since $\mathfrak{F}' \in \Hom(\cA^D, \mathbf{F}(\cB', k))$. Next, we know that $\mathfrak{F}$ is a natural $D$-extension w.r.t. to $\cB$, and so for any $u \in V(T)$ and $i, j$ such that $\fulltup(u)[i] = \fulltup(u)[j]$, we have $(i, j) \in \mathsf{stp}(\mathfrak{F}(u))$. By construction of $\mathfrak{F}'$, it also follows that $(i, j) \in \mathsf{stp}(\mathfrak{F}'(u))$ since $\mathfrak{F}(u)$ and $\mathfrak{F}'(u)$ have the same stable colour produced by $\rcrFrac$. So, condition (II) above is also met. Using similar arguments, it follows that for any $\{u, v\} \in E(T)$ and $i, j$ such that $\fulltup(u)[i] = \fulltup(v)[j]$, we have $(i, j) \in \mathsf{stp}(\mathfrak{F}(u), \mathfrak{F}(v))$. By construction of $\mathfrak{F}'$, we also have that $(i, j) \in \mathsf{stp}(\mathfrak{F}'(u), \mathfrak{F}'(v))$ which in particular follows from step (2) in the construction above. So condition (III) above is also met.
\end{proof}

Hence, there is a unique $h' \in \Hom(\cA, \cB')$ such that $\mathsf{out}_h \equiv \mathfrak{F}_{h'}$, where $\mathfrak{F}_{h'}$ is the natural $D$-extension of $h'$ (w.r.t $\cB'$). Let $\xi : \mathfrak{F}_h \mapsto \mathfrak{F}_{h'}$. Our goal now is to use the mapping $\xi$ in order to construct an injective mapping $\pi$ between $\Hom(\cA, \cB)$ and $\Hom(\cA, \cB')$. Note that the existence of such a mapping will imply that there is also an injective mapping between $\Hom(\cA, \cB')$ and $\Hom(\cA, \cB')$ from which we deduce that $\#\Hom(\cA, \cB) = \#\Hom(\cA, \cB')$ must hold.

Let $\pi : \Hom(\cA, \cB) \to \Hom(\cA, \cB')$ denote the mapping that maps $h$ to $h'$ which is well-defined. We show that $\pi$ is injective. To this end, let $h, h^* \in \Hom(\cA, \cB)$ such that $h \neq h^*$ and also let $\mathfrak{F}_h, \mathfrak{F}_{h^*}$ be the natural $D$-extensions of $h, h^*$ respectively. Recall that we have considered $T$ to be rooted at node $r$. Let $u \in V(T)$ be the first node visited by a BFS traversal on $T$ that satisfies $\mathfrak{F}_h(u) \neq \mathfrak{F}_{h^*}(u)$. We distinguish between the following two cases:
\begin{enumerate}
\item $u = r$ : Let $\bar{t}_r = \mathfrak{F}_h(r)$ and $\bar{t}^*_r = \mathfrak{F}_{h^*}(r)$. Assuming that $\xi(\mathfrak{F}_h)(r) = \xi(\mathfrak{F}_{h^*})(r)$, then by definition we deduce that $\bar{t}_r$ and $\bar{t}^*_r$ are both the $i_r$-th elements of the colour-class they belong to, which is a contradiction.
\item $u \neq r$ : Let $\bar{t}_u = \mathfrak{F}_h(u)$ and $\bar{t}^*_u = \mathfrak{F}_{h^*}(u)$. Let $v$ denote the parent of $u$ in $T$.

Since $u$ is the first node in the BFS order at which the two natural extensions differ, they agree on every node visited before $u$. In particular, $\mathfrak{F}_h(v)=\mathfrak{F}_{h^*}(v)$, and, if $v\neq r$, they also agree on the parent of $v$. Hence the recursive construction of $\xi$ makes the same choices up to $v$, and so $\xi(\mathfrak{F}_h)(v)=\xi(\mathfrak{F}_{h^*})(v)$.
If we further assume that $\xi(\mathfrak{F}_h)(u) = \xi(\mathfrak{F}_{h^*})(u)$ then we get 
$\widetilde{M}(\bar{t}_u, \cB') = \widetilde{M}(\bar{t}_u^*, \cB')$ and in particular that 
$\bar{t}_u, \bar{t}^*_u$ are the $i_u$-th elements of $M(\bar{t}_u, \cB)$ and $M(\bar{t}^*_u, \cB)$ respectively. However, it is also easy to verify that $M(\bar{t}_u, \cB) = M(\bar{t}^*_u, \cB)$ which leads to a contradiction.
\end{enumerate}
By \Cref{claim:natural_extensionsFrac}, we have that $\xi(\mathfrak{F}_h), \xi(\mathfrak{F}_{h^*})$ are natural $D$-extensions w.r.t. $\cB, \cB'$ respectively. By definition, we also have that for every homomorphism from $\cA$ to $\cB$ (resp. $\cB$') there is a unique natural $D$-extension w.r.t $\cB$ (resp. $\cB'$). The previous two arguments combined imply that $\xi$ is injective and hence so is $\pi$.

Similarly we define an injection from $\Hom(\cA, \cB')$ to $\Hom(\cA, \cB)$, which implies that $\homs(\cA, \cB) = \homs(\cA, \cB')$ completing the proof.
\end{proof}

\subsubsection{Homomorphism indistinguishability over structures of fractional hypertreewidth $k$ implies $\rcrFrac$-equivalence}

As in the case of \Cref{lem:reverseDirectionGHD}, the proof works in a similar fashion as the proof of \cite[Lemma 4.6]{scheidt2026colorrefinementrelationalstructures}. However there are some key technical differences, and for this reason we provide a self-contained proof as well. 

Before we proceed with the proof, we state an auxiliary lemma on pure FHDs (the proof of which is deferred to \Cref{lem:panosproof.FHD.hom.eq}).

\begin{lemma}[Analogue of Lemma 4.5 of \cite{scheidt2026colorrefinementrelationalstructures}]\label{lem:FHD.hom.eq}
Let $\cA$ be a $\sigma$-structure and let $D = (T, B, \lambda)$ be a \emph{pure}  FHD of $\cA$ with width $k$.
Then for every $\sigma$-structure $\cB$,
\[
\homs(\cA^{D},\fenc(\cB,k))=
\homs(\cA^{D},\mathbf{F}(\cB,k)) = \homs(\cA,\cB)\,.
\]
\end{lemma}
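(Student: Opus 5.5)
The plan is to prove the two equalities separately, following the pattern used for \Cref{lem:ghd.hom.eq}. For the first equality, the map $g\mapsto\bigl(u\mapsto(\profile(u),g(u))\bigr)$ is a bijection from $\Hom(\cA^D,\mathbf F(\cB,k))$ to $\Hom(\cA^D,\fenc(\cB,k))$, with inverse given by projecting $(\alpha,\bar b)\mapsto\bar b$. The argument has two parts.
\begin{itemize}
\item For well-definedness, a homomorphism $g$ into $\mathbf F(\cB,k)$ sends $u$ into $U_{\profile(u)}^{\mathbf F(\cB,k)}$. By \Cref{def:FracEncoding} this means $(\profile(u),g(u))$ is $k$-fractional, hence an element of $\fenc(\cB,k)$ lying in $U_{\profile(u)}$.
\item The $E_{i,j}$ relations depend only on the flattened tuples, so they are preserved in both directions.
\end{itemize}
Injectivity and surjectivity are immediate, since in $\fenc$ every element lies in exactly one $U_\alpha$.

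The main content is the second equality, $\homs(\cA^D,\mathbf F(\cB,k))=\homs(\cA,\cB)$. Because $D$ is pure, $B(u)=\bigcup_{R(\bar t)\in\lambda(u)}\mathsf{set}(\bar t)$, so $\bagtup(u)=\fulltup(u)$ contains no $\varepsilon$-entries. Purity also makes $D$ semi-pure, since $B(u)$ itself has a fractional edge cover of size at most $k$. I would first reduce to a full decomposition via \Cref{remark:GHDforRelationalStructures}, which preserves purity. Then \Cref{lem:HomSumOfExtFrac} applies and gives
\[\homs(\cA^D,\mathbf F(\cB,k))=\sum_{h\in\Hom(\cA,\cB)}|\mathsf{Ext}^D_\cB(h)|.\]
It then suffices to show $|\mathsf{Ext}^D_\cB(h)|=1$ for every $h$. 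By \Cref{def:FractionalExtensions}, an extension $\exthom{h}{u}$ must agree with $h$ on all non-$\varepsilon$ entries of $\bagtup(u)$. Since there are no $\varepsilon$-entries, it is forced to equal $h(\fulltup(u))$, so there is at most one extension. At least one exists by \Cref{lem:non-empty-extension-set}, which uses \Cref{lem:auxFracCoverNumPreserved} to ensure that $h(\fulltup(u))$ is $k$-fractional.

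The main obstacle is the reduction to full decompositions. Adding leaf nodes changes $\cA^D$ and hence a priori changes $\homs(\cA^D,\cdot)$, so the count for $D$ must be related to the count for the full $D'$. I would handle this by noting that each added leaf $v'$ has $B(v')=\mathsf{set}(\bar t)\subseteq B(v)$. Every homomorphism on $\cA^{D}$ then extends uniquely to $v'$: its image is forced by the $E_{i,j}$-edges to $v$, together with the loop edges at $v'$ and the fact that all entries of $v'$ lie in $B(v)$. The required membership in $U_{(R)}$ follows from the resulting homomorphism $\cA\to\cB$.

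A cleaner alternative avoids fullness altogether by redoing the argument of \Cref{lem:HomSumOfExtFrac} directly. Given $\hat g\in\Hom(\cA^D,\mathbf F(\cB,k))$, define $\rho$ as in that proof. Every coloured tuple $R(\bar t)$ of $\cA$ is covered by some bag, and purity means every bag is a union of $\lambda$-tuples. From this I would argue that $\rho$ is a homomorphism by applying \Cref{claim:SurjectivityNonPureFrac} at a node whose $\lambda$ contains $R(\bar t)$. I expect this point, guaranteeing that some $\lambda(u)$ contains each tuple, to be exactly where fullness is needed, so I would commit to the leaf-extension argument above.

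The remaining step is to check that $\hat g$ equals the natural extension of $\rho$. This holds because $\bagtup$ has no $\varepsilon$-entries, so $\hat g(u)=\rho(\fulltup(u))$ by \Cref{claim:SurjectivityNonPureFrac}. This completes the bijection $\Hom(\cA,\cB)\to\Hom(\cA^D,\mathbf F(\cB,k))$.
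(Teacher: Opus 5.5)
Your first equality is fine, and so is your argument for the second equality once $D$ is full. The gap is the reduction from pure to full decompositions, which is the step you commit to. When you extend $g\in\Hom(\cA^D,\mathbf F(\cB,k))$ to an added leaf $v'$ with $\lambda(v')=\{R(\bar t)\}$, the image of $v'$ is indeed forced to be $\rho(\bar t)$. You still need $\rho(\bar t)\in R^\cB$, and you justify this by ``the resulting homomorphism $\cA\to\cB$''. That is circular: the map $\rho$ read off from $g$ is constrained only on tuples that occur in some $\lambda(u)$, so it is a homomorphism exactly when fullness holds.

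No argument can close this, because the statement is false for pure but non-full decompositions. Take $\sigma=\{R\}$ with $R$ binary, and let $\cA$ have domain $\{x,y\}$ and $R^\cA=\{(x,y),(y,x)\}$. Let $D$ be a single node $u$ with $B(u)=\{x,y\}$ and $\lambda(u)=\{R(x,y)\}$. This is a pure FHD of width $1$ that is not full. Since $x\neq y$, a homomorphism $\cA^D\to\mathbf F(\cB,1)$ is just a choice of a tuple in $R^\cB$. So for $\cB$ with domain $\{1,2\}$ and $R^\cB=\{(1,2)\}$, we get $\homs(\cA^D,\mathbf F(\cB,1))=1$ but $\homs(\cA,\cB)=0$. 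Adding the leaf for $R(y,x)$ kills this homomorphism, so $\homs(\cA^D,\cdot)$ and $\homs(\cA^{D'},\cdot)$ genuinely differ.

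The right fix is to assume fullness. The paper's own proof does so tacitly: its last claim begins ``Since $D$ is full''. The lemma is also only used in \Cref{lem:reverseFracHD}, where $D$ is full and pure by construction. Your ``cleaner alternative'' paragraph correctly identifies this as the place where fullness is needed.

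Under full and pure, your route differs from the paper's. The paper builds the bijection $h\mapsto(u\mapsto h(\bagtup(u)))$ directly and reproves injectivity and surjectivity with the path argument. You instead invoke \Cref{lem:HomSumOfExtFrac}, using that pure implies semi-pure. You then note that purity removes all $\varepsilon$-entries, so $|\mathsf{Ext}^D_\cB(h)|=1$, with existence from \Cref{lem:non-empty-extension-set}. Your version is shorter given the existing machinery; the paper's is self-contained.
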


The map $g\mapsto(u\mapsto(\profile(u),g(u)))$ is a bijection from
$\Hom(\cA^D,\mathbf F(\cB,k))$ to $\Hom(\cA^D,\fenc(\cB,k))$, with inverse
$(\alpha,\bar b)\mapsto\bar b$. This proves the first equality.

Furthermore, the following claim will be useful. 

\begin{claim}\label{claim:stpFrac}
Let $\cA$ be a $\sigma$-structure and let $\bar{a} = (R_1(\bar{t}_1), \dots, R_\ell(\bar{t}_\ell))$ where for each $i \in [\ell]$, $R_i(\bar{t}_i)$ is a coloured tuple in $\cA$. Similarly, let also $\bar{a}' = (R_1(\bar{t}_1'), \dots, R_\ell(\bar{t}_\ell'))$. Letting $\bar{a}^+ \coloneq R_1(\bar{t}_1) + \dots + R_\ell(\bar{t}_\ell)$, it holds that if $\mathsf{stp}(\bar{a}^+) = \mathsf{stp}((\bar{a}')^+)$, then $\cA[\bar{a}]$ and $\cA[\bar{a}']$ have the same fractional edge-cover number.   
\end{claim}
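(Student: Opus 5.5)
The idea is that equal similarity types force an isomorphism between the two induced substructures, and the fractional edge-cover number is an isomorphism invariant. So the plan is to build an explicit isomorphism $\varphi : \cA[\bar a] \to \cA[\bar a']$ from the equality $\mathsf{stp}(\bar a^+) = \mathsf{stp}((\bar a')^+)$, and then transfer any fractional edge cover along $\varphi$.

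First I define $\varphi$. Write $\bar a^+ = \bar t_1 + \dots + \bar t_\ell$ and $(\bar a')^+ = \bar t'_1 + \dots + \bar t'_\ell$. Both have the same length $L$, since the $i$-th blocks share the arity of $R_i$. Set $\varphi(\bar a^+[p]) := (\bar a')^+[p]$ for $p \in [L]$.

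This is well defined. If $\bar a^+[p] = \bar a^+[q]$, then $(p,q) \in \mathsf{stp}(\bar a^+) = \mathsf{stp}((\bar a')^+)$, so $(\bar a')^+[p] = (\bar a')^+[q]$. The reverse implication gives injectivity. Surjectivity onto the domain of $\cA[\bar a']$ holds because that domain is exactly $\mathsf{set}((\bar a')^+)$; there are no isolated elements by the convention for substructures induced by coloured tuples. Hence $\varphi$ is a bijection between the domains.

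Next I check that $\varphi$ respects the coloured tuples. By construction, $\varphi$ maps $\bar t_i$ entrywise to $\bar t'_i$, and both carry the same symbol $R_i$. The coloured tuples of $\cA[\bar a]$ are exactly the distinct elements among $R_1(\bar t_1), \dots, R_\ell(\bar t_\ell)$, and likewise on the primed side. So the assignment $R_i(\bar t_i) \mapsto R_i(\bar t'_i)$ is a map $\Phi : \mathsf{CT}(\cA[\bar a]) \to \mathsf{CT}(\cA[\bar a'])$, provided it is well defined and bijective.

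The step needing care is the case where repeated coloured tuples occur, i.e. $R_i(\bar t_i) = R_j(\bar t_j)$ for some $i \neq j$. I handle it through $\mathsf{stp}$ as follows.
\begin{itemize}
\item Suppose $R_i = R_j$ and $\bar t_i = \bar t_j$. Then the positionwise equalities between blocks $i$ and $j$ lie in $\mathsf{stp}(\bar a^+)$. Hence they also lie in $\mathsf{stp}((\bar a')^+)$, which gives $\bar t'_i = \bar t'_j$.
\item The converse direction is symmetric.
\end{itemize}
So $\Phi$ is well defined and bijective, and it commutes with $\varphi$ in the sense that $\varphi(\mathsf{set}(\bar t_i)) = \mathsf{set}(\bar t'_i)$. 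Thus $\varphi$ is an isomorphism of the induced substructures.

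Finally I transfer covers. Let $\rho$ be a fractional edge cover of the domain of $\cA[\bar a]$, and set $\rho' := \rho \circ \Phi^{-1}$. For $y = \varphi(x)$ we have
\[
\sum_{R(\bar s) \ni y} \rho'(R(\bar s)) = \sum_{R(\bar t) \ni x} \rho(R(\bar t)) \geq 1,
\]
and $\rho'$ has the same size as $\rho$. So the minimum for $\cA[\bar a']$ is at most that for $\cA[\bar a]$. Applying the symmetric argument with $\varphi^{-1}$ and $\Phi^{-1}$ gives the reverse inequality, so the two fractional edge-cover numbers are equal.
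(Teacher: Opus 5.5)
Your proof is correct and takes the same route as the paper. Equal similarity types give a bijection between the element sets (equivalently, between equivalence classes of positions) that sends each $R_i(\bar t_i)$ to $R_i(\bar t'_i)$; this is an isomorphism of the induced substructures, so the fractional edge-cover LPs coincide up to renaming. Your version is more explicit than the paper's one-line argument, in particular in checking that repeated coloured tuples among the $R_i(\bar t_i)$ are matched correctly, so that $\Phi$ is a well-defined bijection.
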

\begin{proof}
The equality $\mathsf{stp}(\bar{a}^+)=\mathsf{stp}((\bar{a}')^+)$ identifies exactly the same pairs of positions in the two flattened tuples. Since the relation-symbol sequence $(R_1,\ldots,R_\ell)$ is the same, mapping each equivalence class of positions in $\bar{a}^+$ to the corresponding equivalence class in $(\bar{a}')^+$ gives an isomorphism between the two incidence hypergraphs. Thus the fractional edge-cover LPs for $\cA[\bar{a}]$ and $\cA[\bar{a}']$ are identical up to renaming variables and constraints. 
\end{proof}

\begin{lemma}\label{lem:reverseFracHD}
Let $\cB, \cB'$ be two $\sigma$-structures such that $\cB \not\equiv_{\rcrFrac}\cB'$. There exists a $\sigma$-structure $\cA$ and a \emph{full} and \emph{pure} FHD $D$ for $\cA$ with width at most $k$ such that $\#\Hom(\cA^D,\fenc(\cB,k))\neq\#\Hom(\cA^D,\fenc(\cB',k))$. Consequently, $\#\Hom(\cA,\cB)\neq\#\Hom(\cA,\cB')$.
\end{lemma}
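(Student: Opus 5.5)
The plan is to mirror the proof of \Cref{lem:reverseDirectionGHD}, replacing $\enc(\cdot,k)$ by $\fenc(\cdot,k)$ and $k$ by $\upp(k)$ where profiles are concerned. First, from $\cB\not\equiv_{\rcrFrac}\cB'$ and \Cref{prop:FractionalkRCR.1WL} we obtain $\fenc(\cB,k)\not\equiv_{1\textup{-WL}}\fenc(\cB',k)$. Then \Cref{thm:ClassicalColorRefinement} yields a $\widehat{\sigma}_{\upp(k)}$-structure $\mathcal{T}$ whose Gaifman graph is a tree $T$ and which satisfies $\homs(\mathcal{T},\fenc(\cB,k))\neq\homs(\mathcal{T},\fenc(\cB',k))$. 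As in the GHD case, we pass to prints: for each homomorphism $h$ from $\mathcal{T}$ into either target, define $\mathsf{Print}(h)$ by pulling back the unary profile and the $E_{i,j}$-patterns along tree edges and loops. The counting argument of \cite[Lemma 4.6]{scheidt2026colorrefinementrelationalstructures} works for any binary signature, so it gives a print $Q=\mathsf{Print}(h_Q)$ with $\homs(Q,\fenc(\cB,k))\neq\homs(Q,\fenc(\cB',k))$. Each node $v$ of $Q$ has a unique profile $\alpha_v$, of length at most $\upp(k)$, and $h_Q(v)$ is a $k$-fractional element of $\fenc(\cB,k)$ or of $\fenc(\cB',k)$.

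Next, we build $\cA$ and $D=(T,B,\lambda)$ using exactly the BFS construction of \Cref{lem:reverseDirectionGHD}. For each node $v$ we take fresh tuples $\bar t(v;1),\dots,\bar t(v;\ell_v)$ over a countable set disjoint from the domains of $\cB$ and $\cB'$, chosen so that their equality type is $\mathsf{stp}(h_Q(v),h_Q(v))$. We then identify entries with the parent according to $\mathsf{stp}(h_Q(p_v),h_Q(v))$. The invariance argument of \Cref{claim:InvariantSTP} carries over verbatim, so $\mathsf{stp}(\bar t(u),\bar t(v))=\mathsf{stp}(h_Q(u),h_Q(v))$ for every tree edge and every loop. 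We define $\cA$ from the coloured tuples $\alpha_v[i](\bar t(v;i))$, and set $B(v)=\mathsf{set}(\bar t(v))$ and $\lambda(v)$ to consist of these coloured tuples. With these choices $D$ is full and pure, and connectivity follows as before. The isomorphism $\cA^D\cong Q$ also follows as before: profiles agree, and the $E_{i,j}$ relations are exactly given by the $\mathsf{stp}$'s. Note that $\bagtup(v)=\fulltup(v)=\bar t(v)$ because $D$ is pure.

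The new ingredient, and the main point to check, is that $D$ is an FHD of width at most $k$. For this we need, for every node $v$, a fractional edge cover of $B(v)$ by $\lambda(v)$ of size at most $k$. We invoke \Cref{claim:stpFrac}: the coloured-tuple sequence $(R_i(\bar t(v;i)))_i$ in $\cA$ and the sequence of coloured tuples underlying $h_Q(v)$ in $\cB$ (or $\cB'$) have the same relation symbols and the same $\mathsf{stp}$. Hence the induced substructures have isomorphic incidence hypergraphs, and so they have equal fractional edge-cover numbers. The latter is at most $k$ because $h_Q(v)$ is $k$-fractional. Since $B(v)$ is exactly the union of the tuples in $\lambda(v)$, such a cover is a valid $\rho_v$. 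The same observation shows that pureness, and hence semi-pureness, holds.

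Finally, $\cA^D\cong Q$ gives $\homs(\cA^D,\fenc(\cB,k))\neq\homs(\cA^D,\fenc(\cB',k))$. Applying \Cref{lem:FHD.hom.eq}, which needs exactly a pure FHD of width $k$, converts this into $\homs(\cA,\cB)\neq\homs(\cA,\cB')$. One subtlety deserves care: $\cA[\lambda(v)]$ must be taken as a substructure of $\cA$ built from $\lambda(v)$ only. Extra tuples of $\cA$ coinciding with these could only help the cover and do not change $\lambda(v)$, so this causes no problem.
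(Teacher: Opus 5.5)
Your proposal is correct and follows the paper's proof essentially step for step. The chain is the same: 1-WL distinguishability of $\fenc(\cB,k)$ and $\fenc(\cB',k)$, then a distinguishing print $Q$ via the Scheidt--Schweikardt argument, then the same BFS construction of $\cA$ and of the full, pure decomposition $D$ with the $\mathsf{stp}$-invariance claim, and finally \Cref{lem:FHD.hom.eq}. The one new ingredient compared with the GHD case is certifying width at most $k$ by transferring the fractional edge-cover bound from $h_Q(v)$ to $\lambda(v)$ via \Cref{claim:stpFrac}, and the paper handles it in exactly the same way.
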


\begin{proof}
Recall that $\fenc(\cB,k), \fenc(\cB',k)$ are $\hat{\sigma}_{\upp(k)}$-structures (see \Cref{def:FracEncoding}). By our assumption and \Cref{prop:FractionalkRCR.1WL} it follows that $\fenc(\cB,k) \not\equiv_{1\textup{-WL}} \fenc(\cB',k)$, which due to \Cref{thm:ClassicalColorRefinement} in turn implies that there is a $\hat{\sigma}_{\upp(k)}$-structure $\mathcal{T}$, the Gaifman graph of which is a tree ---which we denote by $T$--- such that $\#\Hom(\mathcal{T}, \fenc(\cB,k)) \neq \#\Hom(\mathcal{T}, \fenc(\cB',k))$. Since every element of either target belongs to exactly one unary relation, every element of each print defined below has a unique unary profile.

Similarly to the proof of \cite[Lemma 4.6]{scheidt2026colorrefinementrelationalstructures}, we define for each $h \in \Hom(\mathcal{T}, \fenc(\cB,k))$ the \emph{print} $\mathsf{Print}(h)$ of $h$ which is a $\hat{\sigma}_{\upp(k)}$-structure with the same domain as $\mathcal{T}$ and relations which are given as follows:

\begin{enumerate}
\item[(i)] For each $v \in \dom(\mathcal{T})$ and $\alpha \in \sigma^{(\leq\upp(k))}$, we have $v \in (U_\alpha)^{\mathsf{Print}(h)}$ if and only if $h(v) \in (U_\alpha)^{\fenc(\cB,k)}$;
\item[(ii)] For each $(u, v) \in E(T) \,\bigcup \,\{(w,w) \mid w \in V(T)\}, i \in [|\mathsf{flat}(h(u))|], j \in [|\mathsf{flat}(h(v))|]$, we have $(u, v) \in (E_{i,j})^{\mathsf{Print}(h)}$ if and only if $(h(u), h(v)) \in (E_{i,j})^{\fenc(\cB,k)}$.
\end{enumerate}

We also define the print of each $h \in \Hom(\mathcal{T}, \fenc(\cB',k))$ in the same fashion. Note that the Gaifman graph of every print is $T$.

Since $\hat{\sigma}_{\upp(k)}$ is a binary signature, it follows from the proof of \cite[Lemma 4.6]{scheidt2026colorrefinementrelationalstructures} that there exists a print $Q \coloneq \mathsf{Print}(h_Q)$ where $h_Q \in \Hom(\mathcal{T}, \fenc(\cB,k)) \bigcup \Hom(\mathcal{T}, \fenc(\cB',k))$ such that $\#\Hom(Q, \fenc(\cB,k)) \neq \#\Hom(Q, \fenc(\cB',k))$.

We now proceed with the construction of the claimed $\sigma$-structure $\cA$ and decomposition $D = (T, B, \lambda)$ (where $T$ is the Gaifman graph of $\mathcal{T}$).
In particular, we show that the $\hat{\sigma}_{\upp(k)}$-structures $\cA^D$ and $Q$ are isomorphic, which will in turn imply that $\#\Hom(\cA^D, \fenc(\cB,k)) \neq \#\Hom(\cA^D, \fenc(\cB',k))$. We also show that $D$ is by construction a pure FHD and hence due to \Cref{lem:FHD.hom.eq} we derive the last claim of our statement, that is, $\#\Hom(\cA, \cB) \neq \#\Hom(\cA, \cB')$.

To this end, let $\mathsf{\Omega}$ be an countably infinite set such that $\mathsf{\Omega} \,\cap\, (\dom(\cB) \bigcup \dom(\cB')) = \emptyset$. We designate a node $r$ as the root of $T$ and see the rest of the nodes as being directed away from the root. We traverse $T$ in a \textit{breadth-first search} (BFS) fashion and perform the following for each node $v$ that we visit:

\begin{enumerate}
    \item [I.] We write $\alpha_v = (R_1, \dots, R_{\ell_v}) \in \sigma^{(\leq\upp(k))}$ for the unique profile for which $v \in (U_{\alpha_v})^Q$ holds, where $\ell_v \in [\upp(k)]$.\footnote{Uniqueness follows from the definition of $\fenc$ in \Cref{def:FracEncoding}.} We introduce $\ell_v$ tuples $\bar{t}(v;1), \dots, \bar{t}(v;\ell_v)$ where $\bar{t}(v;i) \in \mathsf{\Omega}^{\mathsf{ar}(R_i)},$ for each $i \in [\ell_v]$. Letting $\bar{t}(v) \coloneq \bar{t}(v;1) + \dots + \bar{t}(v;\ell_v)$, we enforce $\mathsf{stp}(\bar{t}(v), \bar{t}(v)) = \{(i, j) \mid (v, v) \in (E_{i, j})^Q\}\}$ which is always possible since $\{(i, j) \mid (v, v) \in (E_{i, j})^Q\}\} = \mathsf{stp}(h_Q(v), h_Q(v))$ and we further ensure that no entry of $\bar{t}(v)$ appears as an entry in a tuple corresponding to any node which we have already visited. We write $\mathsf{stp}_{u, v}$ for $\mathsf{stp}(\bar t_u, \bar t_v)$, where $\bar t_u, \bar t_v$ as given in this step.
    \item[II.] Let $p_v$ denote the parent of $v$ in $T$ (assuming that $v \neq r$). Then, for each $(i, j) \in \mathsf{stp}(h_Q(p_v), h_Q(v))$ we replace $\bar{t}(v)[j]$ with $\bar{t}(p_v)[i]$ and also replace with $\bar{t}(p_v)[i]$ the content of every other entry $m$ of $\bar{t}(v)$ such that $(v, v) \in (E_{j, m})^Q$.
\end{enumerate}

\begin{claim}\label{claim:InvariantSTPFrac}
For each $\{u, v\} \in E(T) \,\bigcup \, \{\{w, w\} : w \in V(T)\}$, we have $\mathsf{stp}(\bar{t}(u), \bar{t}(v)) = \mathsf{stp}(h_Q(u), h_Q(v))$.
\end{claim}
\begin{proof}
The proof follows verbatim the proof of \Cref{claim:InvariantSTP}.
\end{proof}

Next, we define our claimed $\sigma$-structure $\cA$ as follows: 
\begin{enumerate}
\item[(a.)] $\dom(\cA) = \bigcup_{v \in V(T)}\mathsf{set}(\bar{t}(v))$;
\item[(b.)] for each $R \in \sigma$, $R^\cA = \{\bar{a} \mid \text{there are $v \in V(T), i \in [\ell_v]$ such that $\alpha_v[i] = R$ and $\bar{t}(v;i) = \bar{a}$}\}$.
\end{enumerate}

We also consider the decomposition $D = (T, B, \lambda)$ where for each $v \in V(T)$ we define $B(v) \coloneq \mathsf{set}(\bar{t}(v))$ and $\lambda(v) \coloneq (R_i(\bar{t}(v;i)))_{i \in [\ell_v]}$, where $R_i = \alpha_v[i]$. We have the following: 
\begin{enumerate}
    \item By construction $D$ is full and pure. 
    \item By definition, the substructure of $\cB$ induced by the coloured tuples that $h_Q(v)$ is comprised of has fractional edge cover at most $k$. Furthermore, by \Cref{claim:InvariantSTPFrac}, we have $\mathsf{stp}(\fulltup(v), \fulltup(v)) = \mathsf{stp}(\bar{t}_v, \bar{t}_v) = \mathsf{stp}(h_Q(v), h_Q(v))$ and by definition we have that $v$ and $h_Q(v)$ have the same profile. The above combined with \Cref{claim:stpFrac} yield that $\cA[\lambda(v)]$ has fractional edge cover number at most $k$.
    \item The subtree of $T$ induced by $\{v \in V(T) \mid x \in B(v)\}$ is connected for each $x \in \dom(\cA)$. To see this, let $x \in \dom(\cA)$ and $u, v \in V(T)$ such that $\bar{t}(u)[i] = \bar{t}(v)[j] = x$, for some $i, j$. By construction, for any node $w$ and its parent $p_w$ we have that the entries of $\bar{t}(w)$ that appear outside of $T_w$, where $T_w$ is the subtree of $T$ rooted at $w$, must also appear in $B(p_w)$. Hence, $u$ and $v$ must have a least common ancestor $w$ such that $x \in B(w)$ and thus $x$ must also be contained in the bag of every node in the path from $w$ to $u$ as well as in the path from $w$ to $v$.
\end{enumerate}

Hence, $D$ is a valid full (and pure) FHD of $\cA$ with width at most $k$. Finally, we show that $\cA^D$ is indeed isomorphic to $Q$. For this, we recall our previous observation according to which by the construction of $\bar{t}(v), v \in V(T)$ it follows that for each $\{u, v\} \in E(T) \,\bigcup \, \{\{w, w\} : w \in V(T)\}$, we have $\mathsf{stp}(\bar{t}(u), \bar{t}(v)) = \mathsf{stp}(h_Q(u), h_Q(v)) = \{(i, j) \mid (u, v) \in (E_{i, j})^Q\}$. Furthermore, it follows by definition that for each $v \in V(T)$ we have $v \in (U_{\alpha_v})^Q$ and $\profile(v) = \alpha_v$ which implies that $v \in (U_{\alpha_v})^{\cA^D}$. Hence we deduce that $\cA^D$ is isomorphic to $Q$, which completes the proof.

\end{proof}

\subsubsection{Putting the pieces together}

We may now prove \Cref{maintheorem:boundedFHDforStructures} as follows.

\begin{proof}[Proof of \Cref{maintheorem:boundedFHDforStructures}]
Direction (1) $\implies $ (2) follows from \Cref{lem:reverseFracHD}. In particular, \Cref{lem:reverseFracHD} does not state that $\cA$ must be connected, however it can be readily verified that since $\#\Hom(\cA, \cB) \neq \#\Hom(\cA, \cB')$, there must already exists a (maximal) connected substructure of $\cA$ with different number of homomorphisms to $\cB$ and $\cB'$ respectively.

Finally, we derive direction $(2) \implies (1)$ by contraposition. To this end, assume that $\cB \equiv_\rcrFrac \cB'$. Then, by \Cref{lem:connectedHomIndFrac} it follows that for any connected $\sigma$-structure $\cA$ that admits a semi-pure FHD of width $k$, it holds $\homs(\cA, \cB) = \homs(\cA, \cB')$, which completes the proof. 
\end{proof}

\section{HyperOWL: An Oblivious WL-algorithm on Structures and Hypergraphs}
Existing approaches for lifting the WL-algorithm or colour refinement from graphs to hypergraphs and relational structures are either generally restricted to rank $2$~\cite{Barceloetal22}, or first transform the input into a graph (or a graph-like structure of rank $2$) and afterwards run the standard WL-algorithm~\cite{scheidt_et_al:LIPIcs.MFCS.2025.88,Boker19}. The latter type also includes $k$-RCR.

In the second part of this work, we therefore introduce and explore $k$-``HyperOWL'', a $k$-dimensional WL-algorithm that operates directly on relational structures and hypergraphs of unbounded rank, without the need of preprocessing the input into a graph first. Moreover, we will show that $k$-HyperOWL is at least as expressive as $k$-RCR while having the same worst-case running time. We present the algorithm, its running time analysis and its expressive power for the case of relational structures, but we highlight that it can easily be adapted for hypergraphs (which is the much easier case as we do not have to take into account multiple relation symbols and orderings of tuples). 

\newcommand{\tset}{\mathsf{set}}

For the remainder of this section, we fix a positive integer $k\geq 1$ and a signature $\sigma$. Moreover, to avoid notational clutter we assume that a $\sigma$-structure $\mathcal{A}$ contains at least one tuple in the relation $R^\mathcal{A}$ of maximum arity; otherwise we consider $\mathcal{A}$ a $\sigma \setminus \{R\}$-structure. This allows us to avoid distinguishing between the arity of the signature and the rank of the structure.

Recall that, given $\bar{x}$, we write $\tset(\bar{x})$ for the set of elements of $\bar{x}$. We introduce the following notation on finite tuples: given two $\ell$-tuples $\bar{w}$ and $\bar{x}$, we say that $\bar{w}$ is \emph{consistent} with $\bar{x}$ if $\bar{x}_i=\bar{x}_j$ implies $\bar{w}_i=\bar{w}_j$ for all $i,j \in [\ell]$. 

\begin{observation}\label{obs:consistent_mapping}
    Let $\bar{w}$ and $\bar{x}$ be $\ell$-tuples such that $\bar{w}$ is consistent with $\bar{x}$. Then the mapping $x_i\mapsto w_i$, for all $i\in[\ell]$, is a well-defined function from $\mathsf{set}(\bar{x})$ to $\mathsf{set}(\bar{w})$. \qed
\end{observation}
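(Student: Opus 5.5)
We have to show that if $\bar w$ is consistent with $\bar x$, then the assignment $x_i \mapsto w_i$ for $i\in[\ell]$ defines a function $f\colon \tset(\bar x)\to\tset(\bar w)$. We argue directly from the definitions, in three steps.

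\emph{Defining $f$.} For every $y\in\tset(\bar x)$, fix an index $i\in[\ell]$ with $x_i=y$; such an index exists by the definition of $\tset(\bar x)$. Set $f(y):=w_i$.

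\emph{Independence of the chosen index.} This is the only real point of the argument. Suppose $j\in[\ell]$ is another index with $x_j=y$. Then $x_i=x_j$. Consistency of $\bar w$ with $\bar x$ gives $w_i=w_j$. So $f(y)$ does not depend on which occurrence of $y$ in $\bar x$ was chosen. In particular $f(x_i)=w_i$ holds for every $i\in[\ell]$, which is exactly the stated mapping.

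\emph{Codomain and surjectivity.} Every value $f(y)$ equals some entry $w_i$, so it lies in $\tset(\bar w)$. Conversely, each $w_i$ equals $f(x_i)$, so $f$ is onto $\tset(\bar w)$.

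Injectivity is not claimed, and it fails in general: consistency only transfers equalities from $\bar x$ to $\bar w$, not distinctness. For instance $\bar x=(a,b)$ and $\bar w=(c,c)$ are consistent, but $f$ sends both $a$ and $b$ to $c$. No step here is a real obstacle; the argument is routine.
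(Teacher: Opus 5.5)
Your proof is correct and is the same routine argument the paper has in mind: the paper gives no proof for this observation (it closes the statement with \qed), because well-definedness follows directly from the definition of consistency, exactly as in your independence-of-index step. Your remarks on surjectivity onto $\mathsf{set}(\bar{w})$ and the failure of injectivity are accurate additions beyond what the statement claims.
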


Following the previous observation, given $\ell$-tuples $\bar{w}$ and $\bar{x}$ such that $\bar{w}$ is consistent with $\bar{x}$, we define
\begin{align*}
	\bar{x}\mapsto \bar{w}:&~~ \mathsf{set}(\bar{x}) \to \mathsf{set}(\bar{w})\\
	~&~~x_i \mapsto w_i
\end{align*}

A tuple $\bar{x}$ of elements of a $\sigma$-structure $\mathcal{A}$ is called $k$-\emph{coverable} if $\tset(\bar{x})$ can be covered in $\mathcal{A}$ by at most $k$ tuples, that is, there is a set $C$ of at most $k$ tuples of $\mathcal{A}$ such that
\[\tset(\bar{x}) \subseteq \bigcup_{\bar{y}\in C} \tset(\bar{y})\,.\]
Note that this is \emph{not} equivalent to the hypergraph of $\mathcal{A}[\tset(\bar{x})]$ having edge-cover number at most $k$, as covering $\tset(\bar{x})$ might require tuples not present in $\mathcal{A}[\tset(\bar{x})]$, that is, tuples that also include vertices in $A \setminus \tset(\bar{x})$.

\begin{observation}
Let $r$ be the rank of $\mathcal{A}$ and let $\bar{x}$ be a tuple of $\mathcal{A}$. If $\bar{x}$ is $k$-coverable, then $|\tset(\bar{x})|\leq rk$.\qed
\end{observation}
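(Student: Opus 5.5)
The statement to prove is that if $\bar{x}$ is $k$-coverable in $\mathcal{A}$, then $|\tset(\bar{x})|\leq rk$, where $r$ is the rank of $\mathcal{A}$. I would prove this directly by a counting argument from the definition of $k$-coverability; no earlier lemma is needed.

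By assumption there is a set $C$ of at most $k$ tuples of $\mathcal{A}$ with $\tset(\bar{x})\subseteq\bigcup_{\bar{y}\in C}\tset(\bar{y})$. Each $\bar{y}\in C$ lies in some relation $R^{\mathcal{A}}$ with $R\in\sigma$. Hence $\bar y$ has length $\ar(R)\leq \ar(\sigma)$. By the standing convention of this section, the relation of maximum arity is nonempty in $\mathcal{A}$, so $\ar(\sigma)=r$. Since a tuple of length $\ar(R)$ has at most $\ar(R)$ distinct entries, we get $|\tset(\bar{y})|\leq r$.

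Combining monotonicity of cardinality with the union bound gives
\[
|\tset(\bar{x})|\;\leq\;\Bigl|\bigcup_{\bar{y}\in C}\tset(\bar{y})\Bigr|\;\leq\;\sum_{\bar{y}\in C}|\tset(\bar{y})|\;\leq\;|C|\cdot r\;\leq\;rk.
\]

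The argument has no real obstacle. The only point needing care is identifying the rank $r$ with the maximum arity of the tuples that may appear in a cover. This is exactly what the convention fixed at the start of the section provides. Even without that convention, each covering tuple has at most $\ar(\sigma)$ distinct entries, so the bound $|\tset(\bar x)|\leq k\cdot\ar(\sigma)$ still holds.
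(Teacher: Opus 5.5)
Your proposal is correct. The paper states this as an observation without proof, and your union bound over at most $k$ covering tuples, each with at most $r$ distinct entries, is exactly the intended argument.
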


Next we introduce the ground set of tuples which HyperOWL will operate on.
\begin{definition}[$\rho(\mathcal{A},k)$]
	Given a structure $\mathcal{A}$ of rank $r$, we define $\rho(\mathcal{A},k)$ as the set of all $k$-coverable $rk$-tuples of elements of $\mathcal{A}$.
\end{definition}

\begin{lemma}\label{lem:enumerate_cover_set}
	Let $\mathcal{A}$ be a structure of rank at most $r$. Then
	$|\rho(\mathcal{A},k)| \in O((rk)^{rk} \cdot |\mathcal{A}|^{k})$. Moreover, $\rho(\mathcal{A},k)$ can be enumerated in time $\tilde{O}((rk)^{rk} \cdot |\mathcal{A}|^{k})$.
\end{lemma}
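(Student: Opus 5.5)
The plan is a direct counting argument, followed by an enumeration procedure that mirrors it.

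For the counting bound, map every $k$-coverable $rk$-tuple $\bar x$ to a witness that determines it. By definition there is a set $C=\{\bar y_1,\dots,\bar y_m\}$ of $m\le k$ tuples of $\mathcal{A}$ with $\tset(\bar x)\subseteq\bigcup_i\tset(\bar y_i)$. Pad $C$ to exactly $k$ tuples by repeating $\bar y_1$, and pad each $\bar y_i$ to length $r$ by repeating its first entry. Concatenating gives a sequence $\bar z=\bar y_1+\dots+\bar y_k$ of length $rk$ with $\tset(\bar x)\subseteq\tset(\bar z)$. Each entry $x_j$ then equals some $z_{p(j)}$, which defines a position map $p:[rk]\to[rk]$, and $\bar x$ is uniquely determined by the pair $((\bar y_1,\dots,\bar y_k),p)$.

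Two counts remain. The number of choices of $(\bar y_1,\dots,\bar y_k)$ is at most $(\sum_{R\in\sigma}|R^{\mathcal{A}}|)^k\le|\mathcal{A}|^k$, taking $|\mathcal{A}|$ to be the size measure that counts the tuples (or, up to constants depending on $\sigma$, the total encoding size). The number of position maps is $(rk)^{rk}$. Together these give $|\rho(\mathcal{A},k)|\le(rk)^{rk}|\mathcal{A}|^k$.

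The enumeration follows the same scheme: loop over all $k$-tuples of tuples of $\mathcal{A}$ and all $p\in[rk]^{[rk]}$, and output $\bar x$ with $x_j=z_{p(j)}$. Each element is produced in $O(rk)$ time, so the raw enumeration runs in $O(rk\cdot(rk)^{rk}|\mathcal{A}|^k)$. The factor $rk$ depends only on $k$ and the signature, so it is absorbed into the hidden constant; otherwise one simply states the bound with the hidden dependence on $k$ and $\sigma$.

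The step I expect to be the main obstacle is removing duplicates so that $\rho(\mathcal{A},k)$ is listed as a set. I would insert all generated tuples into a balanced search tree, or sort them lexicographically and discard repeats. This costs $O(\log)$ overhead per generated tuple, which accounts for the $\tilde O$ in the statement. A second point to check is that every tuple produced really is $k$-coverable: this holds because $\tset(\bar x)\subseteq\tset(\bar z)$ is covered by the at most $k$ distinct tuples among the $\bar y_i$. Conversely, every $k$-coverable $\bar x$ is produced, by the witness construction above. The padding choices are harmless, since they only create repetitions, and those are removed by the deduplication.
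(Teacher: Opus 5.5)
Your proof is correct and takes essentially the same approach as the paper. The paper maps pairs $(J,\bar t)$, with $J$ a set of at most $k$ tuples of $\mathcal{A}$ and $\bar t\in\bigl(\bigcup_{\bar y\in J}\mathsf{set}(\bar y)\bigr)^{rk}$, surjectively onto $\rho(\mathcal{A},k)$; this is your cover-plus-position-map count in different packaging, and the paper then enumerates by brute force and removes duplicates with logarithmic overhead, just as you propose.
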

\begin{proof}
	Let $E$ denote the set of all tuples of $\mathcal{A}$ and let $m=|E|$. Consider the following set
	\[ P =\left\{ (J,\bar{t}) \mid J\in \binom{E}{\leq k} \wedge \bar{t}\in \left(\bigcup_{\bar{x}\in J} \tset(\bar{x})\right)^{rk}  \right\} \,,\]
	that is, $P$ contains all pairs $(J,\bar{t})$ such that $J$ is a set of at most $k$ tuples of $\mathcal{A}$, and $\bar{t}$ is a tuple of length $rk$ with each element being contained in one of the tuples of $\mathcal{A}$.
	Clearly, the mapping $(J,\bar{t})\mapsto \bar{t}$ is a surjection from $P$ to $\rho(\mathcal{A},k)$. Thus \[|\rho(\mathcal{A},k)|\leq |P|\leq \left(\sum_{i=0}^{k} \binom{m}{i}\right) \cdot (rk)^{rk} \leq \left(\sum_{i=0}^{k} m^i\right) \cdot (rk)^{rk} = \frac{m^{k+1}-1}{m-1} \cdot (rk)^{rk} \in O(m^k \cdot (rk)^{rk})=O((rk)^{rk} \cdot |\mathcal{A}|^{k}) \,.\]
    Finally, the proof also induces an algorithm for enumerating $\rho(\mathcal{A},k)$: we first enumerate $P$ by brute-force via iterating over all $\leq k$ subsets of tuples. Afterwards we project the elements $(J,\bar{t})$ to $\bar{t}$ and remove duplicates --- note that duplicate removal can be implemented via a $O(\log n)$ membership test for $P$, yielding a total running time of $O(|P| \log |P|)\leq \tilde{O}((rk)^{rk} \cdot |\mathcal{A}|^{k})$.
\end{proof}

Recall that, given a $\sigma$-structure $\mathcal{A}$ and a set $S \subseteq A$, we write $\mathcal{A}[S]$ for the substructure of $\mathcal{A}$ with universe $S$ and tuples $\bar{x}$ of $\mathcal{A}$ such that $\tset(\bar{x})\subseteq S$. Specifically, $\mathcal{A}[S]$ does not include subtuples of tuples the elements of which are not fully contained in $S$. For defining atomic types of tuples within a relational structure we need the following ``trimmed'' version of $\mathcal{A}[S]$; to this end, given an $\ell$-tuple $\bar{x}$ and a set $S$, we set $\iota(\bar{x},S):=\{i \in [\ell] \mid \bar{x}_i \in S\}$, that is $\iota(\bar{x},S)$ is the set of all indices of elements of $\bar{x}$ that are contained in $S$.
\begin{definition}[$\mathcal{A}(S)$]\label{def:AS}
    Let $\mathcal{A}$ be a $\sigma$-structure and let $S \subseteq A$. The structure $\mathcal{A}(S)$ has universe $S$. Moreover, for each $R \in \sigma$ and $\bar{x}\in R^\mathcal{A}$ with $\tset(\bar{x})\cap S \neq \emptyset$,
    \begin{itemize}
        \item if $\tset(\bar{x})\subseteq S$ we include $\bar{x}$ in $R^{\mathcal{A}[S]}$, and
        \item otherwise, that is, if $\iota(\bar{x},S) \subsetneq [\ell_x]$ where $\ell_x$ is the length of $\bar{x}$ we add a new relation symbol $R_{\iota(\bar{x},S)}$ of arity $|\iota(\bar{x},S)|$ and include $(\bar{x}_i)_{i\in \iota(\bar{x},S)}$ in $R^{\mathcal{A}[S]}_{\iota(\bar{x},S)}$.
    \end{itemize}
\end{definition}
We emphasize that $\mathcal{A}(S)$ does not necessarily have the same signature as $\mathcal{A}$ as we might need  projected (``trimmed'') versions of $R_{J}$ for all relation symbols $R$ and $J\subsetneq[a]$, where $a$ is the arity of $R$. In that way, $\mathcal{A}(S)$ does not forget information about elements included in a common tuple for tuples not fully contained in $S$. 

\subsection{$k$-HyperOWL}\label{sec:khyperowl}
For what follows, we assume that our structures have rank at most $r$. For stating our algorithm, we first define atomic types directly on relational structures.

\begin{definition}[Atomic Types on Structures]
The atomic type of an $rk$-vertex tuple $\bar{v}$ of a $\sigma$-structure $\mathcal{A}$ is a binary vector $\mathsf{atp}(\bar{v})$ of length 
\[\binom{rk}{2} + \sum_{R \in \sigma}~~\sum_{\emptyset \neq J \subseteq [\mathsf{arity}(R)]}\binom{rk}{|J|}\,.\]

The first $\binom{rk}{2}$ entries indicate for each pair of distinct $i,j \in [rk]$ whether $\bar{v}_i = \bar{v}_j$. The remaining entries indicate, for each $R \in \sigma$ and non-empty subset $J \subseteq \mathsf{arity}(R)$, whether the subtuple $(\bar{v}_i)_{i \in J}$ is contained in $R^{\mathcal{A}(\tset(\bar{v}))}_J$ in $\mathcal{A}(\tset(\bar{v}))$ (see \Cref{def:AS}). 
\end{definition}
Note that, using the identity $\sum_{\emptyset \neq J \subseteq [n]}\binom{m}{|J|} = \binom{n+m}{n}-1$, the length of $\mathsf{atp}(\bar{v})$ is bounded by
\[\binom{rk}{2} + \sum_{R \in \sigma}\left(\binom{\mathsf{arity}(R) + rk}{\mathsf{arity(R)}} -1\right) \in O\left(|\sigma| (r(k+1))^r\right) \,.\]

The following fact is analogous to the case of graphs (cf.\ \cite{Grohe21}):
\begin{observation}\label{obs:atp_iso}
	Two $rk$-vertex tuples $\bar{u}$  and $\bar{v}$ of $\sigma$-structures $\mathcal{A}$ and $\mathcal{B}$, respectively, have the same atomic type if and only if the mapping $u_i \mapsto v_i$ is an isomorphism from $\mathcal{A}(\mathsf{set}(\bar{u}))$ to $\mathcal{B}(\mathsf{set}(\bar{v}))$.\qed
\end{observation}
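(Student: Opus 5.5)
We prove both directions of Observation~\ref{obs:atp_iso} by unpacking the definition of $\mathsf{atp}$ entry by entry. Let $\phi$ denote the map $u_i \mapsto v_i$. Throughout, a relation $R_J$ of $\mathcal{A}(\tset(\bar u))$ is indexed by $J$ exactly as in Definition~\ref{def:AS}, with $J=[\mathsf{arity}(R)]$ standing for $R$ itself.

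For the direction ``same type $\Rightarrow$ isomorphism'', first use the first $\binom{rk}{2}$ entries. These record, for every pair $i,j$, whether $\bar u_i=\bar u_j$, and likewise for $\bar v$. Equality of these entries says that $\bar u$ is consistent with $\bar v$ and $\bar v$ with $\bar u$. By Observation~\ref{obs:consistent_mapping}, $\phi$ is then a well-defined function $\tset(\bar u)\to\tset(\bar v)$, and its inverse $v_i\mapsto u_i$ is also well defined, so $\phi$ is a bijection.

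Next we show that $\phi$ preserves every relation $R_J$ of the trimmed structures in both directions. Every tuple of $R_J^{\mathcal{A}(\tset(\bar u))}$ has entries in $\tset(\bar u)$, so it can be written as $(\bar u_{i})_{i\in I}$ for some selection of $|J|$ positions of $\bar u$. Here is the one point needing care. The definition phrases the remaining entries as ``the subtuple $(\bar v_i)_{i\in J}$'', indexed by $\binom{rk}{|J|}$. We read these entries as ranging over all position sequences of length $|J|$ in $[rk]$, i.e.\ the intended counting. Under this reading, for every selection of positions the entry of $\mathsf{atp}$ records exactly whether the selected subtuple lies in $R_J$. Equality of the types then gives
\[(\bar u_i)_{i\in I}\in R_J^{\mathcal{A}(\tset(\bar u))} \iff (\bar v_i)_{i\in I}\in R_J^{\mathcal{B}(\tset(\bar v))}.\]
Since $\phi$ maps the first subtuple to the second, $\phi$ and $\phi^{-1}$ preserve all relations. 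A bijection with this property is an isomorphism. The edge-surjective/injective characterisation from the preliminaries also applies, because every element of the trimmed structure lies in some tuple by construction.

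The converse direction is immediate. If $\phi$ is an isomorphism, it is in particular well defined and injective, so $\bar u_i=\bar u_j\iff\bar v_i=\bar v_j$. It also maps each selected subtuple of $\bar u$ to the corresponding subtuple of $\bar v$ and preserves membership in every $R_J$ in both directions. Hence every entry of the two atomic types agrees.

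The main obstacle is not mathematical but notational: we must ensure that the indexing of the $\mathsf{atp}$ entries really covers all subtuples, including repeated positions and arbitrary order, so that membership of every tuple of the trimmed structure is recorded. I would state explicitly the interpretation above, namely that the entries range over all position sequences of the appropriate length. If the intended reading is only over increasing index sets, I would add a short remark that repeated or permuted positions are recovered from the equality pattern together with the sorted entries. The bijectivity argument and the entry-wise comparison are otherwise routine.
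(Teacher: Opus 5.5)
Your core argument is correct. It is the routine unpacking that the paper leaves implicit: it states the observation without proof and only points to the graph case. The equality entries make $u_i\mapsto v_i$ a well-defined bijection, and the membership entries give preservation and reflection of every (trimmed) relation. For structures without isolated elements, that is an isomorphism.

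The flaw is in how you handle the indexing issue. Reading the entries as ranging over all sequences in $[rk]^{|J|}$ is not the paper's ``intended counting''. The paper's count $\binom{rk}{|J|}$ counts $|J|$-subsets of $[rk]$, which is exactly the increasing-index-set reading. Your fallback claim is false: under that reading, permuted or repeated positions cannot be recovered from the equality pattern together with the sorted entries. In fact the observation itself fails under that reading.

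Here is a counterexample.
\begin{itemize}
\item Take $r=2$, $k=1$ and $\sigma=\{R\}$ with $R$ binary.
\item Let $R^{\mathcal{A}}=\{(a,b),(b,a)\}$ with $\bar u=(a,b)$, and $R^{\mathcal{B}}=\{(c,d)\}$ with $\bar v=(c,d)$.
\item Both tuples are $1$-coverable with distinct entries, and $(u_1,u_2)\in R$ and $(v_1,v_2)\in R$.
\item The trimmed relations $R_{\{1\}},R_{\{2\}}$ are empty on both sides, because every tuple lies inside $\mathsf{set}(\bar u)$, resp.\ $\mathsf{set}(\bar v)$.
\item So all six entries of the two atomic types agree.
\item Yet $a\mapsto c$, $b\mapsto d$ sends $(b,a)$ to $(d,c)\notin R^{\mathcal{B}}$.
\end{itemize}
Replacing $(b,a)$ by $(a,a)$ gives the same failure for repeated positions. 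In general, when the $rk$ entries of $\bar u$ are pairwise distinct, the equality pattern carries no information, so nothing can recover the missing orderings.

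You should therefore present your reading as a necessary amendment of the definition, and drop the fallback. The amended $\mathsf{atp}$ has one entry for every $R\in\sigma$, every nonempty $J\subseteq[\mathsf{arity}(R)]$, and every sequence $(i_1,\dots,i_{|J|})\in[rk]^{|J|}$. That is $(rk)^{|J|}$ entries per $(R,J)$ rather than $\binom{rk}{|J|}$.

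With this amendment your proof goes through verbatim, and nothing downstream breaks. Since $\sum_{\emptyset\neq J\subseteq[a]}(rk)^{|J|}=(1+rk)^a-1\le (r(k+1))^r$, the paper's bound $O(|\sigma|(r(k+1))^r)$ on the length of $\mathsf{atp}$ still holds.

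One minor point: the fact that every element of $\mathcal{A}(\mathsf{set}(\bar u))$ lies in some tuple uses the paper's standing assumption that $\mathcal{A}$ has no isolated elements. It does not follow from the construction of $\mathcal{A}(S)$ alone.
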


The $k$-dimensional HyperOWL algorithm  iteratively colours all tuples in $\rho(\mathcal{A},k)$ for a structure $\mathcal{A}$. We call it  \emph{oblivious}, as the iterative refinement is closer to the classical oblivious WL algorithm than it is to the classical non-oblivious WL algorithm (see \cite[Section V]{Grohe21} for a comparison between classical WL and OWL).

For the statement of the algorithm, we need the following operation that removes an element from the ground set of a tuple and replaces it by another element of the tuple:

\begin{definition}[Vector substitution {$\bar{v}[j/\hat{v}]$}]
Given an $rk$-vector $\bar{v}$, an element $\hat{v}\in \mathsf{set}(\bar{v})$ and an index $j\in [rk]$, the vector $\bar{v}[j/\hat{v}]$ is obtained from $\bar{v}$ by replacing every occurrence of $\hat{v}$ by $v_j$.
\end{definition} 

We are now able to define the algorithm via iterative colouring.

\newcommand{\howl}{\mathsf{HOWL}}

\begin{definition}[HyperOWL]
    Let $\mathcal{A}$ be a structure of rank $r$. For $i\geq 0$ we define a function $\howl^i_k$ that assigns each tuple in $\rho(\mathcal{A},k)$ a colour as follows:
    \begin{align*}
    \howl^0_k(\bar{v}):=&\atp(\bar{v})\\
    \howl_{k}^{i}(\bar{v}):=&	(\howl_{k}^{i-1}(\bar{v}),(\howl_{k}^{i-1}(\bar{v}[j/\hat{v}]) \mid j\in [rk], \hat{v} \in \mathsf{set}(\bar{v})),\\
~& \{\{\howl_{k}^{i-1}(\bar{v}[w/1])\mid w \in A \wedge \bar{v}[w/1]\in \rho(\mathcal{A},k) \}\},\\
~& \{\{\howl_{k}^{i-1}(\bar{v}[w/2])\mid w \in A \wedge \bar{v}[w/2]\in \rho(\mathcal{A},k) \}\},\\
~&\dots,\\
~& \{\{\howl_{k}^{i-1}(\bar{v}[w/rk])\mid w \in A \wedge \bar{v}[w/rk]\in \rho(\mathcal{A},k) \}\})
    \end{align*}
    We write $\bar{u}\howlequiv{i}{k} \bar{v}$ if and only if $\howl^i_k(\bar{u})=\howl^i_k(\bar{v})$ --- note that this is well-defined even if $\bar{u}$ and $\bar{v}$ are tuples of different $\sigma$-structures $\mathcal{A}$ and $\mathcal{B}$ as long as $\bar{u}\in \rho(\mathcal{A},k)$ and $\bar{v}\in \rho(\mathcal{B},k)$. Given two $\sigma$-structures~$\mathcal{A}$ and~$\mathcal{B}$, we write $\mathcal{A}\howlequiv{i}{k} \mathcal{B}$ if the partitions of $\rho(\mathcal{A},k)$ and $\rho(\mathcal{B},k)$ induced by $\howl^i_k$ are equal. We use $\howl^\infty_k$ and $\howlequiv{\infty}{k}$ to denote the stable colouring, and indistinguishability w.r.t.\ the stable colouring.
\end{definition}
Note that the addition\footnote{The observant reader might have noticed that the colourings $(\howl_{k}^{i-1}(\bar{v}[j/\hat{v}]) \mid j\in [rk], \hat{v} \in \mathsf{set}(\bar{v}))$ do not appear in the $k$-dimensional oblivious WL algorithm for graphs. This is due to the fact that, in our settings of structures, we must keep track of all subvectors $\bar{v}[j/\hat{v}]$ of $\bar{v}$ since they might induce a set with edge-cover number strictly smaller than $k$. This can create a situation in which there are $w \in A$ and $j'\in [rk]$ such that $(\bar{v}[j/\hat{v}])[j'/w] \in \rho(\mathcal{A},k)$, but $\bar{v}[j'/w] \notin \rho(\mathcal{A},k)$.} of $(\howl_{k}^{i-1}(\bar{v}[j/\hat{v}]) \mid j\in [rk], \hat{v} \in \mathsf{set}(\bar{v}))$ is well-defined as $\mathsf{set}(\bar{v}[j/\hat{v}])\subseteq\mathsf{set}(\bar{v})$ for $\hat{v} \in \mathsf{set}(\bar{v})$. Thus all of the vectors $\bar{v}[j/\hat{v}]$ are $k$-coverable and hence belong to $\rho(\mathcal{A},k)$.

Similarly to $k$-WL and $k$-OWL on graphs, $k$-HyperOWL induces in each iteration a partition of the elements in $\rho(\mathcal{A},k)$, and each further iteration refines the partition. $k$-HyperOWL terminates as soon as the partition does not refine after an iteration, and since a partition of a finite set can only be refined a finite amount of times, the process always becomes stable.

\begin{lemma}
    There is a deterministic algorithm that, on input a $\sigma$-structure $\mathcal{A}$ of rank $r$, and integers $k\geq 1$ and $t\geq 0$, computes in time
    \[O\left((|\sigma|2^k+trk)(rk)^{rk}\cdot |\mathcal{A}|^{k+1} \right)\] the colours $\howl_k^t(\bar{v})$ for all $\bar{v} \in \rho(\mathcal{A},k)$.
\end{lemma}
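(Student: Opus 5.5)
We would prove the running time bound by directly implementing the iterative definition of $k$-HyperOWL, with colours compressed to integers after every round. The plan has three phases: enumerating $\rho(\mathcal{A},k)$ with a constant-time membership structure, computing the atomic types, and performing $t$ refinement rounds with renaming of colours.

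\textbf{Enumeration and atomic types.} By \Cref{lem:enumerate_cover_set}, $\rho(\mathcal{A},k)$ can be listed in time $\tilde{O}((rk)^{rk}|\mathcal{A}|^k)$, which is within budget. We then store it in a lookup table (e.g.\ a trie indexed by the $rk$ positions, or a hash/array structure over domain elements), so that membership of a tuple and retrieval of its current colour cost $O(rk)$ time.

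For each $\bar v\in\rho(\mathcal{A},k)$, we compute $\atp(\bar v)$ by scanning all tuples of $\mathcal{A}$. For each tuple $\bar x\in R^{\mathcal{A}}$, we determine $\iota(\bar x,\tset(\bar v))$ and mark the corresponding entry of the atomic-type vector. Since $|\tset(\bar v)|\le rk$ is a constant, the mark can be set in constant time after indexing the positions of $\bar v$. Alternatively, we precompute, for each of the $2^k$-type index patterns and each symbol, the relevant projections, which gives $O(|\sigma|2^k)$ work per pair. Summing over $|\rho|\cdot|\mathcal{A}|$ pairs yields $O(|\sigma|2^k(rk)^{rk}|\mathcal{A}|^{k+1})$. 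The exact source of the factor $|\sigma|2^k$ is the step we are least sure about; we would try to charge it to enumerating covers $J$ together with subsets of relation symbols.

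\textbf{Refinement rounds.} In each round $i$, we build, for each $\bar v$, the signature tuple required by the definition. The substitution part consists of $rk\cdot rk$ tuples $\bar v[j/\hat v]$, each in $\rho$, so this costs $O((rk)^3)$ time. The multiset part has, for each of the $rk$ positions, $|A|$ candidates $w$. For each candidate we test membership of $\bar v[w/j]$ in $\rho$ and collect its colour, giving $O(rk\cdot|A|)$ lookups per tuple. Summed over all tuples, this is $O(rk(rk)^{rk}|\mathcal{A}|^{k+1})$ per round.

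We then sort each multiset and relabel all signatures with fresh integer colours via radix or bucket sort, which is linear in the total signature size. Over $t$ rounds this gives $O(trk(rk)^{rk}|\mathcal{A}|^{k+1})$.

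\textbf{Main obstacle.} The main difficulty is keeping the per-round work linear in the total signature size, so that no extra logarithmic factor appears. We handle this by bucket-sorting integer colours, since they are bounded by $|\rho|$. Combining the initialisation cost with the round costs gives the stated bound.
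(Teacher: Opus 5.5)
Your plan follows the paper's proof. Both enumerate $\rho(\mathcal{A},k)$ by Lemma~\ref{lem:enumerate_cover_set}, compute atomic types by scanning $\mathcal{A}$ once per tuple of $\rho(\mathcal{A},k)$, and charge each round $O(rk\cdot|\mathcal{A}|)$ colour accesses per tuple. The paper is in fact less explicit than you: it treats each access as unit cost and says nothing about renaming colours.

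The factor you could not place does not come from covers. The paper bounds the atomic-type phase by $O(|\rho(\mathcal{A},k)|\cdot|\sigma|\cdot 2^r\cdot|\mathcal{A}|)$, i.e.\ one scan of $\mathcal{A}$ for each $R\in\sigma$ and each index set $J\subseteq[\mathsf{ar}(R)]$ of a trimmed relation $R_J$. Only in its last equality does it write this as $|\sigma|2^k$, and the switch from $2^r$ to $2^k$ looks like a slip. Your own method, which computes $\iota(\bar x,\tset(\bar v))$ directly for each tuple $\bar x$, needs neither factor, so this phase stays within budget either way.

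The one step that does not add up as written is the round cost. You say a membership/colour lookup in your trie costs $O(rk)$, but you then multiply only the number of lookups, $O(rk\cdot|A|)$ per tuple. Charging each lookup its stated cost gives $O((rk)^2|A|)$ per tuple. That is an extra factor $rk$ over the claimed $O(t\,rk\,(rk)^{rk}|\mathcal{A}|^{k+1})$.

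This is easy to repair:
\begin{itemize}
\item For each position $j$, radix-sort $\rho(\mathcal{A},k)$ once by the tuple with position $j$ deleted. The tuples $\bar v[w/j]\in\rho(\mathcal{A},k)$ are exactly the members of $\bar v$'s group. So the $j$-th multiset is shared by the whole group and can be assembled each round in time linear in the group size.
\item Make the bucket sort of colours a single global pass over all (tuple, position, colour) entries. Sorting each multiset with its own array of $|\rho(\mathcal{A},k)|$ buckets would not be linear in the total signature size.
\end{itemize}
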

\begin{proof}
    We first use Lemma~\ref{lem:enumerate_cover_set} to enumerate $\rho(\mathcal{A},k)$ (in particular, recall from Lemma~\ref{lem:enumerate_cover_set} that $|\rho(\mathcal{A},k)|\leq (rk)^{rk}\cdot |\mathcal{A}|^k$). Next, we compute $\mathsf{atp}(\bar{v})$ for all $\bar{v}\in \rho(\mathcal{A},k)$. This can clearly be done in time \[O\left( |\rho(\mathcal{A},k)|\cdot |\sigma| \cdot 2^r\cdot |\mathcal{A}|\right) \leq O\left((rk)^{rk}|\mathcal{A}|^k\cdot |\sigma| \cdot 2^r\cdot |\mathcal{A}|\right) = O\left(|\sigma|2^k(rk)^{rk}\cdot |\mathcal{A}|^{k+1}\right)\,.\] If $t=0$ we are done as $\mathsf{atp}(\bar{v})=\howl_k^0(\bar{v})$. Otherwise, we compute iteratively the colours $\howl_k^i(\bar{v})$ from the colours $\howl_k^{i-1}(\bar{u})$. For each $\bar{v}\in \rho(\mathcal{A},k)$, we have to access $\howl_k^{i-1}(\bar{u})$ for $O(rk\cdot|\mathcal{A}|)$ vectors~$\bar{u}$. Hence the time required per iteration is bounded by
    \[ |\rho(\mathcal{A},k)|\cdot O(rk\cdot |\mathcal{A}|) = O\left((rk)^{rk+1}|\mathcal{A}|^{k+1}\right) \,.\]
    Consequently, the overall running time is bounded by
    \[ O\left(|\sigma|2^k(rk)^{rk}\cdot |\mathcal{A}|^{k+1}\right) + O\left(t(rk)^{rk+1}|\mathcal{A}|^{k+1}\right)\leq O\left((|\sigma|2^k+trk)(rk)^{rk}\cdot |\mathcal{A}|^{k+1} \right)\,,\]
    concluding the proof.
\end{proof}

\subsection{Counting homomorphisms from HyperOWL colourings}
For this section, our goal is to show that any pair of $\sigma$-structures $\mathcal{A}$ and $\mathcal{B}$ with $\mathcal{A}\howlequiv{\infty}{k} \mathcal{B}$ are indistinguishable by homomorphism counts from structures of generalised hypertreewidth at most $k$.

To easy notation, recall that we fixed the dimension $k\geq 1$ of HyperOWL and the rank $r>0$ of our structures. For the remainder of this section, we will also assume that all of our structures are over a fixed signature $\sigma$ of rank $r$, i.e., the maximum arity of any relation symbol of $\sigma$ is $r$.

For the statement of our insdistinguishability result, we will rely on \emph{nice hypertree decompositions}, defined below. Every hypertree decomposition can be efficiently transformed into a nice hypertree decomposition similarly as to the case of graphs (see, for instance, \cite[Section 7.2]{Cyganetal15}). For technical reasons, we will assume w.l.o.g.\ that the bags of our decompositions are non-empty (see condition (C2) below). 

\begin{definition}[Nice Hypertree Deecomposition]\label{def:nice_decompositions}
    Let $\mathcal{A}$ be a connected structure. A nice hypertree decomposition of $\mathcal{A}$ is a pair of a rooted binary tree $T$ and a collection of bags $\{B_t\}_{t \in V(T)}$ such that the following conditions are satisfied:
    \begin{itemize}
        \item[(C1)] $\bigcup_{t \in V(T)} B_t = A$.
        \item[(C2)] $B_t\neq \emptyset$ for all $t \in V(T)$.
        \item[(C3)] For all relations $R^\mathcal{A}$ and tuples $\bar{x}\in \mathcal{R}^\mathcal{A}$ there is a bag $B_t$ such that $\tset(\bar{x})\subseteq B_t$.
        \item[(C4)] For all $v \in A$ the subgraph $T[\{t \mid x \in B_t\}]$ is connected.
        \item[(C5)] All nodes $t$ of $T$ are of one the following types:
        \begin{itemize}
            \item[(i)] $t$ is a leaf of $T$; we call $t$ a \emph{leaf node}.
            \item[(ii)] $t$ is a node with one child $t'$ and $B_t$ is obtained from $B_{t'}$ by adding exactly one element $x\in A \setminus B_{t'}$; we call $t$ an \emph{introduce node}.
            \item[(iii)] $t$ is a node with one child $t'$ and $B_t$ is obtained from $B_{t'}$ by removing exactly one element $x\in B_{t'}$; we call $t$ a \emph{forget node}.
            \item[(iv)] $t$ is a node with two children $t_1$ and $t_2$, and $B_t=B_{t_1}=B_{t_2}$; we call $t$ a \emph{join node}.
        \end{itemize}
    \end{itemize}
\end{definition}
Recall that the generalised hypertreewidth of $(T,\{B_t\}_{t\in V(T)})$ is the maximum edge cover number of any bag, that is 
\[\max_{t \in V(T)} \min \{|C| \mid C\subseteq \cup_{R \in \sigma}R^\mathcal{A} ~\wedge~B_t \subseteq \cup_{\bar{x}\in C} \tset(\bar{x}) \}\,.\]
Moreover, the generalised hypertreewidth of $\mathcal{A}$ is the minimum generalised hypertreewidth of any (nice) hypertree decomposition of $\mathcal{A}$.
Given a structure $\mathcal{A}$, a nice hypertree decomposition $(T,\{B_t\}_{t \in B(T)})$ of $\mathcal{A}$, and a node $t\in V(T)$, we use the following terminology:
\begin{itemize}
    \item $T_t$ is the subtree of $T$ rooted at $t$.
    \item $\mathcal{A}_t:= \mathcal{A}\left[\bigcup_{t' \in V(T_t)}B_{t'}\right]$.
    \item $h(t)$ is the depth of $T_t$, that is, the longest path in $T_t$ from $t$ to a leaf in $T_t$.
\end{itemize}

Let us now fix a structure $\mathcal{A}$ of generalised hypertreewidth at most $k$, together with a corresponding nice hypertree decomposition $(T,\{B_t\}_{t \in V(T)})$.

Our homomorphism indistinguishability proof will recurse over the structure of $T$. To this end, we need to introduce partial homomorphisms from substructures of $\mathcal{A}$ induced by subtrees of $T$:
\begin{definition}\label{def:HOWL_hom_setup}
    Let $S\subseteq A$, let $t\in V(T)$ and let $\bar{x}$ be an $rk$-tuple of elements of $\mathcal{A}$ such that $\tset(\bar{x})=B_t\subseteq S$. Let furthermore $\mathcal{G}$ be a structure and let $\bar{u}\in \rho(\mathcal{G},k)$ such that
    $\bar{u}$ is consistent with $\bar{x}$. We say that the mapping $\bar{x}\mapsto \bar{u}$ is \emph{extendable} if it is vertex-surjective and
    \begin{equation}\label{eq:ext_helper}
        \bar{x}\mapsto \bar{u}\in \Hom(\mathcal{A}(\tset(\bar{x})), \mathcal{G}(\tset(\bar{u})))\,.
    \end{equation} 
    Moreover, for extendable mappings $\bar{x}\mapsto \bar{u}$ we define 
    \[\Hom(\mathcal{A}[S], \mathcal{G})[\bar{x}\to \bar{u}] := \{h \in \Hom(\mathcal{A}[S], \mathcal{G})\mid h|_{B_t} = \bar{x}\mapsto \bar{u}\} \]
\end{definition}
Observe that partial mappings $\bar{x}\mapsto \bar{u}$ that are not extendable, i.e., that do not satisfy \eqref{eq:ext_helper}, can never be extended to homomorphisms from $\mathcal{A}$ to $\mathcal{G}$: For example, $\mathcal{A}$ might contain a tuple $(y,x_1,x_2)$ of a relation $R^\mathcal{A}$ and its nice hypertree decomposition might contain a bag $B_t=\{x_1,x_2\}$. Assume that $\mathcal{G}$ contains elements $u_1,u_2$, but no tuple $(~\_~,u_1,u_2)$ in $R^\mathcal{G}$. While the mapping $(x_1,x_2) \mapsto (u_1,u_2)$ might be a partial homomorphism, it cannot be extended to a homomorphism from $\mathcal{A}$ to $\mathcal{G}$ as the image of $(y,x_1,x_2)$ would not be in $R^\mathcal{G}$. The condition in \eqref{eq:ext_helper} rules out this problem as it introduces the trimmed relations to~$\mathcal{A}$ and~$\mathcal{G}$ (see Definition~\ref{def:AS}).
\begin{lemma}\label{lem:extendable_covers}
    Let $\cA$, $\mathcal{G}$, $\bar{x}$ and $\bar{u}$ as in Definition~\ref{def:HOWL_hom_setup}. If $\bar{x} \mapsto \bar{u}$ is extendable and if $\bar{x}$ is $k$-coverable, then $\bar{u}$ is $k$-coverable as well.
\end{lemma}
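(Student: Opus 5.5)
The plan is to push a cover of $\tset(\bar{x})$ in $\mathcal{A}$ through the partial homomorphism $\bar{x}\mapsto\bar{u}$. The trimmed relations of Definition~\ref{def:AS} guarantee that every covering tuple of $\mathcal{A}$, even one only partially inside $\tset(\bar{x})$, leaves a trace in $\mathcal{A}(\tset(\bar{x}))$. That trace must be mapped onto a trace in $\mathcal{G}(\tset(\bar{u}))$, and such a trace can only come from a genuine tuple of $\mathcal{G}$. Vertex-surjectivity then ensures that these images cover all of $\tset(\bar{u})$.

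Concretely, write $S:=\tset(\bar{x})$, $S':=\tset(\bar{u})$ and $f:=\bar{x}\mapsto\bar{u}$, which is well defined by Observation~\ref{obs:consistent_mapping}. Since $\bar{x}$ is $k$-coverable, fix tuples $\bar{y}_1\in R_1^{\mathcal{A}},\dots,\bar{y}_m\in R_m^{\mathcal{A}}$ with $m\le k$ and $S\subseteq\bigcup_i\tset(\bar{y}_i)$. Discard every $\bar{y}_i$ with $\tset(\bar{y}_i)\cap S=\emptyset$, so that each remaining tuple meets $S$. For each remaining $i$ set $J_i:=\iota(\bar{y}_i,S)\neq\emptyset$ and let $\bar{y}_i|_{J_i}$ be the subtuple on the positions in $J_i$. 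By Definition~\ref{def:AS} there are two cases: either $J_i$ is all of $[\mathsf{arity}(R_i)]$ and $\bar{y}_i\in R_i^{\mathcal{A}(S)}$, or $J_i$ is a proper subset and $\bar{y}_i|_{J_i}\in (R_i)_{J_i}^{\mathcal{A}(S)}$.

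Next I apply \eqref{eq:ext_helper}, which says $f\in\Hom(\mathcal{A}(S),\mathcal{G}(S'))$, read over the common extended signature containing the trimmed symbols. This gives $f(\bar{y}_i|_{J_i})\in (R_i)_{J_i}^{\mathcal{G}(S')}$, or $f(\bar{y}_i)\in R_i^{\mathcal{G}(S')}$ in the untrimmed case. By the definition of $\mathcal{G}(S')$, every tuple of a relation of $\mathcal{G}(S')$ is either a tuple of $R_i^{\mathcal{G}}$ contained in $S'$, or the projection $(\bar{z}_j)_{j\in J_i}$ of some $\bar{z}\in R_i^{\mathcal{G}}$ with $\iota(\bar{z},S')=J_i$. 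In either case I obtain a tuple $\bar{z}_i$ of $\mathcal{G}$ with $f(\tset(\bar{y}_i)\cap S)\subseteq\tset(\bar{z}_i)$. This is the only step that needs care: I must check that the trimmed relation symbol $(R_i)_{J_i}$ is interpreted in $\mathcal{G}(S')$ solely by projections of real tuples of $R_i^{\mathcal{G}}$. This holds by construction, since trimmed symbols are added only when such a tuple exists.

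Finally I combine the images. Since $S=\bigcup_i(\tset(\bar{y}_i)\cap S)$, we get $f(S)\subseteq\bigcup_i\tset(\bar{z}_i)$. Since $f$ is vertex-surjective, $f(S)=S'$. Hence $S'=\tset(\bar{u})$ is covered by the at most $m\le k$ tuples $\bar{z}_i$ of $\mathcal{G}$, so $\bar{u}$ is $k$-coverable.
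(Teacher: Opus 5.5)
Your proof is correct and follows essentially the same route as the paper. Both arguments restrict a cover of $\tset(\bar{x})$ to its traces in the (trimmed) relations of $\mathcal{A}(\tset(\bar{x}))$, push these traces through the homomorphism $\bar{x}\mapsto\bar{u}$ into $\mathcal{G}(\tset(\bar{u}))$, lift each image back to a genuine tuple of $\mathcal{G}$, and use vertex-surjectivity to conclude that at most $k$ tuples cover $\tset(\bar{u})$. Your version is slightly more careful than the paper's: you treat the untrimmed case separately, and you make explicit that the lifted tuple of $\mathcal{G}$ extends the image $f(\bar{y}_i|_{J_i})$, whereas the paper loosely calls it a super-tuple of the trimmed tuple itself.
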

\begin{proof}
    Let $\bar{y}_1\in R_1^\cA,\dots,\bar{y}_k\in R_k^\cA$ be a cover of $\bar{x}$. Assume w.l.o.g.\ that $\tset(\bar{y}_i)\cap\tset(\bar{x})\neq \emptyset$ for all $i\in[k]$; otherwise we can just remove $\bar{y}_i$ from the cover. For all $i\in[k]$, let $\bar{y}'_i$ be the tuple obtained from $\bar{y}_i$ by removing the entries not contained in $\tset(\bar{x})$ and note that the $\bar{y}'_i$ will be contained in the trimmed versions of $R_1,\dots,R_k$ in $\cA(\tset(\bar{x}))$. Write $\gamma=\bar{x} \mapsto \bar{u}$. As $\gamma\in \Hom(\cA(\tset(\bar{x})), \mathcal{G}(\tset(\bar{u})))$, we have that~$\gamma(\bar{y}'_i)$ is contained in the trimmed version of $R_i^\mathcal{G}$ in $\mathcal{G}(\tset(\bar{u}))$ for all $i\in[k]$. Moreover, as $\gamma$ is surjective, each element of $\tset(\bar{u})$ is contained in at least one of the $\gamma(\bar{y}'_i)$. Thus, for each $i\in[k]$ there is a super-tuple $\bar{w}_i \in R_i^\mathcal{G}$ of $\bar{y}'_i$, such that $\bar{u}$ is covered by the $\bar{w}_i$.
\end{proof}

The final ingredient for the homomorphism count indistinguishability proof is given by the following observation:
\begin{lemma}\label{lem:rec_help}
    Let $\mathcal{G}$ and $\mathcal{H}$ be structures, let $\bar{x}\in A^{rk}$ and let $\bar{u}\in \rho(\mathcal{G},k)$ and $\bar{v}\in \rho(\mathcal{H},k)$ such that $\bar{u} \howlequiv{i}{k} \bar{v}$ for some $i\geq 0$. Then the following two properties are satisfied
	\begin{enumerate}
		\item $\bar{u}$ is consistent with $\bar{x}$ if and only if $\bar{v}$ is consistent with $\bar{x}$.
		\item If $i>0$ then $\bar{u} \howlequiv{i-1}{k} \bar{v}$.
	\end{enumerate} 
\end{lemma}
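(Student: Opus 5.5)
Both properties follow by unfolding the definition of $\howl^i_k$; no earlier lemma is needed.

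\textbf{Property 2.} By definition, for $i>0$ the colour $\howl^i_k(\bar{u})$ is a tuple whose first component is $\howl^{i-1}_k(\bar{u})$. Hence equality $\howl^i_k(\bar{u})=\howl^i_k(\bar{v})$ forces equality of first components, i.e.\ $\bar{u}\howlequiv{i-1}{k}\bar{v}$. Iterating this (a trivial induction) shows that $\bar{u}\howlequiv{i}{k}\bar{v}$ implies $\bar{u}\howlequiv{0}{k}\bar{v}$ for every $i\geq 0$, so $\atp(\bar{u})=\atp(\bar{v})$.

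\textbf{Property 1.} Since $\atp(\bar{u})=\atp(\bar{v})$, and the first $\binom{rk}{2}$ entries of the atomic type record, for each pair of distinct $p,q\in[rk]$, whether the $p$-th and $q$-th entries coincide, we get $u_p=u_q \iff v_p=v_q$ for all $p,q$. (Alternatively, use Observation~\ref{obs:atp_iso}: $u_p\mapsto v_p$ is an isomorphism, in particular a well-defined bijection.) Consistency of $\bar{u}$ with $\bar{x}$ says: for all $p,q$, $x_p=x_q$ implies $u_p=u_q$. Substituting $u_p=u_q\iff v_p=v_q$ turns this into the same condition for $\bar{v}$, and the argument is symmetric.

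The only point needing care is that the equality pattern is indeed part of the colour at every round. This holds because $\atp$ is nested as the innermost first component of every $\howl^i_k$, so there is no real obstacle.
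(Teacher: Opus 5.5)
Your proof is correct and follows essentially the same route as the paper. The first component of $\mathsf{HOWL}^{i}_k$ stores the previous colour, which gives property 2. Iterating down to round $0$ yields equal atomic types, and the equality pattern recorded in the atomic type then gives property 1.
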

\begin{proof}
By definition of HyperOWL we immediately obtain that $\bar{u} \howlequiv{i}{k} \bar{v}$ implies $\bar{u} \howlequiv{i-1}{k} \bar{v}$ for all $i>0$; this is due to the fact that we always store the colour of the previous iteration in the first entry of the colour for the next iteration. This shows (2). Moreover, inductively, this also implies that $\bar{u} \howlequiv{0}{k} \bar{v}$, and hence $\bar{u}$ and $\bar{v}$ have the same atomic type. Therefore $u_i=u_j$ if and only if $v_i=v_j$, which implies (1).
\end{proof}

What follows is the main technical result of this section.
\begin{lemma}\label{lem:main_howl}
Let $\mathcal{G}$ and $\mathcal{H}$ be structures. For all $t\in V(T)$, the following property is satisfied: let $\bar{x} \in A^{rk}$ such that $\mathsf{set}(\bar{x})=B_t$ and let $\bar{u}\in \rho(\mathcal{G},k)$ and $\bar{v}\in \rho(\mathcal{H},k)$ such that both $\bar{u}$ and $\bar{v}$ are consistent with $\bar{x}$. Moreover, assume that the mappings $\bar{x}\mapsto \bar{u}$ and $\bar{x}\mapsto \bar{v}$ are extendable, and that $\bar{u} \howlequiv{h(t)}{k} \bar{v}$. Then we have
	\[ \homs(\mathcal{A}_t , \mathcal{G})[\bar{x}\to \bar{u}] = \homs(\mathcal{A}_t , \mathcal{H})[\bar{x}\to \bar{v}]\,.\]
\end{lemma}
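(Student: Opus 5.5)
We prove the statement by induction on $h(t)$, i.e.\ bottom-up over the nice hypertree decomposition, with a case distinction on the node type from (C5). Throughout, the relevant invariant is that $\mathsf{set}(\bar{x})=B_t$, so $\bar{x}$ is $k$-coverable in $\mathcal{A}$ (the decomposition has width at most $k$). By Lemma~\ref{lem:extendable_covers}, every extendable image of $\bar{x}$ is then $k$-coverable and lies in $\rho(\cdot,k)$; this is what allows the refinement steps of HyperOWL to be applied.

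\emph{Leaf case} ($h(t)=0$). Here $\mathcal{A}_t=\mathcal{A}[B_t]$, and a homomorphism with $h|_{B_t}=\bar{x}\mapsto\bar{u}$ is uniquely determined. Both counts are therefore $0$ or $1$, and each equals $1$ iff $\bar{x}\mapsto\bar{u}$ is a homomorphism $\mathcal{A}[B_t]\to\mathcal{G}$. Since $\bar{u}\howlequiv{0}{k}\bar{v}$, the two tuples have the same atomic type, so by Observation~\ref{obs:atp_iso} the map $u_i\mapsto v_i$ is an isomorphism $\mathcal{G}(\mathsf{set}(\bar u))\cong\mathcal{H}(\mathsf{set}(\bar v))$. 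The untrimmed relations are among those recorded there, so the two conditions coincide.

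\emph{Forget node} $t$ with child $t'$, where $B_{t'}=B_t\cup\{y\}$. Choose an index $j$ and replace $\bar{x}$ by $\bar{x}'=\bar{x}[y/j]$. This is an $rk$-tuple with set $B_{t'}$, which requires $\bar{x}$ to contain some repeated entry; this holds because $|B_{t'}|\le rk$, so a value can be duplicated and position $j$ freed. Then $\homs(\mathcal{A}_t,\mathcal{G})[\bar x\to\bar u]=\sum_w \homs(\mathcal{A}_{t'},\mathcal{G})[\bar x'\to \bar u[w/j]]$, where the sum runs over $w$ for which $\bar u[w/j]$ is consistent with $\bar x'$, lies in $\rho(\mathcal{G},k)$, and $\bar x'\mapsto\bar u[w/j]$ is extendable. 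The multiset $\{\{\howl^{h(t)-1}_k(\bar u[w/j])\}\}$ is part of $\howl^{h(t)}_k(\bar u)$, so $\bar u\howlequiv{h(t)}{k}\bar v$ gives a colour-preserving bijection $w\mapsto w'$ between these sets. Consistency and extendability are determined by the colour: consistency by Lemma~\ref{lem:rec_help}, and extendability by atomic type through Observation~\ref{obs:atp_iso}, since the trimmed relations are recorded. We then apply induction, using Lemma~\ref{lem:rec_help}(2) to lower $h(t)-1$ to $h(t')$ where needed. One subtlety is that $\bar u[w/j]$ may fall outside $\rho$ even though it is extendable; Lemma~\ref{lem:extendable_covers} rules this out because $\bar x'$ is $k$-coverable.

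\emph{Introduce node} $t$ with child $t'$, where $B_{t'}=B_t\setminus\{y\}$. Let $\bar x'=\bar x[j/\hat y]$, with $\hat y=y$ and $j$ a position holding an element of $B_{t'}$. This is exactly the substitution built into HyperOWL, and the component $(\howl^{h(t)-1}_k(\bar u[j/\hat u]))$ of the colour makes $\bar u'=\bar u[j/u_{p}]$ and $\bar v'$ equivalent at level $h(t)-1$, where $p$ is a position of $y$ in $\bar x$. The count at $t$ equals the count at $t'$ restricted by the extra constraints involving $y$. All tuples of $\mathcal{A}_t$ containing $y$ lie inside $B_t$ by (C3)/(C4), since $y$ does not occur below $t$. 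These constraints therefore hold iff $\bar x\mapsto\bar u$ is a homomorphism on $\mathcal{A}[B_t]$, which is ensured by extendability. Hence both counts are the child counts on either side, and these are equal by induction. A degenerate case arises when $u_p$ coincides with another entry; the consistency bookkeeping handles it.

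\emph{Join node}. Since $\mathcal{A}_{t_1}\cap\mathcal{A}_{t_2}$ has universe $B_t$, the count factorises as the product of the two children's counts with the same $\bar x\mapsto\bar u$. By Lemma~\ref{lem:rec_help}(2) the hypothesis descends to $h(t_i)\le h(t)-1$, and the induction hypothesis applies to each factor.

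The main obstacle is the forget case: making precise that the set of admissible $w$ is captured exactly by the colour multiset, which requires consistency, membership in $\rho$, and extendability to be colour-invariant, together with the position-freeing argument for $\bar x'$.
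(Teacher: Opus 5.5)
Your case analysis follows the paper's proof almost line by line: the same induction over the nice decomposition and the same product formula at join nodes. At introduce nodes you use the same substitution $\bar x[j/y]$, matched by the $(\mathsf{HOWL}^{i-1}_k(\bar v[j/\hat v]))$ component. At forget nodes you use the same freed duplicated position and multiset component, with Lemma~\ref{lem:rec_help} and Lemma~\ref{lem:extendable_covers} in the same places. Your leaf argument via atomic types is correct but roundabout: extendability alone makes $\bar x\mapsto\bar u$ a homomorphism on $\mathcal{A}[B_t]$, so both counts are $1$.

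The one step you assert without justification is the forget-node identity itself. You restrict the sum to those $w$ for which $\bar x'\mapsto\bar u[w/j]$ is extendable. You never show that this restriction loses nothing, i.e.\ that for every $h\in\Hom(\mathcal{A}_t,\mathcal{G})[\bar x\to\bar u]$ the restriction $h|_{B_{t'}}=\bar x'\mapsto\bar u[h(y)/j]$ is again extendable, which means respecting the trimmed relations of $\mathcal{A}(B_{t'})$. Without this, homomorphisms whose restriction is not extendable would be missing from the right-hand side, and the induction hypothesis (which presupposes extendability) would not cover them. So the identity, and with it the induction step, is unproved.

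This is exactly what the paper establishes in Claim~\ref{clm:HOWL_forget}, by a case distinction on the tuples of $\mathcal{A}(B_{t'})$:
\begin{itemize}
\item A tuple not containing the forgotten vertex is already a tuple of $\mathcal{A}(B_t)$, so extendability of $\bar x\mapsto\bar u$ handles it.
\item A tuple containing the forgotten vertex comes, by (C3) and (C4), from a tuple of $\mathcal{A}_t$, on which $h$ is a homomorphism.
\end{itemize}
Your use of Lemma~\ref{lem:extendable_covers} covers only the other half of that claim (extendable implies membership in $\rho(\mathcal{G},k)$). The colour-invariance you call the main obstacle is the easy part, since it follows directly from equality of atomic types.
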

\begin{proof}
    We proceed by structural induction over $T$.
    \begin{itemize}
        \item $t$ is a leaf node. Then $h(t)=0$. As $t$ is a leaf, we have $\mathcal{A}_t = \mathcal{A}[B_t]=\mathcal{A}[\mathsf{set}(\bar{x})]$. Thus
		\begin{align*}
			\Hom(\mathcal{A}_t, \mathcal{G})[\bar{x} \to \bar{u}] &=\Hom(\mathcal{A}[\mathsf{set}(\bar{x})], \mathcal{G})[\bar{x} \to \bar{u}] \,, \text{ and}\\
			\Hom(\mathcal{A}_t, \mathcal{H})[\bar{x} \to \bar{v}]&= \Hom(\mathcal{A}[\mathsf{set}(\bar{x})], \mathcal{H})[\bar{x} \to \bar{v}]\,.
		\end{align*}
        However, as $\bar{x}\mapsto \bar{u}$ and $\bar{x}\mapsto \bar{v}$ are extendable, we also observe
        \begin{align*}
			\Hom(\mathcal{A}[\mathsf{set}(\bar{x})], \mathcal{G})[\bar{x} \to \bar{u}] &= \{\bar{x}\mapsto \bar{u}\}\,, \text{ and}\\
			\Hom(\mathcal{A}[\mathsf{set}(\bar{x})], \mathcal{H})[\bar{x} \to \bar{v}]&=\{\bar{x}\mapsto \bar{v}\}\,,
		\end{align*}
        hence both sets have cardinality $1$ and we can conclude this case.
        \item $t$ is a join node. Let $t_1$ and $t_2$ be the two children of $t$, and note that $\mathsf{set}(\bar{x})=B_t=B_{t_1}=B_{t_2}$. Moreover, $h(t_1)$ and $h(t_2)$ are both at most $h(t)-1$. Consequently, by Lemma~\ref{lem:rec_help} we have $\bar{u} \howlequiv{h(t_1)}{k} \bar{v}$ and $\bar{u} \howlequiv{h(t_2)}{k} \bar{v}$. Next, by standard dynamic programming over hypertree decompositions for homomorphism counting (cf.\ \cite{PichlerS13}), we have
		\begin{align*}
			\homs(\cA_t, \mathcal{G})[\bar{x} \to \bar{u}] =& \homs(\cA_{t_1}, \mathcal{G})[\bar{x} \to \bar{u}] \cdot \homs(\cA_{t_2}, \mathcal{G})[\bar{x} \to \bar{u}]\\
			\stackrel{\mathsf{IH}}{=} & \homs(\cA_{t_1}, \mathcal{H})[\bar{x} \to \bar{v}] \cdot \homs(\cA_{t_2}, \mathcal{H})[\bar{x} \to \bar{v}]\\
			=& \homs(\cA_t, \mathcal{H})[\bar{x} \to \bar{v}]\,,
		\end{align*}
        where $\mathsf{IH}$ refers to the application of the induction hypothesis on the subtrees rooted at $t_1$ and $t_2$.
        \item $t$ is an introduce node. Let $t'$ be the child of $t$, and let $y$ be the element introduced in $t$, that is, $B_t = B_{t'} \dot\cup\{y\}$. Let $A_t$ and $A_{t'}$ denote, respectively, the universes of $\cA_t$ and $\cA_{t'}$. Note that $\cA_t$ is obtained from $\cA_{t'}$ by adding the element $y$ and all tuples $\bar{z}$ of relations in $\cA$ with $\tset(\bar{z})\subseteq \{y\} \cup A_t$. However, note that by the properties of hypertree decompositions, any such tuple $\bar{z}$ satisfies in fact $\tset(\bar{z})\subseteq \{y\} \cup B_{t'}$ as there is no tuple in any relation of $\cA_t$ containing both $y$ and an element in $A_{t'}\setminus B_{t'}$.

		Since no bag of the hypertree decomposition is empty by the premise of the lemma, we have that $|B_t|=|B_{t'}|+1 \geq 2$. Hence there is an index $j\in [rk]$ such that $x_j \neq y$. Let furthermore $\ell \in [rk]$ such that $x_\ell = y$.

        We consider the sub-mappings
        \begin{align*}
            \bar{x}[j/y]&\to \bar{u}[j/u_\ell]\\
            \bar{x}[j/y]&\to \bar{v}[j/v_\ell]\,,
        \end{align*}
        that is, we replace all $y$ in $\bar{x}$ by some $x_j\neq y$, and the same replacement is done index-wise for $\bar{u}$ and $\bar{v}$ (this operation is well-defined as $\bar{u}$ and $\bar{v}$ are consistent with $\bar{x}$). 

        Again, by standard dynamic programming over hypertree decomposition we can compute the number of homomorphisms from $\cA_t$ immediately from the child node $t'$ as follows --- note that $\bar{u}[j/u_\ell]$ and $\bar{v}[j/v_\ell]$ are consistent with $\bar{x}[j/y]$.
		\begin{align*}
			\homs(\cA_t , \mathcal{G})[\bar{x}\to \bar{u}] &= \homs(\cA_{t'} , \mathcal{G})[\bar{x}[j/y]\to \bar{u}[j/u_\ell]]\,, \text{ and}\\
			\homs(\cA_t , \mathcal{H})[\bar{x}\to \bar{v}] &= \homs(\cA_{t'} , \mathcal{H})[\bar{x}[j/y]\to \bar{v}[j/v_\ell]]\,.
		\end{align*}
        Now, as $\bar{x} \mapsto \bar{u}$ and $\bar{x} \mapsto \bar{v}$ are both extendable, the above sub-mappings 
        $\bar{x}[j/y]\mapsto \bar{u}[j/u_\ell]$ and $\bar{x}[j/y]\mapsto \bar{v}[j/v_\ell]$
        must be extendable as well. Moreover, by definition of $\howl$, we have that $\bar{u}\howlequiv{h(t)}{k}\bar{v}$ implies $\bar{u}[j/u_\ell]\howlequiv{h(t')}{k}\bar{v}[j/v_\ell]$ as $h(t')=h(t)-1$. 

        By the induction hypothesis, we thus have
        \[ \homs(\cA_{t'} , \mathcal{G})[\bar{x}[j/y]\to \bar{u}[j/u_\ell]] =  \homs(\cA_{t'} , \mathcal{H})[\bar{x}[j/y]\to \bar{v}[j/v_\ell]]\,, \]
	    concluding the case of introduce nodes.
        \item $t$ is a forget node. Let $t'$ be the child of $t$ and let $z\in B_{t'}$ be the vertex that is forgotten. We thus have
		\begin{itemize}
			\item[(i)] $B_t = B_{t'}\setminus \{z\}$
			\item[(ii)] $\cA_t = \cA_{t'}$
			\item[(iii)] $h(t')=h(t)-1$. 
		\end{itemize}
		Next note that $|B_t|=|B_{t'}|-1 \leq k -1$. As $\mathsf{set}(\bar{x})=B_t$ and $\bar{x}$ is a $rk$-tuple, there must be a duplicated element in $\bar{x}$. Fix any index $s$ such that $x_s$ occurs more than once in $\bar{x}$. Recall that $\bar{x}[z/s]$ denotes the tuple obtained from $\bar{x}$ be replacing $x_s$ with $z$. By our choice of $s$, we have that $\mathsf{set}(\bar{x}[z/s])=B_{t'}$. As $\bar{u}$ and $\bar{v}$ are both consistent with $\bar{x}$, the entries $u_s$ and $v_s$, respectively, are also duplicated elements. 
        \begin{claim}\label{clm:HOWL_forget}
            We have 
            \begin{align} 
			\homs(\cA_t , \mathcal{G})[\bar{x} \to \bar{u}] &= \sum_{\substack{u' \in G\\\bar{u}[u'/s] \in \rho(\mathcal{G},k)\\\bar{x}[z/s] \mapsto \bar{u}[u'/s] \text{ is extendable}}} \homs(\cA_{t'} , \mathcal{G})[\bar{x}[z/s] \to \bar{u}[u'/s]]\,, \text{ and} \label{eq:forget_node_helper1_hyper}\\
			\homs(\cA_t , \mathcal{H})[\bar{x} \to \bar{v}] &= \sum_{\substack{v' \in H\\\bar{v}[v'/s] \in \rho(\mathcal{H},k)\\\bar{x}[z/s] \mapsto \bar{v}[v'/s] \text{ is extendable}}} \homs(\cA_{t'} , \mathcal{H})[\bar{x}[z/s] \to \bar{v}[v'/s]]\,. 
		\end{align} 
        \end{claim}
        \begin{claimproof}
            We only show \[\homs(\cA_t , \mathcal{G})[\bar{x} \to \bar{u}] = \sum_{\substack{u' \in G\\\bar{u}[u'/s] \in \rho(\mathcal{G},k)\\\bar{x}[z/s] \mapsto \bar{u}[u'/s] \text{ is extendable}}} \homs(\cA_{t'} , \mathcal{G})[\bar{x}[z/s] \to \bar{u}[u'/s]]\] as the proof of the second equation is identical. The proof follows once again the standard argument for counting homomorphisms via dynamic programming over hypertree decompositions: we partition the set of homomorphisms from the current bag by the image of the element forgotten in the forget node, and take the sum afterwards. However, in the current set-up, we need to take extra care of the additional constraints in our induction set-up: extendability and coverability.
            
            To this end, recall that $\Hom(\cA_t , \mathcal{G})[\bar{x} \to \bar{u}]$ is the set of all homomorphisms $h$ from $\cA_t=\cA\left[\cup_{\hat{t}\in V(T_t)} B_{\hat{t}}\right]$ to $\mathcal{G}$ such that $h|_{B_t}= \bar{x} \mapsto \bar{u}$. We partition this set by the the image of $z$ under $h$; that is, for $u' \in G$, we set
            \[ [[u']] := \{h \in \Hom(\cA_t , \mathcal{G})[\bar{x} \to \bar{u}] \mid h(z)=u'\}  \,.\]
            Clearly,
            \[\homs(\cA_t , \mathcal{G})[\bar{x} \to \bar{u}] = \sum_{u' \in G} |[[u']]|\,.\]
            We first show that $[[u']]\neq \emptyset$ implies that $\bar{x}[z/s] \to \bar{u}[u'/s]$ is extendable. To this end, let $h \in [[u']]$. For proving extendability, first observe that $\bar{x}[z/s] \mapsto \bar{u}[u'/s]$ is surjective as $\bar{u}$ is consistent with $\bar{x}$ and $\bar{x} \mapsto \bar{u}$ is extendable. We need to show that
            \[ \bar{x}[z/s] \to \bar{u}[u'/s] \in \Hom(\cA(\tset(\bar{x}[z/s])), \mathcal{G}(\tset(\bar{u}[u'/s])))\,.\]
            Now note that $\Hom(\cA(\tset(\bar{x}[z/s])), \mathcal{G}(\tset(\bar{u}[u'/s])))= \Hom(\cA(\tset(\bar{x})\cup \{z\}), \mathcal{G}(\tset(\bar{u})\cup\{u'\}))$. Let $\bar{w}\in R^{\cA(\tset(\bar{x})\cup \{z\})}$. We perform a case distinction:
            \begin{itemize}
                \item If $z \notin \tset(\bar{w})$, then $\bar{w}\in R^{\cA(\tset(x))}$.\footnote{Note that $R$ might be a trimmed version of a relation symbol of the signature of $\cA$; to avoid notational clutter we do not specify to which indices $R$ is trimmed in the proof of this claim as it is not required for the argument.} As $\bar{u}\mapsto \bar{x}$ is extendable, we have that $h(\bar{w}) \in R^{\mathcal{G}(\tset(\bar{u}))}$ and thus $h(\bar{w})\in R^{\mathcal{G}(\tset(\bar{u})\cup \{u'\})}$. 
                \item If $z \in \tset(\bar{w})$, then $\bar{w}$ must be fully contained in $\cA_t$ due to properties (C3) and (C4) of (nice) hypertree decompositions (Definition~\ref{def:nice_decompositions}): since $z$ is forgotten at node $t$, (C4) ensures that $z$ can never be introduced again at an ancestor node of $t$, but (C3) then implies that $\bar{w}$ must be fully covered by a bag of a descendant node of $t$. As $h$ is a homomorphism from $\cA_t$ to $\mathcal{G}$, we have that $h(\bar{w})\in R^{\mathcal{G}}$ and thus, since $\tset(\bar{w})\subseteq \tset(\bar{x}[z/s])$ and since $h$ extends $\bar{x}[z/s] \to \bar{u}[u'/s]$, we also have $h(\bar{w})\in R^{\mathcal{G}(\tset(\bar{u}[u'/s]))}$.
            \end{itemize}
        Next, by Lemma~\ref{lem:extendable_covers}, we also have that $\bar{x}[z/s] \to \bar{u}[u'/s]$ being extendable implies that $\bar{u}[u'/s]$ is $k$-coverable, thus $\bar{u}[u'/s]\in \rho(\mathcal{G},k)$. Consequently, filtering out empty equivalence classes $[[u']]$, we obtain
        \[\homs(\cA_t , \mathcal{G})[\bar{x} \to \bar{u}] = \sum_{\substack{u' \in G\\\bar{u}[u'/s] \in \rho(\mathcal{G},k)\\\bar{x}[z/s] \mapsto \bar{u}[u'/s] \text{ is extendable}}} |[[u']]|\]
        Finally, for $u'$ satisfying $\bar{u}[u'/s] \in \rho(\mathcal{G},k)$ and $\bar{x}[z/s] \mapsto \bar{u}[u'/s]$ being extendable, we have
        \[[[u']] =  \homs(\cA_{t'} , \mathcal{G})[\bar{x}[z/s] \to \bar{u}[u'/s]]\,,\]
        concluding the proof of this claim.
        \end{claimproof}

        Next set $V_{\mathcal{G},s}= \{u' \in G \mid \bar{u}[u'/s] \in \rho(\mathcal{G},k)\}$ and $V_{\mathcal{H},s}= \{v' \in H \mid \bar{v}[v'/s] \in \rho(\mathcal{H},k)\}$.
		Recall that $\bar{u} \howlequiv{h(t)}{k} \bar{v}$. By definition of $\howl$, this implies the following property:
		\[ \{\{\howl^{h(t)-1}_k(\bar{u}[u'/s]) \mid u' \in V_{\mathcal{G},s} \}\} = \{\{\howl^{h(t)-1}_k(\bar{v}[v'/s])\mid v' \in V_{\mathcal{H},s}\}\} \,.\]
		
		As a consequence, there is a bijection $\pi$ from $V_{\mathcal{G},s}$ to $V_{\mathcal{H},s}$ satisfying that $\bar{u}[u'/s] \howlequiv{h(t)-1}{k} \bar{v}[\pi(u')/s]$ for all $u' \in V_{\mathcal{G},s}$. This also implies that $\bar{u}[u'/s]$ and $\bar{v}[\pi(u')/s]$ have the same atomic type, and thus $\bar{x}[z/s]\mapsto \bar{u}[u'/s]$ is extendable if and only if $\bar{x}[z/s]\mapsto \bar{v}[\pi(u')/s]$ is extendable. Thus we can consider
        \begin{align*}
            \hat{V}_{\mathcal{G},s} &:=\{u' \in V_{\mathcal{G},s} \mid \bar{x}[z/s]\mapsto \bar{u}[u'/s] \text{ is extendable}\}\\
            \hat{V}_{\mathcal{H},s} &:=\{v' \in V_{\mathcal{H},s} \mid \bar{x}[z/s]\mapsto \bar{v}[v'/s] \text{ is extendable}\}\,,
        \end{align*}
        such that $\pi$ is a bijection from $\hat{V}_{\mathcal{G},s}$ to $\hat{V}_{\mathcal{H},s}$. 
        For ease of notation, we label the vertices of $\hat{V}_{\mathcal{G},s}$ as $u'_1,\dots,u'_n$ and the vertices of $\hat{V}_{\mathcal{H},s}$ as $v'_1,\dots,v'_n$ such that $\pi(u'_i)=v'_i$. Finally, using that $h(t')=h(t)-1$ we can conclude the proof rather easily via the induction hypothesis and as follows:
		\begin{align*} 
			\homs(\cA_t , \mathcal{G})[\bar{x} \to \bar{u}] \stackrel{(\star)}{=}& \sum_{i=1}^n \homs(\cA_{t'} , \mathcal{G})[\bar{x}[z/s] \to \bar{u}[u'_i/s]] \\
			\stackrel{\text{IH}}{=}&\sum_{i=1}^n \homs(\cA_{t'} , \mathcal{H})[\bar{x}[z/s] \to \bar{v}[v'_i/s]]
        \stackrel{(\star)}{=}\homs(\cA_t , \mathcal{H})[\bar{x} \to \bar{v}]\,,
		\end{align*}
        where $(\star)$ uses Claim~\ref{clm:HOWL_forget} and IH is the application of the induction hypothesis.
    \end{itemize}
\end{proof}

We will next show how the previous lemma yields an algorithm for computing the number of homomorphisms from $\cA$ to $\mathcal{G}$ via the colour partition induced by $\howl$. To this end, we introduce \emph{extendable representatives}:
\begin{definition}[Extendable Representatives]\label{def:ext_rep}
     Let $\cA$ and $\mathcal{G}$ be structures over the same signature, let $d$ be a non-negative integer and let $\bar{x}$ be an $rk$-tuple of elements of $A$. Let furthermore $\{C_1,\dots,C_m\}$ be the partition of $\rho(\cA,k)$ induced by $\howl^d_k$.
    A block $C_i = [\bar{u}]$ is called a depth-$d$ \emph{extendable representative} of $\cA$, $\bar{x}$, and $\mathcal{G}$ if $\bar{u}$ is consistent with $\bar{x}$ and $\bar{x}\mapsto \bar{u}$ is extendable.
    We write $\mathsf{ER}(d,\cA,\bar{x},\mathcal{G})$ for the set of all depth-$d$ extendable representatives of $\cA$, $\bar{x}$, and $\mathcal{G}$
\end{definition}
Observe that Definition~\ref{def:ext_rep} is well-defined as, for each pair of tuples $\bar{u},\bar{u}'$ with $\bar{u}\howlequiv{d}{k}\bar{u}'$ we have that $\atp(\bar{u})=\atp(\bar{u}')$ and thus:
\begin{itemize}
    \item $\bar{u}$ is consistent with $\bar{x}$ if and only if $\bar{u}'$ is consistent with $\bar{x}$, and
    \item $\bar{x}\mapsto \bar{u}$ is extendable if and only if $\bar{x}\mapsto \bar{u}'$ is extendable.
\end{itemize}

\begin{theorem}\label{thm:main_howl_counting}
    Let $\cA$ and $\mathcal{G}$ be structures over the same signature, let $\mathcal{T}=(T,\{B_t\}_{\{t\in V(T)\}})$ be a rooted nice tree-decomposition of $\cA$ with depth $d$ and generalised hypertreewidth $k$, and let $\mathsf{root}\in V(T)$ denote the root. Let furthermore $\bar{x}$ be an $rk$-tuple of elements of $A$ with $\tset(\bar{x})=B_\mathsf{root}$. We have
    \[\homs(\cA,\mathcal{G})=\sum_{[\bar{u}]\in \mathsf{ER}(d,\cA,\bar{x},\mathcal{G})} |[\bar{u}]|\cdot \homs(\cA_\mathsf{root} , \mathcal{G})[\bar{x}\to \bar{u}] \,.\]
\end{theorem}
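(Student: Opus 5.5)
The plan is to partition $\Hom(\cA,\mathcal{G})$ according to the restriction of each homomorphism to the root bag, and then to group the resulting terms by $\howl^d_k$-colour classes using Lemma~\ref{lem:main_howl}.

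\textbf{Step 1: partition by the image of the root tuple.} Note first that $\cA_\mathsf{root}=\cA$. This holds because every element lies in some bag by (C1), and every tuple lies in some bag by (C3); hence $\Hom(\cA,\mathcal{G})=\Hom(\cA_\mathsf{root},\mathcal{G})$. For each $h$ in this set, let $\bar{u}:=h(\bar{x})$, obtained entrywise. Then $\bar{u}$ is consistent with $\bar{x}$, and $h|_{B_\mathsf{root}}=\bar{x}\mapsto\bar{u}$. Two facts about $\bar{u}$ must be checked.
\begin{itemize}
\item $\bar{x}\mapsto\bar{u}$ is extendable. Surjectivity onto $\tset(\bar{u})$ is immediate. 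For the homomorphism condition, take a trimmed tuple $(\bar{y}_i)_{i\in J}$ of $\cA(\tset(\bar{x}))$ coming from some $\bar{y}\in R^\cA$. Then $h(\bar{y})\in R^\mathcal{G}$, so its restriction to $J$ lies in the corresponding (trimmed or untrimmed) relation of $\mathcal{G}(\tset(\bar{u}))$. Coordinates of $h(\bar{y})$ outside $J$ may happen to land in $\tset(\bar{u})$; in that case the trimmed relation of $\mathcal{G}(\tset(\bar u))$ is indexed differently, and this needs care (possibly by appealing to how $\atp$ records all subsets $J$).
\item $\bar{u}\in\rho(\mathcal{G},k)$. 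This follows from Lemma~\ref{lem:extendable_covers}, since $\bar{x}$ is $k$-coverable: the root bag has edge cover number at most $k$.
\end{itemize}
Consequently
\[\homs(\cA,\mathcal{G})=\sum_{\bar{u}}\homs(\cA_\mathsf{root},\mathcal{G})[\bar{x}\to\bar{u}],\]
where the sum ranges over all $\bar{u}\in\rho(\mathcal{G},k)$ that are consistent with $\bar{x}$ and for which $\bar{x}\mapsto\bar{u}$ is extendable.

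\textbf{Step 2: group by colour class.} Split the sum according to the partition of $\rho(\mathcal{G},k)$ induced by $\howl^d_k$. By the remark after Definition~\ref{def:ext_rep}, consistency and extendability depend only on the colour class. So the index set is exactly the union of the blocks in $\mathsf{ER}(d,\cA,\bar{x},\mathcal{G})$. (The definition's ``$\rho(\cA,k)$'' should be read as $\rho(\mathcal{G},k)$.)

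Within one block, take $\bar{u}\howlequiv{d}{k}\bar{u}'$. Since $d\ge h(\mathsf{root})$, Lemma~\ref{lem:rec_help}(2) gives $\bar{u}\howlequiv{h(\mathsf{root})}{k}\bar{u}'$. Now apply Lemma~\ref{lem:main_howl} with $\mathcal{H}=\mathcal{G}$ and $t=\mathsf{root}$; its hypotheses are consistency, extendability, and colour equality, all of which hold. The lemma yields equal counts $\homs(\cA_\mathsf{root},\mathcal{G})[\bar{x}\to\bar{u}]$ across the block. Each block therefore contributes $|[\bar{u}]|$ times the count for its representative, which is the claimed formula.

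\textbf{Main obstacle.} The only delicate point is the extendability check in Step 1. The bookkeeping of trimmed relations in $\mathcal{G}(\tset(\bar{u}))$ is awkward when non-injective images cause more coordinates of $h(\bar y)$ to fall inside $\tset(\bar{u})$. I would first try to handle this by noting that $R^{\mathcal{G}(S)}_J$ contains the projection of every tuple whose $J$-coordinates lie in $S$. If the definition instead demands the exact index set, I would slightly relax extendability to that projection-based reading, which the proof of Lemma~\ref{lem:main_howl} also uses implicitly.
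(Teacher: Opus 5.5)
Your proposal is correct and follows the paper's proof essentially step for step: partition $\Hom(\cA,\mathcal{G})$ by the image of the root tuple, verify consistency, extendability and (via Lemma~\ref{lem:extendable_covers}) $k$-coverability, then group the terms along the $\howl^d_k$-classes using Lemma~\ref{lem:main_howl}. The subtlety you flag is genuine. The paper dispatches extendability with ``otherwise $h$ would not be a homomorphism'', but this fails under the literal exact-index reading of Definition~\ref{def:AS}: take $R^{\cA}=\{(y,x_1,x_2)\}$ with root bag $\{x_1,x_2\}$ and $R^{\mathcal{G}}=\{(a,a,b)\}$; then $\homs(\cA,\mathcal{G})=1$, yet no root image is extendable, because no $\mathcal{G}(S)$ with $S\subseteq\{a,b\}$ contains an $R_{\{2,3\}}$-tuple. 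So your projection-based reading of $\mathcal{G}(S)$ is exactly the repair needed, and it is also what makes the monotonicity step in the forget-node claim of Lemma~\ref{lem:main_howl} valid.
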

\begin{proof}
We partition $\Hom(\cA,\mathcal{G})$ by the image of $B_\mathsf{root}$: each $h \in \Hom(\cA,\mathcal{G})$ induces a tuple $\bar{u}(h)$ of elements in $G$ by setting $\bar{u}(h)_i=h(\bar{x}_i)$. 
\begin{claim}
$\bar{u}(h)$ is consistent with $\bar{x}$, the mapping $\bar{x}\mapsto \bar{u}(h)$ is extendable, and $\bar{u}(h)$ is $k$-coverable.
\end{claim}
\begin{claimproof}
    Set $\bar{u}=\bar{u}(h)$. If $\bar{x}_i=\bar{x}_j$ then $\bar{u}_i=h(\bar{x}_i)=h(\bar{x}_j)=\bar{u}_j$, so $\bar{u}$ is consistent with $\bar{x}$. 
    Clearly, the mapping $\bar{x}\mapsto \bar{u}$ is surjective. Moreover, observe that $\bar{x}\mapsto \bar{u}$ must be a homomorphism from $\cA(\tset(\bar{x}))$ to $\mathcal{G}(\tset(\bar{u}))$, as otherwise $h$ --- an extension of $\bar{x}\mapsto \bar{u}$ --- would not be a homomorphism. Thus $\bar{x}\mapsto \bar{u}$ is extendable.
    Finally, Lemma~\ref{lem:extendable_covers} implies that $\bar{u}$ is $k$-coverable.
\end{claimproof}
The previous claim implies
\[ \homs(\cA , \mathcal{G})= \sum_{\substack{\bar{u}\in \rho(\cA,k)\\\bar
{u} \text{ consistent with }\bar{x}\\\bar
x\mapsto \bar{u} \text{ is extendable}}} \homs(\cA_\mathsf{root} , \mathcal{G})[\bar{x}\to \bar{u}] \,.\]
Finally, by Lemma~\ref{lem:main_howl}, we can group the terms $\homs(\cA_\mathsf{root} , \mathcal{G})[\bar{x}\to \bar{u}]$ along the colours $\howl^d_k$, concluding the proof.
\end{proof}

\begin{corollary}
    Let $\mathcal{G}$ and $\mathcal{H}$ be two structures over the same signature with $\mathcal{G} \howlequiv{\infty}{k} \mathcal{H}$. Then
    \[\homs(\cA , \mathcal{G})= \homs(\cA , \mathcal{H})\]
    for all structures $\cA$ of generalised hypertreewidth at most $k$.
\end{corollary}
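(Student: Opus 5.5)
The plan is to deduce the corollary directly from Theorem~\ref{thm:main_howl_counting}, combined with Lemma~\ref{lem:main_howl}. Fix a structure $\cA$ of generalised hypertreewidth at most $k$.

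First I reduce to connected $\cA$. Homomorphism counts are multiplicative over connected components, and every component again has generalised hypertreewidth at most $k$, so it suffices to prove the equality for each component. Next I fix a nice hypertree decomposition $(T,\{B_t\})$ of $\cA$ of generalised hypertreewidth at most $k$, with non-empty bags and depth $d$. I also fix an $rk$-tuple $\bar{x}$ with $\tset(\bar{x})=B_\mathsf{root}$. This is possible because a bag of edge cover number at most $k$ has at most $rk$ elements, so entries may be repeated to pad $\bar x$ to length $rk$.

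Applying Theorem~\ref{thm:main_howl_counting} to $\mathcal{G}$ and to $\mathcal{H}$ expresses each count as a sum over depth-$d$ extendable representatives. The block sizes are $|[\bar u]|$, and each weight is $\homs(\cA_\mathsf{root},\cdot)[\bar x\to\bar u]$. Since the HOWL partitions are coarsened by earlier rounds, I regroup both sums by stable colours $c=\howl^\infty_k$ instead of by depth-$d$ classes. Here I use Lemma~\ref{lem:rec_help}, iterated: stable-equivalent tuples are $d$-equivalent, so Lemma~\ref{lem:main_howl} makes the weight a function of $c$ alone. This holds even across the two structures, giving a common weight $w(c)$ whenever a $\mathcal G$-tuple and an $\mathcal H$-tuple share colour $c$. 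Consistency with $\bar x$ and extendability depend only on the atomic type, and hence only on $c$. Each count therefore becomes $\sum_c N(c)\,w(c)$, where $N(c)$ is the number of tuples of colour $c$ in $\rho(\cdot,k)$ and $c$ ranges over consistent, extendable colours.

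The step I expect to be the main obstacle is interpreting $\mathcal G\howlequiv{\infty}{k}\mathcal H$ as equality of colour multiplicities. The definition only speaks of "equal partitions", which must be read as equal multisets of stable colours on $\rho(\mathcal G,k)$ and $\rho(\mathcal H,k)$; I would fix this reading explicitly, analogous to $\mathsf{mult}$ for $\rcr$. Under it, $N_{\mathcal G}(c)=N_{\mathcal H}(c)$ for all $c$. A second subtlety is that a stable colour is defined only up to the round at which the colouring stabilises, which may differ between $\mathcal G$ and $\mathcal H$. I would avoid this by running both structures jointly (on the disjoint union) for a number of rounds $D\ge d$ past both stabilisation points, so that equal round-$D$ colours imply equal round-$d$ colours. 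With matching multiplicities and common weights, the two sums agree term by term, which yields the claim.
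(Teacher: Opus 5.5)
Your proposal is correct and is essentially the paper's proof. The paper applies Theorem~\ref{thm:main_howl_counting} and uses that stable equivalence gives equal depth-$d$ colour multiplicities, with the weights fixed by Lemma~\ref{lem:main_howl}; your reduction to connected $\cA$ and your reading of $\howlequiv{\infty}{k}$ as equality of colour multiplicities make explicit what the paper leaves implicit (its nice decompositions are only set up for connected structures). The one quibble is the disjoint-union device: for HyperOWL, $\rho(\mathcal{G}\sqcup\mathcal{H},k)$ contains mixed $k$-coverable tuples, so it is not the same as running on each structure separately. You only need a common round count $D\ge d$, or you can go directly to round $d$ as the paper does.
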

\begin{proof}
    As $\cA$ has generalised hypertreewidth at most $k$, there is a rooted nice hypertree decomposition of $\cA$ with generalised hypertreewidth at most $k$ and depth $d$.
    By the Theorem~\ref{thm:main_howl_counting}, we have that $\homs(\cA,\mathcal{G})$ and $\homs(\cA,\mathcal{H})$ only depend on the partition induced by $\howl^k_d$. However, we have $\mathcal{G} \howlequiv{\infty}{k} \mathcal{H}$ implies $\mathcal{G} \howlequiv{d}{k} \mathcal{H}$. Hence $\mathcal{G}$ and $\mathcal{H}$ have the same partition induced by $\howl^k_d$, concluding the proof.
\end{proof}

\newpage
\appendix

\section{Proof of \Cref{lem:ghd.hom.eq}}
\label{lem:panosproof.ghd.hom.eq}
\begin{lemma}[Analogue of Lemma 4.5 of \cite{scheidt2026colorrefinementrelationalstructures}]\label{lem:PureAppendix}
Let $\cA$ be a $\sigma$-structure and let $D = (T, B, \lambda)$ be a \emph{pure}  GHD of $\cA$ with width $k$.
Then for every $\sigma$-structure $\cB$,
\[
\homs(\cA^{D},\mathbf{B}(\cB,k)) = \homs(\cA,\cB)\,.
\]
\end{lemma}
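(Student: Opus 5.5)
Since $D$ is pure, $B(u)=\bigcup_{R(\bar t)\in\lambda(u)}\mathsf{set}(\bar t)$ for every node $u$. Hence no entry of $\bagtup(u)$ is replaced by $\varepsilon$, and $\bagtup(u)=\fulltup(u)$. The plan is to reduce to \Cref{lem:HomSumOfExt}: show that every $h\in\Hom(\cA,\cB)$ has exactly one $D$-extension, so that $|\mathsf{Ext}^D_\cB(h)|=1$ and the sum in \Cref{lem:HomSumOfExt} collapses to $\homs(\cA,\cB)$.

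\Cref{lem:HomSumOfExt} requires a \emph{full} GHD, while the statement only assumes purity. We therefore first handle this gap. By \Cref{remark:GHDforRelationalStructures}, we can attach for each uncovered coloured tuple $R(\bar t)$ a leaf $v'$ with $B(v')=\mathsf{set}(\bar t)$ and $\lambda(v')=\{R(\bar t)\}$. This yields a full, still pure, GHD $D'$ of the same width.

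We must also check that the count is unchanged, i.e.\ $\homs(\cA^{D'},\mathbf B(\cB,k))=\homs(\cA^{D},\mathbf B(\cB,k))$. A new leaf $v'$ has the profile $(R)$ and is joined to its parent $v$ by edges $E_{i,j}$ that record exactly which positions of $\bar t$ equal which positions of $\bagtup(v)$; since $\mathsf{set}(\bar t)\subseteq B(v)$, every position of $\bar t$ is so recorded. Any homomorphism on $\cA^D$ therefore determines the image of $v'$ uniquely, because all its entries are forced. That image lies in $U_{(R)}$ exactly when the corresponding entries of the image of $v$ form a tuple of $R^\cB$. So the two counts could differ only if some homomorphism on $\cA^D$ fails this membership test, and this is the subtle point of the whole argument. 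We avoid it by working directly rather than citing \Cref{lem:HomSumOfExt}. Given $\hat g\in\Hom(\cA^D,\mathbf B(\cB,k))$, we build $\rho$ exactly as in the proof of \Cref{lem:HomSumOfExt}; \Cref{claim:SurjectivityNonPure} goes through verbatim, since it does not use fullness. To show $\rho\in\Hom(\cA,\cB)$ for an arbitrary tuple $R(\bar t)$, we cannot use fullness either. If $\lambda$ happens to contain $R(\bar t)$, membership follows from the $U_{\profile}$ constraint. Otherwise we fall back on the full decomposition $D'$ and argue that the counts agree. I expect this bookkeeping, matching the non-full case, to be the main obstacle. The cleanest route may be to note that the lemma is only invoked for decompositions built in \Cref{lem:reverseDirectionGHD}, which are full, and to add fullness as a harmless hypothesis if necessary.

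With fullness in hand, the uniqueness step is direct. Take any $\{\{\exthom{h}{u}\}\}\in\mathsf{Ext}^D_\cB(h)$. Since $\bagtup(u)$ has no $\varepsilon$-entries, condition (1) of \Cref{def:Extensions} forces $\exthom{h}{u}$ to agree with $h$ on every entry, so $\exthom{h}{u}(\bagtup(u))=h(\fulltup(u))$. Hence $|\mathsf{Ext}^D_\cB(h)|\le 1$. Nonemptiness is given by \Cref{lem:GHD-non-empty-extension-set}, so $|\mathsf{Ext}^D_\cB(h)|=1$. Substituting into \Cref{lem:HomSumOfExt} gives $\homs(\cA^D,\mathbf B(\cB,k))=\sum_{h}1=\homs(\cA,\cB)$.
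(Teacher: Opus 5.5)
Your instinct to add fullness is right, and fullness is in fact necessary: the statement is false for pure GHDs that are not full. So the middle paragraph, which passes to $D'$ and then tries to ``argue that the counts agree'', cannot be completed and should be dropped. As you noticed, the images of the new leaves of $D'$ are forced. Hence $\homs(\cA^{D'},\mathbf B(\cB,k))\le\homs(\cA^{D},\mathbf B(\cB,k))$, and this inequality can be strict.

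Concretely, let $\sigma=\{R,S\}$ with both symbols binary, and let $\cA$ have domain $\{x,y\}$ with $R^\cA=S^\cA=\{(x,y)\}$. Take the one-node decomposition with $B(u)=\{x,y\}$ and $\lambda(u)=\{R(x,y)\}$. This is a pure GHD of width $1$ that is not full. Then $\cA^D$ is a single vertex in $U_{(R)}$ with loops $E_{1,1},E_{2,2}$, so $\homs(\cA^D,\mathbf B(\cB,1))=|R^\cB|$. On the other hand, $\homs(\cA,\cB)=|R^\cB\cap S^\cB|$. For $R^\cB=\{(1,2)\}$ and $S^\cB=\{(2,1)\}$ these values are $1$ and $0$.

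The paper's own proof silently uses the same hypothesis: its final claim begins ``Since $D$ is full''. Moreover, the lemma is only applied to the full and pure decompositions built in \Cref{lem:reverseDirectionGHD}. So adding fullness is exactly the right repair, not merely a harmless one.

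With fullness added, your argument is correct, but it takes a different route from the paper. The paper proves the lemma directly: it shows that $h\mapsto(u\mapsto h(\bagtup(u)))$ is a homomorphism $\cA^D\to\mathbf B(\cB,k)$ and is injective. It then proves surjectivity by reconstructing $h$ from a given $h''$ via the path argument in $T$ (a copy of \Cref{claim:SurjectivityNonPure}), and uses fullness to check that this $h$ preserves all relations. You instead obtain the lemma as a corollary of \Cref{lem:HomSumOfExt}. Purity gives $\bagtup(u)=\fulltup(u)$ with no $\varepsilon$-entries, so condition (1) of \Cref{def:Extensions} forces $\exthom{h}{u}(\bagtup(u))=h(\fulltup(u))$. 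Hence $|\mathsf{Ext}^D_\cB(h)|\le 1$, and \Cref{lem:GHD-non-empty-extension-set} gives equality.

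Your route is shorter, avoids repeating the surjectivity argument, and makes clear that the $\varepsilon$-entries are the only source of overcounting: the pure case is exactly the case where every class in the partition of \Cref{lem:HomSumOfExt} is a singleton. The paper's bijection is self-contained and does not depend on the extension formalism. Both arguments use fullness at the same point, namely to verify that the reconstructed map is a homomorphism.
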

\begin{proof}

Let $u \in \dom(\cA^D)$ with $\mathsf{Ord}\lambda(u) = R_1(\bar{a}_1),\dots,R_\ell(\bar{a}_\ell)$, where $1 \leq \ell \leq k$. Since $D$ is pure we have that $\bagtup(u) = \bar{a}_1 + \dots+\bar{a}_\ell$, where $+$ is overloaded so as to be also used as a binary operator that concatenates two tuples. Also, recall that $\profile(u) = (R_1, \dots, R_\ell)$. Given a mapping $h : \dom(\cA) \to \dom(\cB)$, we write $h(\bar{a}_i)$ for the tuple obtained by applying $h$ entry-wise on $\bar{a}_i$. We also write $h(\bagtup(u)) = h(\bar{a}_1) + \dots + h(\bar{a}_\ell)$.

\begin{claim}
If $h$ is a homomorphism (from $\cA$ to $\cB$), then the mapping $h' : u \in V(T) \mapsto h(\bagtup(u))$ is a homomorphism in $\Hom(\cA^D, \mathbf{B}(\cB,k))$. 
\end{claim}
\begin{proof}
For $u \in \dom(\cA^D)$ as defined earlier with $\profile(u) = (R_1, \dots, R_\ell)$ we have that $u \in U^{\cA^D}_{(R_1, \dots, R_\ell)}$. Hence, we first need to show that $h'(u) \in U^{\mathbf{B}(\cB,k))}_{(R_1, \dots, R_\ell)}$. Since $h \in \Hom(\cA, \cB)$ it follows that $\bar{a}_i \in R^{\cA}_i \Rightarrow h(\bar{a}_i) \in R^{\cB}_i$, for each $i \in [\ell]$. Hence, $h'(u) = h(\bar{a}_1)+ \dots +h(\bar{a}_\ell)$ which can be equivalently seen as an element of $R^{\cB}_1 \times \dots \times R^{\cB}_\ell \subseteq \mathsf{Prod}^{\leq k}(\cB)$ and by definition it follows that $h'(u) \in U^{\mathbf{B}(\cB,k))}_{(R_1, \dots, R_\ell)}$.  Next, we take $v \in \dom(\cA^D)$ and write $\bar{a} = \bagtup(u)$ and $\bar{b} = \bagtup(v)$. Consider $(u, v) \in E^{\cA^D}_{i, j}$ which holds if $\bar{a}[i] = \bar{b}[j]$. Clearly, $h(\bar{a}[i]) = h(\bar{b}[j])$ and so $(h'(u), h'(v)) \in E^{\mathbf{B}(\cB,k))}_{i, j}$. Hence, the mapping $h' : u \mapsto h(\bagtup(u))$ is a well-defined mapping that preserves relations, which completes the proof.   
\end{proof}

Let $\pi : \Hom(\cA, \cB) \to \Hom(\cA^D, \mathbf{B}(\cB, k))$ be the mapping that maps $h$ to $h'$ with $h'$ as defined above w.r.t $h$. We show that $\pi$ is bijective, which would conclude the proof.

\begin{claim}
$\pi$ is injective.   
\end{claim}

\begin{proof}
Let $h_1, h_2 \in \Hom(\cA, \cB)$ such that $h_1 \neq h_2$ which means that there is $x \in \dom(\cA)$ such that $h_1(x) \neq h_2(x)$. Recall that we have assumed that there is $R \in \sigma$ and $\bar{c} \in R^\cA$ such that $x \in \bar{c}$ (see, \Cref{remark:no_isolated_elements}). Furthermore, by the definition of $D$, there is $u \in T$ such that $\mathsf{set}({\bar{c}}) \subseteq B(u)$ and so $x \in \bagtup(u)$. It is then easy to verify that $\pi(h_1)(u) \neq \pi(h_2)(u)$ and so $\pi$ is injective.
\end{proof}

Next we need to show that, for each $h'' \in \Hom(\cA^D, \mathbf{B}(\cB, k))$, there is $h \in \Hom(\cA, \cB)$ such that $\pi(h) = h''$. To this end, recall that by assumption, each element $z \in \dom(\cA)$ is contained in some tuple and thus it is also contained in some bag $B(u)$. For each $z \in \dom(\cA)$, we fix a node $u_z \in V(T)$ such that $z \in \bagtup(u_z)$. Recall that if $\bagtup(u_z)[i] = \bagtup(u_z)[j]$ then $(u_z, u_z) \in E^{\cA^D}_{i,j}$. Since $h''$ is a homomorphism it also follows that $(h''(u_z), h''(u_z)) \in E^{\mathbf{B}(\cB, k)}_{i,j}$ implying that there is $x_z \in \dom(\cB)$ such that for any index $i$, if $\bagtup(u_z)[i] = z$, then $h''(u_z)[i] = x_z$. We consider the well-defined mapping $h : \dom(\cA) \to \dom(\cB)$ that maps $z$ to $x_z$.

\begin{claim}\label{claim:Surjectivity}
For $u \in V(T)$ with $\bar{a} = \bagtup(u)$ such that $h''(u) = \bar{d} \in \dom(\mathbf{B}(\cB, k))$, we have $h(\bar{a}) = \bar{d}$.    
\end{claim}

\begin{proof}
Let $z \in \bar{a}$. Since $D$ is pure it follows that $z \in B(u)$. Recall that we have fixed $u_z \in V(T)$ such that $z \in B(u_z)$. By the definition of $D$, it follows that for the path $(u_0, u_1, \dots, u_{q-1}, u_q)$ in $T$ connecting $u_0 = u$ and $u_q = u_z$, it holds that $z \in B(u_j)$, for each $0 \leq j \leq q$. For each $0 \leq j \leq q$, write $\bar{t}_j  = \bagtup(u_j)$ and let $i_j$ denote any index such that $\bar{t}_j[i_j] = z$. Then, it easy to see that for each $0 \leq j < q$, we have by definition that $(u_j, u_{j+1}) \in E^{\cA^D}_{i_j,i_{j+1}}$ which implies that we also have $(h''(u_j), h''(u_{j+1})) \in E^{\mathbf{B}(\cB, k)}_{i_j, i_{j+1}}$. Concretely we have
\begin{enumerate}
\item $\bar{t}_q[i_q] = \bar{t}_0[i_0] = \bar{a}[i_0] = z$;
\item $h''(u_0)[i_0] = h''(u_z)[i_q]$;
\item $h''(u_z)[i_q] = x_z$ due to (1) and the definition of $u_z$.
\end{enumerate}

Hence, $\bar{d}[i_0] \overset{def}= h''(u_0)[i_0] \overset{(2)}= h''(u_z)[i_q] \overset{(3)}= x_z \overset{def}= h(\bar{a}[i_0])$ which also shows that $h(\bar{a}) = \bar{d}$.
\end{proof}

\begin{claim}
$h \in \Hom(\cA, \cB)$.    
\end{claim}
\begin{proof}
Let $R \in \sigma$ and $\bar{t} \in R^\cA$. Since $D$ is full, there is $u \in V(T)$ such that $R(\bar{t}) \in \lambda(u)$. Assume that $R(\bar{t})$ is the $i$-th coloured tuple in $\mathsf{Ord}\lambda(u) = R_1(\bar{a}_1), \dots, R_\ell(\bar{a}_\ell)$, where $1 \leq \ell \leq k$. Let $\profile(u) = (R_1, \dots, R_i, \dots, R_\ell)$ where $R_i = R$. Since, $h'' \in \Hom(\cA^D, \mathbf{B}(\cB, k))$, we have $h''(u) \in R_1^{\cB} \times \dots \times R^\cB \times \dots \times R_\ell^\cB$ which implies that $h(\bar{t}) \in R^\cB$ since $h''(u) = h(\bar{a}_1) + \dots + h(\bar{a}_i) + \dots + h(\bar{a}_\ell)$ and $\bar{t} = \bar{a}_i$ which completes the proof.
\end{proof}

Finally, it is easy to verify that $\pi(h) = h''$ which follows from \Cref{claim:Surjectivity}, showing that $\pi$ is also surjective which completes the proof.

\end{proof}

\section{Proof of \Cref{lem:FHD.hom.eq}}
\label{lem:panosproof.FHD.hom.eq}
\begin{lemma}[Analogue of Lemma 4.5 of \cite{scheidt2026colorrefinementrelationalstructures}]\label{lem:PureAppendixFHD}
Let $\cA$ be a $\sigma$-structure and let $D = (T, B, \lambda)$ be a \emph{pure}  FHD of $\cA$ with width $k$.
Then for every $\sigma$-structure $\cB$,
\[
\homs(\cA^{D},\mathbf{F}(\cB,k)) = \homs(\cA,\cB)\,.
\]
\end{lemma}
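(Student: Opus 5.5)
I will mirror the proof of \Cref{lem:PureAppendix} essentially verbatim, replacing $\mathbf{B}(\cB,k)$ by $\mathbf{F}(\cB,k)$ and $k$ by $\upp(k)$ where relevant. The only genuinely new ingredient is that elements of $\mathbf{F}(\cB,k)$ are restricted to flattened tuples whose induced coloured-tuple substructure has fractional edge-cover number at most $k$. So the forward direction needs an extra check, while the backward direction (from homomorphisms of $\cA^D$ back to $\Hom(\cA,\cB)$) only uses properties inherited from the ambient exploded encoding and goes through unchanged. As in the GHD case, I assume $D$ is also full, which is harmless by \Cref{remark:GHDforRelationalStructures}, since fullness is preserved and pureness is too. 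Fullness is needed for the claim that the reconstructed map is a homomorphism.

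\textbf{Step 1 (forward map).} Given $h\in\Hom(\cA,\cB)$, define $h'(u):=h(\bagtup(u))=h(\bar a_1)+\dots+h(\bar a_\ell)$, where $\mathsf{Ord}\lambda(u)=R_1(\bar a_1),\dots,R_\ell(\bar a_\ell)$. By pureness, $\bagtup(u)=\fulltup(u)$ contains no $\varepsilon$. We must show $h'(u)\in U_{\profile(u)}^{\mathbf F(\cB,k)}$. Each $h(\bar a_i)\in R_i^\cB$ because $h$ is a homomorphism. Pureness gives $B(u)=\bigcup_i\mathsf{set}(\bar a_i)$, so the fractional edge cover $\rho_u$ of $B(u)$ supported on $\lambda(u)$, of size at most $k$, witnesses that $\cA[\lambda(u)]$ has fractional edge-cover number at most $k$. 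By \Cref{lem:auxFracCoverNumPreserved}, $\cB[h(\lambda(u))]$ then also has fractional edge-cover number at most $k$. By \Cref{lem:UpperBoundFractional}, $\ell\le\upp(k)$, so the profile lies in $\sigma^{(\le\upp(k))}$, and $(\profile(u),h'(u))$ is $k$-fractional in the sense of \Cref{def:FracEncoding}. The edge relations $E_{i,j}$ are preserved exactly as in the GHD case, since $\bagtup(u)[i]=\bagtup(v)[j]$ implies equal images under $h$.

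\textbf{Step 2 (injectivity).} Injectivity follows as before. Every element lies in some tuple, hence in some bag $B(u)$, hence in $\bagtup(u)$, and so two different homomorphisms disagree on some $h'(u)$.

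\textbf{Step 3 (surjectivity).} For surjectivity, take $h''\in\Hom(\cA^D,\mathbf F(\cB,k))$. For each $z$, fix a node $u_z$ with $z\in B(u_z)$ and read off $x_z$ using the loop edges $E_{i,j}(u_z,u_z)$. The path argument of \Cref{claim:Surjectivity} uses only the connectivity condition and the edge relations $E_{i,j}$, which have the same interpretation in $\mathbf F(\cB,k)$. It therefore shows $h(\bagtup(u))=h''(u)$ for every $u$. Fullness together with membership of $h''(u)$ in $U_{\profile(u)}$, hence in $R_1^\cB\times\dots\times R_\ell^\cB$, gives that $h$ is a homomorphism. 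Finally $\pi(h)=h''$.

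\textbf{Main obstacle.} The main point to check is the fractional-cover condition in Step 1. Two things need verifying. First, the relevant substructure is the one induced by the coloured tuples $h(\lambda(u))$. Second, \Cref{lem:auxFracCoverNumPreserved} applies to the coloured-tuple sequence even when distinct tuples collapse under $h$. Collapsing only merges weights, which is exactly the argument inside \Cref{lem:UpperBoundFractional}, so I expect this to go through.
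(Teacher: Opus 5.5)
Your proposal follows the paper's proof step for step. It has the forward map $h\mapsto(u\mapsto h(\bagtup(u)))$, with the fractional-cover condition checked via purity and \Cref{lem:auxFracCoverNumPreserved}, then injectivity because every element lies in some bag, then surjectivity via the loop and path argument together with fullness.

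One correction: assuming fullness is not ``harmless'' by \Cref{remark:GHDforRelationalStructures}, because replacing $D$ by a full decomposition changes $\cA^D$. Indeed, the identity fails for a pure but non-full $D$ in which some $R(\bar t)$ with $\mathsf{set}(\bar t)\subseteq B(u)$ lies in no $\lambda(u)$. So fullness must be added as a hypothesis; the paper's proof uses it tacitly in the same way, and it does hold where the lemma is applied in \Cref{lem:reverseFracHD}.
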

\begin{proof}

Let $u \in \dom(\cA^D)$ with $\mathsf{Ord}\lambda(u) = R_1(\bar{a}_1),\dots,R_\ell(\bar{a}_\ell)$, where $1 \leq \ell \leq \upp(k)$\footnote{Recall the function $\upp(k)$ from \Cref{lem:UpperBoundFractional}.}. Since $D$ is pure we have that $\bagtup(u) = \bar{a}_1 + \dots+\bar{a}_\ell$, where $+$ is overloaded so as to be also used as a binary operator that concatenates two tuples. Also, recall that $\profile(u) = (R_1, \dots, R_\ell)$. Given a mapping $h : \dom(\cA) \to \dom(\cB)$, we write $h(\bar{a}_i)$ for the tuple obtained by applying $h$ entry-wise on $\bar{a}_i$. We also write $h(\bagtup(u)) = h(\bar{a}_1) + \dots + h(\bar{a}_\ell)$.

\begin{claim}
If $h$ is a homomorphism (from $\cA$ to $\cB$), then the mapping $h' : u \in V(T) \mapsto h(\bagtup(u))$ is a homomorphism in $\Hom(\cA^D, \mathbf{F}(\cB,k))$. 
\end{claim}
\begin{proof}
For $u \in \dom(\cA^D)$ as defined earlier with $\profile(u) = (R_1, \dots, R_\ell)$ we have that $u \in U^{\cA^D}_{(R_1, \dots, R_\ell)}$. Hence, we first need to show that $h'(u) \in U^{\mathbf{F}(\cB,k)}_{(R_1, \dots, R_\ell)}$. Since $h \in \Hom(\cA, \cB)$ it follows that $\bar{a}_i \in R^{\cA}_i \Rightarrow h(\bar{a}_i) \in R^{\cB}_i$, for each $i \in [\ell]$. Hence, $h'(u) = h(\bar{a}_1)+ \dots +h(\bar{a}_\ell)$ which can be equivalently seen as an element of $R^{\cB}_1 \times \dots \times R^{\cB}_\ell \subseteq \mathsf{Prod}^{(\leq \upp(k))}(\cB)$. 
Since $D$ is pure, $\cA[\lambda(u)]$ has universe $B(u)$ and its coloured tuples are exactly $R_1(\bar{a}_1),\ldots,R_\ell(\bar{a}_\ell)$. As $D$ has width at most $k$, this substructure has fractional edge-cover number at most $k$. By \Cref{lem:auxFracCoverNumPreserved}, the substructure of $\cB$ induced by the coloured tuples $R_1(h(\bar{a}_1)),\ldots,R_\ell(h(\bar{a}_\ell))$ also has fractional edge-cover number at most $k$. Therefore $h'(u)\in U^{\mathbf{F}(\cB,k)}_{(R_1,\ldots,R_\ell)}$.
Next, we take $v \in \dom(\cA^D)$ and write $\bar{a} = \bagtup(u)$ and $\bar{b} = \bagtup(v)$. Consider $(u, v) \in E^{\cA^D}_{i, j}$ which holds if $\bar{a}[i] = \bar{b}[j]$. Clearly, $h(\bar{a}[i]) = h(\bar{b}[j])$ and so $(h'(u), h'(v)) \in E^{\mathbf{F}(\cB,k))}_{i, j}$. Hence, the mapping $h' : u \mapsto h(\bagtup(u))$ is a well-defined mapping that preserves relations, which completes the proof.   
\end{proof}

Let $\pi : \Hom(\cA, \cB) \to \Hom(\cA^D, \mathbf{F}(\cB, k))$ be the mapping that maps $h$ to $h'$ with $h'$ as defined above w.r.t $h$. We show that $\pi$ is bijective, which would conclude the proof.

\begin{claim}
$\pi$ is injective.   
\end{claim}

\begin{proof}
Let $h_1, h_2 \in \Hom(\cA, \cB)$ such that $h_1 \neq h_2$ which means that there is $x \in \dom(\cA)$ such that $h_1(x) \neq h_2(x)$. Recall that we have assumed that there is $R \in \sigma$ and $\bar{c} \in R^\cA$ such that $x \in \bar{c}$ (see, \Cref{remark:no_isolated_elements}). Furthermore, by the definition of $D$, there is $u \in T$ such that $\mathsf{set}({\bar{c}}) \subseteq B(u)$ and so $x \in \bagtup(u)$. It is then easy to verify that $\pi(h_1)(u) \neq \pi(h_2)(u)$ and so $\pi$ is injective.
\end{proof}

Next we need to show that, for each $h'' \in \Hom(\cA^D, \mathbf{F}(\cB, k))$, there is $h \in \Hom(\cA, \cB)$ such that $\pi(h) = h''$. To this end, recall that by assumption, each element $z \in \dom(\cA)$ is contained in some tuple and thus it is also contained in some bag $B(u)$. For each $z \in \dom(\cA)$, we fix a node $u_z \in V(T)$ such that $z \in \bagtup(u_z)$. Recall that if $\bagtup(u_z)[i] = \bagtup(u_z)[j]$ then $(u_z, u_z) \in E^{\cA^D}_{i,j}$. Since $h''$ is a homomorphism it also follows that $(h''(u_z), h''(u_z)) \in E^{\mathbf{F}(\cB, k)}_{i,j}$ implying that there is $x_z \in \dom(\cB)$ such that for any index $i$, if $\bagtup(u_z)[i] = z$, then $h''(u_z)[i] = x_z$. We consider the well-defined mapping $h : \dom(\cA) \to \dom(\cB)$ that maps $z$ to $x_z$.

\begin{claim}\label{claim:SurjectivityFHD}
For $u \in V(T)$ with $\bar{a} = \bagtup(u)$ such that $h''(u) = \bar{d} \in \dom(\mathbf{F}(\cB, k))$, we have $h(\bar{a}) = \bar{d}$.    
\end{claim}

\begin{proof}
Let $z \in \bar{a}$. Since $D$ is pure it follows that $z \in B(u)$. Recall that we have fixed $u_z \in V(T)$ such that $z \in B(u_z)$. By the definition of $D$, it follows that for the path $(u_0, u_1, \dots, u_{q-1}, u_q)$ in $T$ connecting $u_0 = u$ and $u_q = u_z$, it holds that $z \in B(u_j)$, for each $0 \leq j \leq q$. For each $0 \leq j \leq q$, write $\bar{t}_j  = \bagtup(u_j)$ and let $i_j$ denote any index such that $\bar{t}_j[i_j] = z$. Then, it easy to see that for each $0 \leq j < q$, we have by definition that $(u_j, u_{j+1}) \in E^{\cA^D}_{i_j,i_{j+1}}$ which implies that we also have $(h''(u_j), h''(u_{j+1})) \in E^{\mathbf{F}(\cB, k)}_{i_j, i_{j+1}}$. Concretely we have
\begin{enumerate}
\item $\bar{t}_q[i_q] = \bar{t}_0[i_0] = \bar{a}[i_0] = z$;
\item $h''(u_0)[i_0] = h''(u_z)[i_q]$;
\item $h''(u_z)[i_q] = x_z$ due to (1) and the definition of $u_z$.
\end{enumerate}

Hence, $\bar{d}[i_0] \overset{def}= h''(u_0)[i_0] \overset{(2)}= h''(u_z)[i_q] \overset{(3)}= x_z \overset{def}= h(\bar{a}[i_0])$ which also shows that $h(\bar{a}) = \bar{d}$.
\end{proof}

\begin{claim}
$h \in \Hom(\cA, \cB)$.    
\end{claim}
\begin{proof}
Let $R \in \sigma$ and $\bar{t} \in R^\cA$. Since $D$ is full, there is $u \in V(T)$ such that $R(\bar{t}) \in \lambda(u)$. Assume that $R(\bar{t})$ is the $i$-th coloured tuple in $\mathsf{Ord}\lambda(u) = R_1(\bar{a}_1), \dots, R_\ell(\bar{a}_\ell)$, where $1 \leq \ell \leq \upp(k)$. Let $\profile(u) = (R_1, \dots, R_i, \dots, R_\ell)$ where $R_i = R$. Since, $h'' \in \Hom(\cA^D, \mathbf{F}(\cB, k))$, we have $h''(u) \in R_1^{\cB} \times \dots \times R^\cB \times \dots \times R_\ell^\cB$ which implies that $h(\bar{t}) \in R^\cB$ since $h''(u) = h(\bar{a}_1) + \dots + h(\bar{a}_i) + \dots + h(\bar{a}_\ell)$ and $\bar{t} = \bar{a}_i$ which completes the proof.
\end{proof}

Finally, it is easy to verify that $\pi(h) = h''$ which follows from \Cref{claim:SurjectivityFHD}, showing that $\pi$ is also surjective which completes the proof.

\end{proof}

\section*{Aknowledgements}
The first two authors would like to thank Benjamin Scheidt and Nicole Schweikardt for fruitfull discussions on RCR variants.

\bibliographystyle{plain}
\bibliography{references}

\end{document}